\definecolor{Gray}{gray}{0.9}
\def\0{\mathbf{0}}
\def\rr{\rightarrow}
\def \< {\langle}
\def \> {\rangle}
\def\beqa{\begin{eqnarray}}
\def\eeqa{\end{eqnarray}}
\def\beqas{\begin{eqnarray*}}
\def\eeqas{\end{eqnarray*}}
\newtheorem{theorem}{Theorem}[section]
\newtheorem{lemma}[theorem]{Lemma}
\newtheorem{proposition}[theorem]{Proposition}
\newtheorem{corollary}[theorem]{Corollary}
\newtheorem{remark}[theorem]{Remark}
\newtheorem{definition}[theorem]{Definition}
\newtheorem{assumption}[theorem]{Assumption}
\numberwithin{equation}{section}
\newcommand{\hatd}[1]{{}}
\newcommand{\bd}{\begin{displaymath}}
\newcommand{\ed}{\end{displaymath}}
\newcommand{\be}{\begin{equation}}
\newcommand{\ee}{\end{equation}}
\newcommand{\bq}{\begin{eqnarray}}
\newcommand{\eq}{\end{eqnarray}}
\newcommand{\bn}{\begin{eqnarray*}}
\newcommand{\en}{\end{eqnarray*}}
\def\wt{\widetilde}
\author[1]{Rama Cont} 
\author[2]{ Alessandro Micheli\footnote{AM is supported by the EPSRC Centre for Doctoral Training in Mathematics of Random Systems: Analysis, Modelling and Simulation (EP/S023925/1)}}
\author[3]{ Eyal Neuman}
\affil[1]{Mathematical Institute, University of Oxford}
\affil[2,3]{Department of Mathematics, Imperial College London}
\title{Fast and Slow Optimal Trading with Exogenous Information}
\date{\today}
\begin{document}
\maketitle
\begin{abstract}
We model the interaction between a slow institutional investor and a high-frequency trader as a stochastic multiperiod Stackelberg game. 
The high-frequency trader exploits price information more frequently and is subject to periodic inventory constraints.  We first derive the optimal strategy of the high-frequency trader given any admissible strategy of the institutional investor. Then, we solve the problem of the institutional investor given the optimal  strategy of the high-frequency trader, in terms of the resolvent of a Fredholm integral equation, thus establishing the unique multi-period Stackelberg equilibrium of the game. Our results provide an explicit solution which shows that the high-frequency trader can adopt either predatory or cooperative strategies in each period, depending on the tradeoff between the order-flow and the trading signal. We also show that the institutional investor's strategy is  more profitable when the order-flow of the high-frequency trader is taken into account.  
\end{abstract}
\begin{description}
\item[Mathematics Subject Classification (2010):] 49N70, 49N90, 93E20, 60H30 
\item[JEL Classification:] C73, C02, C61, G11
\item[Keywords:] Optimal stochastic control, stochastic games, price impact, predictive signals, Stackelberg equilibrium

\end{description}

\bigskip

\section{Introduction}

Modern financial markets involve a  range of participants who place buy and sell orders across a wide spectrum of time scales: on one end, pension funds rebalance their portfolio on an annual basis and mutual fund managers rebalance typically on a monthly time scale while, on the other end of the spectrum, electronic market makers and high frequency trading firms submit several thousands of orders per second (see e.g \cite{cont2011}), while having strict inventory constraints (see p.4 of \cite{hf-rep}). Although this heterogeneity in time scales has been always present, the development of computerized trading in electronic markets has substantially widened the range of frequencies at which various market participants operate. 
The interaction between the flow of buy and sell orders from these different participants results in an aggregate order flow which is the superposition of components across a wide range of frequencies. The consequences of this phenomenon for market volatility, price dynamics and market stability have yet to be systematically explored. 

This heterogeneity of frequencies stands in contrast with mathematical models of market microstructure and price dynamics which are often formulated in terms of homogeneous agents operating at a single time scale as in  \citep{GARLEANU16,EvangelistaThamsten:20,voss.19,N-sch-22,M-M-N-21,DrapeauLuoSchiedXiong:19, CasgrainJaimungal:20, FuGraeweHorstPopier:20,N-V-2021} among others. Yet, the repeated occurrence of `flash crashes' (see e.g. \cite{kirilenko2017}) demonstrates that components at different frequencies may strongly interact and possibly lead to market disruption, calling for a modeling framework which incorporates the interaction of agents operating on different time scales.

As a first step to investigate these phenomena, we propose a model for the dynamics of prices and order flow in a market where participants of two different frequencies submit buy and sell orders on a risky asset. Specifically, we consider a stochastic game between an institutional investor and a high-frequency trader who are exploiting an exogenous signal which interacts with the price process in the drift term. The institutional investor and high-frequency trader, which will be referred to as \emph{major agent} and \emph{minor agent}, respectively, interact through their aggregated order-flow, which is resulting by their own trades. The trades of both agents create temporary and permanent price impact which affect the asset price process. We model this system by means of two coupled multi-period stochastic control problems over a fixed time horizon $T$, where the high-frequency trader exploits the exogenous information continuously, but is also subject to periodic inventory constraints at the end of any sub-period $0 <t_1<...<t_{n} = T$, for some $n\geq1$. On the other hand, the institutional investor has a limited access to the signal but she is only subject to inventory constraints at time $T$. Since in the setting that we wish to describe the minor agent has a clear advantage in terms of information exploitation, it is natural to look for a Stackelberg equilibrium in this game, where the minor agent takes advantage of the signal and the order-flow which is created by the major agent's transactions.

Our first result derives the unique optimal strategy of the high-frequency trader given any admissible strategy of the major agent (see Theorem \ref{thm-minor-optimal-strategy}). The challenging part in establishing a Stackelberg equilibrium is to derive the strategy of the player who plays first, namely the major agent. We develop a novel approach for this class of Stackelberg games in order to derive the major agent's optimal strategy given the optimal signal-adaptive strategy of the minor agent using tools from the theory of integral equations. Specifically, in Theorem  \ref{thm-major-solution} we describe the unique optimal major agent's strategy in terms of the resolvent of a Fredholm integral equation, thus establishing the unique multi-period Stackelberg equilibrium of the game. In Section \ref{sec-illustrations} we illustrate the solutions to the Stackelberg game and in Section \ref{sec-numerics} we derive the additional technical steps that are needed in order to obtain such explicit results directly from Theorems \ref{thm-minor-optimal-strategy} and \ref{thm-major-solution}. 

Our method for solving this Stackelberg game introduces concepts from the theory of integral equations, which according to our knowledge were  not used before in order to solve such equilibrium problems. These tools allow us to derive explicitly the equilibrium of this multi-period  Stackelberg model, which is quite a surprising as typically these models are highly intractable and can be solved only in some special cases. See the comparison with related papers by \cite{Carlinetal}, \cite{SchoenebornSchied} and \cite{fast_slow_informed} which is discussed later in this section. For this reason we put an effort to create a self-contained text, which introduces the required background for the application of our method.  In particular in Sections \ref{sec-numerics} and \ref{proof-convergence-numerics} and in Appendix \ref{sec-spectral-explicit} we describe in detail the numerical scheme which is used in order to plot the optimal strategy of the major agent in Theorem  \ref{thm-major-solution}, and we provide the proof of its convergence. These details are given in order demonstrate how the theoretical solution in Theorem  \ref{thm-major-solution}, which is given in terms of a spectral decomposition for a resolvent of an operator in \eqref{eq-resolvent-kernel-def}, can be approximated and plotted. The discussed operator $\mathcal G$ arises from the solution to the minor agent's problem (see \eqref{eq-kernel-caligraphic}).

From our main theoretical results we derive explicit expressions for both agents equilibrium strategies which have fascinating economic interpretation regarding the trading behaviour of high-frequency traders and on the best practices for  institutional investors who are executing large meta-orders. We summarise these insights in the following list and refer the reader for the comprehensive discussion in Section \ref{sec-illustrations}: 
\begin{itemize} 
\item[\textbf{(i)}] Our results suggest that the high-frequency trader can adopt either predatory or cooperative strategy with respect to the major agent in each period, depending on the tradeoff between the order-flow of the major agent and the trading signal during the period. See Figure \ref{fig-multiday-liquidation-trades} for specific realisations of such strategies.
\item[\textbf{(ii)}] We compare the revenues of the major agent's optimal order execution with a benchmark optimal strategy in which the agent is not taking into account of the minor agent's trading activity.  In Figure \ref{fig-benchmark-savings} we show that the major agent's optimal strategy on average considerably outperforms the benchmark strategy. This contrasts with the common belief that high-frequency traders order-flow can be regarded as noise. 

\item[\textbf{(iii)}] We show that the major agent's and minor agent's optimal trading strategies induce the well-known U-shaped pattern of intraday trading volume, where the traded volume peaks at the beginning and at the end of the day (see Figure \ref{fig-volume-curve-daily}). 
\end{itemize} 

Our model is related to a class of predatory trading models which was introduced by \cite{Carlinetal} for a single period and  further developed by \cite{SchoenebornSchied} for two-periods.  In \cite{Carlinetal} a single-period multi-agent game was introduced where traders are liquidating simultaneously where while creating both temporary and permanent price impact which affects the price process. In their model there are two types of agents: \emph{sellers} which start with a positive amount of assets and \emph{competitors} who have zero initial positions. All agents are seeking to maximise simultaneously similar revenue functionals, using strictly deterministic strategies. Their main results derive a Nash equilibrium for the game. In the single period case it is shown that, under some assumptions on the model parameters, if the seller is liquidating then the competitor is first selling and later buying her position back due to inventory constraints (see Figure 1 in  \cite{SchoenebornSchied}). In the two period model the seller can liquidate only in the first period, while the competitor can execute her strategy over two periods. Depending on the price impact parameters, there are two possible scenarios: either the competitor is buying in the first period and then selling in the second period, i.e. introducing cooperative strategies in the game (see Figure 8 therein), or doing a round trip of selling first and then closing the position all in the first period. 

Our model is different from the \cite{SchoenebornSchied} in a few critical points. First, we assume that minor agent (resp. competitor) is trading at a higher frequency than the major agent (resp. seller). This is reflected in the model as periodic inventory constraints in the minor agent's revenue functional. This term do not appear in the major agent's objective, who has a fuel constraint only at the end to the trading time horizon. The minor agent is also reacting continuously to exogenous information while the major agent has access to the information only at the beginning of the trade. This means also that the minor agent's optimal strategy is stochastic, unlike the deterministic game which was studied in \cite{SchoenebornSchied}. 
Another major difference between these models is in the type of equilibrium which is derived. In \cite{SchoenebornSchied} an open-loop Nash equilibrium was derived, which means that all traders optimise simultaneously. From market microstructure setting with various frequencies, it is essential to consider a Stackelberg equilibrium as the minor agent is indeed reacting to the major agent's selling strategy. As stated before, neither \cite{Carlinetal} nor  \cite{SchoenebornSchied} take into account exogenous information, therefore, their optimal strategies are always found to be deterministic. One of the main conclusions of our analysis is that this aspect has a prominent effect on the behaviour of the major agent and the minor agent, which is not captured in \cite{Carlinetal} and \cite{SchoenebornSchied}. Finally, despite the clear asymmetry in our model between the agents in the access to information, type of equilibrium and inventory constraints, which make the problem quite involved and required us to introduce new methods for Stackelberg games, we are able to derive explicit solutions for any number of time periods, in contrast to \cite{SchoenebornSchied}, where only the two period model is tractable.  

We briefly mention in this context that \cite{fast_slow_informed} studied a discrete-time model where fast traders, whose decisions depend on a market signal, trade simultaneously with slow traders, who can only observe a lagged version of that same signal. However besides this difference in the access to information, the fast agents do not have different objective functionals nor inventory constraints which differ them from the slow agents, which are some of the main ingredients in our model.

  \paragraph{Structure of the paper.} The rest of the paper is organised as follows. In Section \ref{sec-model-setup} we define the two player model. Our main results regarding the explicit solution to the Stackelberg game are presented in Section \ref{sec-main-result}. Section \ref{sec-illustrations} contains the illustrations and the financial interpretation of the main results. In Section \ref{sec-numerics} we derive rigorously the numerical scheme that we have used in order to plot the solutions in Section \ref{sec-illustrations}. The proofs of the results of this paper are given in Sections \ref{sec-proof-thm-minor-optimal}--\ref{proof-convergence-numerics} as well as in Appendices \ref{sec-spectral-explicit}--\ref{sec-proof-prop-ode}.

\section{Model Setup}\label{sec-model-setup}
We define the Stackelberg game between a major agent liquidating an initial amount of shares in a risky assets and a proprietary high frequency trader (HFT) trading on the same asset and who,  throughout this paper, we will regard as a minor agent.  
Let $T>0$ denote a finite deterministic time horizon and fix a filtered probability space $(\Omega, \mathcal F,(\mathcal F_{t})_{t\in[0,T]},\mathbb{P} )$ satisfying the usual conditions of right continuity and completeness. The set $\mathcal H^2$ represents the class of all (special) semimartingales~$P=(P_t)_{t\in[0,T]}$ whose canonical decomposition $P =   M + A$ into a (local) martingale $  M=(  M_t)_{t\in[0,T]}$ and a predictable finite-variation process $A=(A_t)_{t\in[0,T]}$ satisfies
\be \label{ass:P} 
\mathbb{E} \left[ \langle    M  \rangle_T \right] + \mathbb{E}\left[\left( \int_0^T |dA_s| \right)^2 \right] < \infty.  
\ee
We denote by $L^{2}([0,T])$ the space of square integrable functions $f:[0,T] \rr \mathbb{R}$ and by 
$\langle \cdot ,\cdot\rangle_{L^{2}}$ the inner product on $L^{2}([0,T])$, that is 
\begin{equation*}
\langle f,g\rangle_{L^{2}} = \int^{T}_{0}f(t)g(t)dt, \quad f,g\in L^{2}([0,T]),
\end{equation*}
and by $||\cdot||_{L^{2}}$ the associated norm.

 \paragraph{Admissible strategies and price impact.}
The major agent has an initial holding of $q_0 \in \mathbb{R}$ shares in a risky asset. Her trading rate $\nu^{0}= (\nu^{0}_{t})_{t\in[0,T]}$ is chosen from the class of fuel-constrained \textit{deterministic} admissible strategies $\mathcal{A}_{M}^{q_0}$, which is defined as 
\begin{equation}\label{eq-major-admissible-set}
\mathcal{A}_{M}^{q_0} :=\left\{ \nu\in L^{2}([0,T]) \text{ s.t. } \int^{T}_{0}\nu_{t}dt=q_0\right\}.
\end{equation}
Her trading rate $\nu^{0}$ affects her inventory process $  Q^{0,\nu^{0}}$ so that
\begin{equation}\label{eq-major-inventory-def}
    Q^{0,\nu^{0}}_{t} = q_0- \int^{t}_{0}\nu^{0}_{s} ds, \quad 0\leq t \leq T. 
\end{equation} 
The minor agent, being a proprietary high frequency trader, is assumed to have a zero initial position in the risky asset. Her trading rate $\nu^{1}= (\nu^{1}_{t})_{t\in[0,T]}$ is chosen from a class of \textit{adaptive} admissible strategies
\begin{equation} \label{eq-minor-admissible-set-def}
    \mathcal{A}_{m}  := \left\{\nu \textrm{ progressively measurable s.t. } \mathbb{E}\left[\int_{0}^{T} \nu_{s}^{2}ds\right]<\infty\right\}.
\end{equation}
Her trading rate $\nu^{1}$ affects her inventory process $  Q^{1,\nu^{1}}$ so that
\begin{equation}
\label{eq-minor-inventory}
Q^{1,\nu^{1}}_{t}=-\int^{t}_{0}\nu_{s}^{1} ds, \quad 0\leq t\leq T.
\end{equation}
Throughout, we use the notation $\nu=(\nu^{0},\nu^{1})$ for the major agent's control $\nu^{0}$ and the minor agent's control $\nu^{1}$. Once $\nu$ is fixed, the visible asset mid-price $P^{\nu}$ satisfies
\begin{equation}
\label{eq-P-nu-definition}
P_{t}^{\nu} =P_{t}-Y_{t}^{\nu}  , \quad 0\leq t\leq T, 
\end{equation}
where $P\in\mathcal{H}^{2}$ and where $Y^{\nu}$ is the permanent price impact price impact \`a la~\citet{OPTEXECAC00}, which is generated by both agents and which is given by 
\begin{equation}\label{eq-permanent-impact}
Y_{t}^{\nu} = \int^{t}_{0}(\kappa_{0}\nu_{s}^{0} + \kappa_{1}\nu_{s}^{1})ds,
\end{equation}
where $\kappa_{i},$ $i=1,2$, are positive constants. 

\paragraph{Major agent: institutional investor}
The major agent's execution price is affected instantaneously in an adverse manner through the presence of linear temporary price impact. The major agent's execution price is taken to be
\begin{equation}
\label{eq-execution-price-major}
S_{t}^{0,\nu}  =P_{t}^{\nu} -\lambda_{0}\nu_{t}^{0}, 
\end{equation}
where $\lambda_{0}$ is a positive constant measuring the magnitude of her temporary price impact. As a result, the major agent's cash holdings satisfy
\begin{equation}
\label{eq-major-cash}
    X^{0,\nu}_{t}  =x_{0} + \int^{t}_{0}S_{s}^{0,\nu}   \nu^{0}_{s}ds, \quad  0\leq t \leq T.
\end{equation}
 The major agent's objective is to optimally unwind her initial position $q_0$ by the trading horizon $T$, so to minimise her execution costs. This is equivalent to maximising the expected revenues from her liquidation, therefore, we take the major agent's performance functional to be
 \begin{equation}\label{eq-major-functional}
 H^{0}\left(\nu^{0};\nu^{1}\right) := \mathbb{E}\left[X_{T}^{0,\nu}\right].
\end{equation}

\paragraph{Minor agent: high-frequency trader.}

As in the case of the major agent, the transactions of the minor agent create temporary price impact, such that the execution price of her orders is given by
\begin{equation}\label{eq-execution-price-minor}
S_{t}^{1,\nu} = P_{t}^{\nu} - \lambda_{1}\nu_{t}^{1}. 
\end{equation} 
where $\lambda_{1}$ is a positive constant. Note that the temporary price impact parameter is likely to be smaller for the minor agent as HFTs can take advantage of the order-book realtime information in order to reduce their price impact. 

The minor agent's cash process is given by
\begin{equation}
\label{eq-minor-cash}
X_{t}^{1,\nu} = x_{1} + \int^{t}_{0}S_{s}^{1,\nu} \nu_{s}^{1}ds, \quad 0\leq t\leq T. 
\end{equation}

The minor agent wishes to maximise her cash, however as an HFT, she is inclined to avoid \textit{overnight risk}, specifically, in the form of non-zero overnight inventory. As an example, consider $T$ to be one business week, such that $[0,T]$ can be partitioned in five disjoint and contiguous intervals of equal duration $\tau$, where each intervals represents the market hours of each business day from Monday to Friday. Without loss of generality, in the context of our example we assume that the minor agent's intraday risk preferences are independent of the business day considered and we ignore the possibility of after-hours trading. Since the minor agent wishes to close her position by the end of each day, then as often done for terminal inventory penalties in the context of single-day liquidations, we can introduce a penalisation for non-zero inventory at the end of each day. These dynamic inventory preferences can be accounted by modelling the running inventory costs of the minor agent via a periodic function of period $\tau$ which drastically increases towards the end of each day (see e.g. \eqref{eq-periodic-phi1}), i.e. as $t$ approaches $\tau,2\tau,3\tau,4\tau,5\tau$ from the left.  Mathematically, in order to capture the minor agent's dynamic inventory preferences of our example and more general ones, we define the minor agent's running inventory costs in terms of a function $\phi^{1}:[0,T] \rr \mathbb R_+$  which we take to be piecewise continuous and locally bounded. 

The minor agent risk-revenue functional is therefore given by
\begin{equation}\label{eq-minor-functional}
 H^{1}(\nu^{1};\nu^{0}) := \mathbb{E}\left[ X_{T}^{1,\nu} + Q^{1,\nu^{1}}_{T}\left(P_{T}^{\nu}-\alpha Q^{1,\nu^{1}}_{T}\right)  - \int^{T}_{0}\phi^{1}_{t} \left(Q^{1,\nu^{1}}_{t}\right)^{2}dt\right].
\end{equation}
The first two terms in~\eqref{eq-minor-functional} represent the trader's terminal wealth; that is, her final cash position, accounting for the accrued trading costs which are induced by temporary price impact and the permanent price impact of both agents as prescribed in~\eqref{eq-execution-price-minor}, as well as the mark-to-market value of her terminal risky asset position. The third and fourth terms in~\eqref{eq-minor-functional} implement a penalty $\phi_t^{1}>0$ and $\alpha > 0$ on her running and terminal inventory, respectively. Also observe that $H^{1}(\nu^{1};\nu^{0}) < \infty$ for any pair of admissible strategies $\nu^0\in   \mathcal{A}^{q_0}_{M}$ and $\nu^{1} \in \mathcal{A}_{m}$.

\paragraph{The Stackelberg game.} We formulate the competition between the major agent and the minor agent as a stochastic Stackelberg game in which the minor agent is reacting to the major agent's trading. Mathematically, the game unfolds in two steps:
\begin{enumerate}
\item[(i)] \textit{Minor Agent's Problem:} for a given major agent's liquidation strategy $\nu^{0}\in\mathcal{A}_{M}^{q_0}$,  the minor agent chooses her own strategy $\nu^{1,*}(\nu^{0}) \in \mathcal{A}_{m}$ in order to maximise her objective functional $H^{1}$;
\item[(ii)]  \textit{Major Agent's Problem:} given the optimal minor agent's strategy $\nu^{1,*}$ established in (i), the major agent determines the optimal liquidation strategy $\nu^{0,*}\in \mathcal{A}_{M}^{q_0}$  in order to maximise her objective functional $H^{0}$.
\end{enumerate}
In the context of our model, we formalise the definition of Stackelberg equilibrium as follows. 
\begin{definition}[Stackelberg equilibrium] \label{def-stac} 
A pair $\nu^*:=(\nu^{0,*},\nu^{1,*}(\nu^{0,*}))$ where $\nu^{0,*}$ and $\nu^{1,*}(\nu^{0,*})$ solve the major and minor agent's problems, respectively, is called a Stackelberg equilibrium. 
\end{definition}


\section{Main Results}
\label{sec-main-result}
Our main results derive explicitly the unique Stackelberg equilibrium of the game. As stated at the end of Section \ref{sec-model-setup}, we start by solving the minor agent's problem.
\subsection{Solution to the Minor Agent's Problem}
\label{sec-minor-results}

We denote by $L^{2}([0,T]^{2})$ the space of measurable kernels $\mathcal{T}:[0,T]^{2}\to\mathbb{R}$ such that
\begin{equation} \label{bnd-T} 
\int^{T}_{0}\int^{T}_{0}\mathcal{T}(t,s)^{2}dtds <\infty.
\end{equation}
Henceforth, we make the following assumption.
\begin{assumption}
\label{ass-alpha-lambda}
We assume that the parameters $\alpha$ in \eqref{eq-minor-functional} and $\kappa_{1}$ in \eqref{eq-permanent-impact}  are chosen such that
\begin{equation*}
2\alpha \geq \kappa_{1}.
\end{equation*}
\end{assumption}

Let $r^{1}= (r^{1}_{t})_{t\in[0,T]}$ be the solution to the following Riccati equation with a time varying coefficient, 
\begin{equation} \label{eq-v1-ode}
\begin{cases}
\partial_{t}r^{1}_{t} &= \frac{1}{\lambda_{1}}\phi^{1}_{t} -(r^{1}_{t})^{2}, \\ 
r^{1}_{T}&=-\frac{2\alpha- \kappa_{1}}{2\lambda_{1}} .
\end{cases}
\end{equation}
Under Assumption \ref{ass-alpha-lambda}, the solution $r^{1}$ of \eqref{eq-v1-ode} exists and is unique over $[0,T]$ (see Proposition \ref{prop-results-ode}).
We further define
\begin{equation}\label{eq-def-xi-pm}
\xi_{t}^{\pm} := e^{\pm \int^{t}_{0}r^{1}_{z}dz}, \quad 0\leq t \leq T, 
\end{equation}
as well as the kernel $\mathcal{K}:[0,T]^{2}\to\mathbb{R}_{+}$ which is given by 
\begin{equation} \label{eq-kernel-minor}
\mathcal{K}(t,s):= \xi_{t}^{-}\xi_{s}^{+}, \quad 0\leq t,s  \leq T.
\end{equation}
Note that the kernel $\mathcal{K}$ is in $L^{2}([0,T]^{2})$ (see Lemma \ref{lemma-kernel-L2}). Moreover, for any $\nu^{0}\in\mathcal{A}_{M}^{q_0}$ we define the predictable process
\begin{equation} \label{eq-v0}
r^{0}_{t} := \frac{1}{2\lambda_{1}} \mathbb{E}_{t}\left[\int^{T}_{t} \mathcal{K}(t,s)(dA_{s}-\kappa_{0}\nu^{0}_{s}ds)
\right],  \quad 0\leq t \leq T. 
\end{equation}

The solution to the minor agent problem is given in the following theorem. 
\begin{theorem}[Solution to the minor agent's problem]\label{thm-minor-optimal-strategy}
Let $\nu^{0}\in\mathcal{A}_{M}^{q_0}$. Under Assumption \ref{ass-alpha-lambda}, there exists a unique optimal strategy $\nu^{1,*}(\nu^{0})\in\mathcal{A}_{m}$ that maximizes \eqref{eq-minor-functional}. This strategy is given by
\begin{equation} \label{eq-minor-agent-optimal}
\nu^{1,*}_{t}=-\left(r^{0}_{t}+ r^{1}_{t}\int^{t}_{0} \mathcal{K}(s,t)r^{0}_{s}ds\right), \quad 0\leq t \leq T. 
\end{equation}
\end{theorem}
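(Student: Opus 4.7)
The plan is to recast $H^{1}(\nu^{1};\nu^{0})$ as a coercive quadratic functional of $\nu^{1}$ plus terms that depend only on $\nu^{0}$, so that a pointwise completion of squares identifies the unique maximizer. I would start by simplifying $H^{1}$ via the It\^o product rule applied to $Q^{1,\nu^{1}}_{t}P^{\nu}_{t}$: since $Q^{1,\nu^{1}}$ is absolutely continuous its quadratic covariation with $P^{\nu}$ vanishes, and integration by parts together with the identity $\int_{0}^{T}Q^{1,\nu^{1}}_{t}\nu^{1}_{t}dt = -\tfrac{1}{2}(Q^{1,\nu^{1}}_{T})^{2}$ gives, after invoking the canonical decomposition $P=M+A$,
\begin{equation*}
H^{1}(\nu^{1};\nu^{0}) = x_{1} + \mathbb{E}\left[\int_{0}^{T}Q^{1,\nu^{1}}_{t}(dA_{t}-\kappa_{0}\nu^{0}_{t}dt) - \lambda_{1}\int_{0}^{T}(\nu^{1}_{t})^{2}dt - \int_{0}^{T}\phi^{1}_{t}(Q^{1,\nu^{1}}_{t})^{2}dt - \tfrac{2\alpha-\kappa_{1}}{2}(Q^{1,\nu^{1}}_{T})^{2}\right].
\end{equation*}
The $dM_{t}$-integral has zero expectation by $P\in\mathcal{H}^{2}$ and the $L^{2}$ control of $Q^{1,\nu^{1}}$ inherited from $\nu^{1}\in\mathcal{A}_{m}$, while Assumption \ref{ass-alpha-lambda} guarantees the terminal penalty coefficient is nonnegative.

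Next I would prepare the completion of squares. Introduce the square-integrable martingale $N_{t}:=\mathbb{E}_{t}\bigl[\int_{0}^{T}\xi_{s}^{+}(dA_{s}-\kappa_{0}\nu^{0}_{s}ds)\bigr]$, well defined by \eqref{ass:P} and the boundedness of $\xi^{\pm}$ (which is ensured by Proposition \ref{prop-results-ode}). A direct manipulation of \eqref{eq-v0}--\eqref{eq-kernel-minor} yields $2\lambda_{1}r^{0}_{t} = \xi_{t}^{-}\bigl(N_{t}-\int_{0}^{t}\xi_{s}^{+}(dA_{s}-\kappa_{0}\nu^{0}_{s}ds)\bigr)$, whence the semimartingale decomposition
\begin{equation*}
2\lambda_{1}\,dr^{0}_{t} = -2\lambda_{1}r^{1}_{t}r^{0}_{t}\,dt + \xi_{t}^{-}dN_{t} - (dA_{t}-\kappa_{0}\nu^{0}_{t}dt).
\end{equation*}
Applying the It\^o product rule to $\Psi_{t}:=\lambda_{1}r^{1}_{t}(Q^{1,\nu^{1}}_{t})^{2} + 2\lambda_{1}Q^{1,\nu^{1}}_{t}r^{0}_{t}$ and substituting the Riccati equation \eqref{eq-v1-ode} for $\partial_{t}r^{1}_{t}$ produces, after algebraic rearrangement, the pointwise identity
\begin{equation*}
d\Psi_{t} - \lambda_{1}(\nu^{1}_{t})^{2}dt - \phi^{1}_{t}(Q^{1,\nu^{1}}_{t})^{2}dt + Q^{1,\nu^{1}}_{t}(dA_{t}-\kappa_{0}\nu^{0}_{t}dt) = -\lambda_{1}\bigl(\nu^{1}_{t}+r^{1}_{t}Q^{1,\nu^{1}}_{t}+r^{0}_{t}\bigr)^{2}dt + \lambda_{1}(r^{0}_{t})^{2}dt + Q^{1,\nu^{1}}_{t}\xi_{t}^{-}dN_{t}.
\end{equation*}

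Integrating from $0$ to $T$, using $Q^{1,\nu^{1}}_{0}=0$ together with $r^{0}_{T}=0$ and $r^{1}_{T}=-(2\alpha-\kappa_{1})/(2\lambda_{1})$ so that $\Psi_{T} = -\tfrac{2\alpha-\kappa_{1}}{2}(Q^{1,\nu^{1}}_{T})^{2}$ exactly matches the terminal penalty of $H^{1}$, and then taking expectations (the stochastic integral against $N$ vanishes via Burkholder--Davis--Gundy and Cauchy--Schwarz), I obtain the clean decomposition
\begin{equation*}
H^{1}(\nu^{1};\nu^{0}) = x_{1} + \lambda_{1}\mathbb{E}\int_{0}^{T}(r^{0}_{t})^{2}dt - \lambda_{1}\mathbb{E}\int_{0}^{T}\bigl(\nu^{1}_{t}+r^{1}_{t}Q^{1,\nu^{1}}_{t}+r^{0}_{t}\bigr)^{2}dt.
\end{equation*}
The first two terms depend only on $\nu^{0}$, so strict concavity in $\nu^{1}$ forces the unique maximizer to satisfy the linear feedback $\nu^{1,*}_{t}=-\bigl(r^{0}_{t}+r^{1}_{t}Q^{1,*}_{t}\bigr)$. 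Solving the induced ODE $\partial_{t}Q^{1,*}_{t}=r^{0}_{t}+r^{1}_{t}Q^{1,*}_{t}$ with $Q^{1,*}_{0}=0$ by variation of parameters yields $Q^{1,*}_{t}=\int_{0}^{t}\mathcal{K}(s,t)r^{0}_{s}ds$, and substituting this back into the feedback rule produces exactly \eqref{eq-minor-agent-optimal}.

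The main obstacles are the integrability bookkeeping: verifying $\nu^{1,*}\in\mathcal{A}_{m}$, which requires an $L^{2}$-estimate for $r^{0}$ (from $N\in L^{2}$ and boundedness of $\xi^{\pm}$) and for $Q^{1,*}$ (from the kernel bound \eqref{bnd-T} and Lemma \ref{lemma-kernel-L2}); and justifying that the stochastic integrals $\int_{0}^{T}Q^{1,\nu^{1}}_{t}dM_{t}$ and $\int_{0}^{T}Q^{1,\nu^{1}}_{t}\xi_{t}^{-}dN_{t}$ are true martingales, which follows from the uniform $L^{2}$-control of $Q^{1,\nu^{1}}$ on $\mathcal{A}_{m}$ together with $M, N\in\mathcal{H}^{2}$. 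These verifications are technical but routine once the completion-of-squares identity is in place.
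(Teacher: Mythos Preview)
Your proof is correct but follows a genuinely different route from the paper. The paper proceeds via calculus of variations: it derives the same alternative representation of $H^{1}$, proves strict concavity directly, computes the G\^ateaux derivative, and obtains a first-order condition in the form of a linear FBSDE system for $(Q^{1,\nu^{1,*}},\nu^{1,*})$. It then postulates the feedback ansatz $\nu^{1,*}_{t}=-(r^{0}_{t}+r^{1}_{t}Q^{1,\nu^{1,*}}_{t})$, derives the Riccati ODE \eqref{eq-v1-ode} and the linear BSDE \eqref{eq-v0-bsde} for $r^{0}$ by matching coefficients, and verifies that the resulting pair solves the FBSDE. Your approach bypasses the variational machinery entirely: by applying It\^o's rule to the quadratic form $\Psi_{t}=\lambda_{1}r^{1}_{t}(Q^{1}_{t})^{2}+2\lambda_{1}Q^{1}_{t}r^{0}_{t}$ you obtain a pathwise completion-of-squares identity that simultaneously delivers the optimal value $x_{1}+\lambda_{1}\mathbb{E}\int_{0}^{T}(r^{0}_{t})^{2}dt$ and the optimal feedback. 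This is essentially a verification argument with $\Psi$ playing the role of the value function; it is shorter, yields the optimal performance explicitly, and makes uniqueness immediate (the nonnegative squared term vanishes on a unique trajectory). The paper's approach, while longer, is more systematic and would extend more readily to settings where a closed-form candidate is not available a priori; it also isolates the FBSDE characterization as a result of independent interest. One minor remark: your appeal to ``strict concavity'' at the end is unnecessary---the completion of squares already shows that equality in the upper bound holds if and only if $\nu^{1}_{t}+r^{1}_{t}Q^{1}_{t}+r^{0}_{t}=0$ $d\mathbb{P}\otimes dt$-a.e., and the linear ODE for $Q^{1}$ then pins down the trajectory uniquely.
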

The proof of Theorem \ref{thm-minor-optimal-strategy} is given in Section \ref{sec-proof-thm-minor-optimal}. 
\begin{remark}
\label{remark-ass-minor}
In Lemma \ref{lemma-minor-strictly-concave} we show that Assumption \ref{ass-alpha-lambda} is a sufficient condition to guarantee the strict concavity of the minor agent's functional \eqref{eq-minor-functional}, hence the uniqueness of the solution to the minor agent’s problem.
\end{remark}
\begin{remark}
\label{remark-minor-literature}
Note that minor agent's optimal control in \eqref{eq-minor-agent-optimal} can be written in feedback form as follows, 
\begin{equation*} \label{eq-minor-agent-optimal-feedback}
\nu^{1,*}_{t}=-\left(r^{0}_{t}+ r^{1}_{t}Q_{t}^{1,\nu^{1,*}}\right), \quad 0\leq t \leq T. 
\end{equation*}
In the special case where there is no permanent price impact, that is $\kappa_{i} = 0$, $i=1,2$ and the risk aversion function $\phi^{1}$ is a positive constant, \eqref{eq-minor-agent-optimal} coincides with the optimal strategy in \cite[Theorem 3.1]{BMO:19}.
\end{remark}
 
\subsection{Solution to the Major Agent's Problem}
\label{sec-main-resul-major}
Our next step is to derive the maximiser of the major agent's objective functional \eqref{eq-major-functional}, given the minor agent's optimal strategy $\nu^{1,*}$ in \eqref{eq-minor-agent-optimal}. As it is often the case in Stackelberg games, solving the second phase of the game is technically challenging and rarely achievable. In order to do so we make the following simplifying assumption on the signal $A$ in \eqref{ass:P}. We assume that the signal process $A$ is given by
\begin{equation*}
A_{t}=\int^{t}_{0}\mu_{s}ds, \quad 0\leq t \leq T. 
\end{equation*}
where $\mu=(\mu_{t})_{t\in[0,T]}$ is an $(\mathcal{F}_t)_{t\in[0,T]}$-adapted stochastic process  satisfying
\begin{equation}
\label{eq-mu-square-integrable}
\int^{T}_{0}\mathbb{E}[\mu_{t}^{2}]dt <\infty.
\end{equation}
Note that this assumption is an adaptation of the assumptions made in \cite{Car-Jiam-2016,Lehalle-Neum18} on the signal for single agent optimal execution problems to the present context. We further denote 
\be \label{exp-s} 
  \bar{\mu}_{t} := \mathbb{E}[\mu_{t}] , \quad 0\leq t \leq T. 
\ee
Next, we introduce some essential definitions related to linear operators in $ L^{2}([0,T])$. 
\paragraph{Definitions for linear operators in $ L^{2}([0,T])$.} 

For any linear operator $\mathsf{T}$ from $L^{2}([0, T])$ to $L^{2}([0, T])$ we define the operator norm
\begin{equation}
\label{eq-operator-norm-definition}
||\mathsf{T}|| := \sup \left\{||\mathsf{T}\psi||_{ L^{2}}: \psi\in L^{2}([0,T]), \ ||\psi||_{ L^{2}}\leq 1 \right\}, 
\end{equation}
and we denote by $B(L^{2}([0,T]))$ the space of all bounded linear operator from $L^{2}([0, T ])$ to $L^{2}([0, T ])$ with respect to the operator norm \eqref{eq-operator-norm-definition}. 

For any kernel $\mathcal{T}\in L^{2}([0,T]^{2})$ (see  \eqref{bnd-T}) we say that $\mathsf{T}$ is the integral operator generated by the kernel $\mathcal{T}$ if for any $\psi\in L^{2}([0,T])$, 
\begin{equation*}
(\mathsf{T}\psi)(t) = \int^{T}_{0}\mathcal{T}(t,s)\psi(s)ds, \quad 0\leq t \leq T.
\end{equation*}
Any integral operator generated by a kernel in $ L^{2}([0,T]^{2})$ is in $B(L^{2}([0,T]))$ by the Cauchy-Schwarz inequality. 

If $\mathsf{T}_{1}$ and $\mathsf{T}_{2}$ are two operators in $B(L^{2}([0,T]))$, then we denote by $\mathsf{T}_{2}\mathsf{T}_{1}$ the operator obtained by composing $\mathsf{T}_{2}$ with $\mathsf{T}_{1}$, that is for any $\psi\in L^{2}([0,T])$, 
\begin{equation*}
(\mathsf{T}_{2}\mathsf{T}_{1}\psi)(t) := (\mathsf{T}_{2}(\mathsf{T}_{1}\psi))(t), \quad 0\leq t \leq T.
\end{equation*}

\paragraph{Special operators for our setting.} 
Recall that $\mathcal K$ was defined in \eqref{eq-kernel-minor}. We introduce the kernel $\mathcal{G}:[0,T]^{2}\to\mathbb{R}_{+}$ defined as
\begin{equation}
\label{eq-kernel-caligraphic}
\mathcal{G}(t,s) := \int^{t\wedge s}_{0} \mathcal{K}(u,t)\mathcal{K}(u,s)du, \quad 0\leq t,s \leq T.
\end{equation}
Note that the kernel $\mathcal{G}$ is symmetric and in $L^{2}([0,T]^{2})$ (see Proposition \ref{prop-kernel-caligraphic-properties}). We define the operators $\mathsf{G}$ and $\mathsf{S}$ acting on any  $\psi\in L^{2}([0,T])$ as follows,
 \begin{align}
(\mathsf{G}\psi)(t)  &:= \int^{T}_{0} \mathcal{G}(t,s) \psi(s)ds, \label{eq-G-rep-integral}\\
(\mathsf{S}\psi)(t)&:= \frac{1}{2\lambda_{0}}\int^{T}_{0}\mathbbm{1}_{\{s\leq t\}}\psi(s)ds +\frac{\kappa_{1}}{4\lambda_{1}\lambda_{0}}(\mathsf{G}\psi)(t).\label{eq-chi-function}
\end{align}
Note that the both operators $\mathsf{G}$ and $\mathsf{S}$ are in $B(L^{2}([0,T]))$ (see Proposition \ref{prop-G-adjoint-compact} and Lemma \ref{lemma-chi-continuous}).  Moreover,  the operator $\mathsf{G}$ admits a spectral decomposition in terms of a sequence of positive eigenvalues $(\zeta_{n})_{n\geq 1}$ and a corresponding sequence of eigenfunctions $(\psi_{n})_{n\geq 1}$ in $L^{2}([0,T])$ (see Lemma \ref{lemma-spectral-theorem-G}). We define the \textit{resolvent} kernel $\mathcal{R}:[0,T]^{2}
\to\mathbb{R}$ as  
 \begin{equation}
 \label{eq-resolvent-kernel-def}
\mathcal{R}(t,s) = -\frac{\kappa_{1}\kappa_{0}}{2\lambda_{0}\lambda_{1}} \mathcal{G}(t,s) + \sum_{n\geq 1} \frac{1}{1+\frac{\kappa_{1}\kappa_{0}}{2\lambda_{0}\lambda_{1}}\zeta_{n}} \left(\frac{\kappa_{1}\kappa_{0}}{2\lambda_{0}\lambda_{1}}\zeta_{n}\right)^{2} \psi_{n}(t)\psi_{n}(s), 
\end{equation}
for all $t,s\in[0,T]$ and where the sum converges uniformly and uniformly-absolutely over $[0,T]^{2}$, see Remark \ref{remark-absolute-convergence} for details. Moreover, we define the \textit{resolvent} operator $\mathsf{R}$, acting on any $\psi\in L^{2}([0,T])$ as follows, 
\begin{equation}
\label{eq-resolvent-operator}
(\mathsf{R}\psi)(t) := \psi(t) + \int^{T}_{0}\mathcal{R}(t,s) \psi(s)ds, \quad 0\leq t \leq T. 
\end{equation}
The operator $\mathsf{R}$ is also in $B(L^{2}([0,T]))$, this is proved later in Proposition \ref{prop-properties-resolvent}.

\paragraph{Notation.} We denote by $\mathit{1}(t)$ the constant function which equals to $1$ everywhere on $[0,T]$.

We are ready to state our main result regarding the solution to major's agent problem conditional on the minor agent adopting the strategy $\nu^{1,*}$ given in \eqref{eq-minor-agent-optimal}. Recall that $\bar \mu$ was defined in \eqref{exp-s} and $\mathsf{S}$ was defined in \eqref{eq-chi-function}. 
\begin{theorem}[Solution to the major agent's problem] \label{thm-major-solution}
Assume that $\nu^{1,*}$ is given by \eqref{eq-minor-agent-optimal} and that Assumption \ref{ass-alpha-lambda} holds. Then, there exists a unique optimal strategy $\nu^{0,*}\in\mathcal{A}_{M}^{q_0}$ that maximizes  the major agent's objective functional \eqref{eq-major-functional}. It is given by
\begin{equation}
\label{eq-major-solution}
\nu^{0,*}_{t} =\frac{\eta}{2\lambda_{0}}(\mathsf{R}\mathit{1})(t)  + (\mathsf{R}\mathsf{S}\bar{\mu})(t), \quad 0\leq t \leq T, 
\end{equation}
where
\begin{equation}
\label{eq-constant-major-solution}
\eta = 2\lambda_{0}\frac{q_0 -  \langle\mathsf{R} \mathsf{S}\bar{\mu},\mathit{1} \rangle_{L^{2}} }{ \left\langle \mathsf{R} \mathit{1}, \mathit{1}\right\rangle_{L^{2}}}.
\end{equation}
Moreover, $\nu^{0,*}_{t}$ is continuous on $[0,T]$.
\end{theorem}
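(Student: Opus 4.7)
The plan is to show that, once the minor agent's optimal response \eqref{eq-minor-agent-optimal} is substituted into the major agent's objective, the problem reduces to a deterministic strictly concave quadratic program under the linear fuel constraint, whose first-order condition is a Fredholm integral equation of the second kind that is inverted by the resolvent operator $\mathsf{R}$ from \eqref{eq-resolvent-operator}.

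First I would plug $\nu^{1,*}$ into \eqref{eq-major-cash} via \eqref{eq-P-nu-definition}--\eqref{eq-execution-price-major} and take expectations in \eqref{eq-major-functional}, exploiting the fact that $\nu^0\in\mathcal{A}_M^{q_0}$ is deterministic. The tower property together with \eqref{eq-v0} gives
\begin{equation*}
\mathbb{E}[r^0_t] \;=\; \frac{1}{2\lambda_1}\int_t^T \mathcal{K}(t,s)\bigl(\bar\mu_s-\kappa_0\nu^0_s\bigr)\,ds,
\end{equation*}
so that $\mathbb{E}[\nu^{1,*}]$ becomes a deterministic linear functional of $\bar\mu$ and $\nu^0$. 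The pure self-impact contribution collapses to the constant $\kappa_0 q_0^2/2$ via the symmetry identity $\int_0^T \nu^0_t \int_0^t \nu^0_s\,ds\,dt = q_0^2/2$. What remains is a deterministic quadratic functional $J(\nu^0)$ on $L^2([0,T])$ whose quadratic part, after invoking the factorisation \eqref{eq-kernel-caligraphic}, has the shape $-\lambda_0\|\nu^0\|_{L^2}^2-\frac{\kappa_0^2\kappa_1}{4\lambda_0\lambda_1}\langle\mathsf{G}\nu^0,\nu^0\rangle_{L^2}$; strict concavity then follows from $\lambda_0>0$ and the positive semi-definiteness of $\mathsf{G}$ established in Proposition \ref{prop-G-adjoint-compact}, immediately yielding uniqueness of any optimiser.

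Second, I would introduce a Lagrange multiplier for the constraint $\langle\nu^0,\mathit{1}\rangle_{L^2}=q_0$ and equate the Fr\'echet derivative of $J$ to a constant. After Fubini and the symmetry of $\mathcal{G}$, the Euler--Lagrange condition can be rearranged as
\begin{equation*}
\nu^0(t) + \frac{\kappa_0\kappa_1}{2\lambda_0\lambda_1}(\mathsf{G}\nu^0)(t) \;=\; \frac{\eta}{2\lambda_0}\mathit{1}(t) + (\mathsf{S}\bar\mu)(t),
\end{equation*}
where $\eta$ is proportional to the multiplier. The operator $\mathsf{S}$ from \eqref{eq-chi-function} arises naturally when the $\bar\mu$-dependent terms are gathered: the indicator kernel $\mathbbm{1}_{\{s\leq t\}}$ comes from the derivative of $\int_0^T \nu^0_t \bar A_t\,dt$ with $\bar A_t=\int_0^t\bar\mu_s\,ds$, while the $\mathsf{G}$ piece originates through the $\bar\mu$-part of $\mathbb{E}[r^0]$.

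The main obstacle is the third step: rigorously inverting $\mathrm{Id}+\frac{\kappa_0\kappa_1}{2\lambda_0\lambda_1}\mathsf{G}$ and identifying its inverse as $\mathsf{R}$. Using the spectral decomposition of the compact, self-adjoint, positive operator $\mathsf{G}$ (Lemma \ref{lemma-spectral-theorem-G}) and expanding both sides on the eigenbasis $(\psi_n)$ with eigenvalues $(\zeta_n)$, the explicit kernel \eqref{eq-resolvent-kernel-def} is precisely the one that makes $\mathsf{R}\bigl(\mathrm{Id}+\frac{\kappa_0\kappa_1}{2\lambda_0\lambda_1}\mathsf{G}\bigr)=\mathrm{Id}$ hold term by term, all convergence issues being handled by Remark \ref{remark-absolute-convergence}. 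Applying $\mathsf{R}$ to the Fredholm equation then produces \eqref{eq-major-solution} up to the unknown $\eta$, which I would pin down by pairing \eqref{eq-major-solution} with $\mathit{1}$ in $L^2$ and enforcing the fuel constraint, giving \eqref{eq-constant-major-solution}; the required $\langle\mathsf{R}\mathit{1},\mathit{1}\rangle_{L^2}\neq 0$ follows from the strict concavity of $J$. Finally, continuity of $\nu^{0,*}$ on $[0,T]$ would follow from the joint continuity of $\mathcal{G}$ on $[0,T]^2$ (itself a consequence of the continuity of $\xi^{\pm}$), the absolute continuity of $t\mapsto\int_0^t\bar\mu_s\,ds$, and the uniform convergence of the resolvent series, which together give continuity of both $\mathsf{R}\mathit{1}$ and $\mathsf{R}\mathsf{S}\bar\mu$.
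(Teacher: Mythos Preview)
Your proposal is correct and mirrors the paper's own argument almost step for step: Lemma \ref{lemma-alternative-major} gives the deterministic quadratic form, Proposition \ref{prop-uniqueness-major} the strict concavity, Lemma \ref{lemma-gateaux-major} and Proposition \ref{prop-integral-eq-major} the Fredholm first-order condition (the paper phrases it via the G\^ateaux derivative on $\mathcal{A}_M^0$ rather than an explicit Lagrange multiplier, but the two are equivalent), and Proposition \ref{prop-properties-resolvent} handles the inversion and the positivity of $\langle\mathsf{R}\mathit{1},\mathit{1}\rangle_{L^2}$. The coefficient $\frac{\kappa_0^2\kappa_1}{4\lambda_0\lambda_1}$ in your quadratic part is a slip (it should be $\frac{\kappa_0\kappa_1}{2\lambda_1}$, cf.\ \eqref{eq-alternative-major}), but your Euler--Lagrange equation already has the correct constant, so this does not affect the argument.
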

The proof of Theorem \ref{thm-major-solution} is given in Section \ref{proof-thm-major-solution}. In the proof of Theorem \ref{thm-major-solution} we also show that the constant $\eta$ in \eqref{eq-constant-major-solution} is well-defined, which is an ingredient in proving the admissibility of the optimal strategy \eqref{eq-major-solution}. 

The following corollary follows immediately from Theorem \ref{thm-minor-optimal-strategy} and Theorem \ref{thm-major-solution}.
\begin{corollary}
\label{corollary-stackelberg}
Let $\nu^{0,*}$ and $\nu^{1,*}(\nu^{0,*})$ as in Theorem \ref{thm-major-solution} and Theorem \ref{thm-minor-optimal-strategy}, respectively. Then, under Assumption \ref{ass-alpha-lambda}, the pair $(\nu^{0,*}, \nu^{1,*}(\nu^{0,*}))\in \mathcal{A}_{M}^{q_0}\times \mathcal{A}_{m}$ is the unique Stackelberg equilibrium in the sense of Definition \ref{def-stac}.
\end{corollary}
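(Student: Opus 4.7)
The plan is to verify the two defining conditions of a Stackelberg equilibrium (Definition \ref{def-stac}) by directly invoking the two preceding theorems, and then to deduce uniqueness from the uniqueness statements contained in each of those theorems. Since the corollary is essentially a bookkeeping combination of Theorems \ref{thm-minor-optimal-strategy} and \ref{thm-major-solution}, I do not expect any genuine technical obstacle; the only point requiring care is checking that the two theorems are invoked on compatible inputs.

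First I would establish step (i) of the game described at the end of Section \ref{sec-model-setup}. For every admissible major agent strategy $\nu^{0}\in\mathcal{A}_{M}^{q_0}$, Theorem \ref{thm-minor-optimal-strategy} guarantees, under Assumption \ref{ass-alpha-lambda}, the existence of a unique maximizer $\nu^{1,*}(\nu^{0})\in\mathcal{A}_{m}$ of $H^{1}(\,\cdot\,;\nu^{0})$, given explicitly by \eqref{eq-minor-agent-optimal}. This defines the minor agent's best-response map $\nu^{0}\mapsto \nu^{1,*}(\nu^{0})$ on the whole of $\mathcal{A}_{M}^{q_0}$, which is precisely the object that enters the major agent's problem.

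Next I would establish step (ii). Feeding the best-response map from Theorem \ref{thm-minor-optimal-strategy} into $H^{0}$, Theorem \ref{thm-major-solution} (still under Assumption \ref{ass-alpha-lambda}) yields a unique maximizer $\nu^{0,*}\in\mathcal{A}_{M}^{q_0}$ of $\nu^{0}\mapsto H^{0}\bigl(\nu^{0};\nu^{1,*}(\nu^{0})\bigr)$, given by the closed-form expression \eqref{eq-major-solution}. Setting $\nu^{1,*}(\nu^{0,*})\in\mathcal{A}_{m}$ via the same best-response map, the pair $(\nu^{0,*},\nu^{1,*}(\nu^{0,*}))$ lies in $\mathcal{A}_{M}^{q_0}\times\mathcal{A}_{m}$ and solves both stages of the game, so it is a Stackelberg equilibrium in the sense of Definition \ref{def-stac}.

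Finally, I would deduce uniqueness. Let $(\tilde{\nu}^{0},\tilde{\nu}^{1})\in\mathcal{A}_{M}^{q_0}\times\mathcal{A}_{m}$ be any Stackelberg equilibrium. By the definition, $\tilde{\nu}^{1}$ maximizes $H^{1}(\,\cdot\,;\tilde{\nu}^{0})$ over $\mathcal{A}_{m}$, so the uniqueness clause of Theorem \ref{thm-minor-optimal-strategy} forces $\tilde{\nu}^{1}=\nu^{1,*}(\tilde{\nu}^{0})$. Consequently $\tilde{\nu}^{0}$ maximizes $\nu^{0}\mapsto H^{0}(\nu^{0};\nu^{1,*}(\nu^{0}))$, and the uniqueness clause of Theorem \ref{thm-major-solution} then forces $\tilde{\nu}^{0}=\nu^{0,*}$, which in turn gives $\tilde{\nu}^{1}=\nu^{1,*}(\nu^{0,*})$. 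This completes the proof of the corollary.
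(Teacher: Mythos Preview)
Your proof is correct and matches the paper's own treatment: the paper simply states that the corollary ``follows immediately from Theorem \ref{thm-minor-optimal-strategy} and Theorem \ref{thm-major-solution}'' without spelling out any details, and your argument is precisely the natural unpacking of that claim. The uniqueness step you include is a useful addition that the paper leaves implicit.
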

 The following remarks discuss the result of Corollary \ref{corollary-stackelberg}. 
\begin{remark} Note that $\nu^{0,*}$ in \eqref{eq-major-solution} is given in terms of the resolvent operator $\mathsf{R}$. In Section \ref{sec-numerics} we derive a numerical scheme that approximates $\nu^{0,*}$
by using finite dimensional projections of $\mathsf{G}$. The problem of computing $\mathsf{R}$ and hence $\nu^{0,*}$ is reduced to a finite-dimensional problem of matrix inversion. We refer to Proposition \ref{prop-degenerate-kernel-approximation} and Theorem \ref{thm-uniform-convergence-numerics} for the details. 
\end{remark} 
\begin{remark}
The most challenging step in obtaining a Stackelberg equilibrium is to derive the strategy of the player who acts first, namely the major agent. In our case we needed to develop a novel approach for deriving the optimal strategy in \eqref{eq-major-solution}, using tools from the theory of integral equations. In Section \ref{sec-illustrations} we illustrate the solutions to the Stackelberg game and in Section \ref{sec-numerics} we derive additional technical steps, which are needed in order to plot such explicit solutions directly from Theorems \ref{thm-minor-optimal-strategy} and \ref{thm-major-solution}. 
\end{remark}
\begin{remark}
Our illustrations in Section \ref{sec-illustrations} suggest that the minor agent can adopt either predatory or cooperative strategy with respect to the major agent, in each period, depending on the tradeoff between the order-flow of the major agent and the trading signal during the period (see Figure \ref{fig-multiday-liquidation-trades}). This qualitative behaviour can be compared with the deterministic model of \cite{SchoenebornSchied}, who showed that in the single period case the competitor is also selling and then buying her position back due to inventory constraints (see Figure 1 therein). In their two period model the seller is selling only the first period and then depending on the price impact parameters there are two possible scenarios: either the competitor is buying in the first period and then selling in the second period, i.e. introducing cooperative strategies in the game  (see Figure 8 therein), or doing a round trip of selling first and then closing the position, all in the first period. 
\end{remark} 

\begin{remark}
\label{remark-absolute-convergence}
We remark that the sum appearing in \eqref{eq-resolvent-kernel-def} satisfies the following convergence properties.  Define 
\begin{equation*}
\begin{aligned}
 \mathfrak{R}_{N}(t,s)&= \sum_{n=1}^{N} \frac{1}{1+\frac{\kappa_{1}\kappa_{0}}{2\lambda_{0}\lambda_{1}}\zeta_{n}} \left(\frac{\kappa_{1}\kappa_{0}}{2\lambda_{0}\lambda_{1}}\zeta_{n}\right)^{2} \psi_{n}(t)\psi_{n}(s), \quad t,s\in[0,T],\\ 
 \mathfrak{R}_{N}^{\text{abs}}(t,s) &=  \sum_{n=1}^{N}\left| \frac{1}{1+\frac{\kappa_{1}\kappa_{0}}{2\lambda_{0}\lambda_{1}}\zeta_{n}} \left(\frac{\kappa_{1}\kappa_{0}}{2\lambda_{0}\lambda_{1}}\zeta_{n}\right)^{2} \psi_{n}(t)\psi_{n}(s) \right|, \quad t,s\in[0,T].\\
\end{aligned}
\end{equation*}
Then, it follows from the proof of \cite[Theorem 4.27]{Porter1990} that $\mathfrak{R}_{N}$ converges uniformly to the sum in $\mathcal{R}$ on $(t,s)\in[0,T]^{2}$ ,
and that $ \mathfrak{R}_{N}^{\text{abs}}$ is uniformly convergent. 
The uniform convergence of $\mathfrak{R}_{N}^{\text{abs}}$ guarantees that the uniform convergence of $\mathfrak{R}_{N}$ is preserved even when the order of summation is changed.  Therefore, as it is natural to expect, the solution $\nu^{0,*}$ in \eqref{eq-major-solution} is independent of how one enumerates the eigenvalues and the corresponding eigenfunctions of $\mathsf{G}$.
\end{remark}

\section{Illustrations}\label{sec-illustrations}
In this section we illustrate the agents’ optimal equilibrium strategies, which were derived in Theorems \ref{thm-major-solution} and \ref{thm-minor-optimal-strategy}. Motivated by Section 4 of \cite{Lehalle-Neum18}, we consider the case where the signal $\mu$ in \eqref{eq-mu-square-integrable} follows an Ornstein-Uhlenbeck process, 
\begin{equation}
\label{eq-ou-dynamics}
d\mu_{t}= -\beta \mu_{t}dt + \sigma dW_{t}, \quad \mu_{0}=m_{0} ,
\end{equation}
where $W=(W_{t}) _{t \geq 0}$ is a standard Brownian motion and $\beta$ and $\sigma$ are positive constants.  Furthermore, we assume that $M$ in \eqref{ass:P} is given by  
\be \label{mart-num} 
M_t= M_0 +\sigma_{M}\wt W_t, 
\ee
where $\wt W$ is a standard Brownian motion independent from $W$, and $ M_0, \sigma_0$ are positive constants. We fix the values of the price impact parameters $\lambda_i, \kappa_i$, the initial inventory of the major agent $q_0$ and the terminal penalty parameter $\alpha$ in \eqref{eq-minor-functional} to be
\begin{equation} \label{param1} 
 \kappa_{0}=2, \quad \kappa_{1}= 2, \quad\lambda_{0}=1, \quad \lambda_{1}=1,\quad q_0=10, \quad \alpha =10, 
 \end{equation}
as well as the parameters of $M$ in \eqref{mart-num} and of $\mu$ in \eqref{eq-ou-dynamics},  
\begin{equation} \label{param2}
m_{0} =-0.5,  \quad \beta = 0.1,\quad \sigma =4, \quad  M_0=100, \quad  \sigma_{M} =1.
\end{equation}

The plots in this section are generated by using the numerical scheme which will described in detail in Section \ref{sec-numerics}. We choose as a complete orthonormal basis $(a_{i})_{i=1}^{\infty}$ the functions 
\begin{equation}
\label{eq-orthonormal-basis-l2}
 a_{i}(t) := \begin{cases}
1/\sqrt{T} \quad &i=1\\ 
\sqrt{2/T}\cos\left(\frac{(i-1)\pi  t}{T}\right) \quad &i=2,3,\ldots
 \end{cases}
\end{equation}
and such that each of corresponding degenerate kernel $\mathcal{G}_{n}$ defined in  \eqref{eq-convergence-kernel-series-G} represent the $n^{\text{th}}$-degree Fourier series approximation of the kernel $\mathcal{G}$ in \eqref{eq-kernel-caligraphic}. In order to strike a balance between numerical accuracy and computational efficiency, our simulations are generated by approximating the kernel $\mathcal{G}$ with the degenerate kernel $\mathcal{G}_{300}$.

The time dependence in the minor agent's inventory costs $\phi^{1}_t$ (see \eqref{eq-minor-functional}) can accommodate the setting of a liquidation carried out over several days. We take $T=k \tau$ for some positive integer $k$ and for $\tau>0$. Moreover, we choose the function $\phi^{1}$ to be given by the following parametric form
\begin{equation}
\label{eq-periodic-phi1}
\phi^{1}_{t}  = 
c_{0}\left(\frac{t}{\tau}-\left\lfloor \frac{t}{\tau} \right\rfloor\right)^{c_{1}}, \quad 0\leq t \leq T, 
\end{equation}
for two positive constants $c_{0}$ and $c_{1}$, which in the context of our simulations, we take to be $c_{0}=500$ and $c_{1}=15$. The function \eqref{eq-periodic-phi1} is periodic of period $\tau$ and increases to its maximum value as $t$ approaches $\tau, \ 2\tau, \ 3\tau, \ldots,\ k\tau$ from the left, forcing the minor agent to liquidate most of her position at the end of each period. We consider a liquidation carried over a business week, from Monday to Friday, such that $T=5$ (days) and $\tau=1$ (day). Figure \ref{fig-multiday-liquidation-trades} illustrates three examples of a multi-day liquidation. Specifically,  the top panel shows the major agent's deterministic optimal inventory (green line), deduced from \eqref{eq-major-solution}, as well as three different realisations of the minor agent's optimal inventories that one obtains from \eqref{eq-minor-agent-optimal} (blue, purple and red lines). The bottom panel shows the corresponding signal $\mu$ observed by the minor agent's while adopting the strategies at the top panel.

\begin{figure}[hbtp] 
\includegraphics[width=0.9\textwidth]{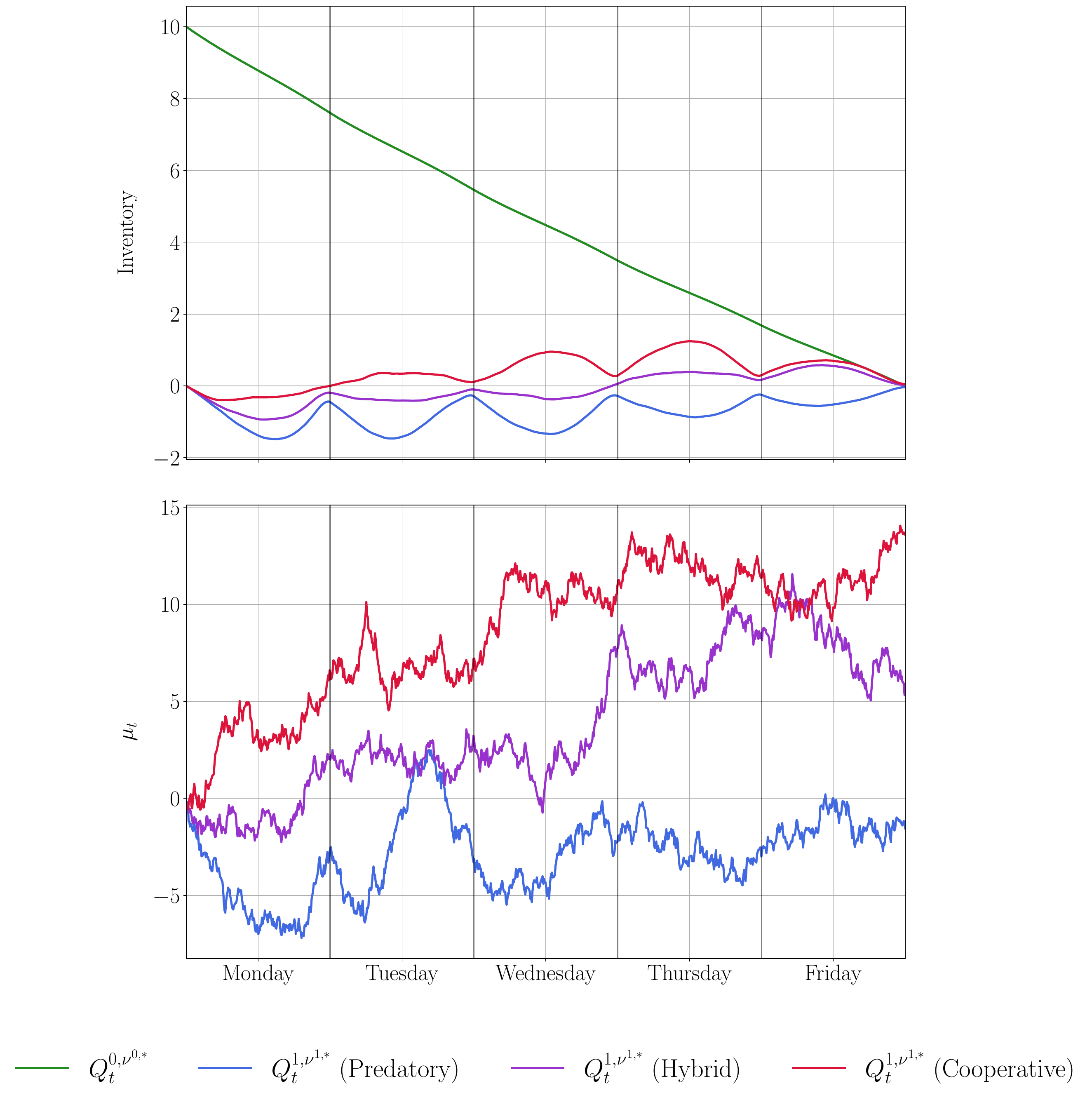}
\centering
\caption{In the top panel, the green line represent the major agent's optimal inventory while the remaining solid lines represent the minor agent's optimal inventory when the minor agent is adopting a predatory trading style (blue line) and cooperative trading style (red line) or an hybrid of both (purple line). In the bottom panel, we show the signal $\mu_{t}$ corresponding to the realisations of the minor agent's inventories in the top panel. }\label{fig-multiday-liquidation-trades}
\end{figure}
From \eqref{eq-P-nu-definition} it follows that the price impact generated by the major agent's  optimal strategy $\nu^{0,*}$ is perceived as a deterministic signal. The sell-off of shares by the major agent has the effect of pushing the price downwards, therefore, it generates opportunities which can be exploited by the minor agent. These considerations justify the fact that, as shown in \eqref{eq-minor-agent-optimal}, the minor agent adopts a trading strategy which tracks the ``impacted'' signal $\mu_{t} -\kappa_{0}\nu^{0,*}_{t}$ instead of the raw market signal $\mu_{t}$.  Hence, depending on her forecast on the impacted signal $\mu_{t} -\kappa_{0}\nu^{0,*}_{t}$, during each period the minor agent can decide whether to trade in the same direction of the major agent or not.  This has the effect that, over the interval $[0,T]$, the observed trading style of the minor agent can be \textit{predatory}, i.e. front running the major agent (blue line), \textit{cooperative} (red line) or a \textit{hybrid} of both (purple line).


To further understand several novel features of our model in the context of the multi-day liquidation we have just analysed, it is instructive to momentarily pause our discussion and consider the simpler case of a liquidation carried out over a single day. In particular, we wish to  benchmark the major agent's optimal strategy in \eqref{eq-major-solution} against the strategy $\nu^{0,\text{BM}}$ the major agent would use if she were unaware of the minor agent's trading activity. The strategy $\nu^{0,\text{BM}}$ can be found by solving the major agent's problem with $\kappa_{1}=0$ and it is given by
\begin{equation}
\label{eq-twap-strategy}
\nu^{0,\text{BM}}_{t} =\frac{q_0}{T}  + \frac{m_{0}}{2\lambda_{0}\beta} \left(\frac{1- \beta T e^{-\beta t}-e^{-\beta T}}{\beta T} \right), \quad 0\leq t \leq T. 
\end{equation}
Note that in the case of $m_{0} = 0$ in \eqref{eq-ou-dynamics}, $\nu^{0,\text{BM}}$ in \eqref{eq-twap-strategy} is a TWAP strategy. 

We assume that the major agent wishes to liquidate his initial position over a time 
horizon of six hours, from 10 AM to 4 PM, hence we set $T=6$ (hours).

In the present context, we slightly modify some of the parameters in \eqref{param1} and \eqref{param2}: $\sigma=1.5$, $\alpha=50$ and $\phi^{1}_{t} \equiv1$.
The top panel of Figure \ref{fig-benchmark-comparison-strat} shows the major agent's optimal trading rate $\nu^{0,*}$ (solid green line) and the benchmark trading rate $\nu^{0,\mathrm{BM}}$ (dashed green line). The bottom panel show 1000 realisations of minor agent's optimal trading rate $\nu^{1,*}$ (thin solid orange lines) and the cross-sectional average (thick solid brown line). 
We observe that the major agent's optimal strategy visibly deviate from the benchmark one in order to take into account the adverse effect of the minor agent's trading activity. We remark that since the major agent adopts a deterministic strategy, her decisions are based on the cross-sectional average of the minor agent's strategy, i.e. the solid brown line in the bottom panel of Figure \ref{fig-benchmark-comparison-strat}.   Initially, it is optimal to trade faster than the benchmark strategy in anticipation of the expected permanent price impact generated by the minor agent's reaction. Indeed, the early prices are more favourable to the major agent since they have not been affected yet by the extra price impact generated by presence of the minor agent. In the middle of the trading window the major agent's keeps trading but at a lower rate than the benchmark strategy. The explanation for this is that the major agent is aware that the minor agent could potentially trade in the same direction. Therefore, slowing down partially minimise the negative externality the minor agent's exerts on her via the aggregated permanent price impact. Finally, in the last section of the trading window two factors determine the behaviour of the major agent's optimal strategy. First,  the major agent must increase her trading rate to meet the terminal inventory constraint $Q_{T}^{0,\nu^{0,*}}=0$. Second, the major agent is aware that, on average, the minor agent will have to close her short position at the end of the time horizon, therefore, she will have to buy shares, generating market impact and pushing the price up again. Hence, the prices at the end of the trading window are more favourable for the major agent, therefore, a substantial portion of the liquidation is postponed to the last hour. 
Figure \ref{fig-benchmark-comparison-inventory} presents the major agent's and minor agent's inventories corresponding to the trading rates depicted in Figure \ref{fig-benchmark-comparison-strat}.

\begin{figure}[hbtp]
\includegraphics[width=0.75\textwidth]{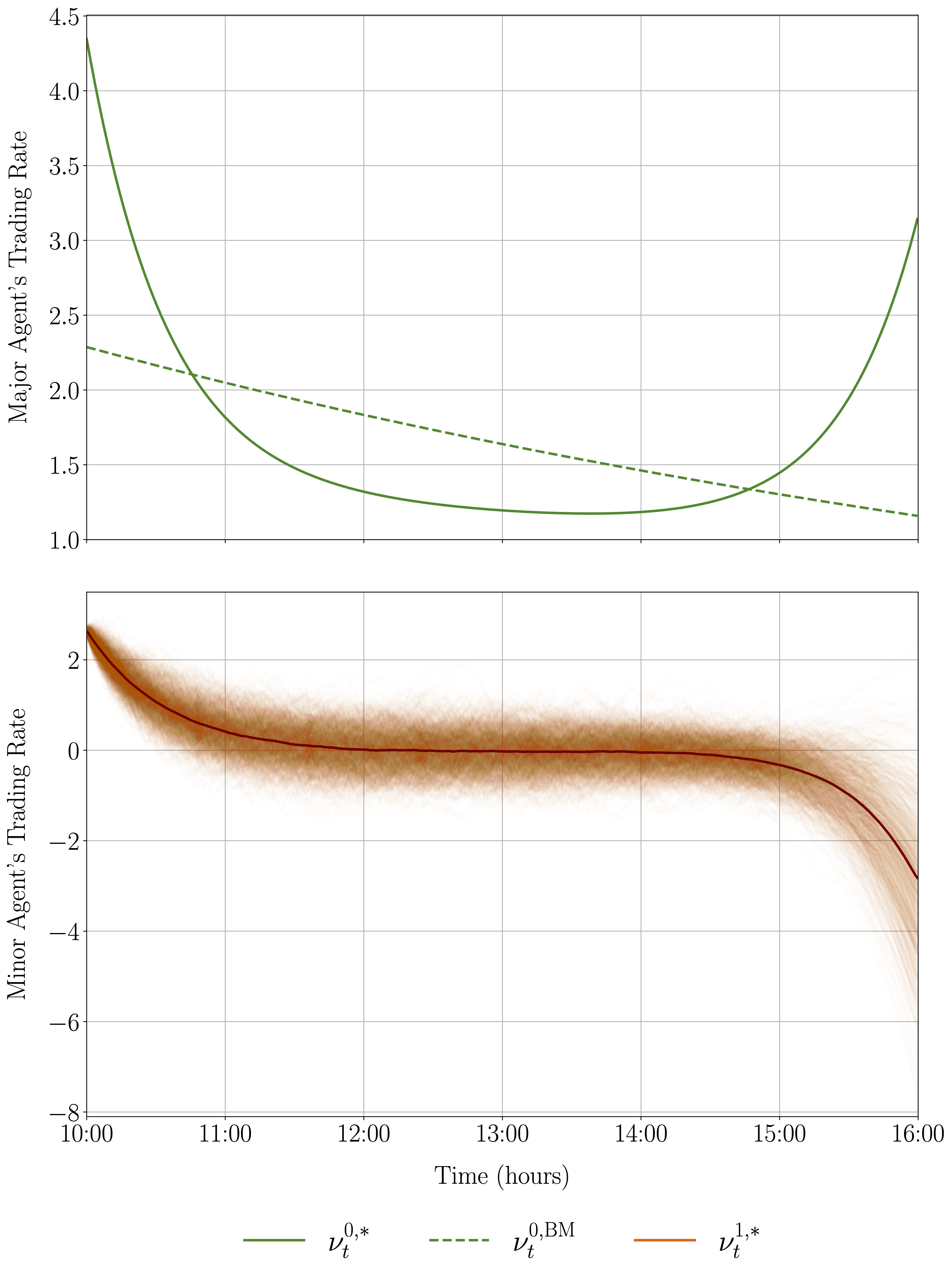}
\centering
\caption{ The major agent's and minor agent's optimal strategies in \eqref{eq-major-solution} and  \eqref{eq-minor-agent-optimal}, respectively, for a single-day liquidation. In the top panel, the green solid line shows the major agent's optimal strategy while the dashed green line shows the benchmark strategy of \eqref{eq-twap-strategy}. In the bottom panel, the thin orange solid lines depicts different realisations of the minor agent's optimal strategy. The brown solid line is the cross-sectional mean over the realisations.}
\label{fig-benchmark-comparison-strat}
\end{figure}

\begin{figure}[hbtp]
\includegraphics[width=0.75\textwidth]{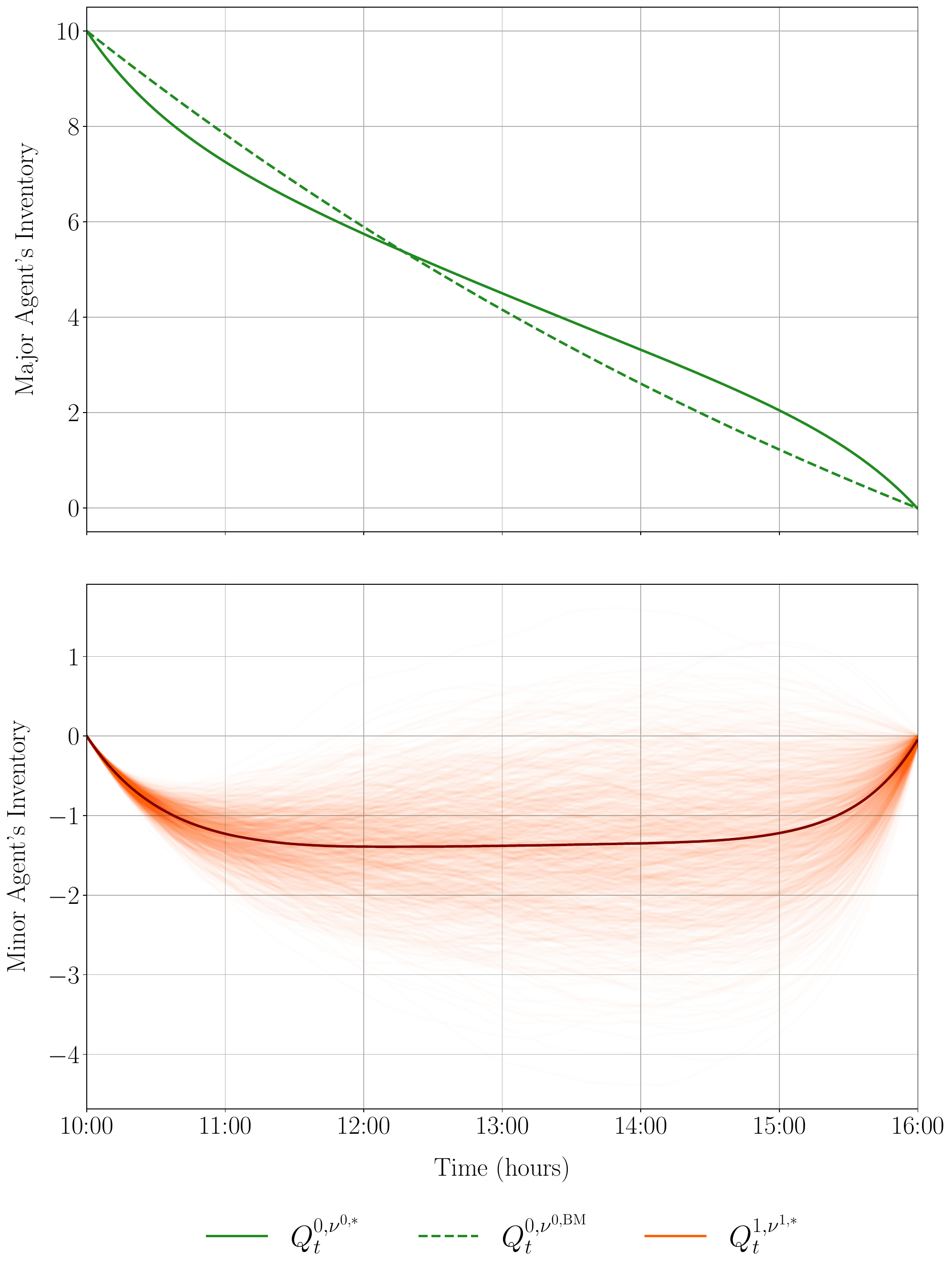}
\centering
\caption{The major agent's and minor agent's optimal inventories corresponding to the strategies in \eqref{eq-major-solution} and \eqref{eq-minor-agent-optimal}, respectively, for a single-day liquidation. In the top panel, the solid green line shows the major agent's optimal inventory corresponding to the major agent's optimal strategy  while the dashed green line shows the inventory corresponding to benchmark strategy in \eqref{eq-twap-strategy}. In the bottom panel, the thin solid orange lines represent different realisations of the minor agent's optimal inventories corresponding to the strategy  in \eqref{eq-minor-agent-optimal} while the solid brown line is the cross-sectional mean over the realisations.}
\label{fig-benchmark-comparison-inventory}
\end{figure}

Having established the major agent's and minor agent's trading patterns in the context of a single day liquidation, we turn again to the case of the multi-day liquidation presented in Figure \ref{fig-multiday-liquidation-trades}.  Analogously to Figure \ref{fig-benchmark-comparison-strat}, the top panel of Figure \ref{fig-multiday-comparison-strat} shows the major agent's optimal trading rate (solid green line) as well the trading rate of the benchmark strategy (dashed green line) in the context of the multiday-liquidation initially presented in Figure \ref{fig-multiday-liquidation-trades}.  Moreover, the bottom panel of Figure \ref{fig-multiday-comparison-strat} presents 1000 realisations of the minor agent's trading rates (thin orange lines) as well as the cross-sectional average (solid brown line). We recover analogous trading patterns to the one observed in the single-day liquidation: the major agent's speed, when compared to the benchmark strategy, greatly increases at the beginning and at the end of each day. Moreover, on average, the minor agent acquires a short position at the beginning of each day, pushing the price down, and then, in order to meet her terminal inventory constraint at the end of each day, she pushes the price up again by buying shares.  Note from Figure \ref{fig-multiday-liquidation-trades} that the predatory, cooperative and hybrid strategies share some common features. First, at the end of each day all the strategies have a very small inventory. This is because, by introducing the periodic running inventory costs of \eqref{eq-periodic-phi1}, the minor agent is strongly discouraged to hold a non-zero position at the end of each day, independently of her forecast for the impacted signal $\mu_{t} -\kappa_{0}\nu^{0,*}_{t}$. Secondly, from Figure \ref{fig-multiday-liquidation-trades} we observe that that the major agent is not liquidating at constant speed. Indeed, over the first day she liquidates at a speed visibly larger than, for example, the one employed over the last day. Such an intense liquidation in the first day generates an equally large alpha-signal through the corresponding price impact term $\kappa_{0}\nu^{0,*}_{t}$.  In the first day, the market impact-generated signal $\kappa_{0}\nu^{0,*}_{t}$ is large enough to outweigh any realistic realisation of the exogenous signal $\mu_{t}$, therefore, pushing the minor agent to trade in the same direction of the major agent, independently of the trading style she will adopt later on in the remaining days. 

\begin{figure}[hbtp]
\includegraphics[width=0.75\textwidth]{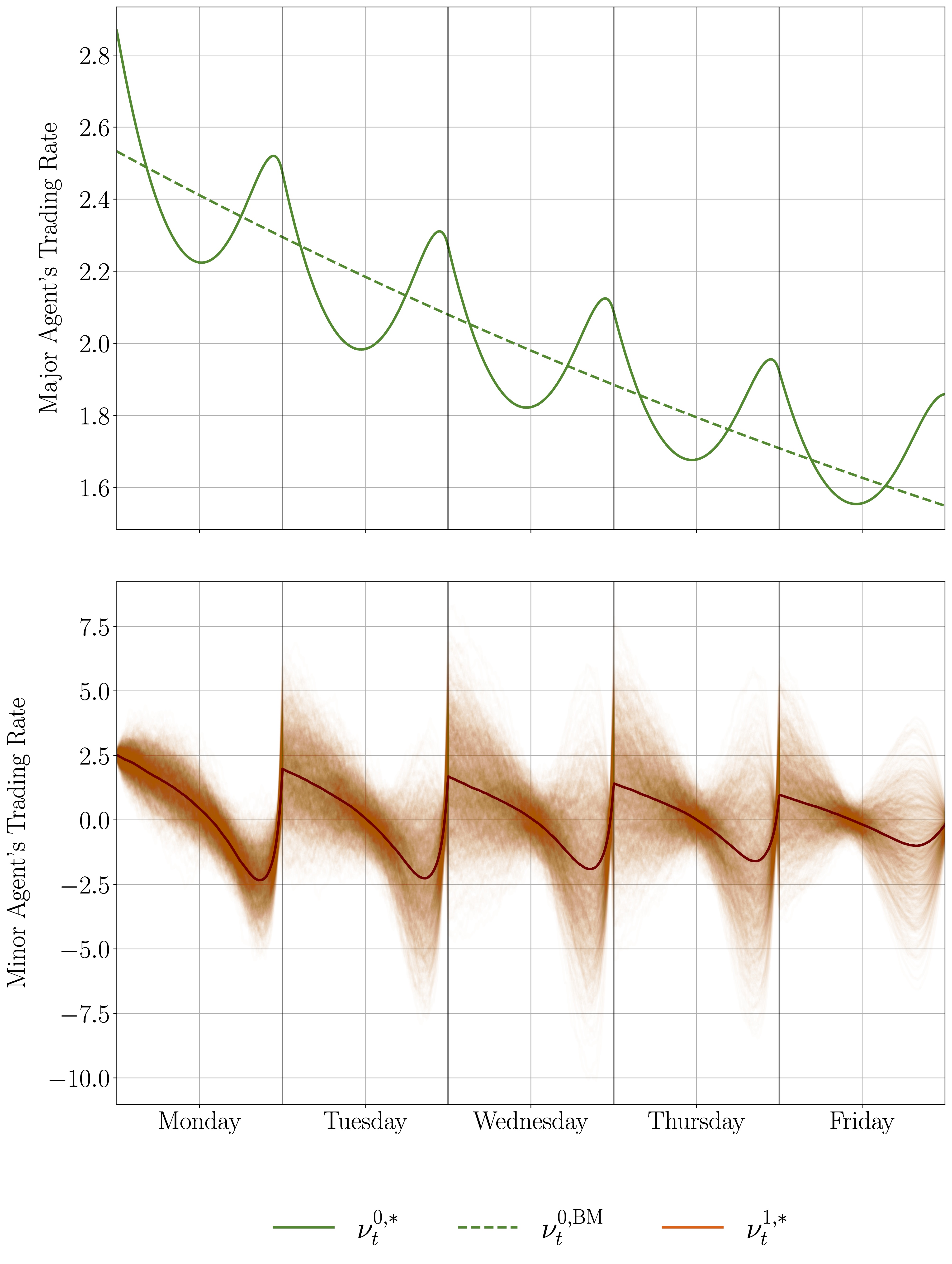}
\centering
\caption{ We present the major agent's and minor agent's optimal strategy in \eqref{eq-major-solution} and  \eqref{eq-minor-agent-optimal}, respectively, for a multi-day liquidation. In the top panel, the green solid line shows the major agent's optimal strategy while the dashed green line shows the benchmark strategy of \eqref{eq-twap-strategy}. In the bottom panel, the thin orange solid lines depicts different realisations of the minor agent's optimal strategy while the brown solid line represents the cross-sectional mean over the realisations.}
\label{fig-multiday-comparison-strat}
\end{figure}

It is of practical interest to compare the financial performance of the major agent's optimal strategy $\nu^{0,*}$ against those of the benchmark strategy $\nu^{0,\text{BM}}$. In the interest of brevity, we limit ourselves to the case of the single-day liquidation presented in Figure \ref{fig-benchmark-comparison-inventory}. In Figure \ref{fig-benchmark-savings} we present a histogram of the empirical probability distribution of the performance of the major agent's optimal strategy in \eqref{eq-major-solution} relative to the benchmark strategy in \eqref{eq-twap-strategy} generated using 1,000 simulations. We compare the profit-and-loss (PnL) of the strategies in basis points (bps) through the following formula:
\begin{equation}
\label{eq-savings-formula}
\frac{X^{0,\nu^{0,*}}_{T}-X^{0,\nu^{0,\text{BM}}}_{T}}{X^{0,\nu^{0,\text{BM}}}_{T}}\times 10^{4}, 
\end{equation}
where $X^{0,\nu^{0,\text{BM}}}_{T}$ is the terminal cash obtained from employing the benchmark strategy $\nu^{0,\text{BM}}$ and $X^{0,\nu^{0,*}}_{T}$ is the terminal cash obtained from employing the optimal strategy $\nu^{0,*}_{t}$. The mean of the distribution in Figure \ref{fig-benchmark-savings} is strictly positive, hence the major agent's optimal strategy on average outperforms the benchmark strategy.

\begin{figure}[t!]
\includegraphics[width=0.8\textwidth]{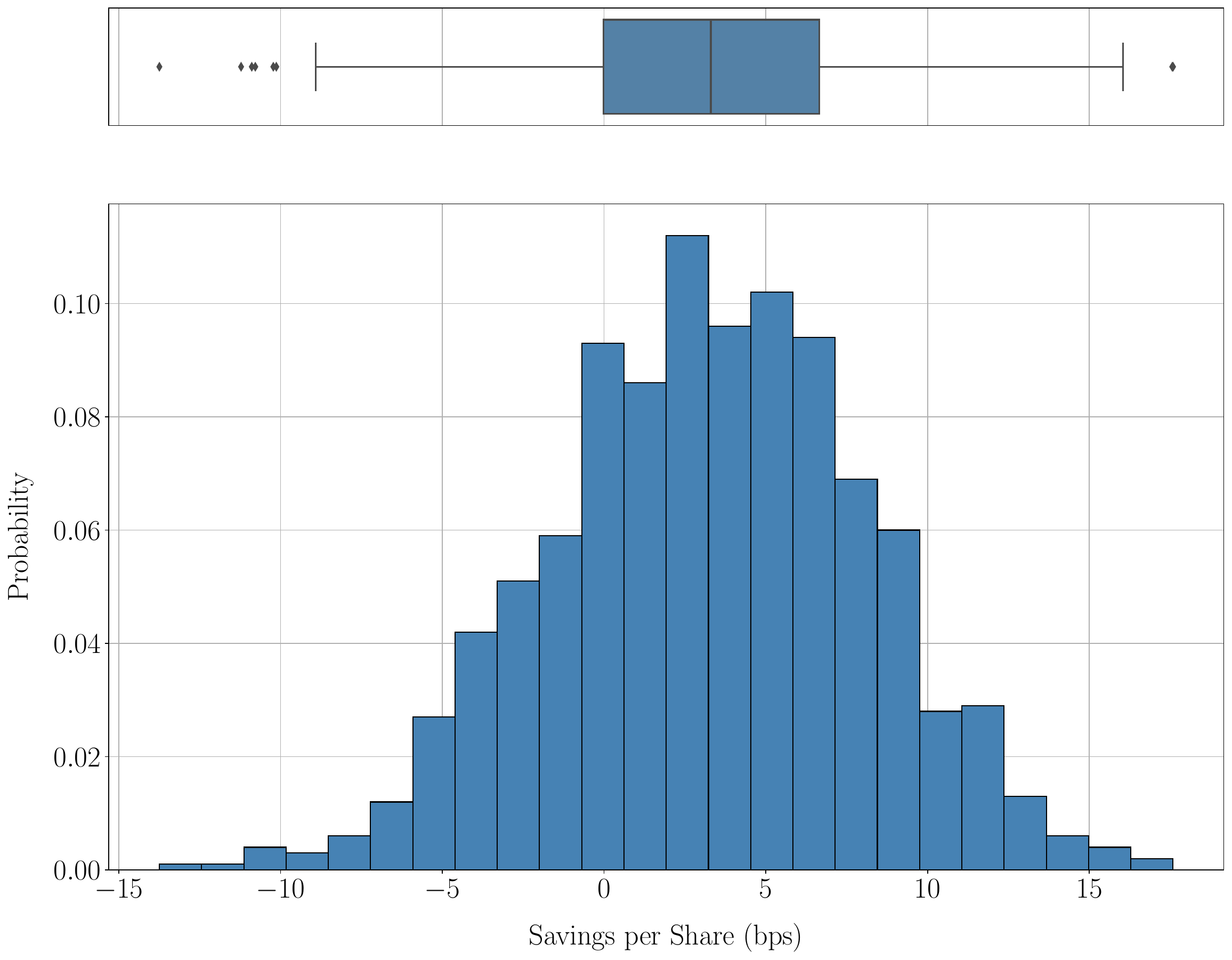}
\centering
\caption{The savings per share, computed using \eqref{eq-savings-formula}, measured in bps from 
 following the major agent's optimal strategy relative to the benchmark strategy in \eqref{eq-twap-strategy}. The top panel shows the box-plot corresponding to the distribution in the bottom panel.}
 \label{fig-benchmark-savings}
\end{figure}

Finally, we show that the major agent's and minor agent's trading behaviour induces noteworthy patterns in the intraday volume. A well-known empirical pattern of intraday volume is that it follows a U-shaped curve, where the traded volume peaks at the beginning of the day and at the end of the day, see for example \cite[Chapter 4, Figure 4.2]{Cartea-Book}. In Figure \ref{fig-volume-curve-daily} we present the intraday volume curve implied by the major agent's and minor agent's trading behaviour. Specifically, we simulate 1000 realisation of the minor agent's trading strategy and for each realisation we consider the absolute number of shares traded by the minor agent and the major agent over 1 minute bins from 10 AM to 4 PM. We call this quantity the ``volume'' traded in each minute bin. Then, we compute the natural logarithm of $1+ \mathrm{volume}$, where adding $1$ allows to consider assets whose traded volume is a fraction of a share. Our procedure is completely analogous to the one described in \cite{Cartea-Book}. 
Each blue line in Figure \ref{fig-volume-curve-daily} represents a realisation of the log-volume, while the magenta line is the median value of each 1-minute bin. The volume curves of Figure \ref{fig-volume-curve-daily} visibly present a U-shaped pattern analogous, for example, to those empirically observed and reported in \cite{Cartea-Book}.

\begin{figure}[t!]
\includegraphics[width=0.9\textwidth]{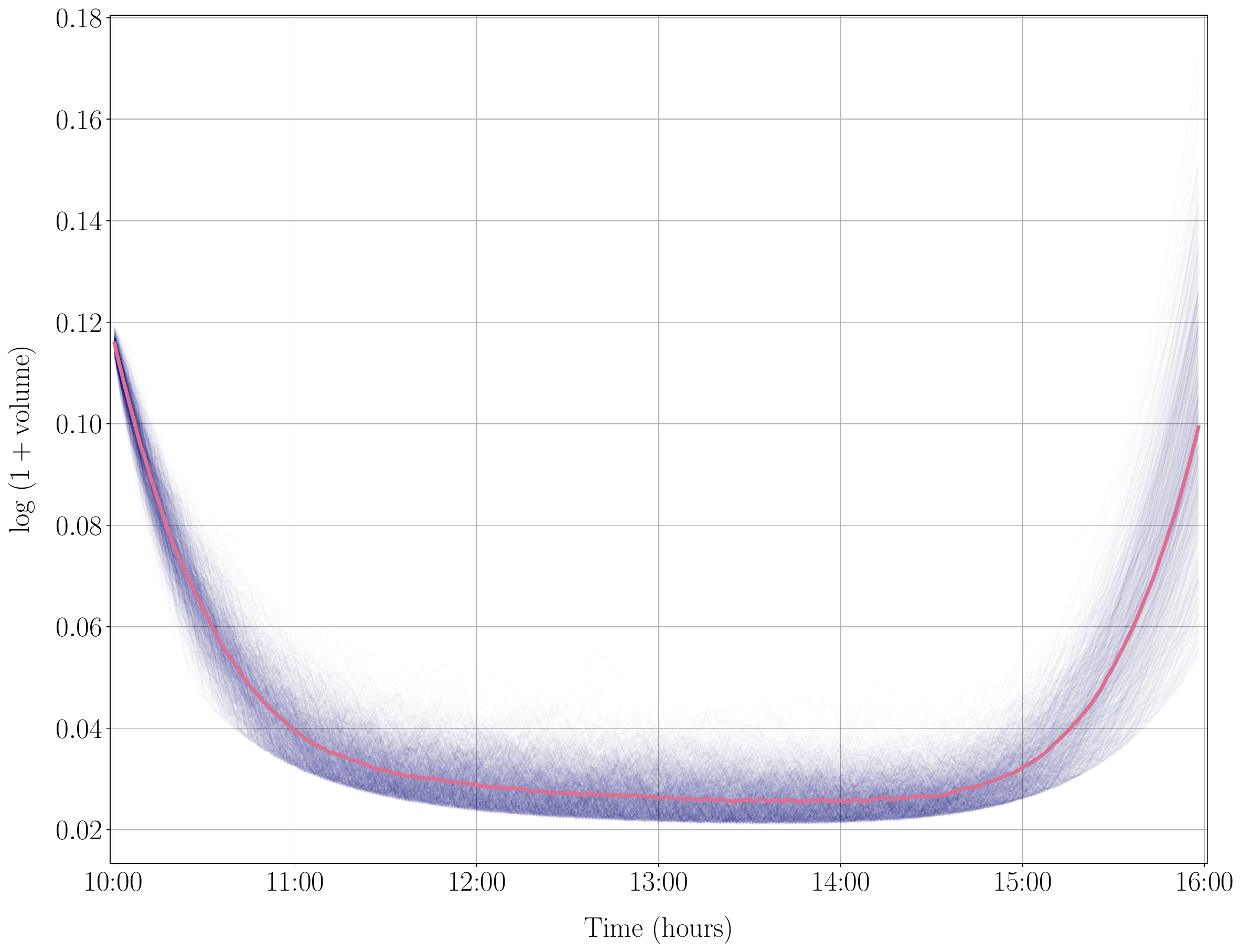}
\centering
\caption{Intraday volume curve as a function of the time of day. Realisations of the volume curves are represented by blue lines, while the cross-sectional median is drawn in pink.}
 \label{fig-volume-curve-daily}
\end{figure}

\section{Numerical Scheme}
\label{sec-numerics}
Theorem \ref{thm-major-solution} presents the unique major agent optimal strategy $\nu^{0,*}$ in closed-form. The optimal strategy $\nu^{0,*}$ is expressed in terms of the resolvent operator $\mathsf{R}$, defined in \eqref{eq-resolvent-kernel-def}, which in turn relies on the eigenvalues $(\zeta_{n})_{n\geq 1}$ and eigenfunctions $(\psi_{n})_{n\geq 1}$ of the operator $\mathsf{G}$ defined in \eqref{eq-G-rep-integral}.  In several simple cases these eigenfunctions and eigenvalues can be computed explicitly. 

For example, in the case of $\phi^{1} \equiv 0$, $(\zeta_{n})_{n\geq 1}$ and $(\psi_{n})_{n\geq 1}$ can be explicitly determined in terms of the roots of a transcendental equation (see Appendix \ref{sec-spectral-explicit}).  Nevertheless, a closed-form representation for the eigenvalues $(\zeta_{n})_{n\geq 1}$ and the eigenfunctions $(\psi_{n})_{n\geq 1}$ might be unattainable when $\phi^{1}$ is a generic non-negative piecewise continuous function. Therefore, we dedicate this section to developing a numerical scheme to compute the major agent optimal strategy $\nu^{0,*}$ which fully bypass the need of determining these eigenvalues and eigenfunctions.  As a by product, such numerical scheme will also determine the Stackelberg equilibrium of Corollary \ref{corollary-stackelberg}. 

We denote by $\mathsf{I}$ the identity operator on $L^{2}([0,T])$, that is 
\be \label{id} 
(\mathsf{I}\psi)(t)=\psi(t) \quad  \textrm{for all } 0\leq t \leq T, \  \psi \in L^{2}([0,T]). 
\ee 
As usual, the resolvent operator in \eqref{eq-resolvent-operator} can be written as 
$$\mathsf{R} = \left(\mathsf{I} + \frac{\kappa_{1}\kappa_{0}}{2\lambda_{0}\lambda_{1}}\mathsf{G}\right)^{-1}.$$
This is proved rigorously in Proposition \ref{prop-properties-resolvent}.  
It follows that the major agent optimal strategy $\nu^{0,*}$ in \eqref{eq-major-solution} satisfies to the following integral operator equation
\begin{equation}
\label{eq-operator-strategy-major}
\left(\mathsf{I} + \frac{\kappa_{1}\kappa_{0}}{2\lambda_{0}\lambda_{1}}\mathsf{G}\right) \nu^{0,*} = \mathsf{S}\bar{\mu} + \frac{\eta}{2\lambda_{0}}, 
\end{equation}
 Here, the constant $\eta$ is defined as in \eqref{eq-constant-major-solution} and the operators $\mathsf{G}$ and $\mathsf{S}$ are defined as in \eqref{eq-G-rep-integral} and \eqref{eq-chi-function}, respectively. The idea is to replace \eqref{eq-operator-strategy-major} with a sequence of approximate equations (see \eqref{eq-operator-strategy-major-approx}) whose solutions converge to the desired optimal strategy $\nu^{0,*}$.

For the discussion that follows it is convenient to recall the definition of a finite-rank operator and that of a compact operator, which we will provide in Definition \ref{def-compact-operator}. In Proposition \ref{prop-G-adjoint-compact} we will show that the operator $\mathsf{G}$ is compact. Therefore, there exists a sequence of finite-rank operator $(\mathsf{G}_{n})_{n\geq 1}$ in $B(L^{2}([0,T]))$ satisfying the approximation property
\begin{equation*}
\lim_{n\to\infty} ||\mathsf{G}_{n}-\mathsf{G}||=0, 
\end{equation*}
where $\|\cdot\|$ refers to the operator norm in \eqref{eq-operator-norm-definition}. 

In order to construct such sequence, we consider a complete orthonormal basis $(a_{i})_{i=1}^{\infty}$ in $L^{2}([0,T])$. A possible choice of such complete orthonormal basis in $L^{2}([0,T])$ is given by \eqref{eq-orthonormal-basis-l2}. 

Let the kernel $\mathcal{G}$ be defined as in \eqref{eq-kernel-caligraphic} and let the functions $(b_{i})_{i=1}^{\infty}$ be defined as 
\begin{equation}\label{eq-bj-integral}
b_{i}(t) := \int^{T}_{0}\mathcal{G}(t,s)a_{i}(s)ds. 
\end{equation}
We recall the definition of a degenerate kernel \cite[Definition 3.1]{Porter1990} which will be useful in the following. 
\begin{definition}[Degenerate Kernel]
\label{def-degenerate-kernel}
Let $n\geq 1$ and suppose there are finitely many functions $(a_{i})_{i=1}^{n}$ and $(b_{i})_{i=1}^{n}$ such that $a_{i}:[0,T]\to\mathbb{R}$ and $b_{i}:[0,T]\to\mathbb{R}$ for $i=1,\ldots,n$. Assume further that $\mathcal{T}:[0,T]^{2}\to \mathbb{R}$ is a kernel such that
\begin{equation*}
\mathcal{T}(t,s) = \sum_{i=1}^{n} a_{i}(t)b_{i}(s), \quad t,s\in[0,T].
\end{equation*}
Then, the kernel $\mathcal{T}$ is said to be \emph{degenerate}.
\end{definition}

Define the sequence of degenerate kernels $(\mathcal{G}_{n})_{n\geq 1}$ as the partial sums
\begin{equation}
\label{eq-convergence-kernel-series-G}
\mathcal{G}_{n}(t,s) := \sum_{i=1}^{n}a_{i}(t)b_{i}(s),\quad n\geq 1.
\end{equation}
Since $\mathcal{G}$ is a kernel in $L^{2}([0,T]^2)$ (see Proposition \ref{prop-kernel-caligraphic-properties})  
then, as shown in the proof of  \cite[Theorem 3.4]{Porter1990},   the sequence $(\mathcal{G}_{n})_{n\geq 1}$ converges to $\mathcal{G}$ in the sense
\begin{equation}
\label{eq-convergence-degenerate-kernel}
\lim_{n\to\infty}\int^{T}_{0}\int^{T}_{0}(\mathcal{G}(t,s)-\mathcal{G}_{n}(t,s))^{2}dsdt =0.
\end{equation}
Given the degenerate kernels $(\mathcal{G}_{n})_{n\geq 1}$ we can define a corresponding sequence of so-called finite rank integral operators $(\mathsf{G}_{n})_{n\geq 1}$ as
\begin{equation}
\label{eq-operator-degenerate-kernel}
(\mathsf{G}_{n}\psi)(t) := \int^{T}_{0}\mathcal{G}_{n}(t,s) \psi(s)ds, \quad \psi\in L^{2}([0,T]). 
\end{equation}
The following proposition, which is proved in Section \ref{proof-convergence-numerics}, gives the convergence result for the sequence $(\mathsf{G}_{n})_{n\geq 1}$. 
\begin{proposition}
\label{prop-convergence-Gn}
Under Assumption \ref{ass-alpha-lambda}, let $(\mathsf{G}_{n})_{n\geq 1}$ be defined as in \eqref{eq-operator-degenerate-kernel} and let $\mathsf{G}$ be defined as in \eqref{eq-G-rep-integral}. Then the finite rank operators $\mathsf{G}_{n}$ are in $B(L^{2}([0,T]))$. Moreover, we have that
\begin{equation}
\lim_{n\to\infty} ||\mathsf{G}_{n}-\mathsf{G}||=0.
\end{equation}
\end{proposition}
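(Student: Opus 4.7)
The plan is to exploit the classical fact that the operator norm of an integral operator on $L^{2}([0,T])$ is dominated by the Hilbert–Schmidt norm of its kernel (i.e.\ the $L^{2}([0,T]^{2})$ norm), and then invoke the already-established $L^{2}$ convergence \eqref{eq-convergence-degenerate-kernel} to conclude. The proposition has two assertions to address in turn.

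First, I would verify that each $\mathsf{G}_{n}$ is in $B(L^{2}([0,T]))$. Since the basis functions $a_{i}$ are in $L^{2}([0,T])$ and the functions $b_{i}$ defined in \eqref{eq-bj-integral} are also in $L^{2}([0,T])$ (this follows by Cauchy–Schwarz together with $\mathcal{G}\in L^{2}([0,T]^{2})$ from Proposition \ref{prop-kernel-caligraphic-properties}), the degenerate kernel $\mathcal{G}_{n}=\sum_{i=1}^{n}a_{i}\otimes b_{i}$ is a finite sum of tensor products of $L^{2}$-functions and is therefore itself in $L^{2}([0,T]^{2})$. By the general Cauchy–Schwarz bound for integral operators already cited in the paper (just after \eqref{eq-operator-norm-definition}), this suffices to conclude $\mathsf{G}_{n}\in B(L^{2}([0,T]))$.

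Second, for the norm convergence I would observe that $\mathsf{G}_{n}-\mathsf{G}$ is the integral operator generated by the kernel $\mathcal{G}_{n}-\mathcal{G}$, which lies in $L^{2}([0,T]^{2})$. For any $\psi\in L^{2}([0,T])$ with $\|\psi\|_{L^{2}}\leq 1$, Cauchy–Schwarz applied to the inner integral gives
\begin{equation*}
\|(\mathsf{G}_{n}-\mathsf{G})\psi\|_{L^{2}}^{2} = \int_{0}^{T}\!\left(\int_{0}^{T}(\mathcal{G}_{n}(t,s)-\mathcal{G}(t,s))\psi(s)\,ds\right)^{2}\!dt \leq \|\psi\|_{L^{2}}^{2}\int_{0}^{T}\!\int_{0}^{T}(\mathcal{G}_{n}(t,s)-\mathcal{G}(t,s))^{2}\,ds\,dt.
\end{equation*}
Taking the supremum over such $\psi$ yields $\|\mathsf{G}_{n}-\mathsf{G}\|^{2}\leq \|\mathcal{G}_{n}-\mathcal{G}\|_{L^{2}([0,T]^{2})}^{2}$, and the right-hand side tends to $0$ as $n\to\infty$ by \eqref{eq-convergence-degenerate-kernel}.

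There is no real obstacle here: the heavy lifting—namely, showing that $\mathcal{G}\in L^{2}([0,T]^{2})$ (Proposition \ref{prop-kernel-caligraphic-properties}) and that the Fourier-type expansion \eqref{eq-convergence-kernel-series-G} converges to $\mathcal{G}$ in the $L^{2}([0,T]^{2})$ sense (equation \eqref{eq-convergence-degenerate-kernel})—has already been carried out earlier. The proposition is essentially the statement that the Hilbert–Schmidt norm majorizes the operator norm, combined with the completeness of the chosen basis $(a_{i})_{i\geq 1}$. Note also that Assumption \ref{ass-alpha-lambda} enters only indirectly, through its role in ensuring well-posedness of $\mathcal{G}$ via the existence of $r^{1}$ in Proposition \ref{prop-results-ode}; no additional use of the assumption is required in this particular argument.
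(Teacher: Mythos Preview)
Your proposal is correct and follows essentially the same approach as the paper: the paper's proof simply invokes Lemma \ref{lemma-Gn-kernels-l2} (that each $\mathcal{G}_{n}\in L^{2}([0,T]^{2})$, hence $\mathsf{G}_{n}\in B(L^{2}([0,T]))$) together with \eqref{eq-convergence-degenerate-kernel}, leaving the Hilbert--Schmidt bound on the operator norm implicit, whereas you spell out that bound explicitly. The only cosmetic difference is that the paper justifies $b_{i}\in L^{2}([0,T])$ via $\mathsf{G}\in B(L^{2}([0,T]))$ (Proposition \ref{prop-G-adjoint-compact}) while you use $\mathcal{G}\in L^{2}([0,T]^{2})$ (Proposition \ref{prop-kernel-caligraphic-properties}) and Cauchy--Schwarz, which are equivalent here.
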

Next, we consider the following sequence of approximate equations to \eqref{eq-operator-strategy-major}, 
\begin{equation}
\label{eq-operator-strategy-major-approx}
\left(\mathsf{I} + \frac{\kappa_{1}\kappa_{0}}{2\lambda_{0}\lambda_{1}}\mathsf{G}_{n}\right) \nu^{0,(n)} = \mathsf{S}\bar{\mu} + \frac{\eta_{n}}{2\lambda_{0}},\quad n\geq 1,
\end{equation}
for a suitably defined sequence of constants $(\eta_{n})_{n\geq 1}$. 
\begin{remark}
We remark that in \eqref{eq-operator-strategy-major-approx} we continue to take the operator $\mathsf{S}$ to be defined in terms of $\mathsf{G}$, as in \eqref{eq-chi-function}, and not in terms of the sequence $(\mathsf{G}_{n})_{n\geq 1}$. It is not necessary to approximate the operator $\mathsf{S}$ since it can be explicitly expressed in terms of the kernel $\mathcal{G}$ in \eqref{eq-kernel-caligraphic}, via the operator $\mathsf{G}$, therefore, it can be computed explicitly via numerical integration (see also Remark \ref{remark-exact-numerics}).
\end{remark}
A solution to \eqref{eq-operator-strategy-major-approx} exists if the inverse of the operator $\mathsf{I} + \frac{\kappa_{1}\kappa_{0}}{2\lambda_{0}\lambda_{1}}\mathsf{G}_{n}$ exists, with the candidate solution $\nu^{0,(n)}$ being given by
\begin{equation}
\label{eq-integral-approximate-operator}
 \nu^{0,(n)} =\left(\mathsf{I} + \frac{\kappa_{1}\kappa_{0}}{2\lambda_{0}\lambda_{1}} \mathsf{G}_{n}\right)^{-1} \left( \mathsf{S}\bar{\mu} + \frac{\eta_{n}}{2\lambda_{0}}\right). 
\end{equation}
The next result shows that for sufficiently large $n$, the inverse of $\mathsf{I} + \frac{\kappa_{1}\kappa_{0}}{2\lambda_{0}\lambda_{1}}\mathsf{G}_{n}$ exists. Moreover, we show that the problem of finding such inverse is reduced to the finite dimensional problem of matrix inversion.

To state our results it is convenient to introduce the sequence of matrices $(\mathbb{G}_{n})_{n\geq 1}$ where  $\mathbb{G}_{n}\in\mathbb{R}^{n\times n}$ and whose entries are defined as
\begin{equation}
\label{eq-matrix-G-frak}
(\mathbb{G}_{n})_{ij} :=\langle a_{i}, b_{j}\rangle_{L^{2}}
\end{equation} 
for all $1\leq i,j\leq n$ and with $a_i$ and $b_j$ defined as  in \eqref{eq-bj-integral}. Moreover, we will denote by $I_{n}$ the $n$-dimensional identity matrix, that is $I_{n}:=\text{diag}(1,\ldots,1)\in\mathbb{R}^{n\times n}$. \\We are now ready to state our next proposition, which is proved in Section \ref{proof-convergence-numerics}.  

\begin{proposition}
\label{prop-degenerate-kernel-approximation}
Under Assumption \ref{ass-alpha-lambda}, let $(\mathsf{G}_{n})_{n\geq 1}$ be defined as in \eqref{eq-operator-degenerate-kernel} and let $(\mathbb{G}_{n})_{n\geq 1}$ be defined as in \eqref{eq-matrix-G-frak}. Then, there exists $N\geq 1$ such that for all $n\geq N$ the operator 
$\mathsf{I} + \frac{\kappa_{1}\kappa_{0}}{2\lambda_{0}\lambda_{1}}\mathsf{G}_{n}$ and the matrix  $I_{n} +  \frac{\kappa_{1}\kappa_{0}}{2\lambda_{0}\lambda_{1}} \mathbb{G}_{n}$ are both invertible. In particular, for all $n\geq N$ it holds that 
\begin{equation}
\label{eq-inverse-matrix-degenerate}
\left(\mathsf{I} + \frac{\kappa_{1}\kappa_{0}}{2\lambda_{0}\lambda_{1}}\mathsf{G}_{n}\right)^{-1}\psi = \psi -\frac{\kappa_{1}\kappa_{0}}{2\lambda_{0}\lambda_{1}} \sum_{i,j=1}^{n} \left(I_{n} +  \frac{\kappa_{1}\kappa_{0}}{2\lambda_{0}\lambda_{1}} \mathbb{G}_{n}\right)^{-1}_{i,j}\langle \psi,b_{j} \rangle_{L^{2}} \ a_{i}, 
\end{equation}
for any $\psi\in L^{2}([0,T])$.
\end{proposition}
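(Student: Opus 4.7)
The plan is to prove invertibility of $\mathsf{I} + c\mathsf{G}_n$ (with $c := \tfrac{\kappa_{0}\kappa_{1}}{2\lambda_{0}\lambda_{1}}$) by perturbation, then exploit the finite-rank structure of $\mathsf{G}_n$ to reduce the operator equation $(\mathsf{I} + c\mathsf{G}_n)\phi = \psi$ to a finite-dimensional linear system, and finally read off \eqref{eq-inverse-matrix-degenerate} from that system. For the perturbation step, I would rely on the fact that $\mathsf{I} + c\mathsf{G}$ admits a bounded inverse $\mathsf{R}$ (Proposition \ref{prop-properties-resolvent}) together with the operator-norm convergence $\|\mathsf{G}_n - \mathsf{G}\| \to 0$ from Proposition \ref{prop-convergence-Gn}. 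Writing
\begin{equation*}
\mathsf{I} + c\mathsf{G}_n = (\mathsf{I} + c\mathsf{G})\bigl(\mathsf{I} + c\mathsf{R}(\mathsf{G}_n - \mathsf{G})\bigr),
\end{equation*}
a Neumann-series argument yields invertibility of the second factor, and hence of $\mathsf{I} + c\mathsf{G}_n$, for all $n \geq N$ sufficiently large.

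For the reduction step, I would exploit the degenerate kernel to obtain $(\mathsf{G}_n\phi)(t) = \sum_{i=1}^n \langle \phi, b_i\rangle_{L^{2}} a_i(t)$ for every $\phi \in L^{2}([0,T])$. Given $\psi$, the equation $(\mathsf{I}+c\mathsf{G}_n)\phi = \psi$ is equivalent to $\phi = \psi - c\sum_{i=1}^n x_i a_i$ with $x_i := \langle \phi, b_i\rangle_{L^{2}}$; taking the $L^{2}$ inner product of both sides with each $b_j$ produces the finite linear system
\begin{equation*}
x_j + c\sum_{i=1}^n (\mathbb{G}_n)_{ij}\, x_i = \langle \psi, b_j\rangle_{L^{2}}, \qquad j=1,\ldots,n.
\end{equation*}
Since $\mathcal{G}$ is symmetric (Proposition \ref{prop-kernel-caligraphic-properties}), one computes $(\mathbb{G}_n)_{ij} = \langle a_i, b_j\rangle_{L^{2}} = \langle a_j, b_i\rangle_{L^{2}} = (\mathbb{G}_n)_{ji}$, so the system can be rewritten as $(I_n + c\mathbb{G}_n)x = y$ with $y_j = \langle \psi, b_j\rangle_{L^{2}}$.

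Finally, to establish the invertibility of $I_n + c\mathbb{G}_n$, I would argue by contradiction: any $v \in \mathbb{R}^n$ with $(I_n + c\mathbb{G}_n)v = 0$ gives rise to $\phi := -c\sum_j v_j a_j \in \ker(\mathsf{I} + c\mathsf{G}_n)$. Indeed, the relation $v_k = -c\sum_j (\mathbb{G}_n)_{jk} v_j$ yields $\langle \phi, b_k\rangle_{L^{2}} = v_k$, whence $(\mathsf{I}+c\mathsf{G}_n)\phi = \phi + c\sum_k v_k a_k = 0$; the invertibility established in the perturbation step then forces $\phi = 0$, and since $(a_i)_{i=1}^n$ is an orthonormal system, $v = 0$. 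Once both operators are invertible, substituting $x = (I_n + c\mathbb{G}_n)^{-1}y$ back into $\phi = \psi - c\sum_i x_i a_i$ yields the claimed formula \eqref{eq-inverse-matrix-degenerate}. The main subtlety will be the bookkeeping between the operator on $L^{2}([0,T])$ and the matrix on $\mathbb{R}^n$; in particular, it is the symmetry of $\mathcal{G}$, and therefore of $\mathbb{G}_n$, that ensures the final formula can be stated in terms of $(I_n + c\mathbb{G}_n)^{-1}$ rather than its transpose.
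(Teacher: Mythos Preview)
Your proposal is correct and follows essentially the same route as the paper: operator invertibility by perturbation from $\mathsf{I}+c\mathsf{G}$ (the paper invokes Atkinson's Theorem 2.1.1, which is exactly your Neumann-series argument), followed by the standard reduction of a degenerate-kernel equation to a finite linear system (the paper cites Atkinson's Theorem 2.1.2 and Porter's equations (3.5)--(3.7), whereas you work this out directly). Your explicit observation that the symmetry of $\mathcal{G}$ forces $\mathbb{G}_n$ to be symmetric---so that $(I_n+c\mathbb{G}_n)^{-1}$ rather than its transpose appears in \eqref{eq-inverse-matrix-degenerate}---is a helpful clarification that the paper leaves implicit in its citations.
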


Note that both operators $\mathsf{I} + \frac{\kappa_{1}\kappa_{0}}{2\lambda_{0}\lambda_{1}}\mathsf{G}$ and $\mathsf{I} + \frac{\kappa_{1}\kappa_{0}}{2\lambda_{0}\lambda_{1}}\mathsf{G}_{n}$ are invertible (see Proposition \ref{eq-R-inverse-operator}), nevertheless, only in the case of the latter the inverse operator can be computed via matrix inversion by exploiting the corresponding degenerate kernel decomposition, see also Remark \ref{remark-exact-numerics} for additional discussion. 
 
 The next result shows that that the candidate solutions in \eqref{eq-integral-approximate-operator} converge \textit{in mean} to the optimal strategy $\nu^{0,*}$ of Theorem \ref{thm-major-solution}.
Henceforth, we take the sequence of constants $(\eta_{n})_{n\geq 1}$ to be defined as
\begin{equation}
\label{eq-eta-n}
\eta_{n} := 2\lambda_{0}\frac{q_0- \left\langle \left( \mathsf{I} + \frac{\kappa_{1}\kappa_{0}}{2\lambda_{0}\lambda_{1}}\mathsf{G}_{n}\right)^{-1}\mathsf{S}\bar{\mu},\mathit{1}\right\rangle_{L^{2}}}{\left\langle \left( \mathsf{I} + \frac{\kappa_{1}\kappa_{0}}{2\lambda_{0}\lambda_{1}}\mathsf{G}_{n}\right)^{-1}\mathit{1},\mathit{1}\right\rangle_{L^{2}}}, \quad n\geq 1. 
\end{equation}
\begin{proposition}
\label{prop-convergence-numerics-mean}
Under Assumption \ref{ass-alpha-lambda}, let $\nu^{0,*}$ and $\nu^{0,(n)}$ be defined as in \eqref{eq-major-solution} and \eqref{eq-integral-approximate-operator}, respectively.
Then, there exists $N\geq 1$ such that for all $n\geq N$ the functions $\nu^{0,(n)}$ are well-defined and are in $L^{2}([0,T])$.
Moreover, 
\begin{equation}
\label{eq-error-bound-mean}
\lim_{n\to\infty }\left\|\nu^{0,*}-\nu^{0,(n)}\right\|_{L^{2}} = 0.
\end{equation}
\end{proposition}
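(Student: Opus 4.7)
The plan is to treat \eqref{eq-operator-strategy-major-approx} as a perturbation of \eqref{eq-operator-strategy-major} and use Proposition \ref{prop-convergence-Gn} together with a standard Banach algebra argument to transfer operator-norm convergence from $\mathsf{G}_n \to \mathsf{G}$ to convergence of the inverses, and then to the solutions $\nu^{0,(n)}$. Throughout, set $c := \frac{\kappa_1 \kappa_0}{2\lambda_0 \lambda_1}$, $\mathsf{T} := \mathsf{I} + c\mathsf{G}$ and $\mathsf{T}_n := \mathsf{I} + c\mathsf{G}_n$.

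\textbf{Step 1 (invertibility and uniform boundedness).} By Proposition \ref{prop-properties-resolvent}, the operator $\mathsf{T}$ is invertible with bounded inverse $\mathsf{T}^{-1} = \mathsf{R} \in B(L^2([0,T]))$. By Proposition \ref{prop-convergence-Gn}, $\|\mathsf{T}_n - \mathsf{T}\| = c\|\mathsf{G}_n - \mathsf{G}\| \to 0$. A standard Neumann series argument then gives $N_1 \ge 1$ such that for all $n \ge N_1$, $\|\mathsf{T}^{-1}(\mathsf{T} - \mathsf{T}_n)\| < 1/2$; hence $\mathsf{T}_n = \mathsf{T}\,(\mathsf{I} - \mathsf{T}^{-1}(\mathsf{T} - \mathsf{T}_n))$ is invertible, with
\begin{equation*}
\mathsf{T}_n^{-1} - \mathsf{T}^{-1} = \mathsf{T}_n^{-1}(\mathsf{T} - \mathsf{T}_n)\mathsf{T}^{-1},
\qquad \|\mathsf{T}_n^{-1}\| \le 2\|\mathsf{T}^{-1}\|,
\end{equation*}
and therefore $\|\mathsf{T}_n^{-1} - \mathsf{T}^{-1}\| \to 0$ as $n \to \infty$.

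\textbf{Step 2 (convergence of $\eta_n$).} Using Step 1, both $\mathsf{T}_n^{-1}\mathit{1} \to \mathsf{T}^{-1}\mathit{1}$ and $\mathsf{T}_n^{-1}\mathsf{S}\bar\mu \to \mathsf{T}^{-1}\mathsf{S}\bar\mu$ in $L^2([0,T])$, since $\mathsf{S}\bar\mu \in L^2$ by Lemma \ref{lemma-chi-continuous} and the hypothesis \eqref{eq-mu-square-integrable}. By continuity of the inner product, both the numerator and denominator in \eqref{eq-eta-n} converge to the corresponding expressions in \eqref{eq-constant-major-solution}. The denominator $\langle \mathsf{R}\mathit{1}, \mathit{1}\rangle_{L^2}$ is nonzero (this is exactly the well-definedness of $\eta$ established in the proof of Theorem \ref{thm-major-solution}), so for $n$ large the denominators $\langle \mathsf{T}_n^{-1}\mathit{1}, \mathit{1}\rangle_{L^2}$ are bounded away from zero, and $\eta_n \to \eta$.

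\textbf{Step 3 (convergence in $L^2$).} For $n \ge N_1$ the function $\nu^{0,(n)} = \mathsf{T}_n^{-1}\bigl(\mathsf{S}\bar\mu + \tfrac{\eta_n}{2\lambda_0}\mathit{1}\bigr)$ is well-defined in $L^2([0,T])$. Writing
\begin{equation*}
\nu^{0,*} - \nu^{0,(n)}
= \bigl(\mathsf{T}^{-1}-\mathsf{T}_n^{-1}\bigr)\mathsf{S}\bar\mu
+ \frac{\eta}{2\lambda_0}\bigl(\mathsf{T}^{-1}-\mathsf{T}_n^{-1}\bigr)\mathit{1}
+ \frac{\eta - \eta_n}{2\lambda_0}\,\mathsf{T}_n^{-1}\mathit{1},
\end{equation*}
the first two summands tend to $0$ in $L^2$ by Step 1 applied to the fixed functions $\mathsf{S}\bar\mu$ and $\mathit{1}$, while the third tends to $0$ because $\eta_n \to \eta$ and $\|\mathsf{T}_n^{-1}\mathit{1}\|_{L^2}$ is uniformly bounded. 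Combining these three limits yields \eqref{eq-error-bound-mean}.

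\textbf{Main obstacle.} The only delicate point is Step 2: one needs the limiting denominator $\langle \mathsf{R}\mathit{1},\mathit{1}\rangle_{L^2}$ to be nonzero, so that the ratio defining $\eta_n$ is stable. This is not immediate from Proposition \ref{prop-convergence-Gn}; it must be drawn from the admissibility argument in Theorem \ref{thm-major-solution}, where $\eta$ is shown to be well-defined. Once this is invoked, the rest reduces to the perturbation identity of Step 1 and standard continuity estimates, which require no further computation.
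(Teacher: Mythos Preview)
Your proof is correct and follows essentially the same route as the paper's: the paper writes the same three-term decomposition of $\nu^{0,*}-\nu^{0,(n)}$, uses the same perturbation argument for $\|\mathsf{T}_n^{-1}-\mathsf{T}^{-1}\|\to 0$ (packaged there as a citation to Atkinson's Theorem~2.1.1, i.e.\ Theorem~\ref{thm-theorem-atkinson}), and handles $\eta_n\to\eta$ via the same continuity-of-inner-product argument (split off as Lemmas~\ref{lemma-convergence-inverse-inner-product} and~\ref{lemma-constant-convergence}). The only cosmetic difference is that the paper invokes Proposition~\ref{prop-properties-resolvent}(ii) directly for $\langle\mathsf{R}\mathit{1},\mathit{1}\rangle_{L^2}>0$, whereas you route through the proof of Theorem~\ref{thm-major-solution}; both are valid.
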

The proof of Proposition \ref{prop-convergence-numerics-mean} is postponed to Section \ref{proof-convergence-numerics}. Lemma \ref{lemma-constant-convergence} shows that for sufficiently large $n$, the constants $\eta_{n}$ in \eqref{eq-eta-n} are well-defined.

In order to obtain an approximating sequence which converges \textit{uniformly} to the optimal control $\nu^{0,*}$ we introduce the sequence of candidate functions $(\hat{\nu}^{0,(n)})_{n\geq 1}$ defined as 
\begin{equation}
\label{eq-uniform-numerical-solution}
\hat{\nu}^{0,(n)}_{t}  :=  - \frac{\kappa_{1}\kappa_{0}}{2\lambda_{0}\lambda_{1}} \left(\mathsf{G} \nu^{0,(n)}\right)(t) +  (\mathsf{S}\bar{\mu})(t) + \frac{\eta_{n}}{2\lambda_{0}}
\end{equation}
for all $t\in[0,T]$ and for all $n\geq1$. 
Our main result for this section is the following convergence theorem.
\begin{theorem}
\label{thm-uniform-convergence-numerics}
Under Assumption \ref{ass-alpha-lambda}, let $\nu^{0,*}$, $\hat{\nu}^{0,(n)}$ and $\nu^{1,*}$ be defined as in \eqref{eq-major-solution}, \eqref{eq-uniform-numerical-solution} and \eqref{eq-minor-agent-optimal}, respectively.
Then, there exists an $N\geq 1$ such that for all $n\geq N$ the functions $\hat{\nu}^{0,(n)}$ are in $L^{2}([0,T])$ and the controls $\nu^{1,*}(\hat{\nu}^{0,(n)})$ are in $\mathcal{A}_{m}$. Furthermore, we have that:
\begin{itemize} 
\item[\textbf{(i)}] 
\begin{equation*}
 \lim_{n\to\infty}\sup_{t\in[0,T]} \left|\nu^{0,*}_{t}-\hat{\nu}^{0,(n)}_{t}\right| =0, 
 \end{equation*}
\item[\textbf{(ii)}] 
\begin{equation*}
 \lim_{n\to\infty}\sup_{t\in[0,T]} \left|\nu^{1,*}_{t}\left(\nu^{0,*}\right)-\nu^{1,*}_{t}\left(\hat{\nu}^{0,(n)}\right)\right| =0, \quad \mathbb{P}-\rm{a.s.} 
 \end{equation*}
 \end{itemize} 
\end{theorem}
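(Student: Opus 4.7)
The plan is to prove (i) first and then derive (ii) from it using the linear dependence of the minor agent's optimal response on the major agent's control. The key structural observation is that $\hat{\nu}^{0,(n)}$ in \eqref{eq-uniform-numerical-solution} is designed to contain the true operator $\mathsf{G}$ (rather than its approximation $\mathsf{G}_{n}$) on the right-hand side; this is precisely what promotes the $L^{2}$-convergence of Proposition~\ref{prop-convergence-numerics-mean} into uniform convergence via the smoothing property of $\mathsf{G}$.

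For (i), I would rewrite \eqref{eq-operator-strategy-major} pointwise and subtract \eqref{eq-uniform-numerical-solution} so the common $\mathsf{S}\bar{\mu}$ term cancels, yielding
\[
\nu^{0,*}_{t}-\hat{\nu}^{0,(n)}_{t} \;=\; -\frac{\kappa_{1}\kappa_{0}}{2\lambda_{0}\lambda_{1}}\bigl(\mathsf{G}(\nu^{0,*}-\nu^{0,(n)})\bigr)(t)+\frac{\eta-\eta_{n}}{2\lambda_{0}}.
\]
The kernel $\mathcal{G}$ in \eqref{eq-kernel-caligraphic} is jointly continuous on $[0,T]^{2}$ since the factors $\xi^{\pm}$ in \eqref{eq-def-xi-pm} are continuous (Proposition~\ref{prop-results-ode} gives continuity of $r^{1}$), hence $\mathcal{G}$ is bounded by some $M$; Cauchy--Schwarz then gives $\sup_{t}|(\mathsf{G}\psi)(t)|\leq M\sqrt{T}\,\|\psi\|_{L^{2}}$, so Proposition~\ref{prop-convergence-numerics-mean} drives the first term to $0$ uniformly in $t$. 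For the constant discrepancy, I would use that $\mathsf{G}_{n}\to \mathsf{G}$ in operator norm by Proposition~\ref{prop-convergence-Gn} and invoke the standard continuity of inversion on the open set of invertible operators (via the resolvent identity $A^{-1}-B^{-1}=A^{-1}(B-A)B^{-1}$) to obtain $(\mathsf{I}+\frac{\kappa_{1}\kappa_{0}}{2\lambda_{0}\lambda_{1}}\mathsf{G}_{n})^{-1}\to \mathsf{R}$ in operator norm. Comparing \eqref{eq-eta-n} with \eqref{eq-constant-major-solution} then gives $\eta_{n}\to\eta$, using that $\langle \mathsf{R}\mathit{1},\mathit{1}\rangle_{L^{2}}\neq 0$ (well-posedness of $\eta$ in Theorem~\ref{thm-major-solution}) keeps the denominators bounded away from zero.

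For (ii), I would exploit the fact that \eqref{eq-minor-agent-optimal}--\eqref{eq-v0} make $\nu^{1,*}$ affine in the major agent's control, the only dependence being the deterministic integral $-\frac{\kappa_{0}}{2\lambda_{1}}\int_{t}^{T}\mathcal{K}(t,s)\nu^{0}_{s}\,ds$ inside $r^{0}_{t}$. Since both $\nu^{0,*}$ and $\hat{\nu}^{0,(n)}$ are deterministic, the stochastic signal contribution cancels in the difference, leaving the deterministic expression
\[
\nu^{1,*}_{t}(\nu^{0,*})-\nu^{1,*}_{t}(\hat{\nu}^{0,(n)}) \;=\; -\Delta r^{0}_{t}-r^{1}_{t}\int_{0}^{t}\mathcal{K}(s,t)\,\Delta r^{0}_{s}\,ds,
\]
with $\Delta r^{0}_{t}=-\frac{\kappa_{0}}{2\lambda_{1}}\int_{t}^{T}\mathcal{K}(t,s)(\nu^{0,*}_{s}-\hat{\nu}^{0,(n)}_{s})\,ds$. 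Continuity of $\mathcal{K}$ on $[0,T]^{2}$ and of $r^{1}$ on $[0,T]$ yield a finite constant $C$ with $\sup_{t}|\nu^{1,*}_{t}(\nu^{0,*})-\nu^{1,*}_{t}(\hat{\nu}^{0,(n)})|\leq C\sup_{t}|\nu^{0,*}_{t}-\hat{\nu}^{0,(n)}_{t}|$, so (ii) follows from (i); the deterministic bound makes the $\mathbb{P}$-a.s. statement automatic. Admissibility of $\hat{\nu}^{0,(n)}\in L^{2}([0,T])$ and $\nu^{1,*}(\hat{\nu}^{0,(n)})\in\mathcal{A}_{m}$ follows by direct substitution into the defining formulas, using Lemma~\ref{lemma-chi-continuous}, \eqref{eq-mu-square-integrable}, and the bounds just mentioned. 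The main technical hurdle I anticipate is securing operator-norm convergence of $(\mathsf{I}+\frac{\kappa_{1}\kappa_{0}}{2\lambda_{0}\lambda_{1}}\mathsf{G}_{n})^{-1}$ to $\mathsf{R}$ with uniform bounds on the inverses (only guaranteed for large $n$ by Proposition~\ref{prop-degenerate-kernel-approximation}); the rest is a routine upgrade of the $L^{2}$-convergence in Proposition~\ref{prop-convergence-numerics-mean} to uniform convergence via the boundedness of $\mathcal{G}$, $\mathcal{K}$ and $r^{1}$.
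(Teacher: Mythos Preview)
Your proposal is correct and follows essentially the same route as the paper. The paper's proof of (i) subtracts \eqref{eq-uniform-numerical-solution} from the integral equation \eqref{eq-integral-eq-major} to obtain precisely your identity, applies Cauchy--Schwarz together with the joint continuity of $\mathcal{G}$ (Proposition~\ref{prop-kernel-caligraphic-properties}) and Proposition~\ref{prop-convergence-numerics-mean}, and invokes a separately stated Lemma~\ref{lemma-constant-convergence} for $\eta_{n}\to\eta$ (proved via a cited result from Atkinson equivalent to your resolvent-identity argument); for (ii) the paper exploits the same affine dependence of $\nu^{1,*}$ on $\nu^{0}$ through $r^{0}$, cancels the stochastic signal contribution, and bounds the remainder using the boundedness of $\mathcal{K}$ and $r^{1}$ exactly as you outline.
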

The proof of Theorem \ref{thm-uniform-convergence-numerics} is postponed to Section \ref{proof-convergence-numerics}. 

Proposition \ref{prop-degenerate-kernel-approximation} and Theorem \ref{thm-uniform-convergence-numerics} show that the infinite-dimensional problem of determining the solution to \eqref{eq-operator-strategy-major}, can be reduced to the finite-dimensional problem of matrix inversion.
\begin{remark}
The proofs of the results of Proposition \ref{prop-convergence-numerics-mean} and Theorem \ref{thm-uniform-convergence-numerics} do not rely on the existence of the orthonormal expansion in \eqref{eq-convergence-kernel-series-G} and the corresponding convergence \eqref{eq-convergence-degenerate-kernel}. Indeed, our result can be extended to any generic sequence of operators $(\mathsf{G}_{n})_{n\geq 1}$ in $B(L^{2}([0,T]))$ satisfying the approximation property of Proposition \ref{prop-convergence-Gn} and which do not necessarily enjoy a integral representation of the form in \eqref{eq-operator-degenerate-kernel}. 
\end{remark}
\begin{remark}
\label{remark-exact-numerics}
The matrix entries in \eqref{eq-matrix-G-frak} must be computed numerically.  The use of a numerical evaluation in \eqref{eq-matrix-G-frak}  will lead to numerical errors in the entries of the matrix $I_{n} +  \frac{\kappa_{1}\kappa_{0}}{2\lambda_{0}\lambda_{1}} \mathbb{G}_{n}$. As shown in \cite[Chapter 2.3.4]{Atkinson1997}, for a sufficiently accurate estimation the numerical error arising from these computations is negligible. A similar discussion applies to, among others, the numerical evaluation of the inverse of the matrix $I_{n} +  \frac{\kappa_{1}\kappa_{0}}{2\lambda_{0}\lambda_{1}} \mathbb{G}_{n}$ and of the integral $\mathsf{S}\bar{\mu}$. These are all elementary and well-understood convergence problems in numerical analysis and the corresponding convergence rate could be easily incorporated in the convergence results of this section. Hence, our discussion assumes that the aforementioned quantities are taken to be exact and that the corresponding numerical errors are negligible. 
\end{remark}

\begin{remark}
The numerical scheme we have presented has an advantage from an implementation standpoint too. Specifically, if one were to determine the major agent's optimal strategy by using the result of Theorem \ref{thm-major-solution}, she would need to mathematically determine the eigenvalues $(\zeta_{n})_{n\geq 1}$ and eigenfunctions $(\psi_{n})_{n\geq 1}$ each time she wishes to change the function $\phi^{1}$, as shown, for example, in  Appendix \ref{sec-spectral-explicit}. On the other hand, with the numerical scheme of Theorem \ref{thm-uniform-convergence-numerics}, to achieve the same goal it is sufficient to change the expression of $\phi^{1}$ in the numerical solver of the Riccati ODE \eqref{eq-v1-ode}, which usually amounts to change solely few lines of code.
\end{remark}

\section{Proof of Theorem \ref{thm-minor-optimal-strategy}}
\label{sec-proof-thm-minor-optimal}
We show how the Stackelberg equilibrium can be found by \textit{backward induction}, that is by first solving the minor agent's problem and then the major agent's problem. We determine the minor agent's optimal strategy via a calculus of variations argument, as similarly done in \cite{NeumanVoss:20}. The following results also borrow ideas from  \cite{casgrain.jaimungal.19}.

Henceforth, we assume that $\nu^{0}\in \mathcal{A}_{M}^{q_0}$ is a fixed major agent liquidation strategy and with a slight abuse of notation we write $H^{1}(\nu)$ for $H^{1}(\nu;\nu^{0})$.
We start by determining an alternative representation for the minor agent's objective.
\begin{lemma}
\label{lemma-alternative-minor}
The minor agent's objective $H^{1}$ in \eqref{eq-minor-functional} can be alternatively represented as
\begin{equation}
\label{eq-alternative-minor}
\begin{aligned}
H^{1}(\nu^{1}) &= x_{1}- \mathbb{E}\Bigg[   \lambda_{1}\int^{T}_{0}(\nu^{1}_{t})^{2}dt +\alpha \left(Q^{1,\nu^{1}}_{T}\right)^{2} +\int^{T}_{0}\phi^{1}_{t} \left(Q^{1,\nu^{1}}_{t}\right)^{2}dt\\ &\quad\quad\quad\quad+\int^{T}_{0} Q^{1,\nu^{1}}_{t}(\kappa_{0}\nu^{0}_{t}dt + \kappa_{1}\nu^{1}_{t}dt -dA_{t})  \Bigg],
\end{aligned}
\end{equation}
for any $\nu^{1}\in\mathcal{A}_{m}$.
\end{lemma}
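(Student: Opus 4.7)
The plan is to rewrite $H^{1}(\nu^{1};\nu^{0})$ by using the explicit expressions for $X_{T}^{1,\nu}$ and $P_{T}^{\nu}$ from \eqref{eq-minor-cash}--\eqref{eq-execution-price-minor} and \eqref{eq-P-nu-definition}--\eqref{eq-permanent-impact}, and then eliminating the mark-to-market term $Q_{T}^{1,\nu^{1}} P_{T}^{\nu}$ via an integration-by-parts argument applied to the product $Q_{t}^{1,\nu^{1}} P_{t}^{\nu}$. The key observation is that $Q_{0}^{1,\nu^{1}}=0$ and $Q^{1,\nu^{1}}$ is a finite-variation process with $dQ_{t}^{1,\nu^{1}}=-\nu^{1}_{t}\,dt$, whereas $P^{\nu}=M+A-Y^{\nu}$ is a special semimartingale, so Itô's product rule contains no quadratic covariation term.

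First, I would substitute $S_{t}^{1,\nu} = P_{t}^{\nu} - \lambda_{1}\nu^{1}_{t}$ into \eqref{eq-minor-cash} to obtain
\[
X_{T}^{1,\nu} = x_{1} + \int_{0}^{T} P_{t}^{\nu}\nu^{1}_{t}\,dt - \lambda_{1}\int_{0}^{T}(\nu^{1}_{t})^{2}\,dt.
\]
Next, applying integration by parts to $Q_{t}^{1,\nu^{1}} P_{t}^{\nu}$ on $[0,T]$ and using $Q_{0}^{1,\nu^{1}}=0$ and \eqref{eq-permanent-impact}, I get
\[
Q_{T}^{1,\nu^{1}} P_{T}^{\nu} = \int_{0}^{T} Q_{t}^{1,\nu^{1}}\bigl(dM_{t}+dA_{t}-(\kappa_{0}\nu^{0}_{t}+\kappa_{1}\nu^{1}_{t})\,dt\bigr) - \int_{0}^{T} P_{t}^{\nu}\nu^{1}_{t}\,dt,
\]
which I rearrange to express $\int_{0}^{T} P_{t}^{\nu}\nu^{1}_{t}\,dt$ in terms of $Q_{T}^{1,\nu^{1}}P_{T}^{\nu}$ and integrals against the state process.

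The third step is to take expectations. The stochastic integral $\int_{0}^{T} Q_{t}^{1,\nu^{1}}\,dM_{t}$ has zero expectation because $\nu^{1}\in\mathcal{A}_{m}$ implies $Q^{1,\nu^{1}}$ is bounded in $L^{2}(\Omega\times[0,T])$ (in fact uniformly bounded in $L^{2}(\Omega)$ on $[0,T]$ by Cauchy--Schwarz), and $M\in\mathcal{H}^{2}$ by \eqref{ass:P}; thus the integrand is in the class of integrands for which the stochastic integral is a true martingale. Substituting back into $H^{1}(\nu^{1})=\mathbb{E}[X_{T}^{1,\nu}+Q_{T}^{1,\nu^{1}}(P_{T}^{\nu}-\alpha Q_{T}^{1,\nu^{1}})-\int_{0}^{T}\phi^{1}_{t}(Q_{t}^{1,\nu^{1}})^{2}\,dt]$, the term $Q_{T}^{1,\nu^{1}}P_{T}^{\nu}$ cancels, leaving exactly the right-hand side of \eqref{eq-alternative-minor} after a sign rearrangement.

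The main (but minor) obstacle is justifying the martingale property of the stochastic integral and the integrability of all terms so that the expectations are finite and additive; this is routine given the assumption \eqref{ass:P} on $M$, the $L^{2}$-admissibility of $\nu^{1}$, the deterministic $L^{2}$-admissibility of $\nu^{0}$, and the local boundedness of $\phi^{1}$. No further subtleties arise because $Q^{1,\nu^{1}}$ has no martingale part, so the product rule does not generate covariation terms.
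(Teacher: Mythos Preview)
Your proposal is correct and follows essentially the same approach as the paper: expand the cash process via \eqref{eq-execution-price-minor}, apply It\^o's product rule to $Q_{t}^{1,\nu^{1}}P_{t}^{\nu}$ using $Q_{0}^{1,\nu^{1}}=0$ and the absence of a covariation term, drop the martingale integral $\int_{0}^{T} Q_{t}^{1,\nu^{1}}\,dM_{t}$ by the $L^{2}$-admissibility of $\nu^{1}$ and \eqref{ass:P}, and substitute back into \eqref{eq-minor-functional}. The paper's proof is organized slightly differently---it computes $\mathbb{E}[X_{T}^{1,\nu}+Q_{T}^{1,\nu^{1}}P_{T}^{\nu}]$ directly rather than first isolating $\int_{0}^{T}P_{t}^{\nu}\nu^{1}_{t}\,dt$---but the content is identical.
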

\begin{proof}
We use \eqref{eq-minor-cash}, \eqref{eq-execution-price-minor} and the It\^o's product rule on $Q^{1,\nu^{1}}_{T}P_{T}^{\nu}$  to get
\begin{equation}
\label{eq-integration-cash}
\begin{aligned}
& \mathbb{E}\left[X_{T}^{1,\nu^{1}} + Q^{1,\nu^{1}}_{T}P_{T}^{\nu} \right] \\
 &=  x_{1}  +\mathbb{E}\Bigg[ \int^{T}_{0} (P_{t}^{\nu}- \lambda_{1}\nu^{1}_{t})\nu^{1}_{t}dt +\int^{T}_{0} Q^{1,\nu^{1}}_{t}dP_{t}^{\nu} + \int^{T}_{0} P_{t}^{\nu}dQ^{1,\nu^{1}}_{t} \Bigg],
 \end{aligned}
\end{equation}
where  we also used $Q^{1,\nu^{1}}_{0}=0$ by \eqref{eq-minor-inventory}. Recall that $P =   M + A$. We apply \eqref{eq-minor-inventory},  \eqref{eq-P-nu-definition} and \eqref{eq-permanent-impact} to \eqref{eq-integration-cash} in order to obtain 
\begin{equation}
\label{eq-integration-cash-2}
\begin{aligned}
  \mathbb{E}\left[X_{T}^{1,\nu^{1}} + Q^{1,\nu^{1}}_{T}P_{T}^{\nu} \right]&=x_{1} + \mathbb{E}\Bigg[ - \lambda_{1}\int^{T}_{0} (\nu^{1}_{t})^{2}dt +\int^{T}_{0} Q^{1,\nu^{1}}_{t}dP_{t}^{\nu}\Bigg] \\
 &= x_{1} -  \mathbb{E}\Bigg[\lambda_{1}\int^{T}_{0} (\nu^{1}_{t})^{2}dt -\int^{T}_{0} Q^{1,\nu^{1}}_{t}dM_{t} \\ &\quad\quad\quad\quad+\int^{T}_{0} Q^{1,\nu^{1}}_{t}(\kappa_{0}\nu^{0}_{t}dt + \kappa_{1}\nu^{1}_{t}dt -dA_{t})\Bigg].
 \end{aligned}
\end{equation}
Since $\nu^{1}\in\mathcal{A}_{m}$ (see \eqref{eq-minor-admissible-set-def}), then we can drop the martingale term in \eqref{eq-integration-cash-2} and obtain
  \begin{equation}
 \label{eq-integration-cash-3}
 \begin{aligned}
  \mathbb{E}\left[X_{T}^{1,\nu^{1}} + Q^{1,\nu^{1}}_{T}P_{T}^{\nu} \right] 
 &= x_{1} -  \mathbb{E}\Bigg[\lambda_{1}\int^{T}_{0} (\nu^{1}_{t})^{2}dt  \\ &\quad\quad\quad\quad+\int^{T}_{0} Q^{1,\nu^{1}}_{t}(\kappa_{0}\nu^{0}_{t}dt + \kappa_{1}\nu^{1}_{t}dt -dA_{t})\Bigg].
 \end{aligned}
 \end{equation}
Substituting \eqref{eq-integration-cash-3} in \eqref{eq-minor-functional} returns \eqref{eq-alternative-minor}.
\end{proof}

In the next result we use the representation of  \eqref{eq-alternative-minor} to show that the minor agent's objective is strictly concave. 
\begin{lemma}
\label{lemma-minor-strictly-concave}
Under Assumption \ref{ass-alpha-lambda}, the functional $H^{1}$ defined in  \eqref{eq-minor-functional} is strictly concave for $\nu^1 \in \mathcal{A}_{m}$.
\end{lemma}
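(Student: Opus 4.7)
The plan is to exploit the alternative representation of $H^{1}$ obtained in Lemma \ref{lemma-alternative-minor} and to compute the second Gâteaux derivative of $H^{1}$ along an arbitrary perturbation $\omega \in \mathcal{A}_{m}$. Since the inventory map $\nu^{1}\mapsto Q^{1,\nu^{1}}$ is affine, the only terms in \eqref{eq-alternative-minor} that contribute to the second derivative are the purely quadratic ones in $\nu^{1}$: the temporary impact term $\lambda_{1}\int_{0}^{T}(\nu^{1}_{t})^{2}dt$, the terminal penalty $\alpha(Q^{1,\nu^{1}}_{T})^{2}$, the running inventory cost $\int_{0}^{T}\phi^{1}_{t}(Q^{1,\nu^{1}}_{t})^{2}dt$, and the cross term $\kappa_{1}\int_{0}^{T}Q^{1,\nu^{1}}_{t}\nu^{1}_{t}dt$. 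The remaining terms involving $\nu^{0}$ and $dA_{t}$ are linear in $\nu^{1}$ and therefore vanish at second order.

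Writing $Q^{\omega}_{t}=-\int_{0}^{t}\omega_{s}ds$, a direct computation gives
\begin{equation*}
\tfrac{d^{2}}{d\epsilon^{2}}\bigl(-H^{1}(\nu^{1}+\epsilon\omega)\bigr)\Big|_{\epsilon=0} = 2\,\mathbb{E}\!\left[\lambda_{1}\!\int_{0}^{T}\!\omega_{t}^{2}dt + \alpha(Q^{\omega}_{T})^{2} + \int_{0}^{T}\!\phi^{1}_{t}(Q^{\omega}_{t})^{2}dt + \kappa_{1}\!\int_{0}^{T}\!Q^{\omega}_{t}\omega_{t}\,dt\right].
\end{equation*}
The key step is then a deterministic integration by parts: since $\omega_{t}=-\dot{Q}^{\omega}_{t}$ and $Q^{\omega}_{0}=0$, one has $\int_{0}^{T}Q^{\omega}_{t}\omega_{t}dt=-\tfrac{1}{2}(Q^{\omega}_{T})^{2}$, which converts the cross term into a negative multiple of $(Q^{\omega}_{T})^{2}$. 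Substituting back yields
\begin{equation*}
\tfrac{d^{2}}{d\epsilon^{2}}\bigl(-H^{1}(\nu^{1}+\epsilon\omega)\bigr)\Big|_{\epsilon=0} = 2\,\mathbb{E}\!\left[\lambda_{1}\!\int_{0}^{T}\!\omega_{t}^{2}dt + (2\alpha-\kappa_{1})\tfrac{1}{2}(Q^{\omega}_{T})^{2} + \int_{0}^{T}\!\phi^{1}_{t}(Q^{\omega}_{t})^{2}dt\right].
\end{equation*}

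Under Assumption \ref{ass-alpha-lambda} we have $2\alpha-\kappa_{1}\geq 0$, and since $\phi^{1}_{t}\geq 0$ and $\lambda_{1}>0$, each of the three terms on the right-hand side is non-negative; moreover the first term is strictly positive whenever $\omega\not\equiv 0$ in $\mathcal{A}_{m}$. This shows that the second Gâteaux derivative of $-H^{1}$ is strictly positive in every admissible direction, hence $H^{1}$ is strictly concave on $\mathcal{A}_{m}$. The only mild obstacle is the bookkeeping on admissibility: one should verify that all expectations appearing above are finite, which follows directly from the square-integrability condition in \eqref{eq-minor-admissible-set-def} together with Cauchy–Schwarz to control the $Q^{\omega}$ terms, and from the local boundedness of $\phi^{1}$ assumed before \eqref{eq-minor-functional}, so that the interchange of expectation and $\epsilon$-differentiation is justified.
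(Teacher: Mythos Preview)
Your proof is correct and follows essentially the same approach as the paper: both rely on the alternative representation \eqref{eq-alternative-minor}, isolate the quadratic terms, and use the integration-by-parts identity $\int_{0}^{T}Q^{\omega}_{t}\omega_{t}\,dt=-\tfrac{1}{2}(Q^{\omega}_{T})^{2}$ to combine the cross term with the terminal penalty into $\tfrac{2\alpha-\kappa_{1}}{2}(Q^{\omega}_{T})^{2}\geq 0$. The only cosmetic difference is that the paper checks the convex-combination inequality $H^{1}(\rho\nu+(1-\rho)\omega)>\rho H^{1}(\nu)+(1-\rho)H^{1}(\omega)$ directly, while you compute the second G\^ateaux derivative; since $H^{1}$ is quadratic in $\nu^{1}$ these two computations produce the identical quadratic form in the perturbation.
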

\begin{proof}
In order to prove that the functional $H^{1}$ is strictly concave, we must show that for any $0 < \rho <1$ and $\nu,\omega\in \mathcal{A}_{m}$, such that $\nu$ and $\omega$ are $d\mathbb{P}\otimes dt$ distinguishable, it holds that
\begin{equation}
\label{eq-concavity-minor}
\mathcal{I}^{1}(\rho,\nu,\omega) :=H^{1}(\rho\nu +(1-\rho)\omega)-\rho H^{1}(\nu)-(1-\rho)H^{1}(\omega) >0.
\end{equation}
It is convenient to introduce the constant $\theta := \frac{2\alpha -\kappa_{1}}{2}$ as well as to define the function $\Gamma^{1}$ as follows 
\begin{equation}
\label{eq-func-gamma-1}
\Gamma^{1}_{t} = \begin{pmatrix}
\lambda_{1} & -\theta \\
-\theta & \phi^{1}_{t}
\end{pmatrix}, \quad 0\leq t \leq T. 
\end{equation} 
Note that under Assumption \ref{ass-alpha-lambda} it holds that $\theta\geq 0$. From \eqref{eq-minor-inventory} and integration by parts we get
\begin{equation}
\label{eq-q1-squared-identity}
\left(Q^{1,\nu^{1}}_{T}\right)^{2} = -2\int^{T}_{0} Q^{1,\nu^{1}}_{t}\nu^{1}_{t}dt.
\end{equation}
Using \eqref{eq-q1-squared-identity} we rewrite the minor agent's objective in \eqref{eq-alternative-minor} in terms of the function $\Gamma^{1}$ as 
\begin{equation}
\label{eq-alternative-minor-2}
\begin{aligned}
H^{1}(\nu^{1}) &= x_{1}- \mathbb{E}\Bigg[  \int^{T}_{0}\begin{pmatrix}
\nu^{1}_{t}\\
Q^{1,\nu^{1}}_{t}
\end{pmatrix}^{\intercal} \Gamma^{1}_{t }\begin{pmatrix}
\nu^{1}_{t}\\
Q^{1,\nu^{1}}_{t}
\end{pmatrix}dt +\int^{T}_{0} Q^{1,\nu^{1}}_{t}(\kappa_{0}\nu^{0}_{t}dt -dA_{t})  \Bigg]. 
\end{aligned}
\end{equation}
Note that, given the representation in \eqref{eq-alternative-minor-2}, in the case of $\theta>0$ and $\phi^{1}_{t}>0$ for all $t\in[0,T]$ the strict concavity of $H^{1}(\nu^{1})$ follows from \eqref{eq-minor-inventory} and the fact that $\Gamma^{1}_{t}$ is a positive-definite matrix for all $t\in[0,T]$.  In what follows we will use \eqref{eq-alternative-minor-2} to show that $ H^{1}(\nu^{1}) $ is strictly concave also under the assumption that $\phi_t^{1} \geq 0$ and $\theta \geq 0$.

We observe that $Q^{1,\nu}$ is linear with respect to $\nu$, that is
\begin{equation*}
Q^{1,\rho\nu +(1-\rho)\omega}_{t}  = \rho Q^{1,\nu}_{t} +(1-\rho) Q^{1,\omega}_{t} \quad \textrm{for all }\rho\in[0,1], \ \nu,\omega\in\mathcal{A}_{m}.
\end{equation*}
We substitute \eqref{eq-alternative-minor-2} in \eqref{eq-concavity-minor} and we use the linearity of $Q^{1,\cdot}$ to cancel out the $Q^{1,\cdot}_{t}(\kappa_{0}\nu^{0}_{t}dt -dA_{t})$ terms. This yields
\begin{equation*}
\begin{aligned}
\mathcal{I}^{1}(\rho,\nu,\omega) &= \mathbb{E}\Bigg[\int^{T}_{0}\rho\begin{pmatrix}
\nu_{t}\\
Q^{1,\nu}_{t}
\end{pmatrix}^{\intercal} \Gamma^{1}_{t }\begin{pmatrix}
\nu_{t}\\
Q^{1,\nu}_{t}
\end{pmatrix} dt + (1-\rho)\begin{pmatrix}
\omega_{t}\\
Q^{1,\omega}_{t}
\end{pmatrix}^{\intercal} \Gamma^{1}_{t }\begin{pmatrix}
\omega_{t}\\
Q^{1,\omega}_{t}
\end{pmatrix} dt
\\ &\quad-\left( \rho\begin{pmatrix}
\nu_{t}\\
Q^{1,\nu}_{t}
\end{pmatrix} + (1-\rho)\begin{pmatrix}
\omega_{t}\\
Q^{1,\omega}_{t}
\end{pmatrix} \right)^{\intercal} \Gamma^{1}_{t }\left( \rho\begin{pmatrix}
\nu_{t}\\
Q^{1,\nu}_{t}
\end{pmatrix} + (1-\rho)\begin{pmatrix}
\omega_{t}\\
Q^{1,\omega}_{t}
\end{pmatrix} \right) dt\Bigg],
\end{aligned}
\end{equation*}
where after multiplying out all the terms we get 
\begin{multline}
\label{eq-left-hand side-concavity-minor}
\mathcal{I}^{1}(\rho,\nu,\omega)  = \\ \mathbb{E}\Bigg[\int^{T}_{0}\rho(1-\rho)\left(\begin{pmatrix}
\nu_{t}\\
Q^{1,\nu}_{t}
\end{pmatrix}- \begin{pmatrix}
\omega_{t}\\
Q^{1,\omega}_{t}
\end{pmatrix}\right)^{\intercal} \Gamma^{1}_{t }\left(\begin{pmatrix}
\nu_{t}\\
Q^{1,\nu}_{t}
\end{pmatrix}- \begin{pmatrix}
\omega_{t}\\
Q^{1,\omega}_{t}
\end{pmatrix}\right)dt \Bigg].
\end{multline}
It is convenient to introduce the function $\delta_{t} := \nu_{t}- \omega_{t}$ for all $t\in[0,T]$. From \eqref{eq-minor-inventory} it follows that $Q^{1,\delta}_{t} = Q^{1,\nu}_{t} - Q^{1,\omega}_{t}$ for all $t\in[0,T]$. We can rewrite \eqref{eq-left-hand side-concavity-minor} in terms of $\delta$ and $Q^{1,\delta}$ as
\begin{equation}
\begin{aligned}
\label{eq-left-hand side-concavity-minor-delta}
\mathcal{I}^{1}(\rho,\nu,\omega) &= \rho(1-\rho) \Big(\mathbb{E}\left[\int^{T}_{0} \lambda_{1} \delta_{t}^{2}dt\right] + \mathbb{E}\left[\int^{T}_{0} \phi^{1}_{t}\left(Q^{1,\delta}_{t} \right)^{2}dt\right]  \\ &\quad\quad\quad\quad\quad- \mathbb{E}\left[\int^{T}_{0}2\theta\delta_{t}Q^{1,\delta}_{t}  dt \right] \Big),
\end{aligned}
\end{equation}
where we have used \eqref{eq-func-gamma-1}.  

Since $\phi^{1}_{t}\geq0$ for all $t\in[0,T]$, we have
 $$\ \mathbb{E}\left[\int^{T}_{0} \phi^{1}_{t}\left(Q^{1,\delta}_{t} \right)^{2}dt\right] \geq 0.$$ 
 
 From \eqref{eq-minor-inventory} it holds that $Q^{1,\delta}_{t} = -\int^{t}_{0}\delta_{t}dt$, therefore by \eqref{eq-q1-squared-identity} we get
\begin{equation}
\label{eq-rightmost-term-minor}
-\mathbb{E}\left[\int^{T}_{0} 2\delta_{t}Q^{1,\delta}_{t} dt \right] =\mathbb{E}\left[\left(Q^{1,\delta}_{T} \right)^{2} \right] \geq 0.
\end{equation}
Finally, notice that since $\nu$ and $\omega$ are $d\mathbb{P}\otimes dt$ distinguishable then 
$$ \mathbb{E}\left[\int^{T}_{0}  \delta_{t}^{2}dt\right]>0.$$
This shows that $\mathcal{I}^{1}(\rho,\nu,\omega) >0$  for any $\theta \geq 0$, $0 < \rho <1$ and any $\nu,\omega\in \mathcal{A}_{m}$, such that $\nu$ and $\omega$ are $d\mathbb{P}\otimes dt$ distinguishable.
\end{proof}

As similarly show in \cite{NeumanVoss:20}, a probabilistic and convex analytic calculus of variations approach can be readily applied to derive a system of coupled linear FBSDEs which characterises the unique solution to the minor agent's problem.

Since under Assumption  \ref{ass-alpha-lambda} the map $\nu^{1}\to H^{1}(\nu^{1})$ in \eqref{eq-alternative-minor} is strictly concave, then it admits a unique maximiser characterised by the critical point at which the G\^ateaux derivative
\begin{equation}
\label{eq-def-minor-G\^ateaux}
\langle\mathcal{D}H^{1}(\nu^{1}),\omega\rangle := \lim_{\epsilon  \to 0} \frac{H^{1}(\nu^{1}+\epsilon\omega)-H^{1}(\nu^{1})}{\epsilon}
\end{equation}
vanishes.
In the following lemma we obtain an explicit expression for the G\^ateaux derivative of $H^{1}$.
\begin{lemma} \label{lemma-gat} 
The G\^ateaux derivative of  $H^{1}$ in \eqref{eq-alternative-minor}, in direction $\omega\in\mathcal{A}_{m}$ is given by
\begin{multline}
\label{eq-minor-G\^ateaux}
\left\langle\mathcal{D}H^{1}(\nu^{1}),\omega\right\rangle =\mathbb{E}\Bigg[ \int^{T}_{0}\omega_{t}\Big(-2\lambda_{1}\nu_{t}^{1} +2\alpha Q^{1,\nu^{1}}_{T} -\kappa_{1}Q^{1,\nu^{1}}_{t}  + A_{t}\\ 
+ \int^{T}_{t}\left(2\phi^{1}_{s}Q_{s}^{1,\nu^{1}} + \kappa_{0}\nu^{0}_{s}+ \kappa_{1}\nu_{s}^{1}\right)ds- A_{T}\Big)dt\Bigg],
\end{multline}
for any $\nu^{1}\in\mathcal{A}_{m}$.
\end{lemma}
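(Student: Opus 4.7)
The strategy is to compute $(H^{1}(\nu^{1}+\epsilon\omega)-H^{1}(\nu^{1}))/\epsilon$ directly from the representation in Lemma \ref{lemma-alternative-minor}, extract the linear-in-$\epsilon$ part, and then apply Fubini's theorem to reorganise every contribution into the form $\int_{0}^{T}\omega_{t}(\cdots)\,dt$. The key structural fact is that $\nu\mapsto Q^{1,\nu}$ is linear in $\nu$ (by \eqref{eq-minor-inventory}), so
\[
Q^{1,\nu^{1}+\epsilon\omega}_{t}=Q^{1,\nu^{1}}_{t}+\epsilon Q^{1,\omega}_{t},\qquad Q^{1,\omega}_{t}=-\int_{0}^{t}\omega_{s}\,ds.
\]
Plugging this into \eqref{eq-alternative-minor}, all squared quantities give rise to a clean cross-term after expansion, and the bilinear term $\int_{0}^{T}Q^{1,\nu^{1}}_{t}(\kappa_{0}\nu^{0}_{t}dt+\kappa_{1}\nu^{1}_{t}dt-dA_{t})$ produces two first-order contributions, one in which $Q^{1,\nu^{1}}_{t}$ is perturbed and one in which $\nu^{1}_{t}$ is perturbed.

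The first step I would carry out is to write the $O(\epsilon)$ piece explicitly:
\[
\frac{d}{d\epsilon}\bigg|_{\epsilon=0}H^{1}(\nu^{1}+\epsilon\omega)
= -\,\mathbb{E}\bigg[2\lambda_{1}\!\!\int_{0}^{T}\!\!\nu^{1}_{t}\omega_{t}dt + 2\alpha Q^{1,\nu^{1}}_{T}Q^{1,\omega}_{T} + 2\!\!\int_{0}^{T}\!\!\phi^{1}_{t}Q^{1,\nu^{1}}_{t}Q^{1,\omega}_{t}dt
\]
\[
+\int_{0}^{T}\!\!Q^{1,\omega}_{t}(\kappa_{0}\nu^{0}_{t}+\kappa_{1}\nu^{1}_{t})dt + \kappa_{1}\!\!\int_{0}^{T}\!\!Q^{1,\nu^{1}}_{t}\omega_{t}dt - \int_{0}^{T}\!\!Q^{1,\omega}_{t}dA_{t}\bigg].
\]
Next I would apply Fubini term by term using the identity $Q^{1,\omega}_{t}=-\int_{0}^{t}\omega_{s}ds$: the terminal penalty yields $+2\alpha Q^{1,\nu^{1}}_{T}\int_{0}^{T}\omega_{t}dt$; the running-inventory penalty gives, after swapping the order of integration, $+\int_{0}^{T}\omega_{t}\bigl(\int_{t}^{T}2\phi^{1}_{s}Q^{1,\nu^{1}}_{s}ds\bigr)dt$; and the $\kappa_{i}$-terms combine into $-\int_{0}^{T}\omega_{t}\kappa_{1}Q^{1,\nu^{1}}_{t}dt+\int_{0}^{T}\omega_{t}\bigl(\int_{t}^{T}(\kappa_{0}\nu^{0}_{s}+\kappa_{1}\nu^{1}_{s})ds\bigr)dt$.

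For the $dA$-term, since $A$ has predictable finite variation with $\mathbb{E}[(\int_{0}^{T}|dA_{s}|)^{2}]<\infty$ by \eqref{ass:P}, the double integral $\int_{0}^{T}\int_{0}^{t}\omega_{s}\,ds\,dA_{t}$ is a pathwise Lebesgue--Stieltjes integral to which classical Fubini applies, giving
\[
-\int_{0}^{T}Q^{1,\omega}_{t}dA_{t}=\int_{0}^{T}\!\!\int_{0}^{t}\!\!\omega_{s}\,ds\,dA_{t}=\int_{0}^{T}\omega_{t}(A_{T}-A_{t})dt,
\]
so the overall minus sign in the derivative produces the $A_{t}-A_{T}$ term in the statement. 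Collecting all contributions under a single $\int_{0}^{T}\omega_{t}(\cdots)dt$ yields exactly \eqref{eq-minor-G\^ateaux}.

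Finally, I would justify passing the $\epsilon$-limit inside the expectation. The quadratic-in-$\epsilon$ remainder equals $-\mathbb{E}[\epsilon^{2}(\lambda_{1}\|\omega\|_{L^{2}}^{2}+\alpha(Q^{1,\omega}_{T})^{2}+\int_{0}^{T}\phi^{1}_{t}(Q^{1,\omega}_{t})^{2}dt+\kappa_{1}\int_{0}^{T}\omega_{t}Q^{1,\omega}_{t}dt)]$, which is $O(\epsilon^{2})$ thanks to $\omega\in\mathcal{A}_{m}$, the local boundedness of $\phi^{1}$, and Cauchy--Schwarz. The main (mild) obstacle I anticipate is simply the sign/Fubini bookkeeping: getting the $A_{t}-A_{T}$ orientation right and confirming that the inner $\int_{t}^{T}$ integrals assemble in the precise pattern stated, rather than any genuinely delicate estimate.
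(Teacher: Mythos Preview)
Your proposal is correct and follows essentially the same approach as the paper: expand $H^{1}(\nu^{1}+\epsilon\omega)-H^{1}(\nu^{1})$ from the representation \eqref{eq-alternative-minor} using the linearity of $Q^{1,\cdot}$, separate the $O(\epsilon)$ and $O(\epsilon^{2})$ pieces, and then apply Fubini (including to the Lebesgue--Stieltjes $dA$ integral) to collect everything under a single $\int_{0}^{T}\omega_{t}(\cdots)\,dt$. The paper's proof is organised identically, though it is slightly terser about the Fubini step for the $dA$ term and the justification of the $\epsilon\to0$ limit, both of which you spell out more carefully.
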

The proof of Lemma \ref{lemma-gat} is given in Appendix \ref{sec-gat-pf}. 

From the explicit expression of the G\^ateaux derivative in \eqref{eq-minor-G\^ateaux}  we can derive a first order optimality condition. It takes the form of a coupled system of linear forward backward stochastic differential equations (FBSDE), as described in following lemma which is proved in Appendix \ref{sec-gat-pf}.

\begin{lemma} \label{lemma-fbsde-minor}
Under Assumption \ref{ass-alpha-lambda}, the control $\nu^{1,*} \in \mathcal{A}_{m}$ is the unique maximiser  to the minor agent's objective functional $H^{1}$ in \eqref{eq-alternative-minor} if and only if the process $(Q^{1,\nu^{1,*}},\nu^{1,*})$ satisfies the following coupled linear FBSDE system
\begin{equation}\label{eq-minor-fbsde}
\begin{aligned}
dQ_{t}^{1,\nu^{1,*}}&= - \nu^{1,*}_{t}dt, \quad Q_{0}^{1,\nu^{1,*}}  = 0,\\
d\nu_{t}^{1,*}&=\frac{1}{2\lambda_{1}} d\mathcal{N}_{t}+
\frac{1}{2\lambda_{1}}d\mathcal{M}_{t}-\frac{\phi^{1}_{t}}{\lambda_{1}}Q_{t}^{1,\nu^{1,*}}dt -\frac{\kappa_{0}}{2\lambda_{1}}\nu^{0}_{t}dt +\frac{1}{2\lambda_{1}}dA_{t}, \\
\nu_{T}^{1,*}&=\frac{2\alpha -\kappa_{1}}{2\lambda_{1}}Q_{T}^{1,\nu^{1,*}}
\end{aligned}
\end{equation}
$d\mathbb{P}\otimes dt$-a.e. on $\Omega\times[0,T]$ where $\mathcal{M}=(\mathcal{M}_{t})_{t\in[0,T]}$ and $\mathcal{N}=(\mathcal{N}_{t})_{t\in[0,T]}$ are two suitable square integrable martingales. 
\end{lemma}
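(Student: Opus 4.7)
The plan is to derive the FBSDE \eqref{eq-minor-fbsde} as the first-order optimality condition for $H^{1}$, which by strict concavity is both necessary and sufficient. Since Lemma \ref{lemma-minor-strictly-concave} shows that $\nu^{1}\mapsto H^{1}(\nu^{1})$ is strictly concave on $\mathcal{A}_{m}$ under Assumption \ref{ass-alpha-lambda}, the control $\nu^{1,*}\in\mathcal{A}_{m}$ is the unique maximizer if and only if $\langle \mathcal{D}H^{1}(\nu^{1,*}),\omega\rangle=0$ for every $\omega\in\mathcal{A}_{m}$. Using the expression from Lemma \ref{lemma-gat} and taking an inner $\mathcal{F}_{t}$-conditional expectation (justified by progressive measurability of $\omega$ and the tower property), the first-order condition rewrites as $\mathbb{E}\left[\int_{0}^{T}\omega_{t}\,\Phi_{t}\,dt\right]=0$ for every $\omega\in\mathcal{A}_{m}$, with
\begin{equation*}
\Phi_{t} := -2\lambda_{1}\nu_{t}^{1,*} - \kappa_{1}Q^{1,\nu^{1,*}}_{t} + A_{t} + \mathbb{E}_{t}\!\left[ 2\alpha Q^{1,\nu^{1,*}}_{T} + \int_{t}^{T}\!\!\left( 2\phi^{1}_{s}Q_{s}^{1,\nu^{1,*}} + \kappa_{0}\nu^{0}_{s} + \kappa_{1}\nu_{s}^{1,*} \right)ds - A_{T} \right].
\end{equation*}
Since $\Phi$ is progressively measurable, a standard truncation argument (testing against bounded $\omega$ of the form $\mathrm{sign}(\Phi)\,\mathbbm{1}_{\{|\Phi|\leq M\}}$ and sending $M\to\infty$) forces $\Phi_{t}=0$, $d\mathbb{P}\otimes dt$-a.e.

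The next step is to recast the pointwise identity $\Phi_{t}=0$ as the FBSDE in \eqref{eq-minor-fbsde}. I would split the forward-looking conditional expectation appearing in $\Phi_t$ into a martingale part and an adapted absolutely-continuous part:
\begin{equation*}
\mathbb{E}_{t}\!\left[ 2\alpha Q^{1,\nu^{1,*}}_{T} - A_{T} + \int_{t}^{T} U_{s}\,ds \right] = \mathcal{M}_{t} + \mathcal{N}_{t} - \int_{0}^{t} U_{s}\,ds,
\end{equation*}
where $U_{s} := 2\phi^{1}_{s}Q_{s}^{1,\nu^{1,*}} + \kappa_{0}\nu^{0}_{s} + \kappa_{1}\nu_{s}^{1,*}$, and the two square-integrable martingales are $\mathcal{M}_{t} := \mathbb{E}_{t}[2\alpha Q^{1,\nu^{1,*}}_{T} - A_{T}]$ and $\mathcal{N}_{t} := \mathbb{E}_{t}[\int_{0}^{T} U_{s}\,ds]$; their $L^{2}$-integrability follows from \eqref{ass:P}, Assumption \ref{ass-alpha-lambda}, and \eqref{eq-minor-admissible-set-def} via Doob's and Jensen's inequalities. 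Substituting this decomposition into $\Phi_{t}=0$, solving for $2\lambda_{1}\nu^{1,*}_{t}$ and taking differentials, the identity $dQ^{1,\nu^{1,*}}_{t}=-\nu^{1,*}_{t}\,dt$ cancels the two $\kappa_{1}\nu^{1,*}_{t}\,dt$ contributions and yields exactly the $d\nu^{1,*}_{t}$-equation in \eqref{eq-minor-fbsde}. Evaluating $\Phi_{T}=0$ directly gives the terminal condition $\nu^{1,*}_{T} = \frac{2\alpha-\kappa_{1}}{2\lambda_{1}}Q^{1,\nu^{1,*}}_{T}$.

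For the converse implication, starting from an admissible pair solving \eqref{eq-minor-fbsde} I would integrate the $d\nu^{1,*}$-equation from $t$ to $T$, substitute the terminal condition and use the martingale property of $\mathcal{M},\mathcal{N}$ to reconstruct $\Phi_{t}=0$ pointwise; the Gâteaux derivative then vanishes and strict concavity promotes this critical point to the global maximizer. The main delicate point is the martingale-decomposition step: because the underlying filtration $(\mathcal{F}_{t})_{t\in[0,T]}$ is not assumed Brownian, one cannot write $d\mathcal{M}$ and $d\mathcal{N}$ explicitly as stochastic integrals and must rely on the abstract decomposition of square-integrable martingales, so care is needed to justify that each bookkeeping operation (splitting the conditional expectation, differentiating in $t$, and evaluating at the terminal time) is valid in the general $\mathcal{H}^{2}$-setting prescribed by \eqref{ass:P}.
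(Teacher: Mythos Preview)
Your proposal is correct and follows essentially the same route as the paper: both use strict concavity to reduce optimality to the vanishing of the G\^ateaux derivative from Lemma~\ref{lemma-gat}, pass to the conditional first-order condition $\Phi_t=0$, introduce the same two martingales (your $\mathcal{M}$ and $\mathcal{N}$ are the paper's $\mathcal{N}$ and $\mathcal{M}$, respectively), and differentiate to obtain \eqref{eq-minor-fbsde}, with the converse handled by integrating back. Your explicit truncation argument for deducing $\Phi_t=0$ and your remark on the abstract (non-Brownian) martingale decomposition are slightly more detailed than the paper's treatment, but the structure is the same.
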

For the remainder of this section we focus on the derivation of the explicit solution to \eqref{eq-minor-fbsde}. We first describe the heuristics of the proof.  

\paragraph{Heuristics for the solution to \eqref{eq-minor-fbsde}.} The solution to the FBSDE system \eqref{eq-minor-fbsde} determines the solution to the minor agent's problem. The main obstacle in solving the system \eqref{eq-minor-fbsde} is that it presents a general time dependent coefficient $\phi^{1}_{t}$. 
In order to solve this equation, we formulate an ansatz for the minor agent's optimal strategy $\nu^{1,*}$. Then, we demonstrate that the ansatz solution for $\nu^{1,*}$ is the unique solution to \eqref{eq-minor-fbsde} and therefore the solution to the minor agent's problem. Due to the linear structure of the system \eqref{eq-minor-fbsde}, we make the ansatz that there are two progressively measurable processes $r^{0} =(r^{0}_{t})_{t\in[0,T]}$
 and $r^{1} =(r^{1}_{t})_{t\in[0,T]}$ such that $\nu^{1,*}$ can be expressed as
 \begin{equation}
 \label{eq-ansatz-minor}
 \nu^{1,*}_{t}=-\left(r^{0}_{t}+r^{1}_{t}Q_{t}^{1,\nu^{1,*}}\right), \quad 0\leq t \leq T. 
 \end{equation}
 We differentiate \eqref{eq-ansatz-minor} via It\^o’s lemma and by using $dQ_{t}^{1,\nu^{1,*}} = -\nu^{1,*}_{t}dt$ to get
 \begin{equation}
 \label{eq-differential-nu1}
 \begin{aligned}
 d \nu^{1,*}_{t} =- dr^{0}_{t} - dr^{1}_{t}Q_{t}^{1,\nu^{1,*}} + \nu^{1,*}_{t}r^{1}_{t}dt.
  \end{aligned}
 \end{equation} 
 We plug-in \eqref{eq-differential-nu1} into \eqref{eq-minor-fbsde} and we arrive at
 \begin{equation}
 \label{eq-collecting-q}
 \begin{aligned}
 0&=\left(2\lambda_{1}dr^{1}_{t} - 2\phi^{1}_{t}dt +2\lambda_{1}(r^{1}_{t})^{2}dt \right)Q_{t}^{1,\nu^{1,*}}\\
  &\quad+
 \left(2\lambda_{1}dr^{0}_{t} +2\lambda_{1}r^{1}_{t}r^{0}_{t} dt-\kappa_{0}\nu^{0}_{t}dt +dA_{t} +d\mathcal{M}_{t} + d\mathcal{N}_{t}\right).
 \end{aligned}
 \end{equation}
Equation \eqref{eq-collecting-q} must hold $d\mathbb{P}\otimes dt$ almost everywhere for all values $Q_{t}^{1,\nu^{1,*}}$. We conjecture that the terms within each of the brackets must vanish independently.   
The terms from \eqref{eq-collecting-q} yield to two coupled differential equations for $r^{0}$ and $r^{1}$ independent of the process $Q^{1,\nu^{1,*}}$, where we determine the terminal conditions from  \eqref{eq-minor-fbsde}.
Specifically, the process $r^{1}$ must satisfy $dt$-a.e. the following non-autonomous Riccati ODE 
\begin{equation}
\label{eq-v1-ode-proof}
 \partial_{t}r^{1}_{t} = \frac{1}{\lambda_{1}}\phi^{1}_{t} -(r^{1}_{t})^{2},  \quad r^{1}_{T}=-\frac{2\alpha- \kappa_{1}}{2\lambda_{1}}, 
\end{equation}
 while $r^{0}$ must satisfy the following BSDE
\begin{equation}
\label{eq-v0-bsde}
-dr^{0}_{t}=  r^{1}_{t}r^{0}_{t} dt-\frac{\kappa_{0}}{2\lambda_{1}}\nu^{0}_{t}dt + \frac{1}{2\lambda_{1}}dA_{t} +\frac{1}{2\lambda_{1}}d\mathcal{M}_{t} + \frac{1}{2\lambda_{1}} d\mathcal{N}_{t}, \quad 
r^{0}_{T}= 0.
\end{equation}
An explicit formula for the solution of \eqref{eq-v1-ode-proof} does not exist when $\phi^{1}_{t}$ is a general piecewise continuous function as in the case at hand. Nevertheless, we will prove that the solution to \eqref{eq-v1-ode-proof} exists and it is unique. Once a solution to \eqref{eq-v1-ode-proof} is found, then we can plug it to \eqref{eq-v0-bsde} and derive $r^0$. 

In the following proposition, which is proved in Appendix \ref{sec-proof-prop-ode}, we derive the existence and uniqueness of the solutions to \eqref{eq-v1-ode-proof} and \eqref{eq-v0-bsde}.  
\begin{proposition}
 \label{prop-results-ode}
Under Assumption \ref{ass-alpha-lambda}, there exists a unique continuous function $r^{1}$ that satisfies the non-autonomous Riccati ODE \eqref{eq-v1-ode-proof} $dt$-a.e. on $[0,T]$.
Furthermore, the BSDE \eqref{eq-v0-bsde} admits a closed form solution $r^{0}$ given by \eqref{eq-v0}. Moreover, 
\be \label{sq-r0} 
\mathbb{E}\left[\int^{T}_{0}(r^{0}_{t})^{2}dt\right]<\infty.
\ee.
 \end{proposition}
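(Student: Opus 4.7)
The plan is to split the proof into three parts: existence and uniqueness for the Riccati ODE \eqref{eq-v1-ode-proof}, derivation of the closed-form representation \eqref{eq-v0} for the BSDE \eqref{eq-v0-bsde}, and the moment bound \eqref{sq-r0}. For the Riccati equation, I would linearise via the Cole--Hopf-type substitution $r^{1}_{t}=u'_{t}/u_{t}$, which transforms \eqref{eq-v1-ode-proof} into the second-order linear ODE $u''_{t}=(\phi^{1}_{t}/\lambda_{1})\,u_{t}$, with terminal data $u_{T}=1$ and $u'_{T}=r^{1}_{T}=-(2\alpha-\kappa_{1})/(2\lambda_{1})\leq 0$ by Assumption \ref{ass-alpha-lambda}. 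Introducing the time-reversed function $v(s):=u(T-s)$ on $[0,T]$ yields $v''(s)=(\phi^{1}(T-s)/\lambda_{1})\,v(s)$ with initial data $v(0)=1$ and $v'(0)=-u'_{T}=(2\alpha-\kappa_{1})/(2\lambda_{1})\geq 0$. Since $\phi^{1}$ is piecewise continuous and locally bounded, Carath\'eodory theory supplies a unique global $C^{1}$ solution $v$ on $[0,T]$. The crucial observation is that $v''\geq 0$ whenever $v\geq 0$, so a standard continuation argument starting from $v(0)>0$ and $v'(0)\geq 0$ shows that $v'$ stays nonnegative, hence $v$ is nondecreasing and bounded below by $1$. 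Consequently $u$ is strictly positive and $C^{1}$ on $[0,T]$, and $r^{1}:=u'/u$ is a well-defined continuous solution of \eqref{eq-v1-ode-proof}. Uniqueness transfers back from the linear ODE: any other solution $\tilde r^{1}$ yields $\tilde u(t)=\exp(\int_{T}^{t}\tilde r^{1}_{s}ds)$ satisfying the same linear equation with the same terminal data, forcing $\tilde u=u$ and $\tilde r^{1}=r^{1}$.

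For the BSDE step, the natural move is to multiply by the integrating factor $\xi^{+}_{t}$ defined in \eqref{eq-def-xi-pm}. Applying It\^o's product rule to $\xi^{+}_{t}r^{0}_{t}$, the drifts involving $r^{1}_{t}r^{0}_{t}$ cancel, giving
\begin{equation*}
d(\xi^{+}_{t}r^{0}_{t})=\xi^{+}_{t}\Big(-\tfrac{\kappa_{0}}{2\lambda_{1}}\nu^{0}_{t}dt+\tfrac{1}{2\lambda_{1}}dA_{t}+\tfrac{1}{2\lambda_{1}}d\mathcal{M}_{t}+\tfrac{1}{2\lambda_{1}}d\mathcal{N}_{t}\Big).
\end{equation*}
Integrating from $t$ to $T$, using $r^{0}_{T}=0$, taking conditional expectation to annihilate the martingale increments, and dividing by $\xi^{+}_{t}$ produces
\begin{equation*}
r^{0}_{t}=\frac{1}{2\lambda_{1}}\,\mathbb{E}_{t}\!\left[\int_{t}^{T}\frac{\xi^{+}_{s}}{\xi^{+}_{t}}(dA_{s}-\kappa_{0}\nu^{0}_{s}ds)\right],
\end{equation*}
which is \eqref{eq-v0} upon noting that $\xi^{+}_{s}/\xi^{+}_{t}=\xi^{-}_{t}\xi^{+}_{s}=\mathcal{K}(t,s)$. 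Conversely, defining $r^{0}$ directly by \eqref{eq-v0} yields an adapted process, and the martingale representation theorem applied to the martingale component of $\xi^{+}_{t}r^{0}_{t}+\int_{0}^{t}\xi^{+}_{s}\big(\tfrac{\kappa_{0}}{2\lambda_{1}}\nu^{0}_{s}ds-\tfrac{1}{2\lambda_{1}}dA_{s}\big)$ supplies the square-integrable martingales $\mathcal{M}$ and $\mathcal{N}$ that make \eqref{eq-v0-bsde} hold.

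For the moment bound, since $r^{1}$ is continuous on the compact interval $[0,T]$, both $\xi^{\pm}$ and hence $\mathcal{K}$ are bounded by some constant $C$. Jensen's and the Cauchy--Schwarz inequalities applied to \eqref{eq-v0} then yield
\begin{equation*}
(r^{0}_{t})^{2}\leq C'\,\mathbb{E}_{t}\!\left[\Big(\int_{0}^{T}|dA_{s}|\Big)^{2}+\int_{0}^{T}(\nu^{0}_{s})^{2}ds\right],
\end{equation*}
whose right-hand side has finite expectation by \eqref{ass:P} and the admissibility of $\nu^{0}\in\mathcal{A}_{M}^{q_{0}}$. Integrating in $t\in[0,T]$ gives \eqref{sq-r0}. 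The main obstacle is the Riccati step: ensuring that the linearised equation admits a \emph{strictly positive} solution on the entire interval $[0,T]$, since otherwise $r^{1}=u'/u$ would blow up and uniqueness could fail. This is precisely where Assumption \ref{ass-alpha-lambda} enters, via the sign condition $v'(0)\geq 0$ that, together with $\phi^{1}\geq 0$ and $v(0)=1$, prevents $v$ from ever hitting zero.
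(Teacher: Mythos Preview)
Your proposal is correct. The BSDE step (integrating factor $\xi^{+}$, It\^o's product rule, then conditional expectation) and the moment bound (boundedness of $\mathcal{K}$ on $[0,T]^{2}$ combined with Jensen and \eqref{ass:P}) match the paper's argument essentially line for line.

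The Riccati step, however, takes a genuinely different route. The paper flips the sign, setting $\hat r^{1}_{t}=-r^{1}_{t}$, and then simply invokes Wonham's matrix Riccati theorem (Theorem~2.1 in \cite{Wonham1968}) to obtain a unique absolutely continuous solution, using only that $\phi^{1}\geq 0$ is piecewise continuous and locally bounded together with $\hat r^{1}_{T}=(2\alpha-\kappa_{1})/(2\lambda_{1})\geq 0$. Your linearisation $r^{1}=u'/u$ followed by time reversal and the monotonicity argument ($v(0)=1$, $v'(0)\geq 0$, $v''\geq 0$ while $v\geq 0$ forces $v\geq 1$) is more elementary and self-contained: it avoids the external reference and makes transparent exactly where Assumption~\ref{ass-alpha-lambda} enters, namely in preventing $u$ from vanishing. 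The trade-off is that you must check a few regularity details by hand (Carath\'eodory existence for the linear equation, absolute continuity of $u'$ so that the product rule applies in the uniqueness direction), whereas the paper offloads all of this to Wonham's result. One minor remark: your appeal to ``the martingale representation theorem'' in the converse direction is a slight overstatement, since the filtration is not assumed Brownian; what you actually need, and what suffices, is to define the driving martingale directly as the Doob martingale of $\int_{0}^{T}\xi^{+}_{s}(dA_{s}-\kappa_{0}\nu^{0}_{s}ds)$ and then absorb the factor $(\xi^{+})^{-1}$ into $\mathcal{M}+\mathcal{N}$.
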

 \begin{remark} 
 As stated in Proposition \ref{prop-results-ode}, the function $r^{1}$ satisfies the Riccati ODE \eqref{eq-v1-ode-proof} only $dt$ almost everywhere. This is to be expected since $\phi^{1}$ is assumed to be piecewise continuous and the derivatives of $r^{1}$ may not exists at the points of discontinuity. Nevertheless, as we will show in the proof of Theorem \ref{thm-minor-optimal-strategy}, this is sufficient for our needs as we wish to solve the FBSDE system \eqref{eq-minor-fbsde} only $d\mathbb{P}\otimes dt$ almost everywhere.
 \end{remark} 
 
 In order to prove Theorem \ref{thm-minor-optimal-strategy} we will need the following lemma, which is also proved in Appendix \ref{sec-proof-prop-ode}. 
 \begin{lemma}
\label{lemma-kernel-L2}
Let $r^{1}$ be the unique solution of \eqref{eq-v1-ode-proof}. Then, the kernel $\mathcal{K}$ is jointly continuous over $[0,T]^{2}$. In particular, $\mathcal{K}$ is bounded over $[0,T]^{2}$ and is in $L^{2}([0,T]^{2})$.
\end{lemma}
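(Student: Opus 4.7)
The plan is to leverage the continuity of $r^{1}$ established in Proposition \ref{prop-results-ode} together with the explicit form of $\mathcal{K}$ in \eqref{eq-kernel-minor} as a product of exponentials.

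First, by Proposition \ref{prop-results-ode}, $r^{1}$ is continuous on the compact interval $[0,T]$, hence bounded. I would then define the function $\Phi(t) := \int_{0}^{t} r^{1}_{z}\,dz$ for $t \in [0,T]$. Since $r^{1}$ is continuous, $\Phi$ is continuously differentiable on $[0,T]$ and in particular continuous; moreover, the boundedness of $r^{1}$ gives a uniform bound $|\Phi(t)| \leq T \cdot \|r^{1}\|_{\infty}$ for all $t \in [0,T]$.

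Next, since $\xi_{t}^{\pm} = e^{\pm \Phi(t)}$ and the exponential is continuous, the maps $t \mapsto \xi_{t}^{\pm}$ are continuous on $[0,T]$, and the uniform bound on $\Phi$ together with monotonicity of the exponential yields uniform bounds $0 < e^{-T\|r^{1}\|_{\infty}} \leq \xi_{t}^{\pm} \leq e^{T\|r^{1}\|_{\infty}}$. The kernel $\mathcal{K}(t,s) = \xi_{t}^{-}\xi_{s}^{+}$ is a product of two continuous functions in separate variables, hence jointly continuous on $[0,T]^{2}$, and bounded by $e^{2T\|r^{1}\|_{\infty}}$.

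Finally, since $\mathcal{K}$ is bounded and measurable on the bounded domain $[0,T]^{2}$, we immediately obtain
\begin{equation*}
\int_{0}^{T}\int_{0}^{T} \mathcal{K}(t,s)^{2}\, dt\, ds \leq T^{2} e^{4T \|r^{1}\|_{\infty}} < \infty,
\end{equation*}
so that $\mathcal{K} \in L^{2}([0,T]^{2})$. There is no genuine obstacle here; the only subtlety is ensuring that the continuity of $r^{1}$ (which holds everywhere, despite the ODE being satisfied only $dt$-a.e. due to the piecewise continuity of $\phi^{1}$) suffices to make $\Phi$ continuous, which follows directly from the fundamental theorem of calculus for Lebesgue integrals applied to the continuous integrand $r^{1}$.
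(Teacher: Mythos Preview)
Your proof is correct and follows essentially the same approach as the paper: continuity of $r^{1}$ gives continuity of $\xi^{\pm}$, hence joint continuity and boundedness of $\mathcal{K}(t,s)=\xi_{t}^{-}\xi_{s}^{+}$, and square-integrability follows on the compact domain. One organizational caveat: the paper invokes Lemma~\ref{lem-sol-ric} directly for the continuity of $r^{1}$ rather than Proposition~\ref{prop-results-ode}, since the latter's proof (specifically the square-integrability of $r^{0}$) uses Lemma~\ref{lemma-kernel-L2}; your citation is not genuinely circular because only the continuity claim is needed, but citing the underlying Riccati existence result avoids any appearance of circularity.
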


We are now ready to prove  Theorem \ref{thm-minor-optimal-strategy}. In order to simplify the notation, we will often denote the process $\nu^{1,*}(\nu^{0}) $ by $\nu^{1,*}$.
 \begin{proof}[Proof of Theorem \ref{thm-minor-optimal-strategy}] Let $\nu^{1,*}$ as in  \eqref{eq-minor-agent-optimal} with $r^1$ as in \eqref{eq-v1-ode-proof} and $r^0$ as in \eqref{eq-v0}.
 
\emph{Step 1:} We determine an explicit expression for $Q^{1,\nu^{1,*}}$. We argue that
 \begin{equation}
\label{eq-minor-inventory-explicit}
Q^{1,\nu^{1,*}}_{t} = \int^{t}_{0}\mathcal{K}(s,t)r^{0}_{s}ds. 
\end{equation}
This could verified by using \eqref{eq-kernel-minor} so we can write \eqref{eq-minor-inventory-explicit} as
\begin{equation}
\label{eq-Q1-exponential}
Q^{1,\nu^{1,*}}_{t} = \xi_{t}^{+}\int^{t}_{0} \xi_{s}^{-}r^{0}_{s}ds.
\end{equation} 
From \eqref{eq-def-xi-pm} we note that $\xi^{+}$ satisfies the following ODE,  
\begin{equation}
\label{eq-ode-xi-plus-inventory}
\frac{d\xi_{t}^{+}}{dt} = r^{1}_{t} \xi_{t}^{+} \quad 0\leq t \leq T. 
\end{equation}
Taking the derivative in \eqref{eq-Q1-exponential} and using \eqref{eq-def-xi-pm}, \eqref{eq-kernel-minor} and \eqref{eq-ode-xi-plus-inventory} we arrive at
\begin{equation}
\label{eq-derivative-inventory}
\begin{aligned}
dQ^{1,\nu^{1,*}}_{t} &= r^{1}_{t} \left(\xi_{t}^{+}\int^{t}_{0} \xi_{s}^{-}r^{0}_{s}ds\right) dt + r^{0}_{t}dt \\ 
 &= \left(r^{0}_{t} + r^{1}_{t} \int^{t}_{0} \mathcal{K}(s,t)r^{0}_{s}ds  \right)dt .
\end{aligned}
\end{equation}
From \eqref{eq-derivative-inventory} and \eqref{eq-minor-inventory} we get \eqref{eq-minor-inventory-explicit}. 

\emph{Step 2:} We show that $\nu^{1,*}$ solves \eqref{eq-minor-fbsde}. Note that from \eqref{eq-minor-inventory} and \eqref{eq-minor-agent-optimal} we can rewrite $\nu^{1,*}$ as 
\begin{equation}
\label{eq-minor-agent-optimal-feedback-proof}
\nu^{1,*}_{t} = -\left(r^{0}_{t}  + r^{1}_{t} Q^{1,\nu^{1,*}}_{t}\right), \quad 0\leq t \leq T. 
\end{equation}
By plugging in \eqref{eq-minor-agent-optimal-feedback-proof} into \eqref{eq-minor-fbsde} we conclude that it is enough to prove that \eqref{eq-collecting-q} holds. Since $r^{0}_{t}$ satisfies  \eqref{eq-v0-bsde} and $ r^{1}_{t}$ satisfies \eqref{eq-v1-ode-proof} $dt$-a.e. on $[0,T]$, then \eqref{eq-collecting-q} holds $d\mathbb{P}\otimes dt$ almost everywhere. Next, using the terminal conditions of  $r^{0}$ and $r^{1}$  from \eqref{eq-v0-bsde}  and \eqref{eq-v1-ode-proof} we deduce from \eqref{eq-minor-agent-optimal-feedback-proof} that the terminal condition in \eqref{eq-minor-fbsde} is satisfied. Recall that the forward component in \eqref{eq-minor-fbsde} is satisfied by \eqref{eq-minor-inventory}. Therefore, $(Q^{1,\nu^{1,*}},\nu^{1,*})$ solve the system \eqref{eq-minor-fbsde}, $d\mathbb{P}\otimes dt$ almost everywhere.

\emph{Step 3:} We show that $\nu^{1,*}\in\mathcal{A}_{m}$. From \eqref{eq-minor-admissible-set-def} it follows that we need to verify that 
\begin{equation}
\mathbb{E}\left[\int^{T}_{0} (\nu_{t}^{1,*})^{2}dt\right] <\infty.
\end{equation}
Form \eqref{eq-minor-inventory-explicit}, Proposition \ref{prop-results-ode}, Lemma \ref{lemma-kernel-L2} and Cauchy-Schwartz inequality we get
\begin{equation}
\label{eq-bound-q1-squared}
\begin{aligned}
\mathbb{E}\left[\int^{T}_{0}\left(Q^{1,\nu^{1,*}}_{t}\right)^{2} dt\right] &\leq \mathbb{E}\left[\int^{T}_{0} \left(\int^{t}_{0}\mathcal{K}(s,t)^{2}ds\right)   \left(\int^{t}_{0}(r^{0}_{s})^{2}ds\right) dt\right] \\
 &<\infty. 
\end{aligned}
\end{equation}
From Proposition \ref{prop-results-ode} it follows that $r^{1}$ is continuous hence bounded on $[0,T]$ and that $r^0$ is square-integrable. Using this, \eqref{eq-minor-agent-optimal-feedback-proof} and Cauchy-Schwartz inequality gives 
\begin{equation}
\begin{aligned}
\mathbb{E}\left[\int^{T}_{0} (\nu_{t}^{1,*})^{2}dt\right]  &\leq 2\mathbb{E}\left[\int^{T}_{0} (r^{0}_{t})^{2}dt\right]  + 2\mathbb{E}\left[\int^{T}_{0} (r^{1}_{t})^{2}\left(Q^{1,\nu^{1,*}}_{t}\right)^{2}dt\right]\\
&< \infty. 
\end{aligned} 
\end{equation}
Therefore, $\nu^{1,*}$ is admissible and it solves \eqref{eq-minor-fbsde}, hence by Lemma \eqref{lemma-fbsde-minor} it is the unique maximiser  to the minor agent's objective functional $H^{1}$ in \eqref{eq-alternative-minor}.
\end{proof}

\section{Proof of Theorem \ref{thm-major-solution}}
\label{proof-thm-major-solution}

In this section we derive the major agent's optimal strategy via a calculus of variation argument. 
We start be defining operators which are essential to our proofs. Then we derive an equivalent representation 
the major agent's objective $H^{0}$ which is more convenient to our method of proof. 
Throughout this section we assume that Assumption \ref{ass-alpha-lambda} holds and that the minor agent is adopting the strategy $\nu^{1,*}$ from Theorem \ref{thm-minor-optimal-strategy}. Henceforth, with a slight abuse of notation we write $H^{0}(\nu^{0})$ for $H^{0}(\nu^{0},\nu^{1,*}(\nu^{0}))$. 

\paragraph{Essential definitions of $L^2([0,T])$ operators.}  
We denote by $\mathcal{T}^{*}$ the adjoint kernel of $\mathcal{T}$ for $\langle \cdot,\cdot \rangle_{L^{2}}$, that is
\begin{equation}
\label{eq-definition-adjoint-kernel}
\mathcal{T}^{*}(t,s) := \mathcal{T}(s,t), \quad s,t\in [0,T],
\end{equation}
and by $\mathsf{T}^{*}$ the corresponding adjoint integral operator.

We define the kernel $\mathcal{K}_{1}:[0,T]^{2}\to \mathbb{R}_{+}$ as
\begin{equation}
\label{eq-kernel-K1}
\mathcal{K}_{1}(t,s) := \mathcal{K}(s,t)\mathbbm{1}_{\{s\leq t\}}, \quad s,t\in [0,T],
\end{equation}
where $\mathcal{K}$ is given in \eqref{eq-kernel-minor}. We let $ \mathsf{K}_{1}$ to be the integral operator generated by the kernel $\mathcal{K}_{1}$, that is
\begin{equation}
\label{eq-operator-T1-T2}
(\mathsf{K}_{1}\psi)(t) := \int^{T}_{0}\mathcal{K}_{1}(t,s)\psi(s)ds,\quad  t\in [0,T],  \ \psi\in L^{2}([0,T]),
\end{equation}
The following lemma, which is proved in Section \ref{sec-integral-operators}, outlines some useful properties of $ \mathsf{K}_{1}$. Recall that the class of operators $B(L^{2}([0,T]))$ was defined after \eqref{eq-operator-norm-definition}. 
 \begin{lemma}
\label{lemma-adjoint-operator}
The operator $\mathsf{K}_{1}$ is in $B(L^{2}([0,T]))$.  Moreover,  $\mathsf{K}^{*}_{1}\in B(L^{2}([0,T]))$ is given by
\begin{equation}
\label{eq-adjoint-k1-operator}
(\mathsf{K}^{*}_{1}\psi)(t) = \int^{T}_{0}\mathcal{K}(t,s) \psi(s)\mathbbm{1}_{\{t\leq s\}}ds, \quad  t\in [0,T],  \ \psi\in L^{2}([0,T]). 
\end{equation}
\end{lemma}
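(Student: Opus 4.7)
The plan is straightforward and divides into two independent parts: boundedness of $\mathsf{K}_{1}$, and identification of its adjoint.

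For boundedness, I would first observe that the indicator factor in \eqref{eq-kernel-K1} only restricts the support of the kernel, so
\[
\int_{0}^{T}\int_{0}^{T}\mathcal{K}_{1}(t,s)^{2}\,dt\,ds
 = \int_{0}^{T}\int_{0}^{t}\mathcal{K}(s,t)^{2}\,ds\,dt
 \le \int_{0}^{T}\int_{0}^{T}\mathcal{K}(u,v)^{2}\,du\,dv <\infty,
\]
where finiteness is Lemma \ref{lemma-kernel-L2}. Hence $\mathcal{K}_{1}\in L^{2}([0,T]^{2})$, and the general remark recorded after \eqref{eq-operator-norm-definition} (every integral operator generated by an $L^{2}$ kernel is in $B(L^{2}([0,T]))$ by Cauchy–Schwarz) immediately yields $\mathsf{K}_{1}\in B(L^{2}([0,T]))$.

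For the adjoint, I would work directly from the definition of adjoint via the $L^{2}$ inner product. For any $\psi,\phi\in L^{2}([0,T])$, Fubini's theorem (which applies because the integrand is absolutely integrable, being a product of three $L^{2}$ functions paired through an $L^{2}$ kernel) gives
\[
\langle \mathsf{K}_{1}\psi,\phi\rangle_{L^{2}}
= \int_{0}^{T}\!\!\int_{0}^{T}\mathcal{K}(s,t)\mathbbm{1}_{\{s\le t\}}\psi(s)\phi(t)\,ds\,dt
= \int_{0}^{T}\psi(s)\!\left(\int_{0}^{T}\mathcal{K}(s,t)\mathbbm{1}_{\{s\le t\}}\phi(t)\,dt\right)ds.
\]
By uniqueness of the adjoint, this identifies $\mathsf{K}_{1}^{*}$ as the integral operator with kernel $\mathcal{K}(s,t)\mathbbm{1}_{\{s\le t\}}$ (which is just the adjoint kernel $\mathcal{K}_{1}^{*}(t,s)=\mathcal{K}_{1}(s,t)$ in the sense of \eqref{eq-definition-adjoint-kernel}), which is exactly formula \eqref{eq-adjoint-k1-operator}. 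Since $\mathcal{K}_{1}^{*}\in L^{2}([0,T]^{2})$ by the same bound as above (swapping the roles of $t$ and $s$ does not change the $L^{2}$ norm of the kernel), the same Cauchy–Schwarz argument shows $\mathsf{K}_{1}^{*}\in B(L^{2}([0,T]))$.

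There is no real obstacle here; the only point that deserves a sentence of care is the application of Fubini, which is legitimate because the kernel $\mathcal{K}$ is actually bounded on $[0,T]^{2}$ by Lemma \ref{lemma-kernel-L2}, so the triple product $\mathcal{K}(s,t)\psi(s)\phi(t)$ is integrable on $[0,T]^{2}$ by Cauchy–Schwarz. Everything else is mechanical.
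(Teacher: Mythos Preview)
Your proposal is correct and follows essentially the same approach as the paper: both use Lemma~\ref{lemma-kernel-L2} to conclude that $\mathcal{K}_{1}$ and its transpose are $L^{2}$ kernels, then invoke the Cauchy--Schwarz bound to obtain boundedness, and identify the adjoint as the integral operator generated by the transposed kernel $\mathcal{K}_{1}^{*}(t,s)=\mathcal{K}_{1}(s,t)$. The only cosmetic difference is that you spell out the Fubini computation for the adjoint explicitly, whereas the paper simply cites the adjoint-kernel definition \eqref{eq-definition-adjoint-kernel}.
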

We recall the definition of a compact operator from \cite{Porter1990}, (see Definition 3.2 therein).  \begin{definition}
\label{def-compact-operator}
An operator $\mathsf{T}:L^{2}([0,T])\to L^{2}([0,T])$ is said to have finite rank if its image $\{\mathsf{T}\psi:\psi\in L^{2}([0,T])\}$ has finite dimension. An operator $\mathsf{L}\in B(L^{2}([0,T]))$ is said to be compact if there is a sequence $(\mathsf{L}_{n})_{n\geq 1}$ of finite-rank operators in $B(L^{2}([0,T]))$ such that $||\mathsf{L}_{n}-\mathsf{L}||\to 0$ as $n\to\infty$.
\end{definition}
A particularly important result that we will use states that any operator generated by a kernel in $L^{2}([0,T]^{2})$ is compact (see Theorem 3.4 in \cite{Porter1990}). We remark that not all operators in $B(L^{2}([0,T]))$ are compact operators as pointed out in example 3.6 in \cite{Porter1990}. 

Next we define non-negative and positive operators in $B(L^{2}([0,T]))$ as in \cite[Definition 6.1]{Porter1990}.
\begin{definition}
\label{def-positive-operator}
Let $\mathsf{T}\in B(L^{2}([0,T]))$ be a self-adjoint operator. $\mathsf{T}$ is said to be non-negative if and only if $\langle \mathsf{T}\psi,\psi\rangle_{L^{2}} \geq0$ for all $\psi\in L^{2}([0,T])$. It is said to be positive if and only if $\langle \mathsf{T}\psi,\psi\rangle_{L^{2}}> 0$ for all $\psi \neq 0$ in $L^{2}([0,T])$. If there a positive constant $m$ for which $\langle \mathsf{T}\psi,\psi\rangle_{L^{2}}\geq m||\psi||^{2}_{L^{2}}$ for all $\psi\in L^{2}([0,T])$, then $\mathsf{T}$ is said to be positive and bounded below. 
\end{definition}

Our next result outlines several important properties of the operator $\mathsf{G}$ in \eqref{eq-G-rep-integral}. Recall that $\mathsf{K}_1$ was defined in \eqref{eq-operator-T1-T2}. 
\begin{proposition}
\label{prop-G-adjoint-compact}
Let $\mathsf{G}$ be defined as in \eqref{eq-G-rep-integral}. Then $\mathsf{G}\in B(L^{2}([0,T]))$ is a positive, compact and self-adjoint operator. Moreover, it satisfies
\begin{equation}
\label{eq-G-rep-K}
(\mathsf{G}\psi)(t)  =  (\mathsf{K}_{1}\mathsf{K}_{1}^{*}\psi)(t), \quad \textrm{for all } t\in[0,T], \ \psi\in L^{2}([0,T]). 
\end{equation}
\end{proposition}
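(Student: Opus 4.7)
The plan is to establish all four properties through the factorization $\mathsf{G} = \mathsf{K}_1 \mathsf{K}_1^*$, as this simultaneously gives self-adjointness and non-negativity, and reduces strict positivity to injectivity of $\mathsf{K}_1^*$.

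First, I would verify \eqref{eq-G-rep-K} by computing the kernel of the composition. By Lemma \ref{lemma-adjoint-operator}, $\mathsf{K}_1^*$ is generated by the kernel $\mathcal{K}_1^*(t,s)=\mathcal{K}(t,s)\mathbbm{1}_{\{t\leq s\}}$. An application of Fubini (justified since both $\mathcal{K}_1$ and $\mathcal{K}_1^*$ are in $L^2([0,T]^2)$ by Lemma \ref{lemma-kernel-L2}) shows that $\mathsf{K}_1 \mathsf{K}_1^*$ is the integral operator generated by the kernel
\begin{equation*}
\int_0^T \mathcal{K}_1(t,u)\mathcal{K}_1^*(u,s)\,du = \int_0^T \mathcal{K}(u,t)\mathcal{K}(u,s)\mathbbm{1}_{\{u\leq t\}}\mathbbm{1}_{\{u\leq s\}}\,du = \int_0^{t\wedge s}\mathcal{K}(u,t)\mathcal{K}(u,s)\,du,
\end{equation*}
which is precisely $\mathcal{G}(t,s)$ from \eqref{eq-kernel-caligraphic}. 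Hence \eqref{eq-G-rep-K} holds.

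Next, self-adjointness is immediate either from the symmetry $\mathcal{G}(t,s)=\mathcal{G}(s,t)$ (observable directly from \eqref{eq-kernel-caligraphic}) or, equivalently, from $(\mathsf{K}_1\mathsf{K}_1^*)^* = \mathsf{K}_1\mathsf{K}_1^*$. For compactness, since $\mathcal{G}\in L^2([0,T]^2)$ by Proposition \ref{prop-kernel-caligraphic-properties}, $\mathsf{G}$ is a Hilbert-Schmidt operator on $L^2([0,T])$ and therefore compact (equivalently, $\mathsf{K}_1$ and $\mathsf{K}_1^*$ are both compact by the same argument, and the composition of compact operators is compact). Both routes establish that $\mathsf{G}\in B(L^2([0,T]))$ and is compact.

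Non-negativity follows from the factorization: for any $\psi\in L^2([0,T])$,
\begin{equation*}
\langle \mathsf{G}\psi,\psi\rangle_{L^2} = \langle \mathsf{K}_1\mathsf{K}_1^*\psi,\psi\rangle_{L^2} = \langle \mathsf{K}_1^*\psi,\mathsf{K}_1^*\psi\rangle_{L^2} = \|\mathsf{K}_1^*\psi\|_{L^2}^2 \geq 0.
\end{equation*}
The main obstacle, and the one requiring genuine use of the structure of $\mathcal{K}$, is upgrading this to strict positivity in the sense of Definition \ref{def-positive-operator}. For this I would use the separable form $\mathcal{K}(t,s)=\xi_t^-\xi_s^+$ from \eqref{eq-kernel-minor}. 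Suppose $\langle \mathsf{G}\psi,\psi\rangle_{L^2}=0$ for some $\psi\in L^2([0,T])$; then $(\mathsf{K}_1^*\psi)(t)=\xi_t^-\int_t^T \xi_s^+\psi(s)\,ds=0$ for a.e.\ $t$. Since $\xi_t^->0$ on $[0,T]$, this forces $\int_t^T \xi_s^+\psi(s)\,ds=0$ for a.e.\ $t$, and differentiating this absolutely continuous function of $t$ gives $\xi_t^+\psi(t)=0$ a.e., hence $\psi\equiv 0$ since $\xi_t^+>0$. This concludes the proof.
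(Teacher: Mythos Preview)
Your proof is correct and follows essentially the same route as the paper: establish the factorization $\mathsf{G}=\mathsf{K}_1\mathsf{K}_1^*$ by computing the composed kernel, deduce self-adjointness and compactness from the $L^2$ kernel, and obtain positivity from $\langle \mathsf{G}\psi,\psi\rangle=\|\mathsf{K}_1^*\psi\|^2$ together with injectivity of $\mathsf{K}_1^*$. The only cosmetic difference is that the paper isolates the injectivity argument in a separate lemma (Lemma \ref{lemma-T1-zero}), whereas you carry it out inline; your version is in fact slightly cleaner, since you work directly with $\mathsf{K}_1^*$ as required by the factorization.
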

The proof of Proposition \ref{prop-G-adjoint-compact} is given in Section \ref{sec-integral-operators}.

In the following lemma we determine an alternative representation for the major agent's objective functional.
\begin{lemma}
\label{lemma-alternative-major}
Let $H^{0}$ be the major agent's objective functional in \eqref{eq-major-functional}. Then, for any $\nu^{0}\in\mathcal{A}_{M}^{q_0}$ it holds that
\begin{equation}
\begin{aligned}
\label{eq-alternative-major}
H^{0}(\nu^{0}) &=x_{0} + M_{0}q_0  -\kappa_{0}\frac{q_0^{2}}{2}- \frac{\kappa_{1}\kappa_{0}}{2\lambda_{1}}\int^{T}_{0}\nu^{0}_{t}(\mathsf{G}\nu^{0})(t)  dt   \\ &\quad+\int^{T}_{0}\left(\frac{\kappa_{1}}{2\lambda_{1}}\nu^{0}_{t}(\mathsf{G}\bar{\mu})(t)  +  Q^{0,\nu^{0}}_{t}\bar{\mu}_{t}\right)dt -\lambda_{0}\int^{T}_{0}(\nu^{0}_{t})^{2}dt .
\end{aligned}
\end{equation}
\end{lemma}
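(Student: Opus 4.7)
The plan is to rewrite $H^{0}(\nu^{0})=\mathbb{E}[X_{T}^{0,\nu}]$ by integration by parts, pull out the deterministic major-agent quantities, and then substitute the explicit representation of the minor agent's optimal inventory obtained in the proof of Theorem \ref{thm-minor-optimal-strategy}. Starting from \eqref{eq-major-cash} and \eqref{eq-execution-price-major} and using that $\nu^{0}$ is deterministic,
\begin{equation*}
H^{0}(\nu^{0})=x_{0}-\lambda_{0}\!\int_{0}^{T}(\nu_{t}^{0})^{2}dt+\int_{0}^{T}\nu_{t}^{0}\,\mathbb{E}[P_{t}^{\nu}]\,dt.
\end{equation*}
Apply It\^o's product rule to $Q_{t}^{0,\nu^{0}}P_{t}^{\nu}$ as in the proof of Lemma \ref{lemma-alternative-minor}; using $Q_{0}^{0,\nu^{0}}=q_{0}$, $Q_{T}^{0,\nu^{0}}=0$, $P_{0}=M_{0}$, and the decomposition $P=M+A$ with $A_{t}=\int_{0}^{t}\mu_{s}ds$, the martingale term drops by \eqref{ass:P} and one obtains, after using $dQ_{t}^{0,\nu^{0}}=-\nu_{t}^{0}dt$ and \eqref{eq-permanent-impact},
\begin{equation*}
\int_{0}^{T}\nu_{t}^{0}\mathbb{E}[P_{t}^{\nu}]dt=q_{0}M_{0}+\int_{0}^{T}Q_{t}^{0,\nu^{0}}\bigl(\bar{\mu}_{t}-\kappa_{0}\nu_{t}^{0}\bigr)dt-\kappa_{1}\!\int_{0}^{T}Q_{t}^{0,\nu^{0}}\,\mathbb{E}[\nu_{t}^{1,*}]\,dt.
\end{equation*}
The deterministic term $-\kappa_{0}\int_{0}^{T}Q_{t}^{0,\nu^{0}}\nu_{t}^{0}dt$ is handled by a standard integration by parts, giving $-\kappa_{0}q_{0}^{2}/2$ since $Q_{T}^{0,\nu^{0}}=0$ and $Q_{0}^{0,\nu^{0}}=q_{0}$.

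Next I would rewrite the cross term in terms of $Q^{1,\nu^{1,*}}$: since $dQ_{t}^{1,\nu^{1,*}}=-\nu_{t}^{1,*}dt$, integrating by parts (both $Q^{0,\nu^{0}}$ and $\mathbb{E}[Q^{1,\nu^{1,*}}]$ start at zero appropriate endpoints and satisfy $Q_{0}^{1,\nu^{1,*}}=0$, $Q_{T}^{0,\nu^{0}}=0$) yields
\begin{equation*}
-\kappa_{1}\!\int_{0}^{T}Q_{t}^{0,\nu^{0}}\mathbb{E}[\nu_{t}^{1,*}]dt=\kappa_{1}\!\int_{0}^{T}\mathbb{E}[Q_{t}^{1,\nu^{1,*}}]\,\nu_{t}^{0}dt.
\end{equation*}

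The main computational step is to identify $\mathbb{E}[Q_{t}^{1,\nu^{1,*}}]$ with the operator $\mathsf{G}$ applied to $\bar{\mu}-\kappa_{0}\nu^{0}$. From \eqref{eq-minor-inventory-explicit} and \eqref{eq-v0}, using that $\nu^{0}$ is deterministic together with the tower property,
\begin{equation*}
\mathbb{E}[Q_{t}^{1,\nu^{1,*}}]=\frac{1}{2\lambda_{1}}\!\int_{0}^{t}\mathcal{K}(s,t)\!\int_{s}^{T}\mathcal{K}(s,u)\bigl(\bar{\mu}_{u}-\kappa_{0}\nu_{u}^{0}\bigr)du\,ds.
\end{equation*}
Applying Fubini (justified by Lemma \ref{lemma-kernel-L2} and \eqref{eq-mu-square-integrable}) and swapping the order of integration, the inner region $\{(s,u):0\le s\le t,\ s\le u\le T\}$ rearranges so that, for fixed $u$, $s$ ranges over $[0,t\wedge u]$; hence by the very definition of $\mathcal{G}$ in \eqref{eq-kernel-caligraphic},
\begin{equation*}
\mathbb{E}[Q_{t}^{1,\nu^{1,*}}]=\frac{1}{2\lambda_{1}}\!\int_{0}^{T}\mathcal{G}(t,u)\bigl(\bar{\mu}_{u}-\kappa_{0}\nu_{u}^{0}\bigr)du=\frac{1}{2\lambda_{1}}\bigl((\mathsf{G}\bar{\mu})(t)-\kappa_{0}(\mathsf{G}\nu^{0})(t)\bigr).
\end{equation*}

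Substituting this identity back into the cross term and collecting everything yields exactly \eqref{eq-alternative-major}. The only step I expect to require genuine care is the Fubini swap that produces $\mathcal{G}$: one must verify the integrability is uniform in $t$ (which follows from the boundedness of $\mathcal{K}$ on $[0,T]^{2}$ granted by Lemma \ref{lemma-kernel-L2} together with the $L^{2}$ bound on $\bar{\mu}$ and the admissibility of $\nu^{0}$) and then match the indicator $\mathbbm{1}_{\{s\le u\}}\mathbbm{1}_{\{s\le t\}}$ with $\mathbbm{1}_{\{s\le t\wedge u\}}$; the other steps are bookkeeping via integration by parts.
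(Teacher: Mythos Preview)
Your argument is correct. The opening integration-by-parts manoeuvre is essentially the same as the paper's (the paper packages it via $Z_{t}^{\nu}=Y_{t}^{\nu}-\int_{0}^{t}\mu_{s}ds$, but both routes land on the intermediate expression with $\int_{0}^{T}Q_{t}^{0,\nu^{0}}\,\mathbb{E}[\nu_{t}^{1,*}]\,dt$). The genuine difference is in how the minor-agent cross term is resolved. The paper computes $\mathbb{E}[\nu_{t}^{1,*}]$ explicitly (Lemma \ref{lemma-expectation-v1}), then uses the operator ODE of Proposition \ref{proposition-operator-differential-eq} to recognise it as $\tfrac{d}{dt}$ of $(\mathsf{G}\,\cdot)$-terms, and finally integrates by parts against $Q^{0,\nu^{0}}$. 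You do the integration by parts first, passing to $\int_{0}^{T}\nu_{t}^{0}\,\mathbb{E}[Q_{t}^{1,\nu^{1,*}}]\,dt$, and then identify $\mathbb{E}[Q_{t}^{1,\nu^{1,*}}]$ directly with $\tfrac{1}{2\lambda_{1}}(\mathsf{G}(\bar{\mu}-\kappa_{0}\nu^{0}))(t)$ by a single Fubini swap on \eqref{eq-minor-inventory-explicit}, reading off $\mathcal{G}$ from its very definition \eqref{eq-kernel-caligraphic}. Your route is shorter and bypasses Lemma \ref{lemma-expectation-v1} and Proposition \ref{proposition-operator-differential-eq} entirely; the paper's route, on the other hand, yields the auxiliary identity $\mathbb{E}[\nu_{t}^{1,*}]=\tfrac{\kappa_{0}}{2\lambda_{1}}\tfrac{d}{dt}(\mathsf{G}\nu^{0})(t)-\tfrac{1}{2\lambda_{1}}\tfrac{d}{dt}(\mathsf{G}\bar{\mu})(t)$ along the way, which is of some independent interest. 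The Fubini step you flag is indeed the only place needing care, and your justification (boundedness of $\mathcal{K}$ from Lemma \ref{lemma-kernel-L2}, $\bar{\mu}\in L^{2}$, $\nu^{0}\in L^{2}$) is exactly right.
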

The proof of Lemma \ref{lemma-alternative-major} is postponed to Section \ref{sec-proof-alternative-major}. 

The following proposition establishes the uniqueness of the maximiser of $H^{0}$.
\begin{proposition}
\label{prop-uniqueness-major}
There exists at most one admissible maximiser to the major agent's objective functional  $H^{0}$ in \eqref{eq-major-functional}
\end{proposition}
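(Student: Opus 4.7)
The plan is to show that the functional $H^{0}$ is strictly concave on the admissible set $\mathcal{A}_{M}^{q_0}$, so that the uniqueness of the maximiser follows immediately. I will work from the alternative representation \eqref{eq-alternative-major} of Lemma \ref{lemma-alternative-major}, since that form makes the quadratic structure of $H^0$ in $\nu^0$ entirely transparent.

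First I would identify which terms of \eqref{eq-alternative-major} are affine in $\nu^0$ and which are quadratic. The summands $x_0 + M_0 q_0 - \kappa_0 q_0^2/2$ are constant, $\int_0^T \frac{\kappa_1}{2\lambda_1}\nu^0_t(\mathsf{G}\bar\mu)(t)\,dt$ is linear in $\nu^0$, and $\int_0^T Q^{0,\nu^0}_t\bar\mu_t\,dt$ is affine in $\nu^0$ because $\nu\mapsto Q^{0,\nu}$ is affine and both $\nu,\omega\in\mathcal{A}_M^{q_0}$ share the fuel constraint $\int_0^T\nu_s\,ds=q_0$, so convex combinations stay in $\mathcal{A}_M^{q_0}$ and the map is in fact affine on that set. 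The only genuinely quadratic contributions are
\[
-\lambda_0\int_0^T(\nu^0_t)^2\,dt \quad \text{and}\quad -\frac{\kappa_1\kappa_0}{2\lambda_1}\langle \mathsf{G}\nu^0,\nu^0\rangle_{L^2}.
\]

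The main step is then the standard computation: for $\rho\in(0,1)$ and $\nu,\omega\in\mathcal{A}_M^{q_0}$ with $\nu\neq\omega$ in $L^2([0,T])$, define
\[
\mathcal{I}^0(\rho,\nu,\omega) := H^0(\rho\nu+(1-\rho)\omega)-\rho H^0(\nu)-(1-\rho)H^0(\omega).
\]
All affine and constant contributions cancel. Using the identity $\|\rho\nu+(1-\rho)\omega\|^2_{L^2} - \rho\|\nu\|^2_{L^2} - (1-\rho)\|\omega\|^2_{L^2} = -\rho(1-\rho)\|\nu-\omega\|^2_{L^2}$, and the analogous identity for the bilinear form associated with $\mathsf{G}$, this reduces to
\[
\mathcal{I}^0(\rho,\nu,\omega) = \rho(1-\rho)\Bigl(\lambda_0\|\nu-\omega\|^2_{L^2} + \frac{\kappa_1\kappa_0}{2\lambda_1}\langle \mathsf{G}(\nu-\omega),\nu-\omega\rangle_{L^2}\Bigr).
\]
By Proposition \ref{prop-G-adjoint-compact} the operator $\mathsf{G}$ is positive (in the sense of Definition \ref{def-positive-operator}), so the second term is non-negative; since $\lambda_0>0$, $\rho(1-\rho)>0$, and $\|\nu-\omega\|_{L^2}>0$, the sum is strictly positive. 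Hence $H^0$ is strictly concave on $\mathcal{A}_M^{q_0}$.

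Once strict concavity is established, uniqueness of the admissible maximiser is immediate: if $\nu^{0,*}$ and $\tilde\nu^{0,*}$ were two distinct admissible maximisers then $\frac{1}{2}\nu^{0,*}+\frac{1}{2}\tilde\nu^{0,*}\in\mathcal{A}_M^{q_0}$ (the set is convex, being affine) would satisfy $H^0(\tfrac12\nu^{0,*}+\tfrac12\tilde\nu^{0,*}) > \tfrac12 H^0(\nu^{0,*}) + \tfrac12 H^0(\tilde\nu^{0,*}) = \sup_{\mathcal{A}_M^{q_0}} H^0$, a contradiction. I do not expect a real obstacle here: the only ingredient beyond direct algebra is positivity of $\mathsf{G}$, which is given, and the driving strict-positivity comes for free from the temporary impact quadratic $-\lambda_0\|\nu^0\|_{L^2}^2$.
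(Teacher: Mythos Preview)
Your proposal is correct and follows essentially the same route as the paper: both reduce to showing strict concavity of $H^{0}$ on $\mathcal{A}_{M}^{q_0}$ via the representation \eqref{eq-alternative-major}, compute $\mathcal{I}^{0}(\rho,\nu,\omega)=\rho(1-\rho)\bigl(\lambda_{0}\|\nu-\omega\|_{L^{2}}^{2}+\tfrac{\kappa_{1}\kappa_{0}}{2\lambda_{1}}\langle\mathsf{G}(\nu-\omega),\nu-\omega\rangle_{L^{2}}\bigr)$, and conclude from the positivity of $\mathsf{G}$ in Proposition~\ref{prop-G-adjoint-compact} together with $\lambda_{0}>0$. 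The only minor point you leave implicit is that the ``analogous identity'' for the $\mathsf{G}$-bilinear form relies on $\mathsf{G}$ being self-adjoint, which is also part of Proposition~\ref{prop-G-adjoint-compact}.
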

\begin{proof}
To show the desired result, it is sufficient to show that the objective functional $H^{0}$ is strictly concave over $\mathcal{A}_{M}^{q_0}$.

Notice that since $\nu^{1,*}\in\mathcal{A}_{m}$, $\nu^{0}\in\mathcal{A}_{M}^{q_0}$ and $\mu$ satisfies \eqref{eq-mu-square-integrable}, then $H^{0}$ is finite.  Therefore, in order to show that the objective $H^{0}$ is strictly concave over $\mathcal{A}_{M}^{q_0}$ we must verify  that 
\begin{equation}
\label{eq-strictly-concave-major}
\mathcal{I}^{0}(\rho,\nu,\omega) := H^{0}(\rho\nu + (1-\rho)\omega) - \rho H^{0}(\nu) -(1-\rho) H^{0}(\omega)>0 
\end{equation}
 for any $\rho\in(0,1)$ and for any $dt$-distinguishable $\nu,\omega\in\mathcal{A}_{M}^{q_0}$. 

We now fix such $\rho$ and $\nu,\omega$. From \eqref{eq-major-inventory-def} and \eqref{eq-G-rep-integral} it follows that $Q^{0,\nu}$ and $\mathsf{G}\nu$ are linear in $\nu$. Then from \eqref{eq-alternative-major} and \eqref{eq-strictly-concave-major} we get 
\begin{equation}
\label{eq-left-hand side-concavity-major-2}
\begin{aligned}
\mathcal{I}^{0}(\rho,\nu,\omega) 
&= \rho(1-\rho)\lambda_{0}\int^{T}_{0}
(\nu_{t} -\omega_{t})^{2}dt  \\
&\quad+\rho(1-\rho)\frac{\kappa_{1}\kappa_{0}}{2\lambda_{1}} \int^{T}_{0}  (\nu_{t} - \omega_{t} )( \mathsf{G}(\nu -\omega))(t) dt.
\end{aligned}
\end{equation}
Note that the left to show that the right-hand side of \eqref{eq-left-hand side-concavity-major-2} is strictly positive since $\mathsf{G}$ is a positive operator by Proposition \ref{prop-G-adjoint-compact} and $\nu$ and $\omega$ are $dt$-distinguishable.
\end{proof}

As similarly done in Section \ref{sec-proof-thm-minor-optimal}, we can apply a convex analytic calculus of variation approach to derive a first order optimality conditions which characterizes the unique solution to the major agent's problem.

Since the map $\nu^{0}\to H^{0}(\nu^{0})$ in \eqref{eq-major-functional} is strictly concave, then $ H^{0}$ admits a unique maximiser characterised by the critical point at which the G\^ateaux derivative
\begin{equation}
\label{eq-def-major-G\^ateaux}
\langle\mathcal{D}H^{0}(\nu^{0}),\omega\rangle = \lim_{\epsilon  \to 0} \frac{H^{0}(\nu^{0}+\epsilon\omega)-H^{0}(\nu^{0})}{\epsilon}
\end{equation}
vanishes. We remind the reader that  the minor agent optimal strategy $\nu^{1,*}$ was fixed to be the one in \eqref{eq-minor-agent-optimal}, therefore, making the major agent's objective functional $H^{0}$  only a function of the major agent control $\nu^{0}$. In the next lemma we explicitly compute the first-order G\^ateaux derivative of $H^{0}$. Recall that $\mathcal{A}_{M}^{q_{0}}$ was defined in \eqref{eq-major-admissible-set}.
\begin{lemma}
\label{lemma-gateaux-major}
The G\^ateaux derivative of $H^{0}$ in \eqref{eq-alternative-major}, in a direction $\omega\in\mathcal{A}_{M}^{0}$ is given by
\begin{equation}
\label{eq-major-G\^ateaux}
\langle\mathcal{D} H^{0}(\nu^{0}),\omega\rangle =  \int^{T}_{0}\omega_{t}\Bigg(-2\lambda_{0}\nu^{0}_{t}  - \frac{\kappa_{1}\kappa_{0}}{\lambda_{1}} (\mathsf{G}\nu^{0})(t)  +\frac{\kappa_{1}}{2\lambda_{1}}(\mathsf{G}\bar{\mu})(t) - \int^{T}_{t}\bar{\mu}_{s}ds\Bigg) dt,
\end{equation}
for any $\nu^{0}\in\mathcal{A}_{M}^{q_0}$.
\end{lemma}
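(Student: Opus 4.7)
The plan is to compute the Gâteaux derivative directly by expanding $H^{0}(\nu^{0}+\varepsilon\omega)$ using the alternative representation \eqref{eq-alternative-major}, isolating the term linear in $\varepsilon$, and then grouping everything under a single integral against $\omega_{t}$. Since $\omega\in\mathcal{A}_{M}^{0}$ means $\int_{0}^{T}\omega_{t}dt=0$, the perturbation preserves the fuel constraint; this is what allows us to perturb inside $\mathcal{A}_{M}^{q_0}$. The constant terms $x_{0}+M_{0}q_{0}-\kappa_{0}q_{0}^{2}/2$ do not depend on $\nu^{0}$ and therefore drop out.

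First I would handle the quadratic term $\int_{0}^{T}\nu^{0}_{t}(\mathsf{G}\nu^{0})(t)dt$. Expanding $(\nu^{0}+\varepsilon\omega)(\mathsf{G}(\nu^{0}+\varepsilon\omega))$ and using the self-adjointness of $\mathsf{G}$ (Proposition \ref{prop-G-adjoint-compact}) gives
\begin{equation*}
\int_{0}^{T}\omega_{t}(\mathsf{G}\nu^{0})(t)dt=\int_{0}^{T}\nu^{0}_{t}(\mathsf{G}\omega)(t)dt,
\end{equation*}
so the $\varepsilon$-coefficient of this term is $2\int_{0}^{T}\omega_{t}(\mathsf{G}\nu^{0})(t)dt$. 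Similarly, $\lambda_{0}\int_{0}^{T}(\nu^{0}_{t})^{2}dt$ contributes $2\lambda_{0}\int_{0}^{T}\omega_{t}\nu^{0}_{t}dt$, and the linear signal term $\frac{\kappa_{1}}{2\lambda_{1}}\int_{0}^{T}\nu^{0}_{t}(\mathsf{G}\bar{\mu})(t)dt$ contributes $\frac{\kappa_{1}}{2\lambda_{1}}\int_{0}^{T}\omega_{t}(\mathsf{G}\bar{\mu})(t)dt$.

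Next I would treat the inventory-signal coupling $\int_{0}^{T}Q^{0,\nu^{0}}_{t}\bar{\mu}_{t}dt$. By \eqref{eq-major-inventory-def}, $Q^{0,\nu^{0}+\varepsilon\omega}_{t}=Q^{0,\nu^{0}}_{t}-\varepsilon\int_{0}^{t}\omega_{s}ds$, so its linear $\varepsilon$-contribution is $-\int_{0}^{T}\bar{\mu}_{t}\int_{0}^{t}\omega_{s}ds\,dt$. Applying Fubini (justified by $\bar{\mu}\in L^{2}([0,T])$ thanks to \eqref{eq-mu-square-integrable} together with $\omega\in L^{2}([0,T])$) converts this into $-\int_{0}^{T}\omega_{t}\big(\int_{t}^{T}\bar{\mu}_{s}ds\big)dt$.

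Finally, combining the four $\varepsilon$-coefficients, dividing by $\varepsilon$ and passing to the limit (the $O(\varepsilon^{2})$ remainder from the quadratic pieces vanishes) yields exactly \eqref{eq-major-G\^ateaux}. I do not expect any serious obstacle: the main technical point is the self-adjointness of $\mathsf{G}$ to combine the two cross terms of the quadratic, and the Fubini step to turn the iterated inventory integral into a forward integrand. Both are provided by results already in the paper (Proposition \ref{prop-G-adjoint-compact} and the square-integrability assumptions on $\bar\mu$ and on admissible major-agent strategies).
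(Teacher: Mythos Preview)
Your proposal is correct and follows essentially the same approach as the paper: expand $H^{0}(\nu^{0}+\varepsilon\omega)-H^{0}(\nu^{0})$ via the alternative representation \eqref{eq-alternative-major}, identify the linear-in-$\varepsilon$ part, and then use the self-adjointness of $\mathsf{G}$ (Proposition \ref{prop-G-adjoint-compact}) together with Fubini on the inventory--signal term to collapse everything into a single integral against $\omega_{t}$. The only cosmetic difference is that the paper first writes out the full difference and then applies self-adjointness and Fubini, whereas you treat each term separately from the outset.
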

\begin{proof}
Let $\epsilon>0$, $\nu^{0}\in\mathcal{A}_{M}^{q_0}$ and $\omega \in \mathcal{A}_{M}^{0}$.  From \eqref{eq-alternative-major} we get
\begin{equation}
\label{eq-difference-gateaux}
\begin{aligned}
&H^{0}(\nu^{0}+\epsilon\omega)-H^{0}(\nu^{0})\\ &=\epsilon\Bigg(  \int^{T}_{0}\omega_{t}\left(-2\lambda_{0}\nu^{0}_{t} +\frac{\kappa_{1}}{2\lambda_{1}}(\mathsf{G}\bar{\mu})(t)\right) dt - \frac{\kappa_{1}\kappa_{0}}{2\lambda_{1}} \int^{T}_{0}\omega_{t}(\mathsf{G}\nu^{0})(t)dt\\ 
& \qquad- \frac{\kappa_{1}\kappa_{0}}{2\lambda_{1}}\int^{T}_{0} \nu^{0}_{t}(\mathsf{G}\omega)(t)dt - \int^{T}_{0}\bar{\mu}_{t}\left(\int^{t}_{0}\omega_{s}ds\right)dt \Bigg)\\
&\quad+ \epsilon^{2}\Bigg(-\lambda_{0}\int^{T}_{0}\omega_{s}^{2}ds  -\frac{\kappa_{1}\kappa_{0}}{2\lambda_{1}}\int^{T}_{0}\omega_{t}(\mathsf{G}\omega)(t)dt  \Bigg).
\end{aligned}
\end{equation}
Notice that all the terms in \eqref{eq-difference-gateaux} are finite as a consequence of  $\omega \in\mathcal{A}_{M}^{0}$, $\nu^{0}\in\mathcal{A}_{M}^{q_0}$,  $\mu$ satisfying \eqref{eq-mu-square-integrable} and $\mathsf{G}\in B(L^{2}([0,T]))$ as shown in Proposition \ref{prop-G-adjoint-compact}.  It follows that \begin{equation}
\label{eq-minor-gateaux-proof}
\begin{aligned}
\langle\mathcal{D}H^{0}(\nu^{0}),\omega\rangle &= \int^{T}_{0}\omega_{t}\Big(-2\lambda_{0}\nu^{0}_{t} - \frac{\kappa_{1}\kappa_{0}}{2\lambda_{1}} (\mathsf{G}\nu^{0})(t) +\frac{\kappa_{1}}{2\lambda_{1}}(\mathsf{G}\bar{\mu})(t)\Big)dt \\ 
&\qquad- \frac{\kappa_{1}\kappa_{0}}{2\lambda_{1}}\int^{T}_{0}\nu^{0}_{t}(\mathsf{G}\omega)(t)dt- \int^{T}_{0}\left(\bar{\mu}_{t}\right)\left(\int^{t}_{0}\omega_{s}ds\right)dt .
\end{aligned}
\end{equation}
Proposition \ref{prop-G-adjoint-compact} shows that $\mathsf{G}\in B(L^{2}([0,T]))$. Since $\mu$ satisfies \eqref{eq-mu-square-integrable}, then $\bar{\mu}$ is in $L^{2}([0,T])$ by Jensen's inequality. 
Since $\omega\in\mathcal{A}_{M}^{0}$ and $\bar{\mu}\in L^{2}([0,T])$, then we can apply Fubini's theorem to obtain
\begin{equation}
\label{eq-fubini-major-1}
 \int^{T}_{0}\left(\bar{\mu}_{t}\right)\left(\int^{t}_{0}\omega_{s}ds\right)dt 
 = \int^{T}_{0}\omega_{s}\int^{T}_{s}\bar{\mu}_{t}dt ds.
\end{equation}
As claimed in Proposition \ref{prop-G-adjoint-compact}, $\mathsf{G}$ is self-adjoint, therefore, it holds that
\begin{equation}
\begin{aligned}
\label{eq-self-adjoint-nu0}
\int^{T}_{0}\nu^{0}_{t}(\mathsf{G}\omega)(t)dt &= \langle\nu^{0},(\mathsf{G}\omega)\rangle_{L^{2}} \\ &= \langle\omega,(\mathsf{G}\nu^{0})\rangle_{L^{2}} \\ 
&=\int^{T}_{0}\omega_{t}(\mathsf{G}\nu^{0})(t)dt .
\end{aligned}
\end{equation}
Finally, we use \eqref{eq-fubini-major-1} and  \eqref{eq-self-adjoint-nu0} in \eqref{eq-minor-gateaux-proof} to get \eqref{eq-major-G\^ateaux}. 

\end{proof}
Recall the definition of the operator $\mathsf{S}$ in \eqref{eq-chi-function}.  In the following lemma we derive an optimality condition that takes the form of an integral equation.
\begin{proposition}
\label{prop-integral-eq-major}
A strategy $\nu^{0,*}\in\mathcal{A}^{q_0}_{M}$ maximises the major agent's performance functional \eqref{eq-major-functional} if there exists a constant $\eta \in \mathbb{R}$ such that, 
\begin{equation}
\label{eq-integral-eq-major}
\nu^{0,*}_{t}  + \frac{\kappa_{1}\kappa_{0}}{2\lambda_{0}\lambda_{1}} (\mathsf{G}\nu^{0,*})(t) =(\mathsf{S}\bar{\mu})(t)  +\frac{\eta}{2\lambda_{0}}, \quad \textrm{for all } 0\leq t \leq T. 
\end{equation}

\end{proposition}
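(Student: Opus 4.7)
The plan is to use the standard Lagrange‐multiplier style argument for strictly concave maximization on an affine constraint set, leveraging the strict concavity of $H^0$ established in Proposition \ref{prop-uniqueness-major} together with the explicit G\^ateaux derivative computed in Lemma \ref{lemma-gateaux-major}. Since the admissible set $\mathcal{A}_M^{q_0}$ is affine with tangent space
\begin{equation*}
\mathcal{A}_M^0 = \left\{\omega \in L^2([0,T]) : \int_0^T \omega_t\, dt = 0\right\},
\end{equation*}
it suffices to verify that any $\nu^{0,*}\in\mathcal{A}_M^{q_0}$ satisfying \eqref{eq-integral-eq-major} for some constant $\eta$ is a critical point of $H^0$ along all directions in $\mathcal{A}_M^0$; strict concavity then upgrades this to being the (unique) maximizer.

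First I would substitute the hypothesis \eqref{eq-integral-eq-major}, rewritten as
\begin{equation*}
-2\lambda_0 \nu^{0,*}_t - \frac{\kappa_1\kappa_0}{\lambda_1}(\mathsf{G}\nu^{0,*})(t) = -2\lambda_0(\mathsf{S}\bar\mu)(t) - \eta,
\end{equation*}
into the integrand of the G\^ateaux derivative \eqref{eq-major-G\^ateaux}. Expanding $(\mathsf{S}\bar\mu)(t)$ via its definition \eqref{eq-chi-function} gives $-2\lambda_0(\mathsf{S}\bar\mu)(t) = -\int_0^t \bar\mu_s\,ds - \frac{\kappa_1}{2\lambda_1}(\mathsf{G}\bar\mu)(t)$, so the term $\frac{\kappa_1}{2\lambda_1}(\mathsf{G}\bar\mu)(t)$ appearing in \eqref{eq-major-G\^ateaux} cancels exactly, and combining $-\int_0^t \bar\mu_s\,ds$ with $-\int_t^T\bar\mu_s\,ds$ produces the $t$-independent quantity $-\int_0^T\bar\mu_s\,ds - \eta$. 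This is the key algebraic identity: under \eqref{eq-integral-eq-major}, the bracketed expression in \eqref{eq-major-G\^ateaux} reduces to a constant $C \in \mathbb{R}$.

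Next, for any direction $\omega\in\mathcal{A}_M^0$ we obtain
\begin{equation*}
\langle \mathcal{D}H^0(\nu^{0,*}),\omega\rangle = C\int_0^T \omega_t\,dt = 0,
\end{equation*}
so $\nu^{0,*}$ is a critical point in all admissible directions. Finally, because $H^0$ is strictly concave over $\mathcal{A}_M^{q_0}$ by Proposition \ref{prop-uniqueness-major}, and $\mathcal{A}_M^{q_0}$ is convex, any such critical point is automatically the unique maximizer: for $\tilde\nu\in\mathcal{A}_M^{q_0}$ with $\tilde\nu\ne\nu^{0,*}$, writing $\omega = \tilde\nu - \nu^{0,*}\in\mathcal{A}_M^0$ and using concavity together with $\langle \mathcal{D}H^0(\nu^{0,*}),\omega\rangle = 0$ yields $H^0(\tilde\nu)<H^0(\nu^{0,*})$.

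I do not anticipate a significant obstacle: all the analytical ingredients (well-definedness of $\mathsf{G}$ and $\mathsf{S}$ on $L^2([0,T])$, integrability of $\bar\mu$ and the various interchanges of order of integration) are already established in Proposition \ref{prop-G-adjoint-compact}, Lemma \ref{lemma-gateaux-major}, and the argument of Proposition \ref{prop-uniqueness-major}. The only technicality worth checking is that the affine constraint indeed has tangent space $\mathcal{A}_M^0$ and that every $\tilde\nu - \nu^{0,*}$ with $\tilde\nu, \nu^{0,*}\in\mathcal{A}_M^{q_0}$ is an admissible perturbation, which is immediate from the linearity of the constraint $\int_0^T \nu_t\,dt = q_0$.
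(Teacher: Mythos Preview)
Your proposal is correct and follows essentially the same route as the paper: substitute \eqref{eq-integral-eq-major} into the G\^ateaux derivative \eqref{eq-major-G\^ateaux}, use the definition of $\mathsf{S}$ to collapse the integrand to the constant $-\big(\int_0^T\bar\mu_s\,ds+\eta\big)$, and conclude that the derivative vanishes on $\mathcal{A}_M^0$; the paper then cites \cite[Chapter~II, Proposition~2.1]{EkelTem:99} rather than writing out the concavity argument, but the content is identical.
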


\begin{proof}
In the proof of Proposition \ref{prop-uniqueness-major} we have shown that the objective functional $H^{0}$ is strictly concave over $\mathcal{A}_{M}^{q_0}$. Therefore, by Proposition 2.1 in Chapter 2 in \cite{EkelTem:99} if an admissible strategy $\nu^{0,*}$ satisfies 
\begin{equation}
\label{eq-vanishing-derivative-major}
\langle\mathcal{D}H^{0}(\nu^{0,*}),\omega\rangle =0,\quad  \textrm{for all } \omega \in\mathcal{A}_{M}^{0},
\end{equation}
then it is the maximiser of $H^{0}$. 
 
Recalling the result of Lemma \ref{lemma-gateaux-major}, we plug \eqref{eq-integral-eq-major} into \eqref{eq-major-G\^ateaux} to get
\begin{equation}
\label{eq-gateaux-first-order}
\begin{aligned}
\langle\mathcal{D}H^{0}(\nu^{0,*}),\omega\rangle  
&= -\int^{T}_{0}\omega_{t}\left( 2\lambda_{0}(\mathsf{S}\bar{\mu})(t) + \eta  -\frac{\kappa_{1}}{2\lambda_{1}}(\mathsf{G}\bar{\mu})(t)+\int^{T}_{t}\bar{\mu}_{s}ds\right)dt \\
&= -\left(\int^{T}_{0}\bar{\mu}_{s}ds+\eta\right)\int^{T}_{0}\omega_{t}dt 
\end{aligned}
\end{equation}
where we used \eqref{eq-chi-function} in the second equality. Recall that $\omega \in \mathcal{A}_{M}^{0}$, then from \eqref{eq-major-admissible-set} we get \eqref{eq-vanishing-derivative-major}. 
\end{proof}

The following proposition, which is proved in Section \ref{sec-integral-operators}, derives some properties of the operator $\mathsf{R}$ in \eqref{eq-resolvent-operator} that are crucial to the proof of Theorem \ref{thm-major-solution}.

\begin{proposition}
\label{prop-properties-resolvent}
Let $\mathsf{R}$ be defined as in \eqref{eq-resolvent-operator} and $\mathsf{G}$ be defined as in \eqref{eq-G-rep-integral}.  Then the following holds:
\begin{itemize}
\item[\textbf{(i)}] The inverse operator of $\mathsf{I} + \frac{\kappa_{1}\kappa_{0}}{2\lambda_{0}\lambda_{1}}\mathsf{G}$  exists and it satisfies 
\begin{equation}
\label{eq-R-inverse-operator}
\mathsf{R}=\left(\mathsf{I} + \frac{\kappa_{1}\kappa_{0}}{2\lambda_{0}\lambda_{1}}\mathsf{G}\right)^{-1}.
\end{equation}
\item[\textbf{(ii)}] $\mathsf{R}$ is positive, bounded from below and is in $B(L^{2}([0,T]))$. 
\item[\textbf{(iii)}]  If $f\in C([0,T])$, then $(\mathsf{R}f)\in C([0,T])$.
\end{itemize}
\end{proposition}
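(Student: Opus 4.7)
My plan is to exploit the spectral decomposition of $\mathsf{G}$ provided by Lemma \ref{lemma-spectral-theorem-G}. Since $\mathsf{G}$ is compact, self-adjoint, and positive by Proposition \ref{prop-G-adjoint-compact}, it admits positive eigenvalues $(\zeta_n)_{n\geq 1}$ with $\zeta_n\to 0$ and corresponding orthonormal eigenfunctions $(\psi_n)_{n\geq 1}$, and self-adjointness yields the orthogonal decomposition $L^{2}([0,T])=\overline{\mathrm{span}\{\psi_n\}}\oplus\ker(\mathsf{G})$. Writing $c:=\frac{\kappa_1\kappa_0}{2\lambda_0\lambda_1}$ for brevity, any $\psi\in L^{2}([0,T])$ expands as $\psi=\psi_0+\sum_{n}\langle\psi,\psi_n\rangle_{L^{2}}\psi_n$ with $\psi_0\in\ker(\mathsf{G})$. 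The uniform absolute convergence in Remark \ref{remark-absolute-convergence} will justify all interchanges of sums with the integrals implicit in $\mathsf{R}$.

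For part (i), I would compute $\mathsf{R}$ on the spectral components. Plugging $\psi_n$ into \eqref{eq-resolvent-operator}--\eqref{eq-resolvent-kernel-def} and using $\mathsf{G}\psi_n=\zeta_n\psi_n$ together with orthonormality, a short algebraic simplification gives $\mathsf{R}\psi_n=\frac{1}{1+c\zeta_n}\psi_n$. For $\psi\in\ker(\mathsf{G})$, both the $-c\mathcal{G}$ term and the series term vanish (the latter because $\langle\psi,\psi_n\rangle_{L^{2}}=0$), so $\mathsf{R}\psi=\psi$. By linearity and continuity these identities extend to all of $L^{2}([0,T])$, and a direct check yields $(\mathsf{I}+c\mathsf{G})\mathsf{R}=\mathsf{R}(\mathsf{I}+c\mathsf{G})=\mathsf{I}$, establishing \eqref{eq-R-inverse-operator}.

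For part (ii), from the eigenvalue formula above $\mathsf{R}$ is self-adjoint with eigenvalues in $(0,1]$ (and equal to $1$ on $\ker(\mathsf{G})$). For $\psi=\psi_0+\sum_n c_n\psi_n$ I get
\begin{equation*}
\langle\mathsf{R}\psi,\psi\rangle_{L^{2}}=\|\psi_0\|_{L^{2}}^{2}+\sum_{n}\frac{|c_n|^{2}}{1+c\zeta_n}\geq \frac{1}{1+c\|\mathsf{G}\|}\|\psi\|_{L^{2}}^{2},
\end{equation*}
since $\zeta_n\leq\|\mathsf{G}\|$. This yields both positivity and the explicit lower bound $m=\frac{1}{1+c\|\mathsf{G}\|}$, so $\mathsf{R}$ is positive and bounded from below. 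Boundedness in $B(L^{2}([0,T]))$ follows either from the analogous estimate $\|\mathsf{R}\psi\|_{L^{2}}\leq\|\psi\|_{L^{2}}$, or directly from $\mathsf{R}=\mathsf{I}+(\text{integral operator with kernel }\mathcal{R})$ once $\mathcal{R}\in L^{2}([0,T]^{2})$ is verified.

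For part (iii), I would first show that $\mathcal{R}$ is jointly continuous on $[0,T]^{2}$. By Lemma \ref{lemma-kernel-L2} the kernel $\mathcal{K}$ is jointly continuous, and inspecting \eqref{eq-kernel-caligraphic} then shows $\mathcal{G}$ is jointly continuous (continuity in the integrand and in the limit $t\wedge s$). Since $\zeta_n\psi_n(t)=\int_0^T\mathcal{G}(t,s)\psi_n(s)\,ds$ and $\mathcal{G}$ is bounded and continuous, each $\psi_n$ is continuous by dominated convergence. The uniform convergence of the eigenfunction series on $[0,T]^{2}$ recalled in Remark \ref{remark-absolute-convergence} then shows $\mathcal{R}\in C([0,T]^{2})$. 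Given this and the boundedness of $\mathcal{R}$ on the compact square, for any $f\in C([0,T])$ a standard bounded convergence argument shows $t\mapsto\int_0^T\mathcal{R}(t,s)f(s)\,ds$ is continuous, hence $\mathsf{R}f\in C([0,T])$. The only delicate point in the whole argument is the manipulation of the series in part (i), but Remark \ref{remark-absolute-convergence} supplies the uniform (and uniformly absolute) convergence needed to pass limits freely.
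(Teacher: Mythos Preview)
Your argument is correct and complete. The paper's proof takes a more black-box route: for (i) and (iii) it invokes \cite[Theorem~4.27]{Porter1990} (stated in the paper as Proposition~\ref{proposition-porter-theorem427}), which packages the spectral resolvent construction and the joint continuity of $\mathcal{R}$ into a single cited result, and for (ii) it appeals to \cite[Lemma~6.2]{Porter1990} to transfer the ``positive and bounded below'' property from $\mathsf{I}+c\mathsf{G}$ (Lemma~\ref{lemma-operator-positive-bounded}) to its inverse. Your approach instead unpacks the spectral theory directly: you compute $\mathsf{R}\psi_n=(1+c\zeta_n)^{-1}\psi_n$ by hand, read off the explicit lower bound $m=(1+c\|\mathsf{G}\|)^{-1}$, and assemble the continuity of $\mathcal{R}$ from continuity of $\mathcal{G}$, continuity of each $\psi_n$, and the uniform convergence in Remark~\ref{remark-absolute-convergence}. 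The paper's version is shorter and leans on external references; yours is self-contained and yields the constant in (ii) explicitly. One small presentational point: in (i) you invoke continuity of $\mathsf{R}$ to extend from the eigenfunctions to all of $L^{2}$, but boundedness of $\mathsf{R}$ is only argued in (ii). Since you already observe that $\mathcal{R}$ is jointly continuous (hence in $L^{2}([0,T]^{2})$), simply note this \emph{before} the extension step so the logic is linear. Also, because $\mathsf{G}$ is strictly positive (Proposition~\ref{prop-G-adjoint-compact}), $\ker(\mathsf{G})=\{0\}$, so your $\psi_0$ component is in fact trivial; this does not affect the argument but could be mentioned to streamline it.
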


The following lemma derive some essential properties of  the operator $\mathsf{S}$ in \eqref{eq-chi-function}. 
\begin{lemma}
\label{lemma-chi-continuous}
Let $\mathsf{S}$ be as in \eqref{eq-chi-function}. Then $\mathsf{S}$ is in $B(L^{2}([0,T]))$ and for any $\psi\in L^{2}([0,T])$,$(\mathsf{S}\psi)(t)$ is continuous on $0\leq t \leq T$.
\end{lemma}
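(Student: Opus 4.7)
The plan is to split $\mathsf{S}$ into two pieces and handle each separately. Define the Volterra operator $(\mathsf{V}\psi)(t) := \int_0^t \psi(s)\,ds$, so that from \eqref{eq-chi-function} we have $\mathsf{S} = \tfrac{1}{2\lambda_0}\mathsf{V} + \tfrac{\kappa_1}{4\lambda_0\lambda_1}\mathsf{G}$.

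For boundedness in $B(L^2([0,T]))$, Proposition \ref{prop-G-adjoint-compact} already gives $\mathsf{G}\in B(L^2([0,T]))$, so it suffices to show $\mathsf{V}\in B(L^2([0,T]))$. By Cauchy--Schwarz,
\begin{equation*}
|(\mathsf{V}\psi)(t)|^2 \;\leq\; t\int_0^t \psi(s)^2\,ds \;\leq\; T\|\psi\|_{L^2}^2,
\end{equation*}
and integrating in $t$ gives $\|\mathsf{V}\psi\|_{L^2}^2 \leq T^2\|\psi\|_{L^2}^2$. Hence $\mathsf{S}$ is a bounded linear combination of operators in $B(L^2([0,T]))$.

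For the continuity claim, I would also treat the two summands separately. For the Volterra part, since $\psi\in L^2([0,T]) \subset L^1([0,T])$, the function $t\mapsto \int_0^t \psi(s)\,ds$ is absolutely continuous, hence continuous, on $[0,T]$. For the $\mathsf{G}$ part, the idea is to invoke the joint continuity and boundedness of the kernel $\mathcal{G}$ on $[0,T]^2$ (which follows from Proposition \ref{prop-kernel-caligraphic-properties}, using in turn the joint continuity and boundedness of $\mathcal{K}$ from Lemma \ref{lemma-kernel-L2} together with the regularity of the variable endpoint $t\wedge s$). Given this, whenever $t_n\to t$ in $[0,T]$ we have $\mathcal{G}(t_n,s)\psi(s)\to\mathcal{G}(t,s)\psi(s)$ pointwise in $s$, with $|\mathcal{G}(t_n,s)\psi(s)| \leq \|\mathcal{G}\|_\infty\,|\psi(s)|$, and the dominating function is integrable because $\psi\in L^1([0,T])$. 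Dominated convergence then yields $(\mathsf{G}\psi)(t_n)\to(\mathsf{G}\psi)(t)$, proving that $(\mathsf{S}\psi)(t)$ is continuous on $[0,T]$.

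The main technical point is the joint continuity of $\mathcal{G}(t,s)$ in $(t,s)$: the variable upper limit $t\wedge s$ has to be reconciled with the continuity of the integrand, but this is a routine consequence of uniform continuity of $\mathcal{K}$ on the compact square $[0,T]^2$ and should be delegated to (or read off from) Proposition \ref{prop-kernel-caligraphic-properties}. Once that regularity of $\mathcal{G}$ is in hand, both the boundedness and the continuity follow immediately from the decomposition above.
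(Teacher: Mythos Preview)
Your proof is correct and follows the same overall decomposition as the paper: split $\mathsf{S}$ into the Volterra piece and the $\mathsf{G}$ piece, and treat each separately. The only noteworthy difference is in how continuity of $(\mathsf{G}\psi)(\cdot)$ is obtained. You argue directly from the joint continuity and boundedness of the kernel $\mathcal{G}$ (Proposition~\ref{prop-kernel-caligraphic-properties}) together with dominated convergence, which is elementary and self-contained. The paper instead invokes Proposition~\ref{proposition-operator-differential-eq}, which establishes the stronger fact that $(\mathsf{G}\psi)(\cdot)$ is continuously differentiable via the operator ODE $\tfrac{d}{dt}(\mathsf{G}\psi)(t)=r^1_t(\mathsf{G}\psi)(t)+(\mathsf{K}_1^*\psi)(t)$; continuity then drops out immediately. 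Your route avoids appealing to that structural result, at the cost of a short DCT argument; the paper's route is terser but relies on machinery proved elsewhere for other purposes. Both are fine.
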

\begin{proof}
Proposition \ref{prop-G-adjoint-compact} proves that $\mathsf{G}$ is in $B(L^{2}([0,T]))$, hence it follows from \eqref{eq-chi-function} that also $\mathsf{S}$ is in $B(L^{2}([0,T]))$.

Next, let $\psi\in L^{2}([0,T])$. From Proposition \ref{proposition-operator-differential-eq} it follows that $\mathsf{G}\psi$ is continuously differentiable on $[0,T]$, hence by \eqref{eq-chi-function} the continuity of $(\mathsf{S}\psi)(\cdot)$ follows. 
\end{proof}
We are now ready to prove Theorem \ref{thm-major-solution}. 
\begin{proof}[Proof of Theorem \ref{thm-major-solution}]
Recall that $\nu^{0,*}$ and $\eta$ were defined in \eqref{eq-major-solution} and \eqref{eq-constant-major-solution}, respectively. We split the proof into the following steps. 

\textbf{Step 1.} We show that $(\nu^{0,*},\eta)$ is the unique solution to \eqref{eq-integral-eq-major}. Recall that $\mathsf{R}$ was defined in \eqref{eq-resolvent-operator}. From \eqref{eq-mu-square-integrable} and Lemma \ref{lemma-chi-continuous} it follows that $\mathsf{S}\bar{\mu}(\cdot)$ is in $L^{2}([0,T])$. Then, from Proposition \ref{prop-properties-resolvent} we get that the unique solution to \eqref{eq-integral-eq-major} is given by
\begin{equation}
\label{eq-solution-major-proof}
\begin{aligned}
\nu^{0,*}_{t} &=\left( \left(\mathsf{I} + \frac{\kappa_{1}\kappa_{0}}{2\lambda_{0}\lambda_{1}}\mathsf{G}\right)^{-1}\left(\frac{\eta}{2\lambda_{0}} + \mathsf{S}\bar{\mu} \right)\right) (t)\\
&= \frac{\eta}{2\lambda_{0}}(\mathsf{R}\mathit{1})(t)  + (\mathsf{R}\mathsf{S}\bar{\mu})(t), \quad 0\leq t \leq T. 
\end{aligned}
\end{equation}

\textbf{Step 2.} We verify that $\nu^{0,*}$ satisfies the fuel constraint in \eqref{eq-major-admissible-set}. Note that we need to show that $\langle \nu^{0,*},\mathit{1}\rangle_{L^{2}} = q_0$. From \eqref{eq-major-solution} and \eqref{eq-constant-major-solution} we get
\begin{equation}
\begin{aligned}
\langle \nu^{0,*},\mathit{1}\rangle_{L^{2}} &= \frac{\eta}{2\lambda_{0}} \langle \mathsf{R}\mathit{1},\mathit{1}\rangle_{L^{2}} + \langle \mathsf{R}\mathsf{S}\bar{\mu},\mathit{1}\rangle_{L^{2}} \\
&=\left(\frac{q_0-\langle \mathsf{R}\mathsf{S}\bar{\mu},\mathit{1}\rangle_{L^{2}} }{\langle \mathsf{R}\mathit{1},\mathit{1}\rangle_{L^{2}}}\right)\langle \mathsf{R}\mathit{1},\mathit{1}\rangle_{L^{2}} + \langle \mathsf{R}\mathsf{S}\bar{\mu},\mathit{1}\rangle_{L^{2}} \\
&= q_0. 
\end{aligned}
\end{equation}
Note that by Proposition \ref{prop-properties-resolvent}(ii) the operator $\mathsf{R}$ is positive and bounded from below. Therefore the denominator in \eqref{eq-constant-major-solution} is strictly positive and the constant $\eta$ is well-defined.

\textbf{Step 3.} We verify that $\nu^{0,*}$ is in $L^2([0,T])$. From Lemma \ref{lemma-chi-continuous} it follows that $(\mathsf{S}\bar{\mu})(t)$ is continuous on $[0,T]$. Proposition \ref{prop-properties-resolvent}(iii) and \eqref{eq-solution-major-proof} then imply that $\nu^{0,*}_{t}$ is continuous and therefore bounded on $[0,T]$. 
This concludes the proof. 
\end{proof}

\section{Proofs of Lemma \ref{lemma-adjoint-operator} and Propositions \ref{prop-G-adjoint-compact} and \ref{prop-properties-resolvent}}
\label{sec-integral-operators}

\begin{proof}[Proof of Lemma \ref{lemma-adjoint-operator}]
Recall that kernels $\mathcal{K}$ and $\mathcal{K}_{1}$ were defined in \eqref{eq-kernel-minor} and \eqref{eq-kernel-K1} respectively. Let $\psi$ be in $L^{2}([0,T])$. From \eqref{eq-definition-adjoint-kernel} we get 
\begin{equation} \label{k-1-st} 
\mathcal{K}^{*}_{1}(t,s) =  \mathcal{K}(t,s)\mathbbm{1}_{\{t\leq s\}},  \quad \textrm{for all } 0\leq t,s \leq T,
\end{equation}
which verifies \eqref{eq-adjoint-k1-operator}.  From Lemma \ref{lemma-kernel-L2} we get that $\mathcal{K}$ is in $L^{2}([0,T]^{2})$. Then by \eqref{eq-kernel-K1} and \eqref{k-1-st} also $\mathcal{K}_{1}^*$ and $\mathcal{K}_{1}$ are in $L^{2}([0,T]^{2})$, therefore, $\mathsf{K}_{1}$ and $\mathsf{K}_{1}^*$ are in $B(L^{2}([0,T]))$.  
\end{proof}
The following lemma describes several properties of the kernel $\mathcal{G}$ in \eqref{eq-kernel-caligraphic} which are essential to the proof of Proposition \ref{prop-properties-resolvent}. 
\begin{lemma}
\label{prop-kernel-caligraphic-properties}
Let $\mathcal{G}$ as in \eqref{eq-kernel-caligraphic}. Then, $\mathcal{G}$ is symmetric and jointly continuous on $[0,T]^{2}$. Moreover, $\mathcal{G}$ is in $L^{2}([0,T]^{2})$.
\end{lemma}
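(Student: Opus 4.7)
The plan is to exploit the product structure $\mathcal{K}(u,t)=\xi_u^-\xi_t^+$ from \eqref{eq-def-xi-pm}--\eqref{eq-kernel-minor}, which lets us factor the integral in \eqref{eq-kernel-caligraphic} and reduce everything to continuity of $\xi^\pm$ on $[0,T]$.

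\textbf{Symmetry.} This is immediate: swapping $t$ and $s$ in \eqref{eq-kernel-caligraphic} preserves both the upper limit $t\wedge s$ and the integrand $\mathcal{K}(u,t)\mathcal{K}(u,s)$, so $\mathcal{G}(t,s)=\mathcal{G}(s,t)$.

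\textbf{Joint continuity.} The key observation is that the integrand factors as
\begin{equation*}
\mathcal{K}(u,t)\mathcal{K}(u,s)=(\xi_u^-)^2\,\xi_t^+\,\xi_s^+,
\end{equation*}
so that
\begin{equation*}
\mathcal{G}(t,s)=\xi_t^+\,\xi_s^+\int_0^{t\wedge s}(\xi_u^-)^2\,du.
\end{equation*}
By Proposition \ref{prop-results-ode}, $r^{1}$ is continuous on $[0,T]$, and hence $\xi^{\pm}$ are continuous on $[0,T]$ by \eqref{eq-def-xi-pm}. Continuity of the map $(t,s)\mapsto t\wedge s$ combined with continuity of the antiderivative $t\mapsto\int_0^{t}(\xi_u^-)^2du$ (itself continuous because $(\xi_u^-)^2$ is continuous, hence locally bounded) yields joint continuity of $(t,s)\mapsto\int_0^{t\wedge s}(\xi_u^-)^2du$. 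Multiplying by the jointly continuous factor $\xi_t^+\,\xi_s^+$ gives joint continuity of $\mathcal{G}$ on $[0,T]^{2}$.

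\textbf{Membership in $L^{2}([0,T]^{2})$.} A jointly continuous function on the compact set $[0,T]^{2}$ is bounded, so in particular square-integrable. (Alternatively, by Lemma \ref{lemma-kernel-L2}, $\mathcal{K}$ is bounded on $[0,T]^{2}$, say by a constant $C$, so $|\mathcal{G}(t,s)|\le C^{2}(t\wedge s)\le C^{2}T$, which is enough.)

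The only non-routine point is the joint continuity, and even that is essentially trivial once one notices the factorisation of $\mathcal{K}$; there is no serious obstacle in this proof.
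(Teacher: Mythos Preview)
Your proof is correct and follows essentially the same approach as the paper: both arguments exploit the factorisation $\mathcal{G}(t,s)=\xi_t^+\xi_s^+\int_0^{t\wedge s}(\xi_u^-)^2\,du$ and deduce joint continuity from the continuity of $\xi^\pm$ on $[0,T]$, with the $L^2$ membership then following from boundedness on the compact square.
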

\begin{proof} From \eqref{eq-kernel-caligraphic} it follows that $\mathcal{G}$ is symmetric.
Recall that $\xi^{+}$ and $\xi^{-}$ were defined as in \eqref{eq-def-xi-pm}. Note that $\xi^{+}$ and $\xi^{-}$ are continuous on $[0,T]$ and that $\xi^{-}$ belongs to $L^{2}([0,T])$. It follows that $ t \mapsto \int^{t}_{0} (\xi_{u}^{-})^{2}du$
is continuous on $[0,T]$. 
From \eqref{eq-kernel-minor} and  \eqref{eq-kernel-caligraphic} it follows that the kernel $\mathcal{G}$ can be rewritten as follows, 
\begin{equation*}
\mathcal{G}(t,s) = \xi_{t}^{+}\xi_{s}^{+}\int^{t\wedge s}_{0} (\xi_{u}^{-})^{2}du,  \quad \textrm{for all } 0\leq t,s \leq T. 
\end{equation*}
Therefore, $\mathcal{G}(t,s)$  is jointly continuous on $[0,T]^{2}$, hence it is in $L^{2}([0,T]^{2})$.
\end{proof}
The following lemma is needed for the proof of Proposition \ref{prop-G-adjoint-compact}.  
\begin{lemma}
\label{lemma-T1-zero} 
Let $\mathsf{K}_{1}$ be defined as in \eqref{eq-operator-T1-T2} and let $\psi\in L^{2}([0,T])$. If $(\mathsf{K}_{1}\psi)(t) =0$ for all $0\leq t \leq T$, then $\psi(t) =0$ a.e. on $[0,T]$.
\end{lemma}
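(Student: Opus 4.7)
The plan is to exploit the multiplicative (separable) structure of the kernel $\mathcal{K}$ from \eqref{eq-kernel-minor}, which reduces the operator equation to a pointwise statement about an indefinite integral.

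First I would unpack the definitions. By \eqref{eq-kernel-minor} and the definition of $\mathcal{K}_{1}$ in \eqref{eq-kernel-K1}, we have $\mathcal{K}_{1}(t,s) = \mathcal{K}(s,t)\mathbbm{1}_{\{s\leq t\}} = \xi_{s}^{-}\xi_{t}^{+}\mathbbm{1}_{\{s\leq t\}}$. Substituting this into \eqref{eq-operator-T1-T2} and pulling out the factor $\xi_{t}^{+}$, which does not depend on $s$, gives
\begin{equation*}
(\mathsf{K}_{1}\psi)(t) \;=\; \xi_{t}^{+}\int_{0}^{t}\xi_{s}^{-}\psi(s)\,ds, \qquad 0\leq t \leq T.
\end{equation*}
By \eqref{eq-def-xi-pm}, $\xi_{t}^{+}=\exp\!\left(\int_{0}^{t}r_{z}^{1}dz\right)>0$ for every $t\in[0,T]$, so the assumption $(\mathsf{K}_{1}\psi)(t)=0$ on $[0,T]$ is equivalent to
\begin{equation*}
F(t) \;:=\; \int_{0}^{t}\xi_{s}^{-}\psi(s)\,ds \;=\; 0 \qquad \text{for all } 0\leq t\leq T.
\end{equation*}

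Next, since $\xi_{s}^{-}$ is continuous on $[0,T]$ (hence bounded) and $\psi\in L^{2}([0,T])$, the Cauchy–Schwarz inequality yields $\xi^{-}\psi \in L^{1}([0,T])$. Therefore $F$ is absolutely continuous on $[0,T]$ with $F'(t)=\xi_{t}^{-}\psi(t)$ for a.e. $t\in[0,T]$ by the Lebesgue differentiation theorem. The identity $F\equiv 0$ forces $F'\equiv 0$ a.e., so $\xi_{t}^{-}\psi(t)=0$ for a.e. $t$. Since $\xi_{t}^{-}>0$ on $[0,T]$, dividing yields $\psi(t)=0$ a.e. on $[0,T]$, which is the desired conclusion.

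No real obstacle is expected: the argument is a direct unwinding of the separable structure, followed by the standard fact that an absolutely continuous function which vanishes identically has zero derivative almost everywhere. The only care needed is to note that the positivity of $\xi^{\pm}$ (and hence of $\mathcal{K}$) is inherited from Proposition \ref{prop-results-ode}, guaranteeing $r^{1}$ is well-defined and continuous on $[0,T]$, so that $\xi^{\pm}_{t}$ are bounded away from zero.
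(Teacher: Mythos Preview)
Your proposal is correct and follows essentially the same approach as the paper: factor the kernel as $\xi_t^{+}\xi_s^{-}$, use strict positivity of $\xi_t^{+}$ to reduce to $\int_0^t \xi_s^{-}\psi(s)\,ds\equiv 0$, and then apply the Lebesgue differentiation theorem together with $\xi_t^{-}>0$ to conclude $\psi=0$ a.e. The only cosmetic difference is that you spell out the absolute continuity of $F$ and the $L^1$ bound via Cauchy--Schwarz a bit more explicitly than the paper does.
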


\begin{proof} Let $\psi\in L^{2}([0,T])$ and assume that $(\mathsf{K}_{1}\psi)(t) =0$ for all $0\leq t \leq T$. From \eqref{eq-kernel-minor}, \eqref{eq-kernel-K1} and \eqref{eq-operator-T1-T2} we get
\begin{equation} \label{k-i} 
(\mathsf{K}_{1}\psi)(t) 
=\xi_{t}^{+}\left(\int^{t}_{0}\xi^{-}_{s} \psi(s)ds\right) =0, \quad \textrm{for all } 0\leq t \leq T. 
\end{equation}
From \eqref{eq-def-xi-pm} it follows that $\xi_{t}^{\pm}>0$ for all $0\leq t \leq T$, therefore 
\begin{equation}
\label{eq-integral-zero-psi}
\int^{t}_{0}\xi^{-}_{s} \psi(s)ds = 0, \quad \textrm{for all } 0\leq t \leq T. 
\end{equation}
Since $\xi_{t}^{\pm }$ are also continuous for all $t\in[0,T]$ and $\psi\in L^{2}([0,T])$, then $\xi^{-} \psi$ is integrable. Then from Lebesgue differentiation theorem and \eqref{eq-integral-zero-psi} we conclude that
\begin{equation*}
\xi^{-}_{t} \psi(t) =0, \quad \textrm{$dt$-a.e. on $[0,T]$}.
\end{equation*}
But again since $\xi_{t}^{-}>0$ on $[0,T]$ we get that 
\begin{equation*}
\psi(t) =0,  \quad \textrm{$dt$-a.e. on $[0,T]$},
\end{equation*}
which proves the result.
\end{proof}

\begin{proof}[Proof of Proposition \ref{prop-G-adjoint-compact}]
Let $\psi\in L^{2}([0,T])$. From \eqref{eq-kernel-caligraphic} we get
\begin{equation*}
\begin{aligned}
(\mathsf{G}\psi)(t) &= \int^{T}_{0}\mathcal{G}(t,u)  \psi(u)du \\
&=  \int^{T}_{0}\psi(u)\int^{t\wedge u}_{0} \mathcal{K}(s,t) \mathcal{K}(s,u)ds du \\ 
&= \int^{T}_{0}\int^{T}_{0} \mathcal{K}(s,t) \mathcal{K}(s,u) \psi(u)\mathbbm{1}_{\{s\leq t\}} \mathbbm{1}_{\{s\leq u\}}du ds. 
\end{aligned}
\end{equation*}
Together with \eqref{eq-kernel-K1} and \eqref{k-1-st} it follows that
\begin{equation}
\label{eq-G-integral-before-indicator}
\begin{aligned}
(\mathsf{G}\psi)(t) &= \int^{T}_{0}\mathcal{K}_{1}(t,s)\int^{T}_{0}  \mathcal{K}_{1}^{*}(s,u) \psi(u) du ds\\
&= (\mathsf{K}_{1}\mathsf{K}_{1}^{*}\psi)(t), \quad \textrm{for all } 0\leq t \leq T, 
\end{aligned}
\end{equation}
which proves \eqref{eq-G-rep-K}.
Since by Lemma \ref{lemma-adjoint-operator}, $\mathsf{K}_{1},\mathsf{K}_{1}^{*}\in B(L^{2}([0,T]))$, a straightforward application of the Cauchy-Schwartz inequality and \eqref{eq-G-integral-before-indicator} shows that  $\mathsf{G} \in B(L^{2}([0,T]))$. 
Note that \eqref{eq-G-integral-before-indicator} also implies that $\mathsf{G}$ is self-adjoint, that is
\begin{equation}
\begin{aligned}
 \mathsf{G}^{*}  &=( \mathsf{K}_{1}\mathsf{K}_{1}^{*})^{*} \\ 
&=(\mathsf{K}_{1}^{*})^{*} \mathsf{K}_{1}^{*} \\
&=\mathsf{G}.
\end{aligned}
\end{equation}
Next, using \eqref{eq-G-rep-integral} we prove that $\mathsf{G}$ is compact (see Definition \ref{def-compact-operator}). From Lemma \ref{prop-kernel-caligraphic-properties} it follows that $\mathcal{G}$ is in $L^{2}([0,T]^{2})$. Then the result follows from Theorem 3.4 in \cite{Porter1990}, which shows that any integral operator generated by a kernel in $L^{2}([0,T]^{2})$ is compact. 

Finally, we prove that $\mathsf{G}$ is a positive operator in the sense of Definition \ref{def-positive-operator}. We have shown that $\mathsf{G}=\mathsf{K}_{1}\mathsf{K}_{1}^{*}$. Since for any $\psi\in L^{2}([0,T])$ we have 
\bd
\langle \mathsf{G}\psi, \psi\rangle =\langle   \mathsf{K}_{1}\mathsf{K}_{1}^{*}\psi, \psi\rangle = ||\mathsf{K}_{1}\psi||^{2}\geq 0, 
\ed
it follows that $\mathsf{G}$ is non-negative. By Lemma \ref{lemma-T1-zero} we have $(\mathsf{K}_{1}\psi)(t)=0$ for all $t\in[0,T]$ only for $\psi =0 $ a.e. on $[0,T]$. Therefore, $\mathsf{G}$ is positive. 
\end{proof}

In the following we present a sequence of results which are essential to the proof of Proposition \ref{prop-properties-resolvent}. Using the results of Proposition \ref{prop-G-adjoint-compact} we are in a position to fully characterise the spectral properties of the integral operator $\mathsf{G}$.  
\begin{lemma}[Spectral Decomposition of $\mathsf{G}$]
\label{lemma-spectral-theorem-G}
Let $\mathsf{G}$ be defined as in \eqref{eq-G-rep-integral}. Then $\mathsf{G}$ has a sequence of positive eigenvalues $(\zeta_{n})_{n\geq 1}$ and a corresponding orthonormal sequence $(\psi_{n})_{n\geq 1 }$ of eigenfunctions in $L^{2}([0,T])$, such that for each $\varphi\in L^{2}([0,T])$, we have that
\begin{equation*}
\mathsf{G}\varphi = \sum_{n \geq 1}  \zeta_{n}\langle\varphi,\psi_{n}\rangle_{L^{2}}\psi_{n}.
\end{equation*} 
Moreover, for all $N\geq 1$, define $\mathsf{G}_{N},\mathsf{G}_{N}^{\text{abs}}\in B(L^{2}([0,T]))$ by 
\begin{equation}
\label{eq-GN-operator}
\begin{aligned}
\mathsf{G}_{N}\varphi &= \sum_{n=1}^{N} \zeta_{n}\langle\varphi,\psi_{n}\rangle_{L^{2}}\psi_{n}, \\
\mathsf{G}_{N}^{\text{abs}}\varphi &= \sum_{n=1}^{N}\left| \zeta_{n}\langle\varphi,\psi_{n}\rangle_{L^{2}}\psi_{n}\right|.
\end{aligned}
\end{equation}
Then, $\mathsf{G}_{N}\varphi$ converges uniformly to $\mathsf{G}\varphi$,
 \begin{equation}
\label{eq-GN-uniform-absolute-convergence}
\lim_{N\to\infty}\sup_{t\in[0,T]}\left|(\mathsf{G}_{N}\varphi)(t)-(\mathsf{G}\varphi)(t)\right| =0, 
\end{equation}
and $\mathsf{G}_{N}^{\text{abs}}\varphi$ is uniformly convergent, that is, there exists a function $\Phi\in L^{2}([0,T])$ such that
\be \label{abs-con} 
\lim_{N\to\infty}\sup_{t\in[0,T]}\left|(\mathsf{G}_{N}^{\text{abs}}\varphi)(t)-\Phi(t)\right| =0.
\ee
\end{lemma}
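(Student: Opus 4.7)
The plan is to combine the Hilbert-Schmidt spectral theorem with Mercer's theorem. By Proposition \ref{prop-G-adjoint-compact}, the operator $\mathsf{G}$ is compact, self-adjoint and positive on $L^{2}([0,T])$. The Hilbert-Schmidt theorem therefore yields a countable sequence of real eigenvalues $(\zeta_n)_{n\geq 1}$ (with $0$ as the only possible accumulation point) and a corresponding orthonormal sequence of eigenfunctions $(\psi_n)_{n\geq 1}$ in $L^{2}([0,T])$ such that
\begin{equation*}
\mathsf{G}\varphi = \sum_{n\geq 1}\zeta_n \langle \varphi, \psi_n\rangle_{L^{2}}\psi_n, \qquad \varphi \in L^{2}([0,T]),
\end{equation*}
the series converging in $L^{2}([0,T])$. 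Strict positivity of each $\zeta_n$ would follow from Proposition \ref{prop-G-adjoint-compact} and the identity $\zeta_n = \langle \mathsf{G}\psi_n, \psi_n\rangle_{L^{2}}$ applied to the normalized eigenfunction $\psi_n$, combined with Definition \ref{def-positive-operator}.

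To upgrade this $L^{2}$-convergence to the uniform convergence \eqref{eq-GN-uniform-absolute-convergence}, I would invoke Mercer's theorem. Its hypotheses are already in place: by Lemma \ref{prop-kernel-caligraphic-properties} the kernel $\mathcal{G}$ is symmetric and jointly continuous on $[0,T]^{2}$, and by Proposition \ref{prop-G-adjoint-compact} the associated integral operator is positive. Mercer's theorem then delivers continuous representatives of $(\psi_n)_{n\geq 1}$ together with the uniformly convergent diagonal expansion $\mathcal{G}(t,s) = \sum_{n\geq 1}\zeta_n \psi_n(t)\psi_n(s)$ on $[0,T]^{2}$. Substituting this expansion into $(\mathsf{G}\varphi)(t) = \int_{0}^{T}\mathcal{G}(t,s)\varphi(s)ds$ and estimating the tail via Cauchy-Schwarz in the variable $s$ would yield \eqref{eq-GN-uniform-absolute-convergence}. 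This is precisely the content of \cite[Theorem 4.27]{Porter1990}, the reference already invoked by the authors in Remark \ref{remark-absolute-convergence}.

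For the uniform absolute convergence \eqref{abs-con}, I would reproduce the standard double Cauchy-Schwarz argument from the proof of \cite[Theorem 4.27]{Porter1990}: estimate the partial tail by
\begin{equation*}
\sum_{n=N+1}^{M}\bigl|\zeta_n \langle \varphi, \psi_n\rangle_{L^{2}}\psi_n(t)\bigr| \leq \left(\sum_{n=N+1}^{M}\zeta_n \psi_n(t)^{2}\right)^{1/2}\left(\sum_{n=N+1}^{M}\zeta_n \langle \varphi, \psi_n\rangle_{L^{2}}^{2}\right)^{1/2}.
\end{equation*}
The first factor is controlled uniformly in $t$ by the uniform convergence of the diagonal Mercer expansion $\mathcal{G}(t,t) = \sum_{n\geq 1} \zeta_n \psi_n(t)^{2}$; the second factor is the tail of $\langle \mathsf{G}\varphi, \varphi\rangle_{L^{2}}$ and vanishes as $N\to \infty$ independently of $M$ by Bessel's inequality applied to the coefficients $\sqrt{\zeta_n}\langle \varphi, \psi_n\rangle_{L^{2}}$. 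Hence the series of absolute values is uniformly Cauchy on $[0,T]$ and converges uniformly to some continuous $\Phi$, which in particular lies in $L^{2}([0,T])$.

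The only step requiring real care is verifying that Mercer's theorem applies in the form needed; this reduces to the joint continuity, symmetry and positive-definiteness of $\mathcal{G}$, all of which have already been secured by Lemma \ref{prop-kernel-caligraphic-properties} and Proposition \ref{prop-G-adjoint-compact}. Once Mercer's theorem is in hand, the remaining work is a routine pair of tail estimates, and no new analytic difficulty arises beyond what has been prepared in the earlier sections of the paper.
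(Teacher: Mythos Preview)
Your proposal is correct and follows essentially the same route as the paper: both arguments combine Proposition \ref{prop-G-adjoint-compact} (compactness, self-adjointness, positivity of $\mathsf{G}$) with the spectral theorem for compact self-adjoint operators, then use the continuity and symmetry of $\mathcal{G}$ from Lemma \ref{prop-kernel-caligraphic-properties} to upgrade $L^{2}$-convergence to uniform (absolute) convergence via a Mercer-type result. The only cosmetic difference is that the paper cites \cite[Theorem 4.22]{Porter1990} directly for the uniform convergence statements, whereas you spell out the underlying Mercer/Cauchy--Schwarz tail estimate and attribute it to \cite[Theorem 4.27]{Porter1990}; the latter reference is actually the resolvent-equation theorem (restated in the paper as Proposition \ref{proposition-porter-theorem427}), so Theorem 4.22 is the more apt citation, but your argument itself is sound.
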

\begin{proof}
From Proposition \ref{prop-G-adjoint-compact} it follows that $\mathsf{G}$ is a self-adjoint compact operator in $B(L^{2}([0,T]))$. Therefore, from Theorem 4.15 of \cite{Porter1990} there is a sequence $(\zeta_{n})_{n\geq 1 }$ of non-zero eigenvalues of $\mathsf{G}$  and a corresponding orthonormal sequence $(\psi_{n})_{n \geq 1}$ of eigenfunctions in $L^{2}([0,T])$ such that $\mathsf{G}\psi = \sum_{n \geq 1}^{}\zeta_{n}\langle\psi,\psi_{n}\rangle_{L^{2}}\psi_{n}$. Moreover, the operator $\mathsf{G}_{N}$  converges to $\mathsf{G}$ in mean, i.e. $||\mathsf{G}-\mathsf{G}_{N}||\to 0$ as $N\to \infty$. From Proposition \ref{prop-G-adjoint-compact} it follows that $\mathsf{G}$ is positive and self-adjoint, hence from Lemma 6.1 in \cite{Porter1990} we get that all of its eigenvalues $(\zeta_{n})_{n\geq 1}$ are positive. Since by Proposition \ref{prop-kernel-caligraphic-properties} $\mathcal{G}$ is continuous and symmetric, then from Theorem 4.22 of \cite{Porter1990} it follows that $\mathsf{G}_{N}\psi$ and $\mathsf{G}_{N}^{\text{abs}}$ satisfy the convergence in \eqref{eq-GN-uniform-absolute-convergence} and \eqref{abs-con}. 
\end{proof} 

\begin{remark}
In Appendix \ref{sec-spectral-explicit} we provide an example for the spectral decomposition of $\mathsf{G}$ in Lemma \ref{lemma-spectral-theorem-G}. 
\end{remark}

\begin{lemma}
\label{lemma-operator-positive-bounded}
Let $\mathsf{G}$ be defined as in \eqref{eq-G-rep-integral}. Then, the operator $\mathsf{I} + \frac{\kappa_{1}\kappa_{0}}{2\lambda_{0}\lambda_{1}} \mathsf{G}$ is positive and bounded from below in the sense of Definition \ref{def-positive-operator}.
\end{lemma}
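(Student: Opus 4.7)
The plan is to reduce the claim directly to the positivity of $\mathsf{G}$ established in Proposition \ref{prop-G-adjoint-compact}. First I would verify that $\mathsf{I} + \frac{\kappa_{1}\kappa_{0}}{2\lambda_{0}\lambda_{1}} \mathsf{G}$ is self-adjoint: the identity $\mathsf{I}$ is trivially self-adjoint, $\mathsf{G}$ is self-adjoint by Proposition \ref{prop-G-adjoint-compact}, and the coefficient $\frac{\kappa_{1}\kappa_{0}}{2\lambda_{0}\lambda_{1}}$ is a positive real number, so the sum is self-adjoint (this is needed to even invoke Definition \ref{def-positive-operator}).

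Next, for any $\psi \in L^{2}([0,T])$, I would write
\begin{equation*}
\left\langle \left(\mathsf{I} + \tfrac{\kappa_{1}\kappa_{0}}{2\lambda_{0}\lambda_{1}} \mathsf{G}\right)\psi, \psi \right\rangle_{L^{2}} = \|\psi\|_{L^{2}}^{2} + \tfrac{\kappa_{1}\kappa_{0}}{2\lambda_{0}\lambda_{1}}\langle \mathsf{G}\psi,\psi\rangle_{L^{2}}.
\end{equation*}
By Proposition \ref{prop-G-adjoint-compact}, $\mathsf{G}$ is positive, so $\langle \mathsf{G}\psi,\psi\rangle_{L^{2}}\geq 0$, and since $\kappa_{0},\kappa_{1},\lambda_{0},\lambda_{1}>0$, the second term on the right-hand side is non-negative. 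Therefore
\begin{equation*}
\left\langle \left(\mathsf{I} + \tfrac{\kappa_{1}\kappa_{0}}{2\lambda_{0}\lambda_{1}} \mathsf{G}\right)\psi, \psi \right\rangle_{L^{2}} \geq \|\psi\|_{L^{2}}^{2}.
\end{equation*}
This single inequality yields both conclusions at once: the right-hand side is strictly positive whenever $\psi \neq 0$, giving positivity in the sense of Definition \ref{def-positive-operator}, and it also exhibits the lower bound constant $m=1$, giving boundedness from below.

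There is no real obstacle here; the argument is a direct consequence of the positivity of $\mathsf{G}$ plus the fact that the identity contributes $\|\psi\|_{L^{2}}^{2}$ to the quadratic form. The only step that deserves a brief comment is self-adjointness, which I would state explicitly before invoking Definition \ref{def-positive-operator}.
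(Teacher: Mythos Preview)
Your proposal is correct and follows essentially the same approach as the paper: expand the quadratic form, invoke the positivity of $\mathsf{G}$ from Proposition \ref{prop-G-adjoint-compact} together with the positivity of the constants, and conclude the inequality $\langle(\mathsf{I}+\tfrac{\kappa_1\kappa_0}{2\lambda_0\lambda_1}\mathsf{G})\psi,\psi\rangle_{L^2}\geq\|\psi\|_{L^2}^2$ with lower-bound constant $m=1$. The only difference is that you make the self-adjointness check explicit, which the paper leaves implicit.
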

\begin{proof}
Let $\psi\in L^{2}([0,T])$. Since $\mathsf{G}$ is positive by Proposition \ref{prop-G-adjoint-compact} and $\kappa_{0},\kappa_{1},\lambda_{0},\lambda_{1}>0$, we get that  
\begin{equation}
\left\langle\left(\mathsf{I} + \frac{\kappa_{1}\kappa_{0}}{2\lambda_{0}\lambda_{1}} \mathsf{G}\right)\psi,\psi\right\rangle_{L^{2}} \geq  ||\psi||^{2}_{L^{2}}. 
\end{equation}
Therefore, $\mathsf{I} + \frac{\kappa_{1}\kappa_{0}}{2\lambda_{0}\lambda_{1}} \mathsf{G}$ is positive and bounded from below.
\end{proof}

Recall that we assume that constants $\lambda_{0}, \lambda_{1}, \kappa_{1}$ and  $\kappa_{0}$ are strictly positive and  that we proved in Lemma \ref{lemma-spectral-theorem-G} that the eigenvalues of $\mathsf{G}$ are all positive. The following lemma is therefore an easy corollary. 
\begin{lemma}
\label{lemma-eigenvalue-integral-operator}
Let $\zeta^{*}=-\frac{2\lambda_{0}\lambda_{1}}{\kappa_{1}\kappa_{0}}$ and let $\mathsf{G}$ be defined as in \eqref{eq-G-rep-integral}. Then, $\zeta^{*}$ is not an eigenvalue of the integral operator $\mathsf{G}$.
\end{lemma}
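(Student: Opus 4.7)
The proof plan is essentially immediate once the spectral structure of $\mathsf{G}$ has been established. I would proceed in two short steps.

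First, I would observe that since the parameters $\lambda_0, \lambda_1, \kappa_0, \kappa_1$ are all strictly positive (as stipulated after \eqref{eq-permanent-impact} and \eqref{eq-execution-price-major}--\eqref{eq-execution-price-minor}), the number $\zeta^{*} = -\frac{2\lambda_0\lambda_1}{\kappa_1\kappa_0}$ is strictly negative.

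Second, I would invoke Lemma \ref{lemma-spectral-theorem-G}, which furnishes the spectral decomposition of $\mathsf{G}$ and in particular asserts that the sequence of eigenvalues $(\zeta_n)_{n\geq 1}$ of $\mathsf{G}$ consists of strictly positive numbers. (This positivity is itself a consequence of Proposition \ref{prop-G-adjoint-compact}, where $\mathsf{G}$ is shown to be a positive, self-adjoint, compact operator in $B(L^2([0,T]))$, combined with the standard fact, e.g.\ \cite[Lemma 6.1]{Porter1990}, that all eigenvalues of a positive self-adjoint operator are positive.) Hence $\zeta^{*}<0$ cannot coincide with any $\zeta_n$, which proves the claim.

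There is no real obstacle here: the statement is essentially a corollary of Lemma \ref{lemma-spectral-theorem-G}, and the proof amounts to matching a sign. The only subtlety worth flagging in the write-up is to make explicit the chain of references (positivity of parameters $\Rightarrow$ $\zeta^{*}<0$; Proposition \ref{prop-G-adjoint-compact} $\Rightarrow$ $\mathsf{G}$ positive self-adjoint; Lemma \ref{lemma-spectral-theorem-G} $\Rightarrow$ all eigenvalues of $\mathsf{G}$ strictly positive), so that the reader sees why a negative number is automatically excluded from the point spectrum.
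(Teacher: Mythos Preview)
Your proposal is correct and matches the paper's approach exactly: the paper states just before the lemma that since the constants $\lambda_0,\lambda_1,\kappa_0,\kappa_1$ are strictly positive and the eigenvalues of $\mathsf{G}$ are all positive by Lemma~\ref{lemma-spectral-theorem-G}, the result is an easy corollary. Your write-up simply makes this chain of implications explicit.
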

We recall Theorem 4.27 of \cite{Porter1990} which will be useful in the proof of Proposition \ref{prop-properties-resolvent}.
\begin{proposition}[{\cite[Theorem 4.27]{Porter1990}}]
\label{proposition-porter-theorem427}
Let $f\in L^{2}([0,T])$. Suppose that $\mathcal{T}:[0,T]^{2}\to\mathbb{R}$ is a continuous symmetric kernel and that $\mathsf{T}$ is the integral operator generated by $\mathcal{T}$. Let $(\mu_{n})_{n\geq 1}^{}$ and $(\varphi_{n})_{n\geq 1}^{}$ be the sequence of eigenvalues and eigenfunctions of the operator $\mathsf{T}$. Moreover, suppose that $\frac{1}{\lambda}$ is not an eigenvalue of $\mathsf{T}$. Then the unique solution to the integral equation
\begin{equation}
\label{eq-prop-integral-porter}
\psi(t)- \lambda\int^{T}_{0}\mathcal{T}(t,s)\psi(s)dt = f(t), \quad t\in[0,T],
\end{equation}
is given by
\begin{equation*}
\psi(t) = f(t) + \int^{T}_{0}{\wt{\mathcal{R}}}(t,s)f(s)ds, \quad t\in[0,T],
\end{equation*}
where 
\begin{equation*}
{\wt{\mathcal{R}}}(t,s) = \lambda \mathcal{T}(t,s) + \sum_{n \geq 1}^{}\frac{\lambda\mu_{n}}{1-\lambda\mu_{n}}\varphi_{n}(t)\varphi_{n}(s), 
\end{equation*}
is jointly continuous on $[0,T]^{2}$.
\end{proposition}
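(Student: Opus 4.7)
The plan is to prove this Fredholm-alternative statement via the spectral decomposition of the compact self-adjoint operator $\mathsf{T}$. Since $\mathcal{T}$ is continuous and symmetric on the compact set $[0,T]^{2}$, it lies in $L^{2}([0,T]^{2})$, so the integral operator $\mathsf{T}$ is Hilbert--Schmidt (hence compact) and self-adjoint. Invoking the classical spectral theorem for such operators, one obtains an orthonormal sequence of eigenfunctions $(\varphi_{n})_{n\geq 1}$ together with the stated real eigenvalues $(\mu_{n})_{n\geq 1}$ such that $\mathsf{T}\psi = \sum_{n}\mu_{n}\langle\psi,\varphi_{n}\rangle_{L^{2}}\varphi_{n}$ in $L^{2}([0,T])$.

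The next step is to decompose $L^{2}([0,T]) = \ker(\mathsf{T}) \oplus \overline{\mathrm{span}\{\varphi_{n}\}}$ and expand $\psi = \psi_{0} + \sum_{n}\beta_{n}\varphi_{n}$ and $f = f_{0} + \sum_{n}\alpha_{n}\varphi_{n}$ with $\alpha_{n} = \langle f,\varphi_{n}\rangle_{L^{2}}$. Substituting into $(\mathsf{I}-\lambda\mathsf{T})\psi = f$ and equating projections on each one-dimensional eigenspace and on $\ker(\mathsf{T})$ yields $\psi_{0} = f_{0}$ together with the scalar equations $(1-\lambda\mu_{n})\beta_{n} = \alpha_{n}$. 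Because $1/\lambda$ is not an eigenvalue of $\mathsf{T}$, one has $1-\lambda\mu_{n}\neq 0$ for every $n$, so $\beta_{n} = \alpha_{n}/(1-\lambda\mu_{n})$, which establishes both existence and uniqueness of $\psi\in L^{2}([0,T])$ and delivers the identity
\begin{equation*}
\psi(t) = f(t) + \sum_{n}\frac{\lambda\mu_{n}}{1-\lambda\mu_{n}}\langle f,\varphi_{n}\rangle_{L^{2}}\varphi_{n}(t).
\end{equation*}

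To rewrite this correction as an integral operator, I would swap sum and integral (justified by Parseval's identity together with the square summability of the coefficient sequence) to read off a bilinear-series candidate for $\wt{\mathcal{R}}$. Using the elementary identity $\frac{\lambda\mu_{n}}{1-\lambda\mu_{n}} = \lambda\mu_{n} + \frac{(\lambda\mu_{n})^{2}}{1-\lambda\mu_{n}}$ term-by-term, together with the Mercer-type expansion $\mathcal{T}(t,s) = \sum_{n}\mu_{n}\varphi_{n}(t)\varphi_{n}(s)$, one can then recast the kernel in the advertised form of the statement, separating a smooth leading piece proportional to $\lambda\mathcal{T}$ from a residual spectral series whose coefficients decay a factor of $\mu_{n}$ faster than the Mercer coefficients of $\mathcal{T}$.

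The most delicate step is the joint continuity of $\wt{\mathcal{R}}$ on $[0,T]^{2}$: the leading piece $\lambda\mathcal{T}$ is continuous by hypothesis, and for the residual series I would combine the $O(\mu_{n}^{2})$ decay of its coefficients with the Hilbert--Schmidt summability $\sum_{n}\mu_{n}^{2} < \infty$ and a Cauchy-in-sup-norm argument to obtain uniform and uniform-absolute convergence of the partial sums, so that continuity transfers to the limit. The main obstacle I anticipate is that Mercer's classical theorem requires a positive-definite kernel, whereas here only continuous symmetry is postulated; I would circumvent this by splitting $\mathsf{T} = \mathsf{T}^{+} - \mathsf{T}^{-}$ into its positive and negative spectral projections (using the sign of each $\mu_{n}$), applying Mercer separately to each piece, and reassembling the two uniformly convergent expansions to establish the continuity of $\wt{\mathcal{R}}$ on all of $[0,T]^{2}$.
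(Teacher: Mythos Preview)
The paper does not prove this proposition; it is quoted verbatim from \cite[Theorem~4.27]{Porter1990} and invoked as a black box in the proof of Proposition~\ref{prop-properties-resolvent}. Your outline is the standard route and essentially the one Porter--Stirling take: compactness and self-adjointness of $\mathsf{T}$, eigenfunction expansion of $\psi$ and $f$, solvability of the scalar equations $(1-\lambda\mu_n)\beta_n=\alpha_n$, and then the algebraic split $\frac{\lambda\mu_n}{1-\lambda\mu_n}=\lambda\mu_n+\frac{(\lambda\mu_n)^2}{1-\lambda\mu_n}$ to isolate $\lambda\mathcal{T}$ plus a residual series with $O(\mu_n^2)$ coefficients.

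There is one genuine gap, namely your proposed workaround for the continuity of $\wt{\mathcal{R}}$. Splitting $\mathsf{T}=\mathsf{T}^{+}-\mathsf{T}^{-}$ and applying Mercer to each piece is circular: Mercer's theorem requires the kernel to be continuous, but the kernels of $\mathsf{T}^{\pm}$ are defined only through the spectral series $\sum_{\pm\mu_n>0}\mu_n\varphi_n(t)\varphi_n(s)$, whose continuity is precisely what you are trying to establish. The argument Porter--Stirling actually use (and which your $O(\mu_n^2)$ observation almost reaches) avoids this: from $\mu_n\varphi_n(t)=\langle\mathcal{T}(t,\cdot),\varphi_n\rangle_{L^2}$ and Bessel's inequality one gets the \emph{pointwise} bound $\sum_n\mu_n^2\varphi_n(t)^2\le\int_0^T\mathcal{T}(t,s)^2\,ds$, uniformly in $t$ since $\mathcal{T}$ is continuous on a compact set. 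Cauchy--Schwarz then controls the tails of the residual series uniformly, and Dini's theorem upgrades pointwise to uniform convergence. Equivalently, you can apply Mercer directly to the iterated kernel $\mathcal{T}_2(t,s)=\int_0^T\mathcal{T}(t,u)\mathcal{T}(u,s)\,du$, which is automatically continuous and positive-definite with eigenvalues $\mu_n^2$, and dominate your residual series by its Mercer expansion.
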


We are now ready to prove Proposition \ref{prop-properties-resolvent}.
\begin{proof}[Proof of Proposition \ref{prop-properties-resolvent}]  (i)  Note that the operator $\mathsf{G}$ and the corresponding kernel $\mathcal{G}$ satisfy the assumptions of Proposition \ref{proposition-porter-theorem427}.  Specifically, from Lemma \ref{lemma-eigenvalue-integral-operator} it follows that $-\frac{2\lambda_{0}\lambda_{1}}{\kappa_{1}\kappa_{0}}$ is not an eigenvalue of $\mathsf{G}$.  Moreover, in  Proposition \ref{prop-kernel-caligraphic-properties} we have shown that $\mathcal{G}$ is continuous and symmetric on $[0,T]^{2}$.  Therefore, we can apply the result of Proposition \ref{proposition-porter-theorem427} to the following integral equation,
\begin{equation}
\label{eq-integral-equation-proof}
\psi(t)+ \frac{\kappa_{1}\kappa_{0}}{2\lambda_{0}\lambda_{1}}\int^{T}_{0}\mathcal{G}(t,s)\psi(s)ds = f(t),\quad t\in[0,T],
\end{equation}
and determine that the unique solution to \eqref{eq-integral-equation-proof} is given by 
\begin{equation}
\label{eq-psi-continuous}
\psi(t) = f(t) + \int^{T}_{0}\mathcal{R}(t,s)f(s)ds, 
\end{equation}
with $\mathcal{R}$ as in \eqref{eq-resolvent-kernel-def}. Moreover, it follows from Proposition \ref{proposition-porter-theorem427} that the kernel $\mathcal{R}$ is jointly continuous on $[0,T]^{2}$. 

Next, we show that the inverse of $\mathsf{I} + \frac{\kappa_{1}\kappa_{0}}{2\lambda_{0}\lambda_{1}}\mathsf{G}$ is given by $\mathsf{R}$. Since by Lemma \ref{lemma-eigenvalue-integral-operator}, $-\frac{2\lambda_{0}\lambda_{1}}{\kappa_{1}\kappa_{0}}$ is not an eigenvalue of $\mathsf{G}$, the operator $\left(\mathsf{I} + \frac{\kappa_{1}\kappa_{0}}{2\lambda_{0}\lambda_{1}}\mathsf{G}\right)^{-1}$ exists. Let $\psi$ be the solution of \eqref{eq-integral-equation-proof}. Since $\mathsf{I} + \frac{\kappa_{1}\kappa_{0}}{2\lambda_{0}\lambda_{1}}\mathsf{G}$ is invertible it follows from \eqref{eq-integral-equation-proof} that $\psi$ can be written as follows, 
\begin{equation}
\label{eq-inverse-implicit}
\psi = \left(\mathsf{I} + \frac{\kappa_{1}\kappa_{0}}{2\lambda_{0}\lambda_{1}}\mathsf{G}\right)^{-1}f.
\end{equation}
On the other hand, from \eqref{eq-resolvent-operator} and \eqref{eq-psi-continuous} we have that
\begin{equation}
\label{eq-inverse-explicit}
\psi = \mathsf{R}f. 
\end{equation}
Therefore, by comparing \eqref{eq-inverse-implicit} and \eqref{eq-inverse-explicit} we find that
\begin{equation}
\label{eq-comparison-inverse}
\left(\mathsf{I} + \frac{\kappa_{1}\kappa_{0}}{2\lambda_{0}\lambda_{1}}\mathsf{G}\right)^{-1}f =  \mathsf{R}f.
\end{equation}
Since \eqref{eq-comparison-inverse} holds for any $f\in L^{2}([0,T])$, (i) follows.

(ii) From Proposition \ref{prop-G-adjoint-compact} it follows that $\mathsf{G}$ is a compact operator. Since $\mathsf{G}$ is a compact and the inverse of $\mathsf{I} + \frac{\kappa_{1}\kappa_{0}}{2\lambda_{0}\lambda_{1}}\mathsf{G}$ exists by (i), we get from the remark below the proof of Theorem 3.3 of \cite{Porter1990}, that the inverse of $\mathsf{I} + \frac{\kappa_{1}\kappa_{0}}{2\lambda_{0}\lambda_{1}}\mathsf{G}$ is in $B(L^{2}([0,T]))$. From \eqref{eq-R-inverse-operator} it follows that  $\mathsf {R}$ is also in $B(L^{2}([0,T]))$.
Recall that Lemma \ref{lemma-operator-positive-bounded} shows that $\mathsf{I} + \frac{\kappa_{1}\kappa_{0}}{2\lambda_{0}\lambda_{1}} \mathsf{G}$ is positive and bounded from below, hence from Lemma 6.2 of \cite{Porter1990} it follows that its inverse is positive and bounded from below. From \eqref{eq-R-inverse-operator} we conclude that $\mathsf{R}$ is positive and bounded from below in the sense of Definition \ref{def-positive-operator}.

(iii) Assume that $f\in C([0,T])$ then (iii) follows from \eqref{eq-inverse-explicit} and since $\mathcal{R}$ is jointly continuous on $[0,T]^{2}$ by (i). 
\end{proof}

\section{Proof of Lemma \ref{lemma-alternative-major}}
\label{sec-proof-alternative-major}

Throughout this section we assume Assumption \ref{ass-alpha-lambda} such that minor agent's optimal control $\nu^{1,*}$ is well-defined. Before proving Lemma \ref{lemma-alternative-major} we prove several intermediate results. \begin{lemma}
\label{lemma-expectation-r0}
Let $r^{0}$ be defined as in \eqref{eq-v0} and $\mathsf{K}_{1}$ be defined as in \eqref{eq-operator-T1-T2}. Then the following holds for any $\nu^{0}\in\mathcal{A}_{M}^{q_0}$, 
\begin{equation}
\label{eq-expectation-v0}
\mathbb{E}[r^{0}_{t}(\nu^{0})]  = \frac{1}{2\lambda_{1}}(\mathsf{K}^{*}_{1}\bar{\mu})(t) -\frac{\kappa_{0}}{2\lambda_{1}}(\mathsf{K}^{*}_{1}\nu^{0})(t), \quad \textrm{for all }0\leq t\leq T,
\end{equation}
 Moreover $(\mathbb{E}[r^{0}_t(\nu^{0})])_{t \in [0,T]}$ is in $L^{2}([0,T])$. 
\end{lemma}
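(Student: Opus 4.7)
The plan is to compute $\mathbb{E}[r^0_t(\nu^0)]$ directly from the definition \eqref{eq-v0} and match the result with the formula in \eqref{eq-adjoint-k1-operator}.

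First I would substitute the representation $dA_s = \mu_s ds$, valid under the assumption on $A$ made after \eqref{eq-major-functional}, to rewrite
\begin{equation*}
r^{0}_{t} = \frac{1}{2\lambda_{1}}\,\mathbb{E}_{t}\!\left[\int_{t}^{T}\mathcal{K}(t,s)\bigl(\mu_{s}-\kappa_{0}\nu^{0}_{s}\bigr)ds\right], \qquad 0\le t\le T.
\end{equation*}
Taking unconditional expectation and applying the tower property gives
\begin{equation*}
\mathbb{E}[r^{0}_{t}] = \frac{1}{2\lambda_{1}}\,\mathbb{E}\!\left[\int_{t}^{T}\mathcal{K}(t,s)\bigl(\mu_{s}-\kappa_{0}\nu^{0}_{s}\bigr)ds\right].
\end{equation*}
The next step is to interchange expectation and integration using Fubini's theorem. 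To justify this I would use Lemma \ref{lemma-kernel-L2} (which shows $\mathcal{K}$ is bounded on $[0,T]^{2}$) together with assumption \eqref{eq-mu-square-integrable} on $\mu$ and the fact that $\nu^{0}\in L^{2}([0,T])$ by \eqref{eq-major-admissible-set}; the Cauchy--Schwarz inequality then yields that $\mathcal{K}(t,\cdot)(\mu_{\cdot}-\kappa_{0}\nu^{0}_{\cdot})$ is integrable on $\Omega\times[t,T]$.

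Once Fubini is applied, and recalling that $\nu^{0}$ is deterministic so that $\mathbb{E}[\nu^{0}_{s}]=\nu^{0}_{s}$, we obtain
\begin{equation*}
\mathbb{E}[r^{0}_{t}] = \frac{1}{2\lambda_{1}}\int_{t}^{T}\mathcal{K}(t,s)\bar{\mu}_{s}\,ds - \frac{\kappa_{0}}{2\lambda_{1}}\int_{t}^{T}\mathcal{K}(t,s)\nu^{0}_{s}\,ds.
\end{equation*}
Comparing these integrals with the explicit form of $\mathsf{K}_{1}^{*}$ in \eqref{eq-adjoint-k1-operator}, namely $(\mathsf{K}_{1}^{*}\psi)(t)=\int_{t}^{T}\mathcal{K}(t,s)\psi(s)ds$, yields \eqref{eq-expectation-v0}.

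For the $L^{2}$ claim, I would invoke Lemma \ref{lemma-adjoint-operator} which says $\mathsf{K}_{1}^{*}\in B(L^{2}([0,T]))$. Since $\bar{\mu}\in L^{2}([0,T])$ by Jensen's inequality applied to \eqref{eq-mu-square-integrable}, and $\nu^{0}\in L^{2}([0,T])$ by \eqref{eq-major-admissible-set}, both $\mathsf{K}_{1}^{*}\bar{\mu}$ and $\mathsf{K}_{1}^{*}\nu^{0}$ are in $L^{2}([0,T])$, hence so is $\mathbb{E}[r^{0}_{\cdot}(\nu^{0})]$. I expect the main (though minor) obstacle to be a careful verification of the Fubini conditions; everything else is essentially a computation.
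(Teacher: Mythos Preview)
Your proposal is correct and follows essentially the same approach as the paper: apply the tower property to the definition \eqref{eq-v0}, use Fubini (justified via Lemma~\ref{lemma-kernel-L2} and \eqref{eq-mu-square-integrable}) to pull the expectation inside the integral, and then identify the result with $\mathsf{K}_{1}^{*}$ via \eqref{eq-adjoint-k1-operator}; the $L^{2}$ claim is handled identically through Lemma~\ref{lemma-adjoint-operator}.
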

\begin{proof}
Let $\nu^{0}\in\mathcal{A}_{M}^{q_0}$ and recall that $\mathcal{K}$ was defined in \eqref{eq-kernel-minor}.   
From Lemma \ref{lemma-kernel-L2}, \eqref{eq-mu-square-integrable} and \eqref{exp-s} it follows that the conditions of  Fubini's theorem are satisfied and we get that 
\begin{equation}
\label{eq-fubini-k-mu}
\mathbb{E}\left[\int^{T}_{t} \mathcal{K}(t,s)\mu_{s}ds\right] =\int^{T}_{t}  \mathcal{K}(t,s) \bar{\mu}_{s}ds, \quad \textrm{for all }0\leq t\leq T.
\end{equation}
Using \eqref{eq-v0}, \eqref{eq-fubini-k-mu} and the tower property we get
\begin{equation}
\label{eq-expectation-v0-proof}
\begin{aligned}
\mathbb{E}[r^{0}_{t}(\nu^{0})] &= \frac{1}{2\lambda_{1}}\mathbb{E}\left[\mathbb{E}_{t}\left[\int^{T}_{t} \mathcal{K}(t,s)(\mu_{s}-\kappa_{0}\nu^{0}_{s})ds\right]\right] \\
&=  \frac{1}{2\lambda_{1}}\int^{T}_{0} \mathcal{K}(t,s)(\bar{\mu}_{s}-\kappa_{0}\nu^{0}_{s})\mathbbm{1}_{\{t\leq s\}}ds. 
\end{aligned}
\end{equation}
Using the expression for $\mathsf{K}_{1}^{*}$ from \eqref{eq-adjoint-k1-operator} in \eqref{eq-expectation-v0-proof} we arrive at \eqref{eq-expectation-v0}. 

From Lemma \ref{lemma-adjoint-operator} it follows that the operators $\mathsf{K}_{1}$ and $\mathsf{K}_{1}^{*}$ are in $B(L^{2}([0,T]))$.  By assumption $\nu^{0},\bar{\mu}\in L^{2}([0,T])$, then from  \eqref{eq-expectation-v0} it follows that $(\mathbb{E}[r^{0}_t(\nu^{0})])_{t \in [0,T]}$ is in  $L^{2}([0,T])$.
\end{proof}

\begin{lemma}
\label{lemma-expectation-v1}
Let $\nu^{1,*}$ be defined as in \eqref{eq-minor-agent-optimal}. Then or any $\nu^{0}\in\mathcal{A}_{M}^{q_0}$ the following holds 
\begin{equation*}
\label{eq-expectation-nu1-star}
\mathbb{E}[\nu_{t}^{1,*}(\nu^{0})] =  \frac{\kappa_{0}}{2\lambda_{1}}[(\mathsf{K}^{*}_{1}\nu^{0})(t) +r^{1}_{t}(\mathsf{G}\nu^{0})(t) ] - \frac{1}{2\lambda_{1}}[(\mathsf{K}^{*}_{1}\bar{\mu})(t) +r^{1}_{t}(\mathsf{G}\bar{\mu})(t)], 
\end{equation*}
for all $0\leq t\leq T$. 
\end{lemma}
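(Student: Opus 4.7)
The plan is to start from the explicit feedback form of the minor agent's optimal control and take expectations, exploiting the fact that $r^{1}$ is a deterministic function (from Proposition~\ref{prop-results-ode}) while $r^{0}$ is the only stochastic ingredient. Using \eqref{eq-minor-agent-optimal} we get
\begin{equation*}
\mathbb{E}[\nu_{t}^{1,*}(\nu^{0})] \;=\; -\,\mathbb{E}[r^{0}_{t}(\nu^{0})] \;-\; r^{1}_{t}\int_{0}^{t}\mathcal{K}(s,t)\,\mathbb{E}[r^{0}_{s}(\nu^{0})]\,ds,
\end{equation*}
where the expectation can be passed inside the $ds$-integral by Fubini, justified by Lemma~\ref{lemma-expectation-r0} (which gives $(\mathbb{E}[r^{0}_{\cdot}])\in L^{2}([0,T])$) together with the boundedness of $\mathcal{K}$ from Lemma~\ref{lemma-kernel-L2}.

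The first term is immediate from Lemma~\ref{lemma-expectation-r0}: it contributes exactly $\tfrac{\kappa_{0}}{2\lambda_{1}}(\mathsf{K}^{*}_{1}\nu^{0})(t)-\tfrac{1}{2\lambda_{1}}(\mathsf{K}^{*}_{1}\bar{\mu})(t)$. The main step is therefore to show that
\begin{equation*}
\int_{0}^{t}\mathcal{K}(s,t)(\mathsf{K}^{*}_{1}\psi)(s)\,ds \;=\; (\mathsf{G}\psi)(t),\qquad \psi\in L^{2}([0,T]),
\end{equation*}
for $\psi\in\{\bar{\mu},\nu^{0}\}$. Plugging in the definition \eqref{eq-adjoint-k1-operator} of $\mathsf{K}^{*}_{1}$, the left-hand side becomes
\begin{equation*}
\int_{0}^{t}\mathcal{K}(s,t)\int_{s}^{T}\mathcal{K}(s,u)\psi(u)\,du\,ds,
\end{equation*}
and a Fubini interchange (valid since $\mathcal{K}$ is bounded on $[0,T]^{2}$ by Lemma~\ref{lemma-kernel-L2} and $\psi\in L^{2}$) rewrites this as
\begin{equation*}
\int_{0}^{T}\psi(u)\int_{0}^{t\wedge u}\mathcal{K}(s,t)\mathcal{K}(s,u)\,ds\,du \;=\; \int_{0}^{T}\psi(u)\,\mathcal{G}(t,u)\,du \;=\; (\mathsf{G}\psi)(t),
\end{equation*}
by the definition \eqref{eq-kernel-caligraphic} of $\mathcal{G}$ and \eqref{eq-G-rep-integral}.

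Applying this identity with $\psi=\bar{\mu}$ and $\psi=\nu^{0}$ in Lemma~\ref{lemma-expectation-r0} yields
\begin{equation*}
\int_{0}^{t}\mathcal{K}(s,t)\,\mathbb{E}[r^{0}_{s}(\nu^{0})]\,ds \;=\; \frac{1}{2\lambda_{1}}(\mathsf{G}\bar{\mu})(t) \;-\; \frac{\kappa_{0}}{2\lambda_{1}}(\mathsf{G}\nu^{0})(t),
\end{equation*}
and collecting the two contributions produces the stated expression. No essential obstacle arises; the only point requiring care is the justification of the two Fubini exchanges, which both follow from the $L^{2}$ bounds already established in Lemmas~\ref{lemma-kernel-L2} and~\ref{lemma-expectation-r0} and from $\nu^{0},\bar{\mu}\in L^{2}([0,T])$.
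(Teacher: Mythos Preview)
Your proof is correct and follows essentially the same approach as the paper: take expectations in \eqref{eq-minor-agent-optimal}, pass the expectation inside via Fubini, substitute Lemma~\ref{lemma-expectation-r0}, and identify $\int_{0}^{t}\mathcal{K}(s,t)(\mathsf{K}_{1}^{*}\psi)(s)\,ds=(\mathsf{G}\psi)(t)$. The only cosmetic difference is that the paper invokes the factorisation $\mathsf{G}=\mathsf{K}_{1}\mathsf{K}_{1}^{*}$ from Proposition~\ref{prop-G-adjoint-compact} for this last identity, whereas you re-derive it inline by a direct Fubini computation with the kernel $\mathcal{G}$.
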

\begin{proof}
Lemmas \ref{lemma-expectation-r0} and \ref{lemma-kernel-L2} prove that $\mathbb{E}\left[ r_{\cdot}^{0}(\nu^{0})\right]\in L^{2}([0,T])$ and $\mathcal{K}\in L^{2}([0,T]^{2})$, respectively. Then from \eqref{eq-minor-agent-optimal} and Fubini's theorem it follows that
\begin{equation*}
\label{eq-expectation-v1-before-sub}
\mathbb{E}[\nu_{t}^{1,*}(\nu^{0})] = - \mathbb{E}\left[r^{0}_{t}(\nu^{0})\right] -  r^{1}_{t}\int^{t}_{0}\mathcal{K}(s,t)\mathbb{E}\left[ r^{0}_{s} (\nu^{0})\right]ds.
\end{equation*}
Together with \eqref{eq-expectation-v0} we get
\begin{equation}
\label{eq-expectation-v1-sub}
\begin{aligned}
\mathbb{E}[\nu_{t}^{1,*}(\nu^{0})]  &= -\frac{1}{2\lambda_{1}}(\mathsf{K}_{1}^{*}\bar{\mu})(t) + \frac{\kappa_{0}}{2\lambda_{1}}(\mathsf{K}_{1}^{*}\nu^{0})(t)\\ &\quad-r^{1}_{t}\int^{t}_{0}\mathcal{K}(s,t) \left( \frac{1}{2\lambda_{1}}(\mathsf{K}_{1}^{*}\bar{\mu})(s)-\frac{\kappa_{0}}{2\lambda_{1}}(\mathsf{K}_{1}^{*}\nu^{0})(s)\right)ds.
\end{aligned}
\end{equation} 
By using \eqref{eq-kernel-K1}, \eqref{eq-operator-T1-T2} and \eqref{eq-G-rep-K} in \eqref{eq-expectation-v1-sub} we get the result. 
\end{proof}
The following lemma simply follows from \eqref{eq-major-admissible-set} and integration by parts hence we omit the proof.  
\begin{lemma}
\label{lemma-integral-identity-3}
Let $M$ be a martingale as in \eqref{ass:P}. Then for any $\nu^{0}\in\mathcal{A}_{M}^{q_0}$, we have 
\begin{equation*}
 \mathbb{E}\left[\int^{T}_{0} M_{t}\nu^{0}_{t}dt\right] = M_{0}q_0 \label{eq-F-identity}.
\end{equation*}
\end{lemma}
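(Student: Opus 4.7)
The plan is to exploit two structural features of the set-up: first, that $\nu^{0} \in \mathcal{A}_{M}^{q_0}$ is \emph{deterministic}, square-integrable, and satisfies the fuel constraint $\int_{0}^{T} \nu^{0}_{t}dt = q_{0}$ in \eqref{eq-major-admissible-set}; second, that $M$ is a square-integrable martingale by \eqref{ass:P}. Under these assumptions the result should follow from a single integration by parts against the major agent's deterministic inventory process $Q^{0,\nu^{0}}$ from \eqref{eq-major-inventory-def}, with the stochastic-integral remainder vanishing in expectation.

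First I would recall that, since $\nu^{0}$ is deterministic and in $L^{2}([0,T])$, the inventory $Q^{0,\nu^{0}}_{t} = q_{0} - \int_{0}^{t} \nu^{0}_{s}ds$ is a deterministic, continuous, finite-variation function on $[0,T]$, and by Cauchy-Schwarz it is uniformly bounded by $|q_{0}| + \sqrt{T}\,\|\nu^{0}\|_{L^{2}}$. Moreover $Q^{0,\nu^{0}}_{0} = q_{0}$ and $Q^{0,\nu^{0}}_{T} = 0$ by the fuel constraint.

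Second, I would apply the classical integration-by-parts formula to $M_{t}\,Q^{0,\nu^{0}}_{t}$ (no quadratic-covariation term appears, as $Q^{0,\nu^{0}}$ is continuous and of finite variation), to obtain
\begin{equation*}
\int_{0}^{T} M_{t}\nu^{0}_{t}\,dt \;=\; -\int_{0}^{T} M_{t}\,dQ^{0,\nu^{0}}_{t} \;=\; M_{0}q_{0} \;+\; \int_{0}^{T} Q^{0,\nu^{0}}_{t}\,dM_{t},
\end{equation*}
where we used $Q^{0,\nu^{0}}_{T} = 0$ and $Q^{0,\nu^{0}}_{0}=q_{0}$.

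Third, I would verify that the stochastic integral on the right-hand side is a true martingale with zero expectation. Since $Q^{0,\nu^{0}}$ is deterministic and uniformly bounded, and since $\mathbb{E}[\langle M\rangle_{T}] < \infty$ by \eqref{ass:P}, one has $\mathbb{E}\bigl[\int_{0}^{T} (Q^{0,\nu^{0}}_{t})^{2}\,d\langle M\rangle_{t}\bigr] \leq \|Q^{0,\nu^{0}}\|_{\infty}^{2}\,\mathbb{E}[\langle M\rangle_{T}] < \infty$, so $\int_{0}^{\cdot} Q^{0,\nu^{0}}_{t}\,dM_{t}$ is a square-integrable martingale starting at zero. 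Taking expectations then gives $\mathbb{E}\bigl[\int_{0}^{T} M_{t}\nu^{0}_{t}dt\bigr] = M_{0}q_{0}$. The only thing to be mildly careful about is to justify that $M_{0}$ can be treated as a constant (coming out of the expectation); this is standard since $M$ is a martingale with $\mathbb{E}[M_{T}^{2}] = M_{0}^{2} + \mathbb{E}[\langle M\rangle_{T}] < \infty$, and if $\mathcal{F}_{0}$ is not $\mathbb{P}$-trivial one simply keeps $M_{0}$ inside the expectation throughout, which does not affect the argument. This step is routine rather than an obstacle; the whole proof is essentially a one-line integration by parts plus an integrability check.
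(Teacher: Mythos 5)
Your proof is correct and follows exactly the route the paper intends: the paper omits the proof with the remark that it ``simply follows from \eqref{eq-major-admissible-set} and integration by parts,'' and your argument spells out precisely that integration by parts against $Q^{0,\nu^{0}}$ together with the fuel constraint $Q^{0,\nu^{0}}_{T}=0$ and the martingale property of the stochastic-integral remainder. The integrability check showing $\int_{0}^{\cdot}Q^{0,\nu^{0}}_{t}\,dM_{t}$ is a square-integrable martingale is the right thing to verify, and your caveat about $M_{0}$ being treated as a constant is a fair observation on the paper's implicit assumption that $\mathcal{F}_{0}$ is trivial.
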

 
In the following proposition we derive an operator differential equation which is satisfied by the operator $\mathsf{G}$ in \eqref{eq-G-rep-integral}.  
\begin{proposition}
\label{proposition-operator-differential-eq}
For any  $\psi\in  L^{2}([0,T])$ the operator $\mathsf{G}$ satisfies the following differential equation, 
\begin{equation}
\label{eq-G-ODE}
\frac{d}{dt} (\mathsf{G}\psi)(t)  = r^{1}_{t} (\mathsf{G}\psi)(t) + (\mathsf{K}_{1}^{*}\psi)(t) , \quad (\mathsf{G}\psi)(0) = 0, \quad 0\leq t\leq T.
\end{equation}
Moreover, $(\mathsf{G}\psi)(t) $ is continuously differentiable on $[0,T]$.
\end{proposition}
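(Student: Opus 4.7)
The strategy is a direct calculation, exploiting the explicit product form of the kernel $\mathcal{G}$. As derived in the proof of Lemma \ref{prop-kernel-caligraphic-properties}, we have the factorisation
\begin{equation*}
\mathcal{G}(t,s) \;=\; \xi_t^{+}\xi_s^{+}\int_0^{t\wedge s}(\xi_u^{-})^{2}\,du,
\end{equation*}
so splitting the outer integral at $s=t$ gives the decomposition
\begin{equation*}
(\mathsf{G}\psi)(t) \;=\; \xi_t^{+} A(t) \;+\; \xi_t^{+}\,C(t)\,B(t),
\end{equation*}
where $A(t)=\int_0^{t}\xi_s^{+}\bigl(\int_0^s(\xi_u^{-})^{2}du\bigr)\psi(s)\,ds$, $B(t)=\int_t^{T}\xi_s^{+}\psi(s)\,ds$, and $C(t)=\int_0^{t}(\xi_u^{-})^{2}du$. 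I would work with this representation throughout, since it reduces the problem to one-dimensional differentiation identities for $\xi^{\pm}$.

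The main step is differentiating the above expression in $t$. By Proposition \ref{prop-results-ode}, $r^{1}$ is continuous on $[0,T]$, so $\xi_t^{\pm}$ are $C^{1}$ with $\frac{d}{dt}\xi_t^{+}=r^{1}_t \xi_t^{+}$ and $\frac{d}{dt}\xi_t^{-}=-r^{1}_t\xi_t^{-}$. The functions $A$, $B$, $C$ are absolutely continuous with $A'(t)=\xi_t^{+}C(t)\psi(t)$, $B'(t)=-\xi_t^{+}\psi(t)$, $C'(t)=(\xi_t^{-})^{2}$ (for $\psi\in L^{2}([0,T])$ these hold at Lebesgue points of $\psi$, but the pieces below will be continuous). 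Applying the product rule, the two terms involving $\psi(t)$ combine as
\begin{equation*}
\xi_t^{+} A'(t) \;+\; \xi_t^{+} C(t) B'(t) \;=\; (\xi_t^{+})^{2} C(t) \psi(t) \;-\; (\xi_t^{+})^{2} C(t)\psi(t) \;=\; 0,
\end{equation*}
thanks to the identity $\xi_t^{+}\xi_t^{-}=1$ used implicitly. Collecting the remaining terms, the $r^{1}_t$-pieces rebuild $r^{1}_t(\mathsf{G}\psi)(t)$ by linearity, while the leftover $\xi_t^{+}C'(t)B(t)=\xi_t^{+}(\xi_t^{-})^{2}B(t)=\xi_t^{-}B(t)$ is precisely $(\mathsf{K}_1^{*}\psi)(t)$ by definition \eqref{eq-adjoint-k1-operator}. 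This yields the ODE \eqref{eq-G-ODE}.

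The initial condition $(\mathsf{G}\psi)(0)=0$ is immediate since $\mathcal{G}(0,s)=0$ (the integral from $0$ to $0\wedge s=0$ vanishes). Finally, for the continuous differentiability claim, I would observe that the right-hand side of \eqref{eq-G-ODE} is continuous: $r^{1}$ is continuous by Proposition \ref{prop-results-ode}; $(\mathsf{G}\psi)(t)$ is continuous since $\mathcal{G}$ is jointly continuous on $[0,T]^{2}$ (Lemma \ref{prop-kernel-caligraphic-properties}) and $\psi\in L^{2}([0,T])$, so dominated convergence applies; and $(\mathsf{K}_1^{*}\psi)(t)=\xi_t^{-}B(t)$ is continuous because $\xi^{-}$ is continuous and $B$ is continuous (again by dominated convergence, using boundedness of $\xi^{+}$ on $[0,T]$ and $\psi\in L^{1}([0,T])$). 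There is no real obstacle here; the only point requiring a touch of care is justifying pointwise differentiation of the two split integrals when $\psi$ is merely $L^{2}$, which is handled by noting that the boundary contributions exactly cancel, so the identity in fact holds pointwise everywhere on $[0,T]$ rather than only a.e.
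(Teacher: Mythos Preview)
Your proof is correct and reaches the same conclusion by a closely related but slightly different route. The paper exploits the operator factorisation $\mathsf{G}=\mathsf{K}_1\mathsf{K}_1^{*}$ established in Proposition~\ref{prop-G-adjoint-compact}, which together with \eqref{k-i} gives the single representation
\[
(\mathsf{G}\psi)(t)=\xi_t^{+}\int_0^{t}\xi_s^{-}(\mathsf{K}_1^{*}\psi)(s)\,ds.
\]
Since $(\mathsf{K}_1^{*}\psi)$ is already continuous, the integrand here is continuous, so one product-rule differentiation yields \eqref{eq-G-ODE} directly with no cancellation step and the $C^{1}$ regularity is immediate. Your approach instead splits the outer integral at $s=t$ and works with the kernel $\mathcal{G}$ itself; this produces two pieces whose individual derivatives involve the pointwise value $\psi(t)$ (only defined a.e.), and you then correctly observe that these boundary contributions cancel, leaving a continuous right-hand side. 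Both arguments are equally valid; the paper's version avoids the delicacy about a.e.\ differentiation by packaging the $\psi$-dependence inside the continuous function $\mathsf{K}_1^{*}\psi$ from the outset, while yours is perhaps more elementary in that it does not rely on the operator identity $\mathsf{G}=\mathsf{K}_1\mathsf{K}_1^{*}$.
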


\begin{proof} 
Let $\psi\in L^{2}([0,T])$. From Proposition \ref{prop-kernel-caligraphic-properties} it follows that $\mathcal{G}$ is jointly continuous on $[0,T]^{2}$ hence by \eqref{eq-G-rep-integral} we get that $(\mathsf{G}\psi)(t)$ is continuous on $[0,T]$.  

Note that by \eqref{eq-def-xi-pm} we have
\begin{equation}
\label{eq-ode-xi-plus}
\frac{d\xi^{\pm}}{dt}=\pm r^{1}_{t}\xi_{t}^{\pm}, \quad \textrm{for all } 0\leq t\leq T. 
\end{equation}
Since by Proposition \ref{prop-results-ode} $ r^{1}$ is continuous over $[0,T]$, it follows from \eqref{eq-ode-xi-plus} that $\xi^{\pm}$ are continuously differentiable on $[0,T]$. From \eqref{k-i}, \eqref{eq-G-integral-before-indicator} and \eqref{eq-ode-xi-plus} we get that 
\begin{equation*}
\begin{aligned}
\frac{d}{dt}(\mathsf{G}\psi)(t)  &= \frac{d}{dt}\left(\xi_{t}^{+}\int^{t}_{0}\xi_{s}^{-}(\mathsf{K}_{1}^{*}\psi)(s)ds\right) \\
&=r^{1}_{t}\left(\xi_{t}^{+}\int^{t}_{0}\xi_{s}^{-}(\mathsf{K}_{1}^{*}\psi)(s)ds\right) + (\mathsf{K}_{1}^{*}\psi)(t)\\
&=r^{1}_{t}(\mathsf{G}\psi)(t) + (\mathsf{K}_{1}^{*}\psi)(t). 
\end{aligned}
\end{equation*}
Since by \eqref{eq-G-rep-K} that the operator $\mathsf{G}$ can be represented in terms of $\mathsf{K}_{1}$ and $\mathsf{K}^{*}_{1}$ and Lemma \ref{lemma-kernel-L2} shows that $\mathcal{K}$ is jointly continuous on $[0,T]^{2}$, it follows from \eqref{eq-operator-T1-T2} and \eqref{eq-adjoint-k1-operator} that $(\mathsf{G}\psi)(t)$ and $(\mathsf{K}_{1}^{*}\psi)(t)$ are continuous on $[0,T]$. Since we have show that $r^{1}_{t}$ is also continuous, it follows that $\frac{d}{dt}(\mathsf{G}\psi)$ is continuous over $[0,T]$.  Finally, note that
\begin{equation*}
\label{eq-K1-zero-psi}
(\mathsf{K}_{1}\psi)(0) = \int^{T}_{0}\mathcal{K}(s,0)\mathbbm{1}_{\{s\leq 0\}}\psi(s)ds  =0 .
\end{equation*}
From \eqref{eq-G-rep-K} we have $(\mathsf{G}\psi)(t)= (\mathsf{K}_{1}\mathsf{K}_{1}^{*}\psi)(t)$. This proves that $(\mathsf{G}\psi)(0) = 0$ and completes the proof. 
\end{proof}

Now we are ready to prove Lemma \ref{lemma-alternative-major}. 
\begin{proof}[Proof of Lemma  \ref{lemma-alternative-major}]
Let $\nu^{0}\in\mathcal{A}_{M}^{q_0}$. Recall that minor agent's strategy is assumed to be $\nu^{1,*}$ in \eqref{eq-minor-agent-optimal}. We define 
\begin{equation}
\label{eq-definition-z-process}
Z^{\nu}_{t} =Y_{t}^{\nu} - \int^{t}_{0}\mu_{s}ds,\quad 0\leq t \leq T.
\end{equation}
Note that from \eqref{eq-permanent-impact} it follows that $Z^{\nu}_{0} =0$.

Using \eqref{eq-major-cash} and \eqref{eq-execution-price-major} we get 
\begin{equation*}
 \mathbb{E}\left[X^{0,\nu^{0}}_{T}\right] = x_{0} + \mathbb{E}\left[\int^{T}_{0}(P_{t}^{\nu}-\lambda_{0}\nu^{0}_{t})\nu^{0}_{t}dt\right] .
\end{equation*}
Together with  \eqref{eq-P-nu-definition}, \eqref{eq-major-inventory-def} and \eqref{eq-definition-z-process} we arrive at
\begin{equation}
\label{eq-pre-major-cash-expectation-2}
\begin{aligned}
\mathbb{E}\left[X^{0,\nu^{0}}_{T}\right] &= x_{0} +\mathbb{E}\left[\int^{T}_{0}M_{t}\nu^{0}_{t}dt\right] +\mathbb{E}\left[\int^{T}_{0}Z_{t}^{\nu}dQ^{0,\nu^{0}}_{t}\right]-\int^{T}_{0}\lambda_{0}(\nu^{0}_{t})^{2}dt.
\end{aligned}
\end{equation}
Recall that $Z_{0}^{\nu}=0$ and $Q^{0,\nu^{0}}_{T}=0$. Using integration by parts, \eqref{eq-permanent-impact}, \eqref{eq-definition-z-process} and Fubini's theorem we obtain 
\begin{equation}
\label{eq-It\^o-major-impact}
\begin{aligned}
\mathbb{E}\left[\int^{T}_{0}Z_{t}^{\nu}dQ^{0,\nu^{0}}_{t}\right] &=-\mathbb{E}\left[\int^{T}_{0}Q^{0,\nu^{0}}_{t}dZ_{t}^{\nu}\right] \\
&= -\int^{T}_{0}Q^{0,\nu}_{t}\left(\kappa_{0}\nu^{0}_{t} + \kappa_{1}\mathbb{E}\left[\nu^{1,*}_{t}(\nu^{0})\right]-\bar{\mu}_{t}\right)dt.
\end{aligned}
\end{equation}
Moreover, it follows from \eqref{eq-major-inventory-def} that 
\begin{equation}
\label{eq-identity-q0-squared}
\int^{T}_{0}Q^{0,\nu^{0}}_{t}\nu^{0}_{t} dt =\frac{q_0^{2}}{2}.
\end{equation}
By substituting \eqref{eq-F-identity}, \eqref{eq-It\^o-major-impact} and \eqref{eq-identity-q0-squared} into \eqref{eq-pre-major-cash-expectation-2} we get
\begin{equation}
\label{eq-expected-cash-major}
 \mathbb{E}\left[X^{0,\nu^{0}}_{T}\right]  
=x_{0} + M_{0}q_0  -\kappa_{0}\frac{q_0^{2}}{2} - \int^{T}_{0}Q^{0,\nu^{0}}_{t}(\kappa_{1}\mathbb{E}[\nu^{1,*}_{t}(\nu^{0})]-\bar{\mu}_{t})dt   -\int^{T}_{0}\lambda_{0}(\nu^{0}_{t})^{2}dt .
 \end{equation}
Notice that from Proposition \ref{proposition-operator-differential-eq} and Lemma \ref{lemma-expectation-v1} we have \begin{equation}
\label{eq-expectation-nu1-star-derivative}
\mathbb{E}[\nu_{t}^{1,*}(\nu^{0})] = \frac{\kappa_{0}}{2\lambda_{1}}\frac{d}{dt}(\mathsf{G}\nu^{0})(t) - \frac{1}{2\lambda_{1}}\frac{d}{dt}(\mathsf{G}\bar{\mu})(t),\quad \textrm{for all } 0\leq t\leq T.
\end{equation}
Plugging in \eqref{eq-expectation-nu1-star-derivative} into \eqref{eq-expected-cash-major} gives
\begin{equation}
\label{eq-expected-cash-major-2}
\begin{aligned}
\mathbb{E}\left[X^{0,\nu^{0}}_{T}\right]   
&=x_{0} + M_{0}q_0  -\kappa_{0}\frac{q_0^{2}}{2}- \frac{\kappa_{1}\kappa_{0}}{2\lambda_{1}}\int^{T}_{0}Q^{0,\nu^{0}}_{t}\frac{d}{dt}(\mathsf{G}\nu^{0})(t)   dt   \\ &\quad+\int^{T}_{0}Q^{0,\nu^{0}}_{t}\left(\frac{\kappa_{1}}{2\lambda_{1}}\frac{d}{dt}(\mathsf{G}\bar{\mu})(t)  +  \bar{\mu}_{t}\right)dt -\lambda_{0}\int^{T}_{0}(\nu^{0}_{t})^{2}dt .
\end{aligned}
\end{equation}
Since by \eqref{eq-mu-square-integrable} and \eqref{exp-s}, $\bar{\mu}\in L^{2}([0,T])$ and by \eqref{eq-major-admissible-set} also $\nu^{0}\in L^{2}([0,T])$, we get that from Proposition \ref{proposition-operator-differential-eq} that $(\mathsf{G}\bar{\mu})(0)=(\mathsf{G}\nu^{0})(0)=0$. Then by additional integration by parts and recalling that $Q^{0,\nu^{0}}_{T}=0$ we get
\begin{equation}
\label{eq-identity-derivative-G-Q}
 \int^{T}_{0} Q^{0,\nu^{0}}_{t}\frac{d}{dt}(\mathsf{G}\nu^{0})(t) dt = \int^{T}_{0} \nu^{0}_{t} (\mathsf{G}\nu^{0})(t) dt, 
 \end{equation}
as well as
\begin{equation}
\label{eq-identity-derivative-G-Q-mu}
 \int^{T}_{0} Q^{0,\nu^{0}}_{t}\frac{d}{dt}(\mathsf{G}\bar{\mu})(t) dt  =\int^{T}_{0} \nu^{0}_{t}(\mathsf{G}\bar{\mu})(t) dt .
\end{equation}
Hence, by plugging in \eqref{eq-identity-derivative-G-Q} and \eqref{eq-identity-derivative-G-Q-mu} into \eqref{eq-expected-cash-major-2} we obtain
\begin{equation}
\label{eq-expected-cash-major-3}
\begin{aligned}
\mathbb{E}\left[X^{0,\nu^{0}}_{T}\right]   &=x_{0} + M_{0}q_0  -\kappa_{0}\frac{q_0^{2}}{2}- \frac{\kappa_{1}\kappa_{0}}{2\lambda_{1}}\int^{T}_{0}\nu_{t}^{0}(\mathsf{G}\nu^{0})(t)  dt   \\ &\quad+\int^{T}_{0}\left(\frac{\kappa_{1}}{2\lambda_{1}}\nu^{0}_{t}(\mathsf{G}\bar{\mu})(t)  +  Q^{0,\nu^{0}}_{t}\bar{\mu}_{t}\right)dt -\lambda_{0}\int^{T}_{0}(\nu^{0}_{t})^{2}dt,
\end{aligned}
\end{equation}
which together \eqref{eq-major-functional}, proves the result.
\end{proof}

\section{Proofs of the Numerical Results in Section \ref{sec-numerics}}\label{proof-convergence-numerics}
Throughout this section we assume that Assumption \ref{ass-alpha-lambda} holds.  
Our first goal is to prove Proposition \ref{prop-convergence-Gn}, but before getting to the proof we introduce an auxiliary lemma. 
\begin{lemma}
\label{lemma-Gn-kernels-l2}
Let $(\mathcal{G}_{n})_{n \geq 1}^{}$ be defined as in \eqref{eq-convergence-kernel-series-G}. Then, $\mathcal{G}_{n}$ is in $L^{2}([0,T]^{2})$ for any $n\geq 1$.
\end{lemma}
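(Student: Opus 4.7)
The plan is to show directly that the $L^{2}([0,T]^{2})$ norm of each finite sum $\mathcal{G}_{n}$ is finite, using the orthonormality of $(a_{i})_{i=1}^{\infty}$ in $L^{2}([0,T])$ together with the fact that each function $b_{i}$ lies in $L^{2}([0,T])$.

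First I would observe that $b_{i} = \mathsf{G} a_{i}$, where $\mathsf{G}$ is the integral operator generated by $\mathcal{G}$ defined in \eqref{eq-G-rep-integral}. By Proposition \ref{prop-G-adjoint-compact}, $\mathsf{G} \in B(L^{2}([0,T]))$, so for each $i \geq 1$,
\begin{equation*}
\|b_{i}\|_{L^{2}} = \|\mathsf{G} a_{i}\|_{L^{2}} \leq \|\mathsf{G}\| \cdot \|a_{i}\|_{L^{2}} = \|\mathsf{G}\| < \infty,
\end{equation*}
where in the last equality we used that $(a_{i})_{i=1}^{\infty}$ is orthonormal in $L^{2}([0,T])$. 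Hence every $b_{i}$ belongs to $L^{2}([0,T])$.

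Next, fix $n \geq 1$ and use the definition \eqref{eq-convergence-kernel-series-G} together with Fubini's theorem and the orthonormality of $(a_{i})_{i=1}^{n}$ to compute
\begin{align*}
\int_{0}^{T}\!\!\int_{0}^{T} \mathcal{G}_{n}(t,s)^{2}\, dt\, ds
&= \sum_{i,j=1}^{n} \left(\int_{0}^{T} a_{i}(t) a_{j}(t)\, dt\right) \left(\int_{0}^{T} b_{i}(s) b_{j}(s)\, ds\right) \\
&= \sum_{i,j=1}^{n} \delta_{ij} \, \langle b_{i}, b_{j} \rangle_{L^{2}} = \sum_{i=1}^{n} \|b_{i}\|_{L^{2}}^{2} \leq n \, \|\mathsf{G}\|^{2} < \infty.
\end{align*}
This establishes that $\mathcal{G}_{n} \in L^{2}([0,T]^{2})$ for every $n \geq 1$. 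The argument is routine and presents no real obstacle; the only nontrivial input is the boundedness of $\mathsf{G}$, which has already been established in Proposition \ref{prop-G-adjoint-compact}.
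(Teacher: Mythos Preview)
Your proof is correct and follows essentially the same approach as the paper: both arguments use Proposition \ref{prop-G-adjoint-compact} to conclude that each $b_{i}=\mathsf{G}a_{i}$ lies in $L^{2}([0,T])$, and then bound the $L^{2}([0,T]^{2})$ norm of the finite sum $\mathcal{G}_{n}$. The only cosmetic difference is that you exploit the orthonormality of the $a_{i}$ to obtain the exact identity $\iint \mathcal{G}_{n}^{2}=\sum_{i=1}^{n}\|b_{i}\|_{L^{2}}^{2}$, whereas the paper uses the cruder Cauchy--Schwarz estimate $\iint \mathcal{G}_{n}^{2}\leq n\sum_{i=1}^{n}\|a_{i}\|_{L^{2}}^{2}\|b_{i}\|_{L^{2}}^{2}$; either suffices.
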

\begin{proof}
Recall that $\mathsf{G}$ was defined in \eqref{eq-G-rep-integral}. From Proposition \ref{prop-G-adjoint-compact} it follows that $\mathsf{G}$ is an operator in $B(L^{2}([0,T]))$. Recall $(a_{i})_{n\geq 1}$ is a complete orthonormal basis in $L^{2}([0,T])$, hence from \eqref{eq-bj-integral} we get that $(b_{i})_{n\geq 1}$ are in $L^{2}([0,T])$. We therefore get from \eqref{eq-convergence-kernel-series-G} that
\begin{equation}
\begin{aligned}
\int^{T}_{0}\int^{T}_{0}\mathcal{G}_{n}(t,s)^{2}dsdt&\leq n\sum_{i=1}^{n}\left(\int^{T}_{0} a^{2}_{i}(t)dt \right)\left(\int^{T}_{0}b_{i}^{2}(s)ds\right) \\
&< \infty, 
\end{aligned}
\end{equation}
and the result follows. 
 \end{proof}
\begin{proof}[Proof of Proposition \ref{prop-convergence-Gn}]
The result follows directly from Lemma \ref{lemma-Gn-kernels-l2} and \eqref{eq-convergence-degenerate-kernel}.
\end{proof}
Before we prove Proposition \ref{prop-degenerate-kernel-approximation} we introduce the following theorem from \cite{Atkinson1997}. 
\begin{theorem}[{\cite[Theorem 2.1.1]{Atkinson1997}}]
\label{thm-theorem-atkinson}
Let $\mathsf{G}$ be in $B(L^{2}([0,T]))$ and let $\lambda\in\mathbb{R}$. Assume that $\mathsf{I} -\lambda\mathsf{G}$ is invertible on $L^{2}([0,T])$. Furthermore, assume that $(\mathsf{G}_{n})_{n\geq 1}$ is a sequence of operators in $B(L^{2}([0,T]))$ with
\begin{equation*}
\lim_{n\to\infty} || \mathsf{G}-\mathsf{G}_{n}||=0.
\end{equation*}
Then the following holds: 
\begin{itemize}
\item[\bf{(i)}] there exists an $N\geq 1$  such that for all $n\geq N$ the operators $\left(\mathsf{I} -\lambda\mathsf{G}_{n}\right)^{-1}$ exists and are in $ B(L^{2}([0,T]))$;
\item [\bf{(iii)}] $\left(\mathsf{I} -\lambda\mathsf{G}_{n}\right)^{-1}$ converges to $\left(\mathsf{I} -\lambda\mathsf{G}\right)^{-1}$ in $B(L^{2}([0,T]))$, that is 
\begin{equation*}
\lim_{n\to\infty} \left|\left|\left(\mathsf{I} -\lambda\mathsf{G}_{n}\right)^{-1}-\left(\mathsf{I} -\lambda\mathsf{G}\right)^{-1}\right|\right|=0;
\end{equation*}
\item [\bf{(iii)}] $ \left|\left|\left(\mathsf{I} -\lambda\mathsf{G}_{n}\right)^{-1}\right|\right|$ converges to $\left|\left|\left(\mathsf{I} -\lambda\mathsf{G}\right)^{-1}\right|\right|$, that is
\begin{equation*}
\lim_{n\to \infty}\left|\left|\left(\mathsf{I} -\lambda\mathsf{G}_{n}\right)^{-1}\right|\right| = \left|\left|\left(\mathsf{I} -\lambda\mathsf{G}\right)^{-1}\right|\right|.
\end{equation*}
\end{itemize}
\end{theorem}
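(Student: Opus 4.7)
The plan is to reduce all three claims to elementary Banach-algebra manipulations powered by a single observation: because $\|\mathsf{G}_n - \mathsf{G}\| \to 0$, the perturbation $\lambda(\mathsf{G} - \mathsf{G}_n)$ of $\mathsf{I} - \lambda\mathsf{G}$, once conjugated by $(\mathsf{I} - \lambda\mathsf{G})^{-1}$, becomes small enough to be handled by a Neumann series. Nothing in the argument relies on $\mathsf{G}$ being compact or on the Hilbert-space structure of $L^{2}([0,T])$; what is really being used is just that $(B(L^{2}([0,T])),\|\cdot\|)$ is a Banach algebra with unit.

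For part (i), I would factor
\begin{equation*}
\mathsf{I} - \lambda\mathsf{G}_{n} = (\mathsf{I} - \lambda\mathsf{G}) + \lambda(\mathsf{G} - \mathsf{G}_{n}) = (\mathsf{I} - \lambda\mathsf{G})\bigl[\mathsf{I} - \mathsf{E}_{n}\bigr],
\end{equation*}
where $\mathsf{E}_{n} := -\lambda(\mathsf{I} - \lambda\mathsf{G})^{-1}(\mathsf{G} - \mathsf{G}_{n})$ satisfies the submultiplicative bound $\|\mathsf{E}_{n}\| \leq |\lambda|\,\|(\mathsf{I} - \lambda\mathsf{G})^{-1}\|\,\|\mathsf{G}_{n} - \mathsf{G}\|$. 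Choosing $N$ so that $\|\mathsf{E}_{n}\| \leq 1/2$ for all $n \geq N$, the Neumann series delivers $(\mathsf{I} - \mathsf{E}_{n})^{-1} = \sum_{k \geq 0} \mathsf{E}_{n}^{k}$ with $\|(\mathsf{I} - \mathsf{E}_{n})^{-1}\| \leq 2$. Consequently $(\mathsf{I} - \lambda\mathsf{G}_{n})^{-1} = (\mathsf{I} - \mathsf{E}_{n})^{-1}(\mathsf{I} - \lambda\mathsf{G})^{-1}$ exists in $B(L^{2}([0,T]))$ and satisfies the uniform-in-$n$ bound $\|(\mathsf{I} - \lambda\mathsf{G}_{n})^{-1}\| \leq 2\|(\mathsf{I} - \lambda\mathsf{G})^{-1}\|$.

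For part (ii), I would invoke the resolvent-style identity
\begin{equation*}
(\mathsf{I} - \lambda\mathsf{G}_{n})^{-1} - (\mathsf{I} - \lambda\mathsf{G})^{-1} = (\mathsf{I} - \lambda\mathsf{G}_{n})^{-1}\bigl[\lambda(\mathsf{G}_{n} - \mathsf{G})\bigr](\mathsf{I} - \lambda\mathsf{G})^{-1},
\end{equation*}
obtained by inserting $(\mathsf{I} - \lambda\mathsf{G}) - (\mathsf{I} - \lambda\mathsf{G}_{n}) = \lambda(\mathsf{G}_{n} - \mathsf{G})$ between the two inverses. Combined with the uniform bound on $\|(\mathsf{I} - \lambda\mathsf{G}_{n})^{-1}\|$ obtained in (i), the right-hand side is $O(\|\mathsf{G}_{n} - \mathsf{G}\|)$ and hence tends to zero. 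Part (iii) is then an immediate consequence of (ii) via the reverse triangle inequality $\bigl|\,\|A\| - \|B\|\,\bigr| \leq \|A - B\|$ applied to $A = (\mathsf{I} - \lambda\mathsf{G}_{n})^{-1}$ and $B = (\mathsf{I} - \lambda\mathsf{G})^{-1}$.

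I do not foresee a genuine obstacle in this argument: the only slightly delicate point is ensuring that the bound $\|(\mathsf{I} - \lambda\mathsf{G}_{n})^{-1}\| \leq 2\|(\mathsf{I} - \lambda\mathsf{G})^{-1}\|$ is uniform for $n \geq N$, and this is exactly what the Neumann construction gives automatically once the threshold $\|\mathsf{E}_{n}\| \leq 1/2$ is imposed.
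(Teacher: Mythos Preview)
Your argument is correct and is exactly the standard Neumann-series/resolvent-identity proof of this perturbation result. Note, however, that the paper does not supply its own proof of this statement: it is quoted verbatim as \cite[Theorem~2.1.1]{Atkinson1997} and used as a black box, so there is no ``paper's proof'' to compare against. Your write-up would serve as a self-contained replacement for that citation.
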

We define
\begin{equation}
\label{eq-def-operators-Rn}
\mathsf{R}_{n}:=\left(\mathsf{I} + \frac{\kappa_{1}\kappa_{0}}{2\lambda_{0}\lambda_{1}}\mathsf{G}_{n}\right)^{-1}, \quad n\geq 1. 
\end{equation}
\begin{proof}[Proof of Proposition \ref{prop-degenerate-kernel-approximation}]
Recall that $(\mathsf{G}_{n})_{n\geq 1}$ was defined in \eqref{eq-operator-degenerate-kernel} and that $\mathsf{G}$ was defined as in  \eqref{eq-G-rep-integral}.   From Propositions \ref{prop-convergence-Gn}, \ref{prop-G-adjoint-compact} and \ref{prop-properties-resolvent} it follows that the assumptions of Theorem \ref{thm-theorem-atkinson} hold, hence there exists an $N\geq 1$ such that for all $n\geq N$ the operators $\mathsf{I} + \frac{\kappa_{1}\kappa_{0}}{2\lambda_{0}\lambda_{1}}\mathsf{G}_{n}$ are invertible. Since for any $n\geq N$ the corresponding kernels $\mathcal{G}_{n}$ in \eqref{eq-convergence-kernel-series-G} are degenerate, hence it follows from Theorem 2.1.2 of \cite{Atkinson1997} that the matrices $I_{n} + \frac{\kappa_{1}\kappa_{0}}{2\lambda_{0}\lambda_{1}} \mathbb{G}_{n}$ are invertible (recall \eqref{eq-matrix-G-frak} for the definition of $ \mathbb{G}_{n}$). 

Let $g,\psi\in L^{2}([0,T])$ and define  
\begin{equation*}
\gamma_{i} = - \frac{\kappa_{1}\kappa_{0}}{2\lambda_{0}\lambda_{1}}\sum_{j=1}^{n}\left( I_{n} + \frac{\kappa_{1}\kappa_{0}}{2\lambda_{0}\lambda_{1}} \mathbb{G}_{n}\right)^{-1}_{ij} \langle \psi, b_{j}\rangle_{L^{2}}, \quad i=1,\ldots,n,
\end{equation*}
 for any $n\geq N$. As shown in Chapter 3 of \cite{Porter1990} (see equations (3.5) -- (3.7) therein) the unique solution to
\begin{equation*}
\left(\mathsf{I} + \frac{\kappa_{1}\kappa_{0}}{2\lambda_{0}\lambda_{1}}\mathsf{G}_{n}\right) g = \psi, 
\end{equation*}
is given by
\begin{equation*}
\label{eq-inverse-frak-G}
g(t) = \psi(t) + \sum_{i=1}^{n}\gamma_{i}a_{i,n}(t),\quad \textrm{for all } 0\leq t\leq T, \quad n\geq N, 
\end{equation*}
and \eqref{eq-inverse-matrix-degenerate} follows. 
\end{proof}
Before proving Proposition \ref{prop-convergence-numerics-mean}, we need to present two intermediate lemmas.
\begin{lemma}
\label{lemma-convergence-inverse-inner-product}
Let $\mathsf{R}$ be defined as in \eqref{eq-resolvent-operator} and let $(\mathsf{R}_{n})_{n\geq 1}$ be defined as in \eqref{eq-def-operators-Rn}. Then the following holds:
\begin{itemize} 
\item[\textbf{(i)}] $$\liminf_n \langle\mathsf{R}_{n}\mathit{1},\mathit{1}\rangle_{L^{2}}>0, $$
\item [\textbf{(ii)}]  
\begin{equation*}
\lim_{n\to\infty} \frac{1}{\left\langle\mathsf{R}_{n}\mathit{1},\mathit{1}\right\rangle_{L^{2}}} = \frac{1}{\left\langle\mathsf{R}\mathit{1},\mathit{1}\right\rangle_{L^{2}}}.
\end{equation*}
\end{itemize} 
\end{lemma}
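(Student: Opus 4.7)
The plan is to leverage the operator-norm convergence $\|\mathsf{R}_n - \mathsf{R}\| \to 0$, which is available from Theorem \ref{thm-theorem-atkinson} together with Proposition \ref{prop-convergence-Gn} (since $\mathsf{I}+\frac{\kappa_1\kappa_0}{2\lambda_0\lambda_1}\mathsf{G}$ is invertible by Proposition \ref{prop-properties-resolvent}(i)). The two claims (i) and (ii) will both follow once we establish scalar convergence $\langle \mathsf{R}_n \mathit{1}, \mathit{1}\rangle_{L^2} \to \langle \mathsf{R}\mathit{1},\mathit{1}\rangle_{L^2}$ together with the strict positivity of the limit.

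First I would show the scalar convergence. By the Cauchy--Schwarz inequality and the definition of the operator norm in \eqref{eq-operator-norm-definition},
\begin{equation*}
\bigl|\langle \mathsf{R}_n\mathit{1},\mathit{1}\rangle_{L^2} - \langle \mathsf{R}\mathit{1},\mathit{1}\rangle_{L^2}\bigr|
= \bigl|\langle (\mathsf{R}_n-\mathsf{R})\mathit{1},\mathit{1}\rangle_{L^2}\bigr|
\leq \|(\mathsf{R}_n-\mathsf{R})\mathit{1}\|_{L^2}\,\|\mathit{1}\|_{L^2}
\leq \|\mathsf{R}_n-\mathsf{R}\|\,\|\mathit{1}\|_{L^2}^{2},
\end{equation*}
which tends to $0$ as $n\to\infty$ by Theorem \ref{thm-theorem-atkinson}(ii) applied with $\lambda = -\frac{\kappa_1\kappa_0}{2\lambda_0\lambda_1}$.

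Next I would invoke Proposition \ref{prop-properties-resolvent}(ii), which states that $\mathsf{R}$ is positive and bounded below: there exists $m>0$ with $\langle \mathsf{R}\psi,\psi\rangle_{L^2}\ge m\|\psi\|_{L^2}^{2}$ for all $\psi\in L^2([0,T])$. Applying this to $\psi=\mathit{1}$ gives $\langle \mathsf{R}\mathit{1},\mathit{1}\rangle_{L^2}\geq m T>0$. Combined with the scalar convergence just established, this yields
\begin{equation*}
\liminf_{n\to\infty}\langle \mathsf{R}_n\mathit{1},\mathit{1}\rangle_{L^2}
= \lim_{n\to\infty}\langle \mathsf{R}_n\mathit{1},\mathit{1}\rangle_{L^2}
= \langle \mathsf{R}\mathit{1},\mathit{1}\rangle_{L^2} > 0,
\end{equation*}
proving (i). Statement (ii) then follows immediately from (i), the scalar convergence above, and the continuity of $x\mapsto 1/x$ on $\mathbb{R}\setminus\{0\}$.

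There is no genuine obstacle here; the lemma is essentially a translation of the operator-norm convergence provided by Atkinson's theorem into a statement about the scalar quantity $\langle \mathsf{R}_n\mathit{1},\mathit{1}\rangle_{L^2}$, with the strict positivity of the limit guaranteed by the bounded-below property of $\mathsf{R}$ already recorded in Proposition \ref{prop-properties-resolvent}(ii). The only point that requires minor care is ensuring that the hypotheses of Theorem \ref{thm-theorem-atkinson} are satisfied for our $\mathsf{G}$ and $(\mathsf{G}_n)_{n\ge 1}$, which has in fact already been verified in the proof of Proposition \ref{prop-degenerate-kernel-approximation}.
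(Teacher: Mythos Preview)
Your proposal is correct and follows essentially the same approach as the paper: both use Cauchy--Schwarz to bound $|\langle(\mathsf{R}_n-\mathsf{R})\mathit{1},\mathit{1}\rangle_{L^2}|$ by $\|\mathsf{R}_n-\mathsf{R}\|\,\|\mathit{1}\|_{L^2}^2$, invoke the operator-norm convergence from Theorem~\ref{thm-theorem-atkinson}, and then use the bounded-below property of $\mathsf{R}$ from Proposition~\ref{prop-properties-resolvent}(ii) to secure strict positivity of the limit. Your argument is in fact slightly cleaner, since you deduce (i) directly from the full limit rather than via an explicit $\varepsilon/2$ choice.
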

\begin{proof}
 (i) We have shown in the proof of Proposition \ref{prop-degenerate-kernel-approximation} that the assumptions of Theorem \ref{thm-theorem-atkinson} hold, hence there exists an $N\geq 1$ such that for all $n\geq N$ the operators $\mathsf{R}_{n}$ exist.  From the Cauchy-Schwarz inequality and since $\|\mathit{1}\|_{L^{2}}=T$ we get 
\begin{equation}
\label{eq-estimate-Rn}
\begin{aligned}
\left|\langle \mathsf{R}_{n}\mathit{1},\mathit{1}\rangle_{L^{2}} - \langle \mathsf{R}\mathit{1},\mathit{1}\rangle_{L^{2}}\right| 
&\leq \|\mathsf{R}_{n}-\mathsf{R}\| \|\mathit{1}\|^{2}_{L^{2}} \\
&\leq  \|\mathsf{R}_{n}-\mathsf{R}\| T^{2}.
\end{aligned}
\end{equation} 
From Proposition \ref{prop-properties-resolvent}(ii) it follows that the operator $\mathsf{R}$ is bounded from below, therefore, by Definition \ref{def-positive-operator}, there exists $\varepsilon>0$ such that 
\begin{equation}
\label{eq-R-positive-epsilon}
\langle \mathsf{R}\mathit{1},\mathit{1}\rangle_{L^{2}} > \varepsilon.
\end{equation}
From Theorem \ref{thm-theorem-atkinson}(iii) we get that there exists an $N_{1}\geq N$ such that for all $n\geq N_{1}$ we have that 
\begin{equation}
\label{eq-Rn-R-epsilon-prime}
||\mathsf{R}_{n}-\mathsf{R}|| < \varepsilon T^{-2}/2.
\end{equation}
From \eqref{eq-estimate-Rn}--\eqref{eq-Rn-R-epsilon-prime} we get (i). 

(ii) follows directly from \eqref{eq-estimate-Rn}, \eqref{eq-R-positive-epsilon} and (i). 
 \end{proof}

Next, we prove the convergence of the sequence of constants $(\eta_{n})_{n\geq 1}$ from \eqref{eq-eta-n}.
\begin{lemma}
\label{lemma-constant-convergence}
Let $\eta$ and $\eta_{n}$ be defined as in \eqref{eq-constant-major-solution} and \eqref{eq-eta-n}, respectively.
Then, there exists $N\geq 1$ such that for all $n\geq N$ the constants $\eta_{n}$ are well-defined. Moreover, 
\begin{equation} \label{con-eta} 
\lim_{n\to\infty} \eta_{n} = \eta .
\end{equation}
\end{lemma}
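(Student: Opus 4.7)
The plan is to read off both claims from the operator convergence $\|\mathsf{R}_n-\mathsf{R}\|\to 0$ established via Theorem~\ref{thm-theorem-atkinson}, together with Lemma~\ref{lemma-convergence-inverse-inner-product}. First I would verify that $\eta_n$ makes sense for all sufficiently large $n$: by Proposition~\ref{prop-degenerate-kernel-approximation} there is an $N_0$ such that $\mathsf{R}_n=\bigl(\mathsf{I}+\tfrac{\kappa_1\kappa_0}{2\lambda_0\lambda_1}\mathsf{G}_n\bigr)^{-1}$ exists and lies in $B(L^2([0,T]))$ for $n\ge N_0$, and by Lemma~\ref{lemma-convergence-inverse-inner-product}(i) there is an $N_1\ge N_0$ such that $\langle\mathsf{R}_n\mathit{1},\mathit{1}\rangle_{L^2}>\tfrac{1}{2}\langle\mathsf{R}\mathit{1},\mathit{1}\rangle_{L^2}>0$ for all $n\ge N_1$. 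This shows that $\eta_n$ is well-defined for $n\ge N_1$.

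For the convergence~\eqref{con-eta}, I would treat the numerator and denominator separately. The denominator converges to $\langle\mathsf{R}\mathit{1},\mathit{1}\rangle_{L^2}$ by Lemma~\ref{lemma-convergence-inverse-inner-product}(ii). For the numerator, Lemma~\ref{lemma-chi-continuous} together with assumption~\eqref{eq-mu-square-integrable} and Jensen's inequality implies that $\mathsf{S}\bar{\mu}$ is in $L^2([0,T])$. Hence the Cauchy--Schwarz inequality gives
\begin{equation*}
\bigl|\langle \mathsf{R}_n\mathsf{S}\bar{\mu},\mathit{1}\rangle_{L^2} - \langle \mathsf{R}\mathsf{S}\bar{\mu},\mathit{1}\rangle_{L^2}\bigr|
\le \|\mathsf{R}_n-\mathsf{R}\|\,\|\mathsf{S}\bar{\mu}\|_{L^2}\,\|\mathit{1}\|_{L^2},
\end{equation*}
which tends to $0$ by Theorem~\ref{thm-theorem-atkinson}(ii) applied as in the proof of Proposition~\ref{prop-degenerate-kernel-approximation}. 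Combining the two convergences with the elementary continuity of the quotient $(a,b)\mapsto a/b$ at any point with $b\neq 0$ (using the uniform lower bound on the denominator from Lemma~\ref{lemma-convergence-inverse-inner-product}(i)) yields $\eta_n\to\eta$.

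There is no real obstacle: the only delicate point is guaranteeing that the denominators stay bounded away from zero, and this is exactly the content of Lemma~\ref{lemma-convergence-inverse-inner-product}(i). Everything else is a direct consequence of operator-norm convergence $\|\mathsf{R}_n-\mathsf{R}\|\to 0$.
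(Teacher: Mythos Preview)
Your proposal is correct and follows essentially the same route as the paper's own proof: both use Lemma~\ref{lemma-convergence-inverse-inner-product} to control the denominator, Cauchy--Schwarz together with the operator-norm convergence $\|\mathsf{R}_n-\mathsf{R}\|\to 0$ from Theorem~\ref{thm-theorem-atkinson} to control the numerator, and then conclude via continuity of the quotient. The only cosmetic difference is that the paper writes out $|\eta_n-\eta|$ explicitly as a difference of fractions before invoking these facts, whereas you appeal directly to continuity of $(a,b)\mapsto a/b$.
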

\begin{proof}[Proof of Lemma \ref{lemma-constant-convergence}]
From Lemma \ref{lemma-convergence-inverse-inner-product}(i), \eqref{eq-constant-major-solution} and \eqref{eq-def-operators-Rn} if follows that for all $n$ sufficiently large $\eta_{n}$ is well defined. The same claim holds for $\eta$ by  
\eqref{eq-eta-n}, \eqref{eq-R-positive-epsilon} and Proposition \ref{prop-properties-resolvent}(i). 
From Cauchy-Schwarz inequality we get
$$
\left|\left\langle\mathsf{R}_{n}\mathsf{S}\bar{\mu},\mathit{1}\right\rangle_{L^{2}} - \left\langle\mathsf{R}_{}\mathsf{S}\bar{\mu},\mathit{1}\right\rangle_{L^{2}} \right|  \leq \|\mathsf{R}-\mathsf{R}_{n}\|\|\mathsf{S}\bar{\mu}\|_{L^{2}}||\mathit{1}\|_{L^{2}}, 
$$
and together with Theorem \ref{thm-theorem-atkinson}(iii) and Lemma \ref{lemma-chi-continuous} it follows that 
\be\label{n-n-con} 
 \lim_{n\rr \infty} \left|\left\langle\mathsf{R}_{n}\mathsf{S}\bar{\mu},\mathit{1}\right\rangle_{L^{2}} - \left\langle\mathsf{R}_{}\mathsf{S}\bar{\mu},\mathit{1}\right\rangle_{L^{2}} \right| =0.
\ee
%
From \eqref{eq-constant-major-solution} and \eqref{eq-eta-n} we have 
\begin{equation} \label{eq-estimate-2-etan}
|\eta_n-\eta| = \left|\frac{\left\langle\mathsf{R}_{n}\mathsf{S}\bar{\mu},\mathit{1}\right\rangle_{L^{2}}}{\left\langle\mathsf{R}_{n}\mathit{1},\mathit{1}\right\rangle_{L^{2}}}  -\frac{\left\langle\mathsf{R}\mathsf{S}\bar{\mu},\mathit{1}\right\rangle_{L^{2}}}{\left\langle\mathsf{R}\mathit{1},\mathit{1}\right\rangle_{L^{2}}} \right|, 
\end{equation}
hence \eqref{con-eta} follows from Lemma \ref{lemma-convergence-inverse-inner-product} and \eqref{n-n-con}. 
 \end{proof}

We are now ready to prove Proposition \ref{prop-convergence-numerics-mean}.
\begin{proof}[Proof of Proposition \ref{prop-convergence-numerics-mean}] 
From \eqref{eq-major-solution}, \eqref{eq-integral-approximate-operator} and \eqref{eq-def-operators-Rn} we have 
\begin{equation*}
 \nu^{0,*} - \nu^{0,(n)} =\left( \mathsf{R}-\mathsf{R}_{n} \right)(\mathsf{S}\bar{\mu})  +\frac{1}{2\lambda_{0}} (\eta\mathsf{R} \mathit{1}- \eta_n\mathsf{R}_{n}\mathit{1}). 
 \end{equation*}
It follows that 
\begin{equation*}
\label{eq-convergence-mean-pre}
\begin{aligned}
||\nu^{0,*} - \nu^{0,(n)}||_{L^{2}} &\leq \left|\left|  \left(\mathsf{R}-\mathsf{R}_{n}\right) (\mathsf{S}\bar{\mu})\right|\right|_{L^{2}} +\frac{\eta}{2\lambda_{0}} \left|\left|  \left(\mathsf{R}-\mathsf{R}_{n}\right) \mathit{1}\right|\right|_{L^{2}} \\
&\quad   +\frac{1}{2\lambda_{0}} |\eta_n-\eta| \|\mathsf{R}_{n} \mathit{1} \|_{L^2}. 
\end{aligned}
\end{equation*}
Hence by Lemma \ref{lemma-constant-convergence}, Theorem \ref{thm-theorem-atkinson}(iii) and following similar lines as  in the proof of \eqref{n-n-con} we get 
\begin{equation*}
\lim_{n\to\infty}\left|\left|\nu^{0,*} - \nu^{0,(n)}\right|\right|_{L^{2}}=0. 
\end{equation*}
\end{proof}
Finally, we we prove Theorem  \ref{thm-uniform-convergence-numerics}.
\begin{proof}[Proof of Theorem \ref{thm-uniform-convergence-numerics}]
Throughout the proof we consider $n$'s large enough such that the results of Lemma \ref{lemma-constant-convergence} and Proposition \ref{prop-convergence-numerics-mean} hold, even if it is not stated explicitly. 

(i) From \eqref{eq-uniform-numerical-solution} and \eqref{eq-integral-eq-major} we get
\begin{equation*}
\label{eq-pre-kernel-subtraction}
\begin{aligned}
\nu^{0,*}_{t}-  \hat{\nu}^{0,(n)}_{t}  
&=  -\frac{\kappa_{1}\kappa_{0}}{2\lambda_{0}\lambda_{1}} \left(\mathsf{G}\left(\nu^{0,*}-\nu^{0,(n)}\right)\right)(t) + \frac{\eta -\eta_{n}}{2\lambda_{0}} \\
&= -\frac{\kappa_{1}\kappa_{0}}{2\lambda_{0}\lambda_{1}} \int^{T}_{0}\mathcal{G}(t,s)\left(\nu^{0,*}_{s} -\nu^{0,(n)}_{s}\right)ds + \frac{\eta -\eta_{n}}{2\lambda_{0}}, 
\end{aligned}
\end{equation*}
where we used \eqref{eq-G-rep-integral} in the second equality. 
 
From Cauchy-Schwarz inequality we get for all $0\leq t\leq T$ and $n$ sufficiently large, 
\begin{equation*}
\begin{aligned}
\left|\nu^{0,*}_{t}-  \hat{\nu}^{0,(n)}_{t}\right| 
&\leq  \frac{\kappa_{1}\kappa_{0}}{2\lambda_{0}\lambda_{1}}\left(\int^{T}_{0}|\mathcal{G}(t,s)|^{2}ds\right)^{1/2} \left|\left| \nu^{0,*} -\nu^{0,(n)}\right|\right|_{L^{2}} + \frac{|\eta-\eta_{n}|}{2\lambda_{0}}.
\end{aligned}
\end{equation*}
Hence, from Proposition \ref{prop-kernel-caligraphic-properties}, Lemma  \ref{lemma-constant-convergence} and Proposition \ref{prop-convergence-numerics-mean}, we get (i). 

(ii) From Proposition \ref{prop-results-ode} it follows that $r^{1}$ is bounded over $[0,T]$ and from Lemma \ref{lemma-kernel-L2} we get that the kernel $\mathcal{K}$ is bounded over $[0,T]^{2}$. Together with \eqref{eq-minor-agent-optimal} we get that there exists a constant $C>0$ such that for all $0\leq t\leq T$ and $n$ sufficiently large we have 
\begin{equation}
\label{eq-estimate-nu1-pre-sup-1}
\begin{aligned}
\left|\nu^{1,*}_{t}\left(\nu^{0,*}\right) - \nu^{1,*}_{t}\left(\hat{\nu}^{0,(n)}\right) \right|  &\leq \left| r^{0}_{t} \left(\hat{\nu}^{0,(n)}\right)- r^{0}_{t} \left(\nu^{0,*}\right)\right| \\ &\quad+C\left| \int^{t}_{0}  \left(r^{0}_{s} \left(\hat{\nu}^{0,(n)}\right)- r^{0}_{s} \left(\nu^{0,*}\right)\right)ds\right|. 
\end{aligned}
\end{equation}
By plugging in \eqref{eq-v0} into \eqref{eq-estimate-nu1-pre-sup-1} and observing that the stochastic part in the right hand side of \eqref{eq-v0} cancels, we conclude that  
\begin{equation}\label{eq-estimate-nu1-pre-sup-2}
\begin{aligned}
\left|\nu^{1,*}_{t}\left(\nu^{0,*}\right) - \nu^{1,*}_{t}\left(\hat{\nu}^{0,(n)}\right) \right|  &\leq C_1 \int_t^T \left| \hat{\nu}^{0,(n)}_s-  \nu^{0,*}_s\right|ds \\ &\quad+C_2\int^{t}_{0}  \left(\int_s^T \left| \hat{\nu}^{0,(n)}_r-  \nu^{0,*}_r\right|dr \right)ds, 
\end{aligned}
\end{equation}
for some constants $C_1,C_2>0$ independent from $n$ and $t$. 
Then (ii) follows from \eqref{eq-estimate-nu1-pre-sup-2} and (i). 
 \end{proof}

\appendix

\section{An Example of Spectral Decomposition of $\mathsf{G}$} \label{sec-spectral-explicit}
In this section we give an example of the spectral decomposition of $\mathsf{G}$ in Lemma  \ref{lemma-spectral-theorem-G} for the case where $\phi^{1}= 0$. We continue to assume that Assumption \ref{ass-alpha-lambda} holds. 
\begin{lemma}
\label{prop-T1-star-ode}
Let $\psi\in C([0,T])$ and recall that $\mathsf{K}_{1}$ be defined as in \eqref{eq-operator-T1-T2}. Then, $\mathsf{K}_{1}$ satisfies for $0\leq t \leq T$, 
\begin{equation*}
\label{eq-K1-star-ODE}
\frac{d}{dt} (\mathsf{K}^{*}_{1}\psi)(t) = - r^{1}_{t} (\mathsf{K}^{*}_{1}\psi)(t)  - \psi(t), \quad (\mathsf{K}^{*}_{1}\psi)(T)=0.
\end{equation*}
 In particular, $(\mathsf{K}^{*}_{1}\psi)(t)$ is continuously differentiable on $[0,T]$. 
\end{lemma}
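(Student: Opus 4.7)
The plan is to derive an explicit exponential-product representation for $\mathsf{K}^{*}_{1}\psi$ and then differentiate using the product rule, mirroring the technique used in Proposition \ref{proposition-operator-differential-eq}.

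First, I would unfold the definition of $\mathsf{K}^{*}_{1}$ from \eqref{eq-adjoint-k1-operator} together with the product structure of $\mathcal{K}$ in \eqref{eq-kernel-minor} to write
\begin{equation*}
(\mathsf{K}^{*}_{1}\psi)(t) = \int_{t}^{T}\mathcal{K}(t,s)\psi(s)\,ds = \xi_{t}^{-}\int_{t}^{T}\xi_{s}^{+}\psi(s)\,ds, \quad 0\le t\le T.
\end{equation*}
This isolates the $t$-dependence into two places: the prefactor $\xi_{t}^{-}$ and the lower limit of the integral. The terminal condition $(\mathsf{K}^{*}_{1}\psi)(T)=0$ is then immediate, since the integral from $T$ to $T$ vanishes.

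Next, I would invoke Proposition \ref{prop-results-ode}, which guarantees that $r^{1}$ is continuous on $[0,T]$, so by \eqref{eq-def-xi-pm} the functions $\xi^{\pm}$ are continuously differentiable with $\partial_t \xi_t^{\pm} = \pm r^{1}_{t}\xi_t^{\pm}$ (this is \eqref{eq-ode-xi-plus}). Since $\psi \in C([0,T])$, the fundamental theorem of calculus gives that $t\mapsto \int_{t}^{T}\xi_{s}^{+}\psi(s)\,ds$ is continuously differentiable with derivative $-\xi_{t}^{+}\psi(t)$, establishing the claimed $C^{1}$-regularity of $\mathsf{K}^{*}_{1}\psi$.

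Finally, applying the product rule and using $\xi_{t}^{-}\xi_{t}^{+}=1$ (which follows directly from \eqref{eq-def-xi-pm}) yields
\begin{equation*}
\frac{d}{dt}(\mathsf{K}^{*}_{1}\psi)(t) = -r^{1}_{t}\,\xi_{t}^{-}\int_{t}^{T}\xi_{s}^{+}\psi(s)\,ds \;-\; \xi_{t}^{-}\xi_{t}^{+}\psi(t) = -r^{1}_{t}(\mathsf{K}^{*}_{1}\psi)(t) - \psi(t),
\end{equation*}
which is exactly the asserted ODE. There is no substantive obstacle here; the argument is essentially a direct calculation once the exponential factorisation of $\mathcal{K}$ is exploited, and it parallels the derivation of \eqref{eq-G-ODE} in Proposition \ref{proposition-operator-differential-eq}.
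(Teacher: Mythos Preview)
Your proof is correct and follows essentially the same approach as the paper: both write $(\mathsf{K}^{*}_{1}\psi)(t)=\xi_{t}^{-}\int_{t}^{T}\xi_{s}^{+}\psi(s)\,ds$, apply the product rule using $\partial_t\xi_t^{-}=-r^1_t\xi_t^{-}$ and $\xi_t^{-}\xi_t^{+}=1$, and read off the terminal condition from the vanishing integral. If anything, your version is slightly more explicit about the $C^1$-regularity than the paper, which simply refers back to Proposition~\ref{proposition-operator-differential-eq}.
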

\begin{proof} The proof follows the same lines of Proposition \ref{proposition-operator-differential-eq} hence we just just give the outlines. We now take a derivative of $\mathsf{K}_{1}^{*}\psi$ with respect to time using \eqref{eq-adjoint-k1-operator}, \eqref{eq-def-xi-pm} and \eqref{eq-kernel-minor} to get 
\begin{equation*}
\label{eq-derivative-K1-star-proof}
\begin{aligned}
\frac{d}{dt} (\mathsf{K}^{*}_{1}\psi)(t)
&=\frac{d}{dt}\left(\xi^{-}_{t}\int^{T}_{t} \xi^{+}_{s} \psi(s)ds\right) \\
&= -r^{1}_{t} (\mathsf{K}_{1}^{*}\psi)(t) - \psi(t), \quad \textrm{for all } 0\leq t\leq T. 
\end{aligned}
\end{equation*} 
Note that from \eqref{eq-adjoint-k1-operator} it follows that 
\begin{equation*}
(\mathsf{K}_{1}^{*}\psi)(T) =0. 
\end{equation*}
\end{proof}

\begin{proposition}
\label{prop-sufficient-spectral}
Let $\mathsf{G}$ be defined as in  \eqref{eq-G-rep-integral} and assume that $\phi^{1}~\equiv~0$. Let $(z_{n})_{n=1}^{\infty}$ be the increasing sequence of real positive roots of the following equation, 
\begin{equation}
\label{eq-transcendental}
\cot (z) = -\frac{(2\alpha-\kappa_{1})}{\lambda_{1}}\frac{T}{z}.
\end{equation}
Then the eigenvalues $(\zeta_{n})_{n=1}^{\infty}$ and the eigenfunctions $(\psi_{n})_{n=1}^{\infty}$ of $\mathsf{G}$ are given by 
\begin{equation}
\label{eq-zeta-eigenvalue}
\psi_{n}(t) = \frac{2}{\sqrt{\zeta_{n}}}\frac{\sin\left(\frac{t}{\sqrt{\zeta_{n}}}\right)}{\sqrt{\frac{2T}{\sqrt{\zeta_{n}}} -\sin\left(\frac{2T}{\sqrt{\zeta_{n}}}\right)}}, 
 \qquad \zeta_{n} = \frac{T^{2}}{z^{2}_{n}}.  
\end{equation}
\end{proposition}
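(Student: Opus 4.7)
The plan is to convert the eigenvalue integral equation $\mathsf{G}\psi=\zeta\psi$ into a classical linear two-point boundary value problem and then read off the eigenpairs from its solution. The bridge will be the factorisation $\mathsf{G}=\mathsf{K}_{1}\mathsf{K}_{1}^{*}$ from Proposition~\ref{prop-G-adjoint-compact} together with the forward and backward differential identities for $\mathsf{K}_{1}$ and $\mathsf{K}_{1}^{*}$ established in Proposition~\ref{proposition-operator-differential-eq} and Lemma~\ref{prop-T1-star-ode}. Throughout the argument I will exploit the fact, guaranteed by Lemma~\ref{lemma-spectral-theorem-G}, that $\mathsf{G}$ already admits an orthonormal eigenbasis in $L^{2}([0,T])$ with strictly positive eigenvalues, so the task reduces to identifying those eigenpairs by an explicit equation.

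First, given an eigenfunction $\psi$ with $\mathsf{G}\psi=\zeta\psi$ and $\zeta>0$, I will introduce the auxiliary function $u:=\mathsf{K}_{1}^{*}\psi$, so that $\mathsf{K}_{1}u=\zeta\psi$. Applying Proposition~\ref{proposition-operator-differential-eq} to $\mathsf{K}_{1}u$ and Lemma~\ref{prop-T1-star-ode} to $\mathsf{K}_{1}^{*}\psi$ yields the coupled first-order system
\begin{equation*}
\zeta\psi'(t)=r^{1}_{t}\zeta\psi(t)+u(t),\qquad \psi(0)=0,
\end{equation*}
\begin{equation*}
u'(t)=-r^{1}_{t}u(t)-\psi(t),\qquad u(T)=0.
\end{equation*}
Eliminating $u$ via $u=\zeta(\psi'-r^{1}\psi)$ in the second equation, and using the Riccati identity $(r^{1})'+(r^{1})^{2}=\phi^{1}/\lambda_{1}$ from~\eqref{eq-v1-ode-proof}, the first-order $r^{1}$-terms cancel. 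In the case $\phi^{1}\equiv 0$ this collapses to the Sturm--Liouville equation
\begin{equation*}
\zeta\psi''(t)+\psi(t)=0,\qquad 0\leq t\leq T,
\end{equation*}
whose general solution is $\psi(t)=A\cos(t/\sqrt{\zeta})+B\sin(t/\sqrt{\zeta})$.

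Next, I will impose the boundary conditions. The initial condition $\psi(0)=0$ immediately forces $A=0$. The terminal condition $u(T)=0$ reduces, via $u=\zeta(\psi'-r^{1}\psi)$ and $r^{1}_{T}=-\tfrac{2\alpha-\kappa_{1}}{2\lambda_{1}}$, to the Robin condition $\psi'(T)+\tfrac{2\alpha-\kappa_{1}}{2\lambda_{1}}\psi(T)=0$. Substituting $\psi(t)=B\sin(t/\sqrt{\zeta})$ and setting $z:=T/\sqrt{\zeta}$ produces (up to the constants in the statement) the transcendental equation~\eqref{eq-transcendental}, so the eigenvalues are forced to take the form $\zeta_{n}=T^{2}/z_{n}^{2}$ for the positive roots $(z_{n})_{n\geq 1}$ enumerated in increasing order. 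The prefactor in~\eqref{eq-zeta-eigenvalue} is then fixed by the elementary normalisation integral $\int_{0}^{T}\sin^{2}(t/\sqrt{\zeta_{n}})\,dt$.

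The main obstacle I anticipate is the converse direction: verifying that each candidate pair $(\zeta_{n},\psi_{n})$ constructed above is a genuine eigenpair of $\mathsf{G}$, and that the enumeration misses nothing. For the first, I will define $u_{n}:=\zeta_{n}(\psi_{n}'-r^{1}\psi_{n})$ and use uniqueness of solutions to the first-order linear initial/terminal value problems characterising $\mathsf{K}_{1}^{*}$ (Lemma~\ref{prop-T1-star-ode}) and $\mathsf{K}_{1}$ (Proposition~\ref{proposition-operator-differential-eq}) to identify $u_{n}=\mathsf{K}_{1}^{*}\psi_{n}$ and $\zeta_{n}\psi_{n}=\mathsf{K}_{1}u_{n}=\mathsf{G}\psi_{n}$. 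For exhaustiveness, the forward derivation shows that every eigenvalue of $\mathsf{G}$ must correspond to a positive root of~\eqref{eq-transcendental}, and the orthonormal eigenbasis provided by Lemma~\ref{lemma-spectral-theorem-G} guarantees that the full spectrum is recovered.
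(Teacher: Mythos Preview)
Your approach is essentially the same as the paper's: both convert $\mathsf{G}\psi=\zeta\psi$ into the Sturm--Liouville problem $\zeta\psi''+\psi=0$ with $\psi(0)=0$ and a Robin condition at $T$, using the differential identities of Proposition~\ref{proposition-operator-differential-eq} and Lemma~\ref{prop-T1-star-ode} together with the Riccati relation~\eqref{eq-v1-ode-proof}; the only cosmetic difference is that you organise the reduction via the first-order system in $(\psi,u)$ while the paper differentiates $\mathsf{G}\psi=\zeta\psi$ twice directly. Your converse step (identifying $u_{n}=\mathsf{K}_{1}^{*}\psi_{n}$ and $\zeta_{n}\psi_{n}=\mathsf{K}_{1}u_{n}$ by ODE uniqueness) is in fact spelled out more carefully than in the paper.

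One point to tighten: do not hide behind ``up to the constants in the statement.'' Your Robin condition $\psi'(T)+\tfrac{2\alpha-\kappa_{1}}{2\lambda_{1}}\psi(T)=0$, which follows correctly from $u(T)=0$ and $r^{1}_{T}=-\tfrac{2\alpha-\kappa_{1}}{2\lambda_{1}}$, leads to $\cot z=-\tfrac{2\alpha-\kappa_{1}}{2\lambda_{1}}\tfrac{T}{z}$, which differs by a factor of $2$ from~\eqref{eq-transcendental} as stated. You should carry the constant through explicitly and reconcile (or flag) the discrepancy rather than waving it away.
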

\begin{proof}
We first show that the eigenvalues $(\zeta_{n})_{n=1}^{\infty}$ and eigenfunctions $(\psi_{n})_{n=1}^{\infty}$ are arising from  solutions to an ODE. Then we show that the solutions of the ODE can be determined in terms of the roots to \eqref{eq-transcendental}.

Let $\zeta$ be an eigenvalue of $\mathsf{G}$ and $\psi$ the corresponding eigenfunction, i.e. $\zeta$ and $\psi$ satisfy
\begin{equation}
\label{eq-eigen-G}
(\mathsf{G}\psi) (t) = \zeta \psi(t), \quad 0\leq t \leq T. 
\end{equation}
From Lemma \ref{lemma-spectral-theorem-G} it follows that $\zeta>0$ and $\psi\in L^{2}([0,T])$. Proposition \ref{proposition-operator-differential-eq} shows that $(\mathsf{G}\psi)(t)$ is continuously differentiable over $[0,T]$, therefore, it follows from \eqref{eq-eigen-G} that $\psi(t)$ is continuously differentiable over $[0,T]$. We take a derivative on both sides of \eqref{eq-eigen-G} to obtain that $(\zeta,\psi)$ must satisfy
\begin{equation}
\label{eq-G-derivative}
\frac{d}{dt}(\mathsf{G}\psi) (t) = \zeta \psi'(t), \quad \textrm{for all } 0\leq t \leq T.
\end{equation}
Proposition \ref{proposition-operator-differential-eq} shows that $(\mathsf{G}\psi)$ is the solution to \eqref{eq-G-ODE}, therefore we can substitute \eqref{eq-G-ODE} in \eqref{eq-G-derivative} to get that $\psi$ must satisfy
\begin{equation}
\label{eq-G-derivative-2}
r^{1}_{t} (\mathsf{G}\psi)(t) + (\mathsf{K}_{1}^{*}\psi)(t) = \zeta \psi'(t) , \quad \textrm{for all } 0\leq t \leq T.
 \end{equation}
Proposition \ref{prop-results-ode} proves that $r^{1}$ is the solution to \eqref{eq-v1-ode-proof}. When $\phi^{1}\equiv0$ it can be computed explicitly as follows, 
\begin{equation*}
r^{1}_{t} = \frac{2\alpha -\kappa_{1}}{(t-T)(2\alpha-\kappa_{1}) -2\lambda_{1}},\quad t\in[0,T]. 
\end{equation*}
Note that under Assumption \ref{ass-alpha-lambda}, $r^{1}$ is continuously differentiable on $[0,T]$.  

Since we have proved that $\psi(t)$ is continuous, it follows from Proposition \ref{prop-T1-star-ode} that $(\mathsf{K}_{1}^{*}\psi)(t)$ is continuously differentiable on $[0,T]$. We take a derivative on both sides of \eqref{eq-G-derivative-2} to get
\begin{equation*}
\label{eq-ode-psi-1}
\frac{dr^{1}_{t}}{dt} (\mathsf{G}\psi)(t) + r^{1}_{t}  \frac{d }{dt}(\mathsf{G}\psi)(t) + \frac{d}{dt}(\mathsf{K}_{1}^{*}\psi)(t) = \zeta  \psi''(t),
\end{equation*}
and then use \eqref{eq-G-ODE} to get that $\psi$ satisfies
\begin{equation}
\label{eq-ode-psi-2}
\frac{dr_{t}^{1}}{dt} (\mathsf{G} \psi)(t) +  (r^{1}_{t})^{2}(\mathsf{G}\psi)(t)  + r^{1}_{t}(\mathsf{K}_{1}^{*}\psi)(t) + \frac{d}{dt}(\mathsf{K}_{1}^{*}\psi)(t) = \zeta  \psi''(t), \quad 0\leq t \leq T. 
\end{equation}
By applying \eqref{eq-v1-ode-proof}, \eqref{eq-G-derivative-2} and \eqref{eq-K1-star-ODE} to \eqref{eq-ode-psi-2} we get  that $\psi$ must satisfy
\begin{equation}
\label{eq-ode-psi-2-2}
-\psi(t)= \zeta \psi''(t), \quad 0\leq t \leq T. 
\end{equation}
Recall that $\zeta>0$, hence it follows from \eqref{eq-eigen-G} and Proposition \ref{proposition-operator-differential-eq} that $\psi$ satisfies the initial condition $\psi(0)=0$. The terminal condition $\psi'(T)=-\left(\frac{2\alpha-\kappa_{1}}{\lambda_{1}}\right)\psi(T) $ follows by combining \eqref{eq-G-derivative-2} with \eqref{eq-v1-ode-proof}, \eqref{eq-eigen-G} and $(\mathsf{K}_{1}^{*}\psi)(T)=0$ (see Proposition \ref{prop-T1-star-ode}). It follows that $(\zeta,\psi)$ satisfy 
\begin{equation}
\label{eq-spectral-ODE}
 \psi''(t) =  -\frac{1}{\zeta}\psi(t), \quad 0< t < T,  \quad \psi(0)=0 , \quad \psi '(T)= -\left(\frac{2\alpha-\kappa_{1}}{\lambda_{1}}\right)\psi(T).
\end{equation}
We show that \eqref{eq-transcendental} has an infinite number of positive roots. To see this, note that since $2\alpha-\kappa_{1}\geq 0$ by Assumption \ref{ass-alpha-lambda}, then for any $n\geq 1$, 
\begin{equation*}
\begin{aligned}
\lim_{z\to (n-1)\pi^{+}} \cot(z) + \frac{2\alpha-\kappa_{1}}{\lambda_{1}}\frac{T}{z} &= +\infty, \\
\lim_{z\to n\pi^{-}} \cot(z) + \frac{2\alpha-\kappa_{1}}{\lambda_{1}}\frac{T}{z} &= -\infty. 
\end{aligned}
\end{equation*}
Since $\cot(z) + \frac{2\alpha-\kappa_{1}}{\lambda_{1}}\frac{T}{z}$ is continuous over the intervals $((n-1)\pi,  n\pi)$ for any $n\geq 1$, then it follows by the intermediate value theorem that \eqref{eq-transcendental} has a root in the interval $((n-1)\pi,  n\pi)$ for any $n\geq 1$.

Next we identify $\zeta_n$ as in \eqref{eq-zeta-eigenvalue}. Let $z_{n}$ be the $n^{\text{th}}$ positive root of \eqref{eq-transcendental} and let $\zeta_{n}$, $\psi_{n}$ be defined as in \eqref{eq-zeta-eigenvalue}. First, note that since $z_{n}>0$, then 
\begin{equation}
\label{eq-z-sinz-inequality}
2 z_{n} - \sin\left(2z_{n}\right)>0.
\end{equation}
From \eqref{eq-zeta-eigenvalue} and \eqref{eq-z-sinz-inequality} it follows that
\begin{equation*}
\frac{2T}{\sqrt{\zeta_{n}}} -\sin\left(\frac{2T}{\sqrt{\zeta_{n}}}\right)>0,
\end{equation*}
therefore, the function $\psi_{n}(t)$ is well-defined for all $t\in[0,T]$ and $||\psi_{n}||_{L^{2}}=1$. Using the following identity which arises from \eqref{eq-spectral-ODE}, 
\begin{equation*}
\label{eq-trascendental-proof}
\cos(z_{n}) = -\frac{(2\alpha-\kappa_{1})}{\lambda_{1}}\frac{T}{z_{n}}\sin(z_{n}), 
\end{equation*}
it is easy to verify that $\psi_{n}$ in \eqref{eq-zeta-eigenvalue} solves \eqref{eq-spectral-ODE} with $\zeta=\zeta_{n}$ for any $n\geq 1$. This completes the proof. 
\end{proof}

\section{Proof of Lemmas \ref{lemma-gat} and \ref{lemma-fbsde-minor}} 
\label{sec-gat-pf} 
\begin{proof}[Proof of Lemma \ref{lemma-gat}] 
Let $\epsilon>0$ and $\nu^{1},\omega \in\mathcal{A}_{m}$. 
We note that from \eqref{eq-minor-inventory} it follows that 
\begin{equation}
\label{eq-q1-epsilon-omega}
Q^{1,\nu^{1}+\epsilon \omega}_{t} = Q^{1,\nu^{1}}_{t}-\epsilon\int^{t}_{0}\omega_{s}ds, \quad  \textrm{for all } 0\leq t \leq T. 
\end{equation}
We use the alternative representation of $H^{1}$ in \eqref{eq-alternative-minor} and \eqref{eq-q1-epsilon-omega} to get
\begin{equation}
\label{eq-difference-G\^ateaux}
\begin{aligned}
H^{1}(\nu^{1}+\epsilon\omega)-H^{1}(\nu^{1}) &=\epsilon \mathbb{E}\Bigg[ \int^{T}_{0}\omega_{t}(-2\lambda_{1}\nu_{t}^{1} +2\alpha Q^{1,\nu^{1}}_{T} -\kappa_{1}Q^{1,\nu^{1}}_{t}) dt \\ 
&\quad+ \int^{T}_{0}\left(\int^{t}_{0}\omega_{s}ds\right)(2\phi^{1}_{t}Q_{t}^{1,\nu^{1}}dt + \kappa_{0}\nu^{0}_{t}dt+ \kappa_{1}\nu_{t}^{1}dt-dA_{t}) \Bigg]\\ &\quad+ \epsilon^{2}\mathbb{E}\Bigg[-\lambda_{1}\int^{T}_{0}\omega_{s}^{2}ds -\alpha \left(\int^{T}_{0}\omega_{s}ds\right)^{2} \\ &\quad- \int^{T}_{0}\phi^{1}_{t}\left(\int^{t}_{0}\omega_{s}ds\right)^{2}dt + \kappa_{1}\int^{T}_{0}\omega_{t}\left(\int^{t}_{0}\omega_{s}ds\right)dt\Bigg].
\end{aligned}
\end{equation}
From \eqref{eq-def-minor-G\^ateaux} and \eqref{eq-difference-G\^ateaux} we get
\begin{equation}
\label{eq-minor-G\^ateaux-2}
\begin{aligned}
\langle\mathcal{D}H^{1}(\nu^{1}),\omega\rangle &=\mathbb{E}\Bigg[ \int^{T}_{0}\omega_{t}(-2\lambda_{1}\nu_{t}^{1} +2\alpha Q^{1,\nu^{1}}_{T} -\kappa_{1}Q^{1,\nu^{1}}_{t}) dt \\ 
&\quad+ \int^{T}_{0}\left(\int^{t}_{0}\omega_{s}ds\right)(2\phi^{1}_{t}Q_{t}^{1,\nu^{1}}dt + \kappa_{0}\nu^{0}_{t}dt+ \kappa_{1}\nu_{t}^{1}dt -dA_{t} )\Bigg].
\end{aligned}
\end{equation}
Since $\nu^{1},\omega\in\mathcal{A}_{m}$, $\nu^{0}\in\mathcal{A}_{M}^{q_0}$ and $\mathbb{E}[(\int^{T}_{0}|dA_{t}|)^{2}]<\infty$, then use Fubini's theorem in \eqref{eq-minor-G\^ateaux-2} to get
\begin{multline*}
\langle\mathcal{D}H^{1}(\nu^{1}),\omega\rangle = \mathbb{E}\Bigg[ \int^{T}_{0}\omega_{t}\Big(-2\lambda_{1}\nu_{t}^{1} +2\alpha Q^{1,\nu^{1}}_{T} -\kappa_{1}Q^{1,\nu^{1}}_{t} + A_{t} \\ 
+ \int^{T}_{t}\left(2\phi^{1}_{s}Q_{s}^{1,\nu^{1}} + \kappa_{0}\nu^{0}_{s} + \kappa_{1}\nu_{s}^{1}\right)ds - A_{T}\Big)dt\Bigg],
\end{multline*}
which concludes the result.
\end{proof}

\begin{proof}[Proof of Lemma \ref{lemma-fbsde-minor}] 
In Lemma \ref{lemma-minor-strictly-concave} we have shown that under Assumption \ref{ass-alpha-lambda},  the functional $H^{1}(\nu^1)$ is strictly concave of any $\nu^1 \in \mathcal{A}_{m}$. Therefore, we may apply Proposition 2.1 of \cite[Chapter II]{EkelTem:99} to obtain that
\begin{equation}
\label{eq-iff-G\^ateaux-minor}
\langle \mathcal{D}H^{1}(\nu^{1,*}),\omega\rangle =0\quad \textrm{for all } \omega \in\mathcal{A}_{m}  \iff \nu^{1,*}=\arg\sup_{\nu\in\mathcal{A}_{m}}H^{1}(\nu).
\end{equation}
The strict concavity of $H^{1}$ guarantees that the optimiser $\nu^{1,*}$ is unique.
\paragraph{Necessity:}We assume that 
\begin{equation*}
\nu^{1,*}=\arg\sup_{\nu\in\mathcal{A}_{m}}H^{1}(\nu).
\end{equation*}
Then,  \eqref{eq-iff-G\^ateaux-minor} and \eqref{eq-minor-G\^ateaux} imply that for all $\omega\in \mathcal{A}_{m}$ it holds
\begin{equation*}
\label{eq-gateaux-zero-minor}
\begin{aligned}
\langle\mathcal{D}H^{1}(\nu^{1,*}),\omega\rangle &= \mathbb{E}\Bigg[ \int^{T}_{0}\omega_{t}\Big(-2\lambda_{1}\nu_{t}^{1,*} +2\alpha Q^{1,\nu^{1,*}}_{T} -\kappa_{1}Q^{1,\nu^{1,*}}_{t} + A_{t}  \\ 
&\quad+ \int^{T}_{t}\left(2\phi^{1}_{s}Q_{s}^{1,\nu^{1,*}} + \kappa_{0}\nu^{0}_{s}  + \kappa_{1}\nu_{s}^{1,*}\right)ds- A_{T}\Big)dt\Bigg]=0.
\end{aligned}
\end{equation*}
By applying the optional projection theorem we get 
\begin{multline}
\label{eq-pre-first-order}
\mathbb{E}\Bigg[ \int^{T}_{0}\omega_{t}\Bigg(-2\lambda_{1}\nu_{t}^{1,*} +\mathbb{E}_{t}\left[2\alpha Q^{1,\nu^{1,*}}_{T}  - A_{T}\right]-\kappa_{1}Q^{1,\nu^{1,*}}_{t}  + A_{t} \\ 
+\mathbb{E}_{t}\left[ \int^{T}_{t}\left(2\phi^{1}_{s}Q_{s}^{1,\nu^{1,*}} + \kappa_{0}\nu^{0}_{s}  + \kappa_{1}\nu_{s}^{1,*}\right)ds\right]\Bigg)dt\Bigg]=0, \end{multline}
As \eqref{eq-pre-first-order} holds for all $\omega\in\mathcal{A}_{m}$ we deduce the following first order condition holds
\begin{multline}
\label{eq-first-order}
-2\lambda_{1}\nu_{t}^{1,*} +\mathbb{E}_{t}\left[2\alpha Q^{1,\nu^{1,*}}_{T}- A_{T}\right] -\kappa_{1}Q^{1,\nu^{1,*}}_{t}  + A_{t}\\ 
+\mathbb{E}_{t}\left[ \int^{T}_{t}\left(2\phi^{1}_{s}Q_{s}^{1,\nu^{1,*}} + \kappa_{0}\nu^{0}_{s} + \kappa_{1}\nu_{s}^{1,*}\right)ds\right]=0,
\end{multline} 
$d\mathbb{P}\otimes dt$-a.e. on $\Omega\times[0,T]$.

We define the following martingales
\begin{equation}
\label{eq-optimal-martingale}
\begin{aligned}
\mathcal{M}_{t} &:=\mathbb{E}_{t}\left[ \int^{T}_{0}\left(2\phi^{1}_{s}Q_{s}^{1,\nu^{1,*}} + \kappa_{0}\nu^{0}_{s}+ \kappa_{1}\nu_{s}^{1,*}\right)ds\right],\\
\mathcal{N}_{t} &:= \mathbb{E}_{t}\left[2\alpha Q^{1,\nu^{1,*}}_{T}- A_{T}\right].
\end{aligned}
\end{equation}
Note that $\mathcal{M}$ and $\mathcal{N}$ are square-integrable since $\mathbb{E}[(\int^{T}_{0}|dA_{t}|)^{2}]<\infty$, $\nu^{1,*},\omega\in\mathcal{A}_{m}$ and $\nu^{0}\in\mathcal{A}_{M}^{q_0}$.

We plug $\mathcal{M}$ and $\mathcal{N}$ in \eqref{eq-first-order} to get
\be\label{eq-representation-optimal-minor}
\begin{aligned}
-2\lambda_{1}\nu_{t}^{1,*} +\mathcal{N}_{t}-\kappa_{1}Q^{1,\nu^{1,*}}_{t} + A_{t}  
+\mathcal{M}_{t}-\int^{t}_{0}\left(2\phi^{1}_{s}Q_{s}^{1,\nu^{1,*}} + \kappa_{0}\nu^{0}_{s} + \kappa_{1}\nu_{s}^{1,*}\right)ds=0,
\end{aligned}
\ee
From \eqref{eq-minor-inventory} and \eqref{eq-representation-optimal-minor} it follows that $\nu^{1,*}$ solves the following BSDE
\begin{equation*}
\begin{cases}
d\nu_{t}^{1,*}&= \frac{1}{2\lambda_{1}}d\mathcal{N}_{t}  
+ \frac{1}{2\lambda_{1}}d\mathcal{M}_{t}-\frac{1}{\lambda_{1}}\phi^{1}_{t}Q_{t}^{1,\nu^{1,*}}dt -\frac{\kappa_{0}}{2\lambda_{1}}\nu^{0}_{t}dt + \frac{1}{2\lambda_{1}}dA_{t},\\
\nu_{T}^{1,*}&=\frac{2\alpha -\kappa_{1}}{2\lambda_{1}}Q_{T}^{1,\nu^{1,*}},
\end{cases}
\end{equation*}
this gives \eqref{eq-minor-fbsde}.

\paragraph{Sufficiency:} Assume that $(Q^{1,\nu^{1,*}},\nu^{1,*})$ solves \eqref{eq-minor-fbsde} $d\mathbb{P}\otimes dt$-a.e. and that $\nu^{1,*}\in\mathcal{A}_{m}$. 
We will show that $\langle \mathcal{D}H^{1}(\nu^{1,*}),\omega\rangle$ vanishes for all $\omega\in\mathcal{A}_{m}$ which, once combined with \eqref{eq-iff-G\^ateaux-minor}, implies that $\nu^{1,*}$ is the solution to the minor agent's problem.
Since $(Q^{1,\nu^{1,*}},\nu^{1,*})$ solves \eqref{eq-minor-fbsde} we get that 
\begin{equation}
\begin{aligned}
2\lambda_{1}\nu_{t}^{1,*}
 &=\mathbb{E}_{t}\left[\left(2\alpha-\kappa_{1}\right)Q^{1,\nu^{1,*}}_{T}\right] -  \mathbb{E}_{t}\left[\int^{T}_{t}dA_{s}\right]  +  \mathbb{E}_{t}\left[\int^{T}_{t}\left(\kappa_{0}\nu^{0}_{s} + 2\phi^{1}_{t}Q^{1,\nu^{1,*}}_{s}\right)ds\right]  \\
 &= \mathbb{E}_{t}\left[2\alpha Q^{1,\nu^{1,*}}_{T}- A_{T}\right] -\kappa_{1}Q^{1,\nu^{1,*}}_{t}  + A_{t} \\ & \ \quad+  \mathbb{E}_{t}\left[\int^{T}_{t}\left(\kappa_{0}\nu^{0}_{s} + 2\phi^{1}_{t}Q^{1,\nu^{1,*}}_{s} + \kappa_{1}\nu^{1,*}_{s}\right)ds\right], \quad d\mathbb{P}\otimes dt-\textrm{a.e.},
\end{aligned}
\end{equation}
where we used \eqref{eq-minor-inventory} in the second equality. Hence, $\nu^{1,*}_{t}$ satisfies \eqref{eq-first-order}, therefore the left-hand side of \eqref{eq-iff-G\^ateaux-minor} hold.
\end{proof}

\section{Proof of Proposition \ref{prop-results-ode} and Lemma \ref{lemma-kernel-L2}}
Before we prove Lemma \ref{lemma-kernel-L2} we introduce the following lemma. 
\begin{lemma} \label{lem-sol-ric} 
Under Assumption \ref{ass-alpha-lambda}, the Riccati equation \eqref{eq-v1-ode-proof} has a unique continuous solution. 
\end{lemma}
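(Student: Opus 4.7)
\textbf{Proof proposal for Lemma~\ref{lem-sol-ric}.}

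My plan is to linearize the Riccati equation \eqref{eq-v1-ode-proof} by the classical substitution $r^{1}_{t}=u'(t)/u(t)$. A direct computation shows that if $u\in C^{1}([0,T])$ is nowhere vanishing and $u''$ exists $dt$-a.e., then
\begin{equation*}
\partial_{t}\!\left(\frac{u'(t)}{u(t)}\right) = \frac{u''(t)}{u(t)} - \left(\frac{u'(t)}{u(t)}\right)^{2},
\end{equation*}
so the Riccati equation \eqref{eq-v1-ode-proof} is equivalent to the second-order linear equation
\begin{equation*}
u''(t) = \frac{1}{\lambda_{1}}\phi^{1}_{t}\, u(t), \qquad dt\text{-a.e.\ on }[0,T],
\end{equation*}
together with the terminal condition $u'(T)/u(T)=-(2\alpha-\kappa_{1})/(2\lambda_{1})$, which we normalise by choosing $u(T)=1$ and $u'(T)=-(2\alpha-\kappa_{1})/(2\lambda_{1})$.

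Since $\phi^{1}$ is piecewise continuous and locally bounded, the linear backward Cauchy problem above admits, by standard Carath\'eodory theory, a unique $C^{1}$-solution $u$ on $[0,T]$ with $u''\in L^{\infty}([0,T])$. The next step (and the crucial one) is to show that $u$ is strictly positive on $[0,T]$, so that $r^{1}_{t}:=u'(t)/u(t)$ is a well-defined continuous function. Here is where Assumption~\ref{ass-alpha-lambda} enters: since $2\alpha\geq\kappa_{1}$, we have $u'(T)\leq 0$ and $u(T)=1>0$. Because $\phi^{1}\geq 0$, the identity $u''=\phi^{1}u/\lambda_{1}$ implies that whenever $u>0$ on a subinterval, $u''\geq 0$ there, so $u'$ is non-decreasing on that subinterval. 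Starting from $t=T$ and integrating backwards, this monotonicity keeps $u'(t)\leq u'(T)\leq 0$ for all $t\leq T$ as long as $u$ remains positive, which in turn forces $u(t)\geq u(T)=1$ for all $t\leq T$. A standard continuity/continuation argument (the set of $t\in[0,T]$ where $u\geq 1$ is both open and closed in $[0,T]$) then shows that $u\geq 1$ on the whole interval $[0,T]$. Hence $r^{1}=u'/u$ is continuous on $[0,T]$, bounded, and solves \eqref{eq-v1-ode-proof} $dt$-a.e.

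For uniqueness, suppose $\tilde r^{1}$ is another continuous function satisfying \eqref{eq-v1-ode-proof} $dt$-a.e. with the same terminal value. Setting $\delta_{t}:=r^{1}_{t}-\tilde r^{1}_{t}$, one has $\delta_{T}=0$ and
\begin{equation*}
\partial_{t}\delta_{t} = -(r^{1}_{t}+\tilde r^{1}_{t})\,\delta_{t}, \qquad dt\text{-a.e.\ on }[0,T].
\end{equation*}
Since both $r^{1}$ and $\tilde r^{1}$ are continuous, $r^{1}+\tilde r^{1}$ is bounded on $[0,T]$, so an application of Gronwall's inequality yields $\delta\equiv 0$, giving uniqueness.

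The main obstacle is the positivity argument for $u$: without Assumption~\ref{ass-alpha-lambda} the initial slope $u'(T)$ could be positive, $u$ could vanish inside $[0,T]$, and the Riccati solution would blow up; this is precisely why the assumption is needed. The other potentially delicate point is the low regularity of $\phi^{1}$, but since we only require $r^{1}$ to satisfy the ODE $dt$-a.e.\ (as emphasised in the remark after Proposition~\ref{prop-results-ode}), the Carath\'eodory framework handles it without complication.
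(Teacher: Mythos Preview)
Your proof is correct but follows a genuinely different route from the paper's. The paper performs the sign change $\hat r^{1}_{t}:=-r^{1}_{t}$, observes that $\hat r^{1}$ then satisfies the standard control-theoretic Riccati equation $\partial_{t}\hat r^{1}_{t}=-\phi^{1}_{t}/\lambda_{1}+(\hat r^{1}_{t})^{2}$ with non-negative terminal datum $\hat r^{1}_{T}=(2\alpha-\kappa_{1})/(2\lambda_{1})\geq 0$, and then simply cites Wonham's existence--uniqueness theorem for matrix Riccati equations (Theorem~2.1 in \cite{Wonham1968}). Your argument instead linearises via $r^{1}=u'/u$, reduces to the second-order equation $u''=\phi^{1}u/\lambda_{1}$, and proves the key positivity of $u$ by an elementary convexity argument: since $u(T)=1$, $u'(T)\leq 0$ and $u''\geq 0$ wherever $u>0$, one gets $u'\leq 0$ and hence $u\geq 1$ throughout. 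The paper's approach is shorter and connects to the standard Riccati literature (and would extend verbatim to the matrix-valued setting), while yours is fully self-contained, makes the role of Assumption~\ref{ass-alpha-lambda} completely transparent, and as a by-product yields the sign information $r^{1}_{t}\leq 0$ on $[0,T]$, which the paper does not record explicitly.
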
 
\begin{proof} 
Let $\hat{r}^{1}$ be the solution to the following equation
\begin{equation}
\label{eq-hat-v-ode}
\begin{cases}
\partial_{t}\hat{r}^{1}_{t} &= -\frac{1}{\lambda_{1}}\phi^{1}_{t} + (\hat{r}^{1}_{t})^{2}, \quad 0\leq t \leq T,\\
\hat{r}^{1}_{T} &= \frac{2\alpha-\kappa_{1}}{2\lambda_{1}}.
\end{cases}
\end{equation} 
Since, $\hat{r}^{1}_{T} \geq 0$ and $\phi^{1}$ is a piecewise continuous, locally bounded non-negative function over $[0,T]$, then by Theorem 2.1 of \cite{Wonham1968} there exists a unique solution $\hat{r}^{1}$, which is absolutely continuous on $[0,T]$ (see also Theorem 3.5 in \cite{Freiling:02} for a more recent reference). As stated by \cite{Wonham1968}, the function $\hat{r}^{1}$ satisfies \eqref{eq-hat-v-ode} only $dt$ almost everywhere. Note that by taking $ r^{1}_{t}=-\hat{r}^{1}_{t}$, it follows that $r^{1}$ is an absolutely continuous solution to \eqref{eq-v1-ode-proof}. Uniqueness of the solution to \eqref{eq-v1-ode-proof} then follows by the uniqueness for \eqref{eq-hat-v-ode}. 
\end{proof} 

\begin{proof}[Proof of Lemma \ref{lemma-kernel-L2}] 

Lemma \ref{lem-sol-ric} proves that $r^{1}$ is continuous on $[0,T]$.  Then, from \eqref{eq-def-xi-pm} it follows that the functions $\xi^{\pm}_{t}$ are continuous. Therefore, from \eqref{eq-kernel-minor} we get that $\mathcal{K}$ is jointly continuous on $[0,T]^{2}$ hence it $\mathcal{K}$ is bounded on $[0,T]^{2}$ and  
\begin{equation*}
\int^{T}_{0}\int^{T}_{0}\left|\mathcal{K}(t,s)\right|^{2}dsdt <\infty,
\end{equation*}
that is $\mathcal{K}$ is in $L^{2}([0,T]^{2})$. 
\end{proof}

Now we are ready to prove Proposition \ref{prop-results-ode}. 
\label{sec-proof-prop-ode}
\begin{proof}[Proof of Proposition \ref{prop-results-ode}]
In Lemma \ref{lem-sol-ric} we have established that \eqref{eq-v1-ode-proof} has a unique continuous solution $r^{1}$. We prove the rest of the claims in the following two steps. 

\textit{Step 1.} We show that $r^{0}_{t}$ given by \eqref{eq-v0} solves the BSDE \eqref{eq-v0-bsde};
Note that since $r^{1}$ is continuous, the function $\xi^{+}$ in \eqref{eq-def-xi-pm} is the unique solution of the ODE
\begin{equation}
\label{eq-xi-ode}
\frac{d \xi_{t}^{+}}{dt} = r^{1}_{t}\xi_{t}^{+}, \quad  \xi_{0}^{+} = 1. 
\end{equation}
Since $r^{1}_{t}$ is continuous on $[0,T]$, it holds that
\begin{equation}
\label{eq-square-integral-xi}
\int^{T}_{0}(\xi_{t}^{+})^{2}dt  < \infty.
\end{equation}
Since $\xi_{t}^{+}$ satisfies \eqref{eq-square-integral-xi}, then the process  
\begin{equation}
\label{eq-xi-v0}
\xi_{t}^{+}r^{0}_{t} := \frac{1}{2\lambda_{1}}\mathbb{E}_{t}\left[\int^{T}_{t} \xi_{s}^{+}(dA_{s}-\kappa_{0}\nu^{0}_{s}ds)\right] , \quad 0\leq t \leq T, 
\end{equation}
is the unique strong solution to the following linear BSDE
\begin{equation}
\label{eq-xi-v0-bsde}
\begin{cases}
d(\xi_{t}^{+}r^{0}_{t})&=\frac{\xi_{t}^{+}}{2\lambda_{1}}(\kappa_{0}\nu^{0}_{t}dt -dA_{t}) - \frac{1}{2\lambda_{1}}\xi_{t}^{+}d\mathcal{M}_{t}- \frac{1}{2\lambda_{1}}\xi_{t}^{+}d\mathcal{N}_{t}, \\
\xi_{T}^{+}r^{0}_{T} &=0. 
\end{cases}
\end{equation}
We multiply both sides in \eqref{eq-xi-v0} by $\xi_{t}^{-}$ from \eqref{eq-def-xi-pm} and use the identity $\xi_{t}^{-}\xi_{t}^{+}=1$. By doing so, we can obtain an expression for $r^{0}_{t}$, that is
\begin{equation}
\label{eq-v0-solution}
r^{0}_{t} = \frac{1}{2\lambda_{1}} \mathbb{E}_{t}\left[\int^{T}_{t} \xi_{t}^{-}\xi_{s}^{+}(dA_{s}-\kappa_{0}\nu^{0}_{s}ds)\right].
\end{equation} 
We now show that $r^{0}$ from \eqref{eq-v0-solution} is the solution to \eqref{eq-v0-bsde}.  
From \eqref{eq-xi-v0-bsde} and It\^o's product rule we get
 \begin{equation}
\label{eq-v0-solution-2}
d\xi_{t}^{+}r^{0}_{t} + \xi_{t}^{+}dr^{0}_{t}=  \frac{\xi_{t}^{+}}{2\lambda_{1}}(\kappa_{0}\nu^{0}_{t}dt -dA_{t}) - \frac{1}{2\lambda_{1}}\xi_{t}^{+}d\mathcal{M}_{t} - \frac{1}{2\lambda_{1}}\xi_{t}^{+}d\mathcal{N}_{t}.
\end{equation}
Next, we use \eqref{eq-xi-ode} and \eqref{eq-v0-solution-2} to get
\begin{equation}
\xi_{t}^{+}\left(r^{1}_{t}r^{0}_{t}dt + dr^{0}_{t}-  \frac{1}{2\lambda_{1}}(\kappa_{0}\nu^{0}_{t}dt -dA_{t})+ \frac{1}{2\lambda_{1}}d\mathcal{M}_{t}+ \frac{1}{2\lambda_{1}}d\mathcal{N}_{t}\right) = 0.
\end{equation}
Since $\xi_{t}^{+}>0$ for all $t\in[0,T]$, we have 
\begin{equation*}
r^{1}_{t}r^{0}_{t}dt + dr^{0}_{t}-  \frac{1}{2\lambda_{1}}(\kappa_{0}\nu^{0}_{t}dt -dA_{t}) + \frac{1}{2\lambda_{1}}d\mathcal{M}_{t}+ \frac{1}{2\lambda_{1}}d\mathcal{N}_{t} =0.
\end{equation*}
with terminal condition $r^{0}_{T}=0$ which follows from \eqref{eq-v0-solution}. By comparing this with \eqref{eq-v0-bsde}, it follows that $r^{0}_{t}$
 is the solution to the BSDE \eqref{eq-v0-bsde}. Recall the definition of $\mathcal{K}$ in \eqref{eq-kernel-minor}. We substitute the expression for $\mathcal{K}(t,s)$ from \eqref{eq-kernel-minor} into \eqref{eq-v0-solution} from which it follows that $r^{0}_{t}$ given by \eqref{eq-v0} solves \eqref{eq-v0-bsde}.

\textit{Step 2.} We show \eqref{sq-r0}. From \eqref{eq-kernel-minor} it follows that $\mathcal{K}(t,s)>0$ for all $t,s\in[0,T]^{2}$. Moreover by Lemma \ref{lemma-kernel-L2}, $\mathcal{K}$ is bounded on $[0,T]^{2}$. Since $A$ is of bounded variation and $\mathcal{K}$ is bounded, then by using the conditional Jensen's inequality and the tower property we get
\begin{equation}
\label{eq-sup-kernel-A-2}
\begin{aligned}
\sup_{t \in [0,T]} \mathbb E\left[\left(\mathbb{E}_{t}\left[\int^{T}_{t} \mathcal{K}(t,s) dA_{s}\right]\right)^2 \right] &\leq C \mathbb{E}\left[\left(\int^{T}_{0} |dA_{s}|\right)^2\right] \\
&<\infty, 
\end{aligned}
\end{equation} 
where we used \eqref{ass:P} in the last inequality.  Similarly, we can obtain that
\begin{equation}
\label{eq-bound-kernel-nu0}
\begin{aligned}
\sup_{t \in [0,T]} \mathbb E\left[\left(\mathbb{E}_{t}\left[\int^{T}_{t} \mathcal{K}(t,s) \nu^0_sds\right]\right)^2 \right]  & \leq {C}\mathbb{E}\left[\int^{T}_{0}(\nu^{0}_{s})^2ds\right] \\
&< \infty, 
\end{aligned}
\end{equation}
where we used the fact that $\nu^{0}\in\mathcal{A}_{M}^{q_0}$ and \eqref{eq-major-admissible-set}. 
From \eqref{eq-v0}, \eqref{eq-sup-kernel-A-2} and \eqref{eq-bound-kernel-nu0} we get
\begin{equation*}
 \sup_{t\in[0,T]}\mathbb{E}[(r^{0}_{t})^{2}] <\infty, 
\end{equation*}
and \eqref{sq-r0} follows. 
 \end{proof}



\begin{thebibliography}{27}
\providecommand{\natexlab}[1]{#1}
\providecommand{\url}[1]{\texttt{#1}}
\expandafter\ifx\csname urlstyle\endcsname\relax
  \providecommand{\doi}[1]{doi: #1}\else
  \providecommand{\doi}{doi: \begingroup \urlstyle{rm}\Url}\fi

\bibitem[Almgren and Chriss(2000)]{OPTEXECAC00}
R.~Almgren and N.~Chriss.
\newblock {Optimal execution of portfolio transactions}.
\newblock \emph{Journal of Risk}, 3\penalty0 (2):\penalty0 5--39, 2000.

\bibitem[Atkinson(1997)]{Atkinson1997}
K.~E. Atkinson.
\newblock \emph{The Numerical Solution of Integral Equations of the Second
  Kind}.
\newblock Cambridge Monographs on Applied and Computational Mathematics.
  Cambridge University Press, 1997.

\bibitem[Belak et~al.(2019)Belak, Muhle-Karbe, and Ou]{BMO:19}
C.~Belak, J.~Muhle-Karbe, and K.~Ou.
\newblock Liquidation in target zone models.
\newblock \emph{Market Microstructure and Liquidity}, 2019.

\bibitem[Carlin et~al.(2007)Carlin, Lobo, and Viswanathan]{Carlinetal}
B.~I. Carlin, M.~S. Lobo, and S.~Viswanathan.
\newblock Episodic liquidity crises: cooperative and predatory trading.
\newblock \emph{Journal of Finance}, 65:\penalty0 2235--2274, 2007.

\bibitem[Cartea and Jaimungal(2016)]{Car-Jiam-2016}
{\'A}.~Cartea and S.~Jaimungal.
\newblock Incorporating order-flow into optimal execution.
\newblock \emph{Mathematics and Financial Economics}, 10\penalty0 (3):\penalty0
  339--364, 2016.

\bibitem[Cartea et~al.(2015)Cartea, Jaimungal, and Penalva]{Cartea-Book}
A.~Cartea, S.~Jaimungal, and J.~Penalva.
\newblock \emph{Algorithmic and {H}igh-{F}requency {T}rading}.
\newblock Cambridge University Press, 2015.

\bibitem[Casgrain and Jaimungal(2019)]{casgrain.jaimungal.19}
P.~Casgrain and S.~Jaimungal.
\newblock Mean field games with partial information for algorithmic trading.
\newblock Preprint, available online at \url{https://arxiv.org/abs/1803.04094},
  2019.

\bibitem[Casgrain and Jaimungal(2020)]{CasgrainJaimungal:20}
P.~Casgrain and S.~Jaimungal.
\newblock Mean-field games with differing beliefs for algorithmic trading.
\newblock \emph{Mathematical Finance}, 30\penalty0 (3):\penalty0 995--1034,
  2020.

\bibitem[Cont(2011)]{cont2011}
R.~Cont.
\newblock Statistical modeling of high-frequency financial data.
\newblock \emph{IEEE Signal Processing}, 28\penalty0 (5):\penalty0 16--25,
  2011.

\bibitem[Drapeau et~al.(2019)Drapeau, Luo, Schied, and
  Xiong]{DrapeauLuoSchiedXiong:19}
S.~Drapeau, P.~Luo, A.~Schied, and D.~Xiong.
\newblock An {FBSDE} approach to market impact games with stochastic
  parameters.
\newblock Preprint, available online at \url{https://arxiv.org/abs/2001.00622},
  2019.

\bibitem[Ekeland and T\'emam(1999)]{EkelTem:99}
I.~Ekeland and R.~T\'emam.
\newblock \emph{Convex Analysis and Variational Problems}.
\newblock Society for Industrial and Applied Mathematics, 1999.

\bibitem[Evangelista and Thamsten(2020)]{EvangelistaThamsten:20}
D.~Evangelista and Y.~Thamsten.
\newblock On finite population games of optimal trading, 2020.
\newblock Preprint, available online at \url{https://arxiv.org/abs/2004.00790}.

\bibitem[Freiling(2002)]{Freiling:02}
G.~Freiling.
\newblock A survey of nonsymmetric {R}iccati equations.
\newblock \emph{Linear Algebra and its Applications}, 351-352:\penalty0
  243--270, 2002.

\bibitem[Fu et~al.(2020)Fu, Graewe, Horst, and Popier]{FuGraeweHorstPopier:20}
G.~Fu, P.~Graewe, U.~Horst, and A.~Popier.
\newblock A mean field game of optimal portfolio liquidation.
\newblock \emph{{\rm To appear in} Mathematics of Operations Research}, 2020.

\bibitem[G{\^a}rleanu and Pedersen(2016)]{GARLEANU16}
N.~G{\^a}rleanu and L.~H. Pedersen.
\newblock Dynamic portfolio choice with frictions.
\newblock \emph{Journal of Economic Theory}, 165:\penalty0 487 -- 516, 2016.

\bibitem[Kirilenko et~al.(2017)Kirilenko, Kyle, Samadi, and
  Tuzun]{kirilenko2017}
A.~Kirilenko, A.~S. Kyle, M.~Samadi, and T.~Tuzun.
\newblock The flash crash: The impact of high frequency trading on an
  electronic market.
\newblock \emph{Journal of {F}inance}, 72\penalty0 (3):\penalty0 967--998, June
  2017.

\bibitem[Lehalle and Neuman(2019)]{Lehalle-Neum18}
C.~A. Lehalle and E.~Neuman.
\newblock Incorporating signals into optimal trading.
\newblock \emph{Finance and Stochastics}, 23\penalty0 (2):\penalty0 275--311,
  2019.

\bibitem[Micheli et~al.(2021)Micheli, Muhle-Karbe, and Neuman]{M-M-N-21}
A.~Micheli, J.~Muhle-Karbe, and E.~Neuman.
\newblock Closed-loop {N}ash competition for liquidity.
\newblock Preprint, available online at \url{https://arxiv.org/abs/2112.02961},
  2021.

\bibitem[Neuman and Schied(2022)]{N-sch-22}
E.~Neuman and A.~Schied.
\newblock Protecting pegged currency markets from speculative investors.
\newblock \emph{Mathematical Finance}, 32\penalty0 (1):\penalty0 405--420,
  2022.

\bibitem[Neuman and Vo{\ss}(2021)]{N-V-2021}
E.~Neuman and M.~Vo{\ss}.
\newblock Trading with the crowd.
\newblock Preprint, available online at \url{https://arxiv.org/abs/2106.09267},
  2021.

\bibitem[Neuman and Vo{\ss}(2022)]{NeumanVoss:20}
E.~Neuman and M.~Vo{\ss}.
\newblock Optimal signal-adaptive trading with temporary and transient price
  impact.
\newblock \emph{SIAM Journal on Financial Mathematics}, 13\penalty0
  (2):\penalty0 551--575, 2022.

\bibitem[Porter and Stirling(1990)]{Porter1990}
D.~Porter and D.~S.~G. Stirling.
\newblock \emph{Integral equations: {A} {P}ractical {T}reatment, from
  {S}pectral {T}heory to {A}pplications}.
\newblock Cambridge University Press, 1990.

\bibitem[Ro{\c{s}}u(2019)]{fast_slow_informed}
I.~Ro{\c{s}}u.
\newblock Fast and slow informed trading.
\newblock \emph{Journal of Financial Markets}, 43:\penalty0 1--30, 2019.

\bibitem[Sch\"oneborn and Schied(2009)]{SchoenebornSchied}
T.~Sch\"oneborn and A.~Schied.
\newblock Liquidation in the face of adversity: stealth vs. sunshine trading,
  2009.
\newblock EFA 2008 Athens Meetings Paper, Available at SSRN:
  \url{https://ssrn.com/abstract=1007014}.

\bibitem[{U.S. Securities and Exchange Commission}(2014)]{hf-rep}
{U.S. Securities and Exchange Commission}.
\newblock Equity market structure literature review part {II}: high frequency
  trading.
\newblock Available online at
  \url{https://www.sec.gov/marketstructure/research/hft_lit_review_march_2014.pdf},
  2014.

\bibitem[Vo{\ss}(2022)]{voss.19}
M.~Vo{\ss}.
\newblock A two-player portfolio tracking game.
\newblock \emph{Mathematics and Financial Economics}, 2022.

\bibitem[Wonham(1968)]{Wonham1968}
W.~M. Wonham.
\newblock On a matrix {R}iccati equation of stochastic control.
\newblock \emph{{SIAM} Journal on Control}, 6\penalty0 (4):\penalty0 681--697,
  1968.

\end{thebibliography}

\end{document}